\newif\ifbibtex
\newif\ifsubmission
\newif\iflongversion
\newif\iferlang
\newif\ifnotbal
\newif\ifwithcomments
\begin{document}

\title[
]{Polymorphic Records for Dynamic Languages}                     


\author{Giuseppe Castagna}
\orcid{0000-0003-0951-7535}
\affiliation{%
  \institution{CNRS - Université Paris Cité}
  \city{Paris}
  \country{France}
}

\author{Loïc Peyrot}
\orcid{0000-0002-1398-7460}
\affiliation{%
  \institution{IMDEA Software Institute}
  \city{Madrid}
  \country{Spain}
}
\begin{abstract}
We study \emph{row polymorphism} for records types in systems with set-theoretic types, specifically, union, intersection, and negation types. We consider record types that embed row variables and define a subtyping relation by interpreting record types into sets of record values, and row variables into sets of \emph{rows}, that is, ``chunks'' of record values where some record keys are left out: subtyping is then containment of the interpretations.
We define a λ-calculus equipped with operations for field extension, selection, and deletion, its operational semantics, and a type system that we prove to be sound. We provide algorithms for deciding the typing and subtyping relations, and to decide whether two types can be instantiated to make one subtype of the other.

This research is motivated by the current trend of defining static type systems for dynamic languages and, in our case, by an ongoing effort of endowing the Elixir programming language with a gradual type system.

\end{abstract}
\begin{CCSXML}
  <ccs2012>
     <concept>
         <concept_id>10003752.10003790.10011740</concept_id>
         <concept_desc>Theory of computation~Type theory</concept_desc>
         <concept_significance>500</concept_significance>
     </concept>
     <concept>
         <concept_id>10011007.10011006.10011008.10011024.10011025</concept_id>
         <concept_desc>Software and its engineering~Polymorphism</concept_desc>
         <concept_significance>500</concept_significance>
     </concept>
   </ccs2012>
\end{CCSXML}
  
\ccsdesc[100]{Theory of computation~Type theory}
\ccsdesc[100]{Software and its engineering~Polymorphism}


\keywords{subtyping, union types, intersection types, record types, row polymorphism}

\maketitle

\section{Introduction}
\label{sec:intro}
The goal of this work is to define and study \emph{row polymorphism} for a type system with
set-theoretic types, in particular, union, intersection, and negation types.
Row polymorphism was originally introduced by~\citet{wand87,wand91,remy89,remy94} and
further studied and developed in several works
(e.g.,~\cite{gaster96,leijen05,morris19,tang23}). However, as we argue
in this work, the theories in the current literature---which use
algebraic approaches---cannot
be reused for systems with set-theoretic types
and, hence, an original theory must be developed.
More broadly, the primary conceptual contribution of our work is
demonstrating that row polymorphism is not limited to syntactic or
algebraic approaches. Instead, it can be effectively integrated into
the semantic subtyping framework using the same three conceptual steps
employed to incorporate parametric polymorphism into the
framework~\cite{castagna14,castagna15}.

The interest of developing this new theory is grounded in
practice. There is a growing effort to develop type systems for
dynamic languages: TypeScript~\cite{typescript} and Flow~\cite{facebookflow} are two major examples of such
an effort and, as most of the existing attempts to add type systems to
dynamic languages, they include union and
intersection types to accommodate the flexible programming patterns
prevalent in such languages.
In particular, some widely used dynamic programming languages---such as
Luau~\cite{luau,luausemsub}, and \citet{elixir}---have embraced set-theoretic
types as defined in the \emph{semantic subtyping} framework by~\citet{frisch08}.
There, $(i)$ types are interpreted as sets of
values, $(ii)$ unions, intersections, and negation types are
interpreted as the corresponding set-theoretic interpretations, and
$(iii)$ the subtyping between two types is defined as set-containment
of their interpretations. Since semantic subtyping is the
framework that we use here, our work can be considered
as a study on adding row polymorphism to Luau and
Elixir. In particular, the rest of this section uses Elixir's syntax to showcase
practical motivations.

For space reasons, several examples, definitions, and all proofs are omitted. They can be found in the appendixes available in  the full version of this work and in the Auxiliary Material section of the ACM Digital Library entry for this work. 
\subsection{A Motivating Example}\label{sec:example}
To show the interest of having row polymorphism in dynamic languages, let us
consider the \code{logger} module of Elixir's standard library, which
exports the following function (see \cite{elixirlogger}):
\begin{minted}{elixir}
def add_elixir_domain(x) do
   case x do
     %{domain: y} when is_list(y) -> %{x | domain: [:elixir | y]} !\phantomsection\label{three}!
     _ -> Map.put(x, :domain, [:elixir]) !\phantomsection\label{four}!
   end
end 
\end{minted}
This code snippet defines the function \elix{add_elixir_domain}, whose argument, bound to \elix{x}, is matched
in a case
expression. The argument must be a record value (in Elixir records are delimited by curly brackets prefixed by the
``\code{\%}'' symbol).\footnote{For consistency, we follow
Hoare's original terminology~\cite{Hoa66a,Hoa66b}, and always use the term ``record'' to
denote finite key-value maps. In contrast, Elixir uses the word
``maps'' for finite key-value maps and reserves ``record'' for a different concept.} If it is a record 
with (at least) a field for the key \elix{:domain} whose content is a list (line~\ref{three}), then the function
returns a copy of the argument in which the field with key \elix{:domain} is
updated by consing the atom\footnote{Atoms are user-defined
constants prefixed by a colon and of type \felix{atom()}. Field keys
are atoms, too, in which the starting colon can be omitted if the
field is not optional (see later on): \felix{!\%!{domain: 42}}
has a field with key \felix{:domain} and value \felix{42}.}
\elix{:elixir} to the list
in the argument. Otherwise, (line~\ref{four}), the function adds (if absent) or replaces (if
present but does not contain a list) in the record \elix{x} a
field \elix{:domain} whose content is the singleton list whose only
element is the atom \elix{:elixir} (this is performed by the
function \elix{Map.put} of Elixir's standard library).

Elixir's type system~\cite{elixirtypes,castagna24a}
features record types without row polymorphism.
Hence, the function \elix{add_elixir_domain} can be given this type
(the
symbol  \textcolor{cadmiumgreen}{\tt \$} is used to introduce type
signatures \cite{castagna24a}):\footnote{Record types were added
  in Elixir's 1.17 release,
while ``\textcolor{cadmiumgreen}{\tt \$}'' and type variables are planned
for latter releases: \cite{elixirtypes}. Elixir's syntax for
row variables (cf.\ code in line~\ref{eight} and on) is not fixed yet:
the notation \felix{!\%!{...f, }} is also being considered.}
\begin{minted}[framesep=6pt%,framerule=.1pt,rulecolor=\color{red}
]{elixir}
!\textcolor{cadmiumgreen}\$! %{...} -> %{..., domain: list(term())}
\end{minted}
In Elixir, record types starting by \elix{...} are ``open'', that is, the values of these record types may define other
fields besides those specified in the type. Therefore, the type above
states that \elix{add_elixir_domain} is a function that accepts any
record value (since \elix{!\%!{...}} is the top record type) and
returns a record with at least a \elix{:domain} field that contains a list of
values (in Elixir \elix{term()} is the top type,  which types all
values, so \elix{list(term())} is the type of all lists; following the Erlang convention, type names are post-fixed
by \code{()}, e.g., \elix{integer()}, \elix{term()}, ...). 

The problem with such a type is well known: the type does not specify
that the fields corresponding to the ellipsis in the result are the same as
those in the argument. Thus, any static knowledge of these fields is lost after the application. The practical consequence of this loss is that an expression such as
\elix{add_elixir_domain(!\%!{file: "foo.txt", line: 42}).line}, 
which tries to select the field \elix{:line} in the result of the
application, is rejected by the type checker despite being
correct. This hinders the applicability of the type system to
existing code and obliges the programmer to resort to the less precise
gradual typing features of Elixir's type system.

The solution to this problem is also well known, and resorts to
using \emph{row variables}, which are variables that range over ``rows'' of
fields of a record
type. For Elixir, this would correspond to extending the current type system so
that \elix{add_elixir_domain} could be typed as follows:%
\begin{minted}[framesep=6pt%,framerule=.1pt,rulecolor=\color{red}
]{elixir}
!\textcolor{cadmiumgreen}\$! %{f} -> %{f, domain: list(term())} when f: fields()       !\phantomsection\label{eight}!
\end{minted}
where \elix{f} is a \emph{row variable}---as stated by the post-fix
declaration \elix{f: fields()} (in Elixir, type variables are
post-fixedly quantified by a \elix{when} declaration)---which, intuitively, ranges over all fields not already specified by a record type \elix{f} occurs in (here, all fields but \elix{:domain}).
Now, when typing the previous expression
\elix{add_elixir_domain(!\%!{file: "foo.txt", line: 42}).line}, the row variable \elix{f}
can  be instantiated to include the two fields for the keys \elix{:file} and
\elix{:line} in the argument, enabling the system to deduce for the expression
the type \elix{integer()}.

The type system for Elixir defined in~\cite{castagna24a}  also features union, intersection,
and negation types---denoted by \elix{or}, \elix{and}, and \elix{not}---
as well as polymorphic type variables.
Combined with row polymorphism, they refine the previous type of
\elix{add_elixir_domain} as follows:\footnote{Such a refinement can also be done
without row variables, but the problem with forgotten fields is still the same.}
\begin{minted}[%framesep=6pt%,framerule=.1pt,rulecolor=\color{red}
]{elixir}
!\textcolor{cadmiumgreen}\$! (%{f, domain: list(a)} -> %{f, domain: list(atom() or a)}) and          !\phantomsection\label{nine}!
  (%{g, :domain => not(list(term())} -> %{g, domain: list(atom())})       !\phantomsection\label{ten}!
  when a: term(), f: fields(), g: fields()                                !\phantomsection\label{eleven}!
\end{minted}
The type above uses the whole palette of set-theoretic types. It also
uses both parametric and row polymorphism, since the type features a
type variable \elix{a} and two row variables \elix{f} and \elix{g}, as
stated by the \elix{when} declaration in line~\ref{eleven}. The
combination of these two features renders a more precise description of the behavior of \elix{add_elixir_domain}:
\begin{itemize}[left=5pt]
  \item The arrow type in line~\ref{nine} states that when
    \elix{add_elixir_domain} is applied to a record value formed by a
    field \elix{:domain} that contains a list of \elix{a} elements, and by
    some other fields captured by \elix{f}, then the function returns  a
    record that is of the same type as the argument except in the \elix{:domain}
    field that now contains a list of ``\elix{atom() or a}'' elements (union type).
  \item The arrow type in line~\ref{ten} specifies as input type a record
    type with an optional field (denoted by ``\elix{=>}'')\footnote{A
      record type such as \felix{!\%!{:bar => t()}} means that in the values
      of this type, a field for  the key \felix{:bar} may be absent, but if
    present it must contain a value of type \felix{t()}.  This is Erlang's
    Typespec syntax: in Elixir the keyword \felix{optional} is
    mandatory, as in \felix{!\%!{optional(:bar) => t()}} , but we
    omit it to streamline the presentation.} of
    type \elix{not(list(term()))}.
    Thus, the function type in line~\ref{ten} types functions that when applied
    to values in which the field \elix{:domain} is either absent or bound to a
    value that is \elix{not} a list (negation type), then it returns a record
    where the \elix{:domain} field contains a list of atoms and, thanks to the
    row variable \elix{g}, where the other fields of the argument are preserved.
  \item The two arrows above are connected by an intersection
    (the \elix{and} connective at the end of line~\ref{nine}) meaning that
    the function has both types and, thus, obeys both specifications.
\end{itemize}
If \elix{add_elixir_domain} is given the type in lines~\ref{nine}--\ref{eleven}
and if \elix{x} is defined as:
\begin{minted}[framesep=6pt]{elixir}
x = add_elixir_domain(%{domain: [41, 43], file: "foo.txt", line: 42})
\end{minted}
then the type deduced for the expression \elix{[x.line | x.domain]}  
is \elix{list(atom() or integer())}.

It is out of the scope of this work to explain why the precision
allowed by set-theoretic types is necessary to the typing of dynamic
languages. For such an explanation, we invite the reader to refer to~\cite{castagna24a} which treats the case of Elixir. We want nevertheless to stress
the importance of negation types for a precise typing of pattern
matching. This is shown in line~\ref{four} of the definition
of \elix{add_elixir_domain}, where the type deduced for the
variable \elix{x} occurring in that line is obtained by removing, by a
negation type, from the type of the input the type of all the values
captured by the preceding pattern occurring in line~\ref{three}.


\subsection{The Need for Row Polymorphism}\label{sec:need}
Before delving into the necessity of a new theory of row polymorphism for records, we might first question whether row polymorphism is needed at all. The system we intend to extend already incorporates first-order (\emph{a.k.a.} prenex) polymorphism alongside set-theoretic types. This combination suffices to encode a limited form of bounded polymorphism \emph{à la} \textsf{Fun} \cite{CardelliWegner85}, which was employed to type polymorphic records.
Additionally, \citet{castagna16} demonstrated that semantic subtyping with polymorphic types is enough to capture polymorphic variant types, which are usually typed by row polymorphism since, in these specifics cases, they are dual of polymorphic record types.

In terms of records, the combination of prenex polymorphism and set-theoretic
types can be used to type the following function without losing the static
information of the fields of the argument:
\begin{minted}{elixir}
def bump_counter(x), do: %{x | counter: x.counter+1}
\end{minted}
This function increments the \elix{:counter} field of its argument and
can be typed by bounded polymorphism by the type
$\forall(\alpha{\leq}\erecord{..., \texttt{counter}{\is}\texttt{integer()}}).\alpha\to\alpha$. The type variable $\alpha$ captures the whole
type of the argument, thus also its extra fields. Thanks to that
the type system deduces for
\elix{(bump_counter(!\%!{counter: 1, file:
"foo.txt"})).file} the type string.
The bounded polymorphic  type above is encoded by set-theoretic types as 
follows (see~\cite{castagna24}):\footnote{Intersections and unions have a precedence
higher than arrows and records, and negation has the
highest precedence of all.}
\begin{minted}[framesep=6pt%,framerule=.1pt,rulecolor=\color{red}
]{elixir}
!\textcolor{cadmiumgreen}\$! (%{..., counter: integer()}!\;!and!\;!a  ->  %{..., counter: integer()}!\;!and!\;!a) when a: term()
\end{minted}
for which Elixir provides the nifty shorthand \elix{(a -> a) when a:
!\%!{..., counter: integer()}} of bounded polymorphism.

The example above may suggest that intersecting record types with
type variables could play the same role as row
polymorphism. Unfortunately, the example above is one of the few cases
in which this works.\footnote{Even this is not
true: it is just an approximation we did for presentation purposes,
since it is unsound to deduce the type \felix{a} for the result
type. To see why, try to instantiate \felix{a}
with \felix{!\%!{counter: 42}}. } This technique may work to type
functions in which the input and the output have the same fields; but
even in that case it is easy to have a partial loss of
information. For instance, consider the following function that takes
as input a record with a field \elix{foo} and redefines its content (in Elixir,
functions are annotated by the \code{\color{cadmiumgreen}\$}-prefixed
type that precedes their definition~\cite{castagna24a}):
\begin{minted}{elixir}
!\textcolor{cadmiumgreen}{\$}! (a, b) -> a when b: term(), a: %{..., foo: b}
def redefine_foo(x,y), do: %{x | foo: y}
\end{minted}
Although we do not lose the static type information of any
field of the argument (thanks to the type variable \elix{a}), the type for the \elix{:foo} field may lose
precision. For example, the type deduced for \elix{redefine_foo(!\%!{foo: 42},true).foo}
is the union \elix{integer() or boolean()}---rather than just \elix{boolean()}---since
the type variable \elix{b} is unified with the \emph{union} of the type of
the second argument and the type of the field \elix{foo:} of the first
argument. Furthermore, this technique fails when we try to add
a new field to a record or delete an existing field from it. For
instance, consider again the function \elix{Map.put}, whose use in
line~\ref{four} can be abstracted as follows:
\begin{minted}{elixir}
!\textcolor{cadmiumgreen}{\$}! (%{...} and a -> %{..., domain: list(atom())} and a) when a: term() !\phantomsection\label{seventeen}!
def put_domain(x), do: Map.put(x, :domain, [:elixir])
\end{minted}
One might think that the use of the intersection with the type
variable \elix{a} captures all the fields of the argument, but the
type declaration is wrong---and, as such, rejected by the type system---because if we instantiate the variable \elix{a}
by a type such as \elix{!\%!{domain: integer()}}, then we
deduce for \elix{put_domain} the type \elix{!\%!{domain: integer()}
-> none()}, where \elix{none()} is the empty type (resulting by simplifying the
intersection in the codomain): this type states that the application
of \elix{put_domain} to an argument of type \elix{!\%!{domain:
integer()}} must diverge (since any returned value must be in the
empty type, which contains none), which is clearly wrong. A similar
problem happens when we apply the record operations of deleting a field or of
adding a field that is not already present.

The problem with the type in line~\ref{seventeen} is that for each
intersection in it not to be empty, the type variable \elix{a} must be
instantiated by a record type that contains all the fields of the input and of
the output, in particular the field for \elix{:domain}. If, as above, \elix{:domain} has different types in the
input and in the output, then one of the two intersections results
empty (making the whole record type containing it to be empty) which, as explained above, is unsound.
For the typing to work, we need the type
variable to instantiate all the fields of the input \emph{except the
field for \elix{:domain}}. This is exactly the role of row variables,
which instantiate ``all the other fields'' of the record type. The
function \elix{put_domain} can be given any of the two following types (one for each line):
\begin{minted}{elixir}
!\textcolor{cadmiumgreen}{\$}! %{f}  -> %{f, domain: list(atom())} when f: fields() !\phantomsection\label{nineteen}!
!\textcolor{cadmiumgreen}{\$}! %{f, :domain => term()} -> %{f, domain: list(atom())}  when f: fields() !\phantomsection\label{twenty}!
\end{minted}
\new{The two types above are essentially two different syntaxes for the same type: the type in line~\ref{nineteen} is syntactic sugar for the one in line~\ref{twenty}, which, despite being more verbose, explicitly states in its input part, that the field \elix{:domain} is either undefined or contains a value of any type. The type in line~\ref{twenty}, thus, explicitly shows that the row variable \elix{f} captures all fields except \elix{:domain}.}

Likewise, we need row variables for typing a
function that deletes, say, the \elix{:domain} field since, once
again, the type of the deleted field will be different in the input
and in the output:
\begin{minted}{elixir}
!\textcolor{cadmiumgreen}{\$}! %{f, :domain => term()} -> %{f, :domain => none()}  when f: fields() !\phantomsection\label{twentyone}!
def del_domain(x), do: Map.delete(x, :domain)
\end{minted}
The codomain of the type in line~\ref{twentyone} states that the field
for  \elix{:domain} must be absent (i.e., if  present, it
must contain a value of the empty type \elix{none()}, of which there
is none). As before, the type above can be more conveniently written
as: \elix{!\%!{f} -> !\%!{f, :domain => none()}  when f: fields()}, and again, this works because \elix{f} can only be instantiated to rows that do not contain a field for \elix{:domain}. 

Since we have established that we need row polymorphism, the next question is
why not use the existing theory? A first reason for that is that to integrate row
polymorphism in a semantic subtyping
setting, we need to define subtyping for polymorphic records, and this
requires to interpret the row-polymorphic record types as sets of values which,
to our knowledge, was never done before.
\new{A second reason is the handling of optional fields, which, to the best of our knowledge, is unique to our setting and not addressed by other existing approaches
(cf.\  the discussion on presence polymorphism in \cref{sec:related}).}
A third quite challenging reason arises from the limitations of conventional unification techniques applied in row polymorphism; for instance, this issue becomes apparent when considering the following example:
\label{ex:figure}%
\begin{minted}{elixir}
!\textcolor{blue}{type}! figure() = %{shape: "circle", perim: integer(), diam: float()} or   !\phantomsection\label{23}!
                %{shape: "polygon", perim: integer(), edges: integer()}    !\phantomsection\label{24}!
!\textcolor{cadmiumgreen}{\$}! {f, perim: integer()} -> {f, perim: float()} when f: fields()    !\phantomsection\label{25}!
def perim_to_float(x), do: %{x | perim: to_float(x.perim)}         !\phantomsection\label{26}!
\end{minted}
The first two lines define the type of ``figures'', which are records with an integer field \elix{:perim} and with either a \elix{:diam} or an \elix{:edges} field, according to the value of their \elix{:shape} field. In line~\ref{26} we define a function that transforms the integer field \elix{:perim} of the input into float. Its type is given in line~\ref{25}. Now, if we apply the function \elix{perim_to_float} to an argument of type \elix{figure()} we expect to deduce for it a type like \elix{figure()} but where the \elix{:perim} field is of type \elix{float()}, that is 
\begin{minted}{elixir}
!\textcolor{cadmiumgreen}{\$}! %{shape: "circle", perim: float(), diam: float()} or     !\phantomsection\label{27}!
  %{shape: "polygon", perim: float(), edges: integer()}    !\phantomsection\label{28}!
\end{minted}
but current theories of row polymorphism unify record types component-wise,
which in our case would yield the following type which is less precise than (i.e., is a supertype of) the type above:
\begin{minted}[framesep=6pt]{elixir}
!\textcolor{cadmiumgreen}{\$}! %{shape: "circle" or "polygon", perim: float(), :diam => float(), :edges => integer()} !\phantomsection\label{29}!
\end{minted}
where the \elix{:shape} field has now a union type and \elix{:diam} and \elix{:edges} have become optional. To deduce the type in lines~\ref{27} and \ref{28} the row variable \elix{f} must be expanded into the union of two rows, one for the shape circle and the other for the shape polygon.

Even if the problem above can be solved by particular typing techniques,
the presence of negation types in our theory
invalidate such techniques in general (see example in
\cref{sec:app-atomic})\loic{Si on a la place on peut le mettre ailleurs que dans
l'appendice}. Therefore,
we need to develop an original technique to replace unification, so that it
takes into account subtyping and enables substitutions to expand row
variables into Boolean combinations of rows.

In conclusion, to embed row polymorphism in a
type system featuring semantically defined union, intersection, and
negation types, we need new theoretical developments to cope with the semantic interpretation of types and the inference of substitutions for type variables in the presence of subtyping and set-theoretic types.

\subsection{Overview}

In \cref{sec:types} we define the syntax of types and the subtyping relation. We abandon Elixir's syntax for maps/records and introduce in~\cref{sec:typesyntax} a more
theoretically-oriented one: we denote records types as finite lists
of field-type specifications of the form $\ell\is\tau$, followed by
a \emph{tail} $\tl$ specifying an infinite row of fields as in
$\rec{\ell_1\is\tau_1,...,\ell_n\is\tau_n}{\tl}$. In this type,
each $\ell_i$ denotes a label (or key) of a field, and labels are pairwise distinct; $\tau_i$ is either a type
$t$ or the union $t\vee\bot$, meaning that the field is optional with
type $t$ (i.e.,  Elixir's \elix{:key => t()} field) and absent if $t$
is the empty type (i.e., \elix{:key => none()}); $\tl$ is either a row
variable $\rho$, or  ``$\orecsign$'' (meaning that the record type is
open), or  ``$\crecsign$'' (meaning that the record is closed: its values
contain all and only the fields specified in the type).

To define the subtyping relation on types, we give
in  \cref{sec:models} an interpretation
of types as sets of elements of a domain $\Domain$---whose elements,
intuitively, represent the values of the language---and then define
subtyping as containment of the
interpretations. Following~\citet{frisch04}, records values are
interpreted as quasi-constant functions, that is, functions that map
all labels into $\bot$ (meaning that the field for that label is
undefined) except for a finite set of labels that are mapped into
values. Therefore, (ground) record types are interpreted as
sets of quasi-constant functions. More subtle is the interpretation
for row variables which, as we saw, define the type of \emph{all the
labels except a few ones}; as a consequence our interpretation will map
them into \emph{partial} quasi-constant functions (as in \cite{remy94}),
requiring a careful handling of their domains.
In \cref{sec:subtyping} we define an algorithm to decide the
subtyping relation just defined. We do so by extending the subtyping
algorithm of the monomorphic record types of CDuce~\cite{benzaken03}---on which we base our theory--- to our polymorphic records.
The resulting (backtrack-free) algorithm has the same order of complexity as the
one for monomorphic record types, which is currently used in Elixir.
%
Finally, we give a formal definition for type substitutions in
\cref{sec:rowsub}, in particular for the case of row variables which
are expanded into Boolean combinations of rows of fields, and prove
that its application preserves subtyping.

In \cref{sec:language} we define a language with operations on records.
For those, several equivalent choices are
possible~\cite{cardelli91}. We build records
starting from the empty record value, noted $\erecord$, and adding  new
fields to it by the expression $\extrecord e \ell {e'}$ which extends the record
(resulting from the evaluation of) $e$ with the field $\ell\is e'$, provided
that a field for $\ell$ is not already present in $e$.
The other operations on records are field selection, noted
$e.\ell$, which returns the content of the field $\ell$ in $e$, and field
deletion, noted $\delrecord e \ell$ which removes from $e$ the field labeled
$\ell$, if any. We define an operational semantics and a declarative type
system, and we show that the latter is sound in the sense of~\citet{wright94}, by proving
that every well-typed expression either diverges or returns a value of the
expression's type (\cref{sec:syntax}). Next, we  define an algorithmic type system
and prove it to be sound and complete with respect to the declarative
one. The system is derived from the declarative one in a standard way:
subsumption is embedded in the elimination rules, intersection
introduction is essentially embedded in the typing of
$\lambda$-abstractions, and the rule for applications performs
instantiation and expansion by looking for a set of substitutions that
make the type of the argument be a subtype of the domain of the
function  (\cref{sec:typealgo}).

\Cref{sec:tallying} studies the tallying problem, which plays the same role as
the unification problem in type inference, but for a subtyping---rather than an
equality---relation on types. The algorithmic system in \cref{sec:typealgo} is
effective, provided that we produce an algorithm  to deduce the type
substitutions to apply to the types of the function and the argument when typing
an application. Following~\citet{castagna15} this can be done by solving the
tallying problem for our types, namely, the problem of deciding whether given two types, there
exists a type substitution that makes one type subtype of the other.
\citet{castagna15} prove that the problem is decidable for a system with
function and product type constructors and set-theoretic types, and give a sound
and complete algorithm. However, defining a tallying algorithm for types with row
variables is far more difficult. This is because substitutions
replace row variables by Boolean combinations of rows of fields. We tackle this
problem in \cref{sec:tallying}, where  we define a tallying algorithm for row polymorphic types. We prove that the algorithm is sound but not complete, and we conjecture completeness for the case in which row variables are substituted by a single row of fields.

We conclude by discussing related work (\cref{sec:related}) and further research
directions (\cref{sec:conclusion}).

\subsection{Contributions and Limitations}
The overall contribution of this work is threefold, since it provides $(i)$ a
theory for a first-order polymorphic type system with row polymorphism and set-theoretic
types, $(ii)$ the practical motivations for such a system, as well as
$(iii)$ the relevant algorithms to apply it in practice. In particular, all the
examples we presented in \cref{sec:example,sec:need} are typed by our system.

The technical contributions can be summarized as follows:
\begin{enumerate}[left=0pt]
\item We describe a first-order polymorphic type theory with union,
intersection and negation type connectives, and function and record
type constructors, where record types can be either closed, open, or specify
a row variable, and their fields can be declared optional
(\cref{sec:typesyntax}). We define a subtyping relation for these types by
providing an interpretation 
where types are interpreted as set of values and subtyping as set containment
(\cref{sec:models}).
\item We prove that the subtyping relation is decidable and provide a
backtrack-free algorithm to decide it (\cref{sec:subtyping}) with
the same order of complexity as the one implemented for Elixir~\cite{elixirtypes}.
\item We define type substitutions that map row variables into Boolean
combinations of rows, and prove that the application of type substitutions
preserves subtyping (\cref{sec:rowsub}).
\item We define a declarative type system
for a record calculus with record extension, selection, and deletion
and prove its soundness (\cref{sec:syntax}).
\item We define an algorithmic system that we prove  sound and complete with
  respect to the declarative one (\cref{sec:typealgo}). 
\item We define an algorithm for the \emph{tallying problem} for our system, that is,
the problem of deciding whether given two types there exists a type
substitution that make one subtype of the other; we prove soundness of the
algorithm (\cref{sec:tallying}).
\end{enumerate}
The system defined here presents some limitations. Some  are
expected and characteristic of the kind of systems we consider here: the typing
relation is not decidable (this is typical of systems with intersection
types) and the type system has no principal types (which is already
the case both for systems with polymorphic set-theoretic
types~\cite{castagna14,castagna15} and for expressive record type
systems~\cite{cardelli91}). Other limitations are instead new, in
particular that the tallying algorithm is sound but not
complete (an example is given in \cref{ex:incomplete}; a complete algorithm exists when record types are kept out
of the equation: see~\citet{castagna15}). We prove that one of the
reasons for incompleteness is that we interpret row variables into
Boolean combinations of rows rather than into single rows, and we
conjecture that completeness can be recovered in the latter case, but at
the expenses of the type system which can type fewer expressions (cf.\
\cref{ex:boolrows}).
Note however that all the examples given in the introduction
fall outside the incompleteness area: as \cref{ex:incomplete} shows, building an example for
incompleteness requires the application of higher-order functions
whose types map unions of record types into similar unions of record
types.

\ifsubmission
From a practical point of view, the main limitation of this
system is that it does not feature first class record labels. This omission, however, is deliberate since we wanted to
focus on row polymorphism, which is orthogonal to first class labels: it should not be  hard
to extend our work on the lines of~\citet{castagna23a} to add first
class labels to our system, and we leave it for future work.
\else
From a practical point of view, the main limitation of this
system is that it does not feature first class record labels, that is, the
operations for field selection, extension, and deletion must specify
nominal labels which, thus, cannot be obtained as the result of a
computation. This important omission might hinder the
application of our theory to dynamic languages where such a feature is
widely used. This omission, however, is deliberate since we wanted to
focus on the problem of row polymorphism, and having
first-class labels is mostly orthogonal to it. We believe that
it will not be  hard
to extend our work on the lines of~\citet{castagna23a} to add first
class labels to our system, and we leave it for future work.
\fi
Finally, we \new{expect} that implementing our system in a language such as
Elixir will not affect the performance of the type checker: the complexity of
our subtyping algorithm is the same as the one currently implemented in Elixir,
while the complexity of the tallying algorithm and, thus, of type inference, is
the same as the one implemented in CDuce~\cite{castagna15} and planned for
Elixir~\cite{castagna24a}, \new{up to a low multiplicative factor (cf.\ Footnote~\ref{fn:multiplicative}).
Upcoming work on implementing the type system (in CDuce)
shall provide empirical evidence.}

\section{Types}
\label{sec:types}
We introduce the syntax of types (\cref{sec:typesyntax}) and their set-theoretic
interpretation from which we derive the subtyping relation
(\cref{sec:models}).
We define the algorithm to decide the subtyping relation (\cref{sec:subtyping})
and prove that subtyping is preserved by type and row substitutions (\cref{sec:rowsub}).

\subsection{Syntax of Types}\label{sec:typesyntax}

\begin{definition}[Types, rows and kinds]\label{def:types}
  Let $\Labels$ be a countable set of \emph{labels} ranged over by $\ell$.
  The set $\Types$ of \emph{types} (ranged over by $t$) contains all terms
  \emph{coinductively} generated by the corresponding grammar below, and that $(1)$ have a
  finite number of different sub-terms (\emph{regularity}) and $(2)$ in which every
  infinite branch contains an infinite number of occurrences of the record or
  arrow type constructors (\emph{contractivity}).
  The set $\Rows$ of \emph{rows} (ranged over by $\rw$) as well as the set of
  \emph{field-types} (ranged over by $\tau$) contain all terms
  \emph{inductively} generated by the corresponding grammars below.

  \[
    \begin{array}{lrclr}
      \textbf{Kinds} & \kappa &\Coloneqq&
      \ktype
      \alt \kfield
      \alt \krow L \\
      \textbf{Types} & t &\Coloneqq&
      \alpha \alt
      b \alt
      \arrow t t \alt
      \rec{\ell = \tau, \dots, \ell = \tau}{\tl} \alt
      t \lor t \alt
      \neg t \alt
      \Empty\\
      \textbf{Field-types} & \tau &\Coloneqq&
      \fvar \alt
      t \alt
      \Undef \alt
      \tau \lor \tau \alt
      \neg \tau \\
      \textbf{Tails} & \tl &\Coloneqq&
      \rho \alt
      \crecsign \alt
      \orecsign\\
      \textbf{Rows} & \rw &\Coloneqq&
      \row{\ell = \tau, \dots, \ell = \tau}{\tl}{L} \alt
      \rw \lor \rw \alt
      \neg \rw
    \end{array}
  \]
  where, $\alpha$, $\fvar$, $\rho$ range, respectively, on type variables, field-type variables, and row variables; $\bot$ is a distinguished symbol different from all types; and (from now on) $L \in \Pf(\Labels)$ is a finite set of labels.
\end{definition}
\noindent
Following a mathematical logic terminology, basic types, arrows, and records are
called \emph{type constructors} and yield \emph{type atoms}, while unions, intersections, and
negations are \emph{type connectives}.
Our system use kinds. Types are of kind $\ktype$, field-types of kind $\kfield$, and we have an
infinite set of kinds for rows, parametrized by a finite set $L$: a
row indexed by the set $L$ is of kind $\krow L$.
We use $\tterm$ to range over types, field-types, and rows
\new{(i.e., $\tterm$ stands for either $t$, $\tau$, or $\rw$)}, and
define $\tterm_1 \land \tterm_2 \eqdef \neg(\neg \tterm_1 \lor \neg \tterm_2)$
and $\tterm_1{\setminus}\tterm_2 \eqdef \tterm_1 \land \neg\tterm_2$.
For every kind, besides the full set of type connectives, there are a bottom
and a top element (forming a lattice w.r.t.\ subtyping).
In $\ktype$, the top type is noted $\Any \eqdef \neg\Empty$;  in $\kfield$,
the top field-type is $\Any \lor \Undef$ (notice that $\Any$ contains all types, but not $\Undef$ which is not a type); in each kind
$\krow L$, the top row element is $\orow{}L$ 
We use the generic notation $\Empty \eqdef \neg\orow{}L$ for the
bottom element in $\krow L$, with $L$ being, thus, implicitly given by the context.

Besides the aforementioned connectives, the types $t$ of \cref{def:types} are
made of variables (from a countable set $\typeVars$ and ranged over by
$\alpha$), a finite set $\Basics$ of basic types (e.g., \Int, \Bool; ranged
over by $b$), function types, and record types.
Coinduction accounts for recursive types and comes with the usual
restrictions of regularity ---necessary for the decidability of the
subtyping relation--- and
contractivity ---which rules out meaningless types such as an infinite tower of
negations, while providing a well-founded order for inductive proofs (see~\cite{castagna24} for details).
\new{Notice that since both field-types and rows are built inductively, they are always \emph{finite} unions or negations of, respectively, field-types or rows ---this aligns with the contractivity requirement of types---, which are formed of (possibly infinite) types.}

\paragraph{Records}
Our records are based on the theory for records defined by
\citet{frisch04} (see~\cite{castagna23a} for a description in English)
and first used in CDuce. Our work extends the (monomorphic) record theory of~\cite{frisch04}  with row and field-type variables.
In Frisch's theory, a record \emph{value} is a total function on $\Labels$ that maps
a finite set of labels into  values, and all the remaining labels to a distinguished symbol
$\Undef$ representing the undefined (such a function is dubbed
\emph{quasi-constant} by Frisch: cf.\ \cref{def:qcf}).

Record type atoms (ranged over by $\R$) are types of the form
$\rec{\ell_1 = \tau_1, \dots, \ell_n = \tau_n}{\tl}$, that is, an
unordered list describing the  mapping  of a finite set of pairwise
distinct labels into field-types, which is followed by a \emph{tail} $\tl$
that covers the
infinitely many remaining labels.
We often use the more compact notation
$\rec{(\ell = \tau_\ell)_{\ell \in L}}{\tl}$, where $L \in \Pf(\Labels)$.
For $\R=\rec{(\ell = \tau_\ell)_{\ell \in L}}{\tl}$, we define
$\fin(\R) \eqdef L$ and $\tail(\R) \eqdef \tl$.

\paragraph{Fields}
In~\cite{frisch04}, field-types $\tau$ are either:
(a) \emph{mandatory} when $\tau = t$ with type $t$;
or (b) \emph{optional} when $\tau = t \lor \Undef$ with type $t$,
and in particular always undefined if $\tau = \Undef$---i.e., $\tau = \Empty\lor\Undef$ (notice that
  undefined fields cannot be typed just by $\Empty$:
  records are modeled as (indexed) products, in which a single empty
  field makes the whole record empty; also notice that since $\Undef$ is distinct from all types, then $t \lor \Undef \neq t$ and $\Undef \wedge t = \emptyset$ for any type~$t$).
Our definition adds field-type variables (\emph{field variables} for short),
ranged over by $\fvar$ and drawn from a countable set $\fldVars$.
In what follows, they are used in two ways:
(1) to solve the tallying problem (see \cref{sec:tallying})
and (2) to implement \emph{presence polymorphism} without additional effort (see
the discussion in \cref{sec:related}).
The introduction of field variables forces us to loosen the form of
field-types by allowing arbitrary Boolean combinations. 

\paragraph{Rows}
The tail $\tl$ of a record type atom $\R = \rec{(\ell = \tau_\ell)_{\ell \in L}}\tl$
can be of three sorts.
If $\tl = \crecsign$, then $\R$ is \emph{closed}, and only includes
records that
assign $\Undef$ to every label outside $L$.
If $\tl = \orecsign$, then $\R$ is \emph{open}, and imposes no restriction on the
values of the fields outside $L$.
That is, those labels are given field-type $\Any \lor \Undef$.
  Giving $\Any$ would be unsound, as it would imply that the
  selection of any remaining field of a record of this type would always be
  well-typed, even for fields not containing a value (such fields always exist
  as records are \new{built inductively}).
New in our work with respect to \cite{frisch04} is that $\tl$ may also be a row variable (taken from a countable set $\rowVars$ ranged
over by $\rho$) in which case $\R$ is \emph{polymorphic}.

In $\R$, the row variable $\rho = \tl$ defines the fields for the cofinite set of
labels $\rdef(\rho)$ that we call the \emph{definition space} of $\rho$.
Now, since record types are \textbf{total} functions on $\Labels$, we cannot use
them to interpret row variables.
For this, we use the new syntactic category of \emph{rows} from \cref{def:types},
that denote \textbf{partial} functions on $\Labels$ defined on co-finite sets of labels.

Rows are of the form $\row{(\ell = \tau_{\ell})_{\ell \in L}}{\tl}{L'}$.
The relevant part is the set $L'\in\Pf(\Labels)$ at the index, which denotes the
finite set of labels on which the row is \emph{not} defined. In
other terms, the row above is a total function from
$\Labels\setminus L'$ into field-types: the $\tau_\ell$'s define the
fields for the labels in $L$ and the tail $\tl$ the fields for the
labels in $\Labels\setminus(L\cup L')$.

Of course, not every row or record type is well-formed, since we must ensure
that the various $L$, $L'$, and $\rdef(\rho)$ form a partition of $\Labels$.
They have to verify the following three properties (enforced statically by a kinding
system whose straightforward definition is given in \cref{app:types}, \cref{fig:kinding}):
\begin{enumerate}[leftmargin=15pt]
  \item In a type $\rec{(\ell = \tau_\ell)_{\ell \in L}}{\rho}$ we must have
    $\Labels\setminus L = \rdef(\rho)$;
  \item In a row $\row{(\ell = \tau_\ell)_{\ell \in L}}{\tl}{L'}$ we must have
    $L \cap L' = \emptyset$ and, if $\tl = \rho$, then
    $\Labels\setminus(L\cup L')=\rdef(\rho)$;
  \item Unions (and, thus, intersections) are only on rows defined on the
    same set of labels.
\end{enumerate}
Finally, for all $\ell \in \Labels$, we define $\R(\ell)$
to yield the field-type associated to $\ell$ by $\R$:

\(\R(\ell) \eqdef \left\{\begin{array}{ll}
    \tau_\ell &\text{if }\ell\in L \text{; otherwise:}\\
    \Undef   &\text{if } \tl=\crecsign\\
    \Any\lor\Undef &\text{if } \tl = \orecsign \text{ or } \tl \in \rowVars
\end{array}\right.\)

\noindent
This operator, as well as others, is trivially transferred from record types to rows.
Notice that in the case of open and closed records (and rows), the choice of $L$
for a record type is not canonical, as for instance
$\crec{\ell_1 = \Int \lor \Undef, \ell_2 = \Undef}$ and $\crec{\ell_1 = \Int
\lor \Undef}$ are semantically equivalent.
Since row variables have a constant definition space however, a record type like
$\rec{\ell = \Any \lor \Undef}\rho$ has a single (top-level) representation with
$L = \{\ell\} = \Labels \setminus \rdef(\rho)$.

\begin{example}[Examples of record types and rows]
  $t = \orec{\ell_1 = \Int, \ell_2 = \Bool}$ is an open record type, whose values
  are records with at least an integer in $\ell_1$ and a boolean in
  $\ell_2$---e.g., $\erecord{\ell_1 = 3, \ell_2 = \True, \ell_3 = \Keyw{‘a’}}$.
  Its closed counterpart, $\crec{\ell_1 = \Int, \ell_2 = \Bool}$, types records in which both and only $\ell_1$ and $\ell_2$ have some value. The type
  $t$ is equivalent to (i.e., it is both a subtype and a supertype of)
  both $\orec{\ell_1 = \Int, \ell_2 = \Bool, \ell_4 = \Any \lor \Undef}$
  and $\orec{\ell_1 = \Int} \wedge \orec{\ell_2 = \Bool}$.
  By De Morgan's  laws, $\neg t$ is equivalent to
  $\neg \orec{\ell_1 = \Int} \vee \neg\orec{\ell_2 = \Bool}$ and thus to $\orec{\ell_1 = \neg\Int \lor \Undef} \vee \orec{\ell_2 = \neg\Bool \lor \Undef}$.
  The type $\orec{\ell_1 = \Empty}$ is equivalent to $\Empty$, and so are both
  $\crec{\ell_2 = \Empty}$ and $\rec{\ell_1 = \Empty}\rho$.\hfill\qed
\end{example}

\subsection{Subtyping Relation}
\label{sec:models}
The subtyping relation characterizes the type system:
union and intersection types are the least upper bounds and
greatest lower bounds of this relation, and the typing relation
relies on subtyping to compare types.
Our goal is to extend the monomorphic type theory of records of~\citet{frisch04}
to a polymorphic one, obtained by adding row variables.
For this, we stick to the technique employed to extend the theory of semantic
subtyping on monomorphic types to type-polymorphic ones
\cite{frisch08,castagna11,gesbert15}.
This technique can be distilled into three conceptual steps:
\begin{enumerate}[leftmargin=15pt]
  \item \emph{Define the subtyping relation on monomorphic types.} This is
    done in \cite{frisch08} by the definition of a \emph{set-theoretic model}.
    This definition describes how to interpret types as subsets of some domain
    whose elements represent the values of the language. Given a specific
    model (i.e., a specific domain $\Domain$ and an interpretation
    function $\TypeInter{\,}$ from types in $\Types$ to sets in $\Pd(\Domain)$), this induces a subtyping
    relation defined as the containment of
    the interpretations  (i.e., $t\leq t'\iffshortdef\TypeInter t \subseteq \TypeInter{t'}$).
  \item  \emph{Extend the definition of the \emph{model} to polymorphic
    types}. This is done by making
    the interpretation of types parametric in the interpretation $\eta$
    of \emph{type variables} as sets of values. Subtyping can then be defined
    as containment for every type variable interpretation $\eta$ (i.e., $t\leq t'\!\iffshortdef\!\forall\eta.(\TypeInter
      t\eta \subseteq \TypeInter{t'}\eta)$), but only for models that satisfy a
      so-called “convexity” condition~\cite{castagna11} (see after
      \cref{def:interpretation}).
  \item \emph{Exibit a specific convex model for polymorphic types and deduce
    the subtyping relation}.  \citet{gesbert15} show that a
    convex model for polymorphic types can be obtained by taking a specific model for
    monomorphic types and indexing all its elements by finite sets of
    type variables. 
\end{enumerate}
Henceforth, we apply the same approach to record types:
\begin{enumerate}[leftmargin=17pt]
\item[(r1)] The subtyping relation for monomorphic record types is defined
by interpreting them as sets of \emph{record values}
(i.e., \textbf{total} functions from labels to values).
\item[(r2)] We extend this interpretation to polymorphic record types, by
making it parametric in the interpretation $\eta$ of \emph{row
variables} into sets of \emph{rows}  (i.e., \textbf{partial}
functions from labels to values).
\item[(r3)] We define a convex model of polymorphic record types, by taking
a specific model for monomorphic record types and indexing all its
elements by finite sets of \emph{row variables}.
\end{enumerate}
The sole distinction between the two approaches, thus, lies in the
interpretation of \emph{type variables} and of \emph{row
variables}. While \emph{type} variables are mapped into sets of
values, \emph{row} variables are mapped into sets of \emph{rows}. Rows
are not values themselves but rather “chunks” of values: both (record)
values and rows are quasi-constant functions; however, (record) values
are total functions, whereas rows are partial ones.
We thus achieve the same conceptual simplicity as the polymorphic
extension of semantic subtyping, though the technical development is, in
our case, far more involved.

For space reasons, we present here only the final result of this
process, that is, the specific convex model of step (r3), given by the
domain of~\cref{def:interpretation} and the interpretation
of~\cref{def:indint}. We then derive from
it the definition of the subtyping relation (\cref{def:subtyping}) and
a decision  procedure that we prove sound, complete, and terminating
(\cref{l:correct_subalg} and \cref{l:term_subalg}). The development of the first
two steps is necessary only to prove that the domain and the interpretation
given below satisfy the properties of model and of convexity. This
development and the corresponding proofs are given
in \cref{sec:app-models}%
\ifsubmission
, provided as supplemental material of the submission%
\fi
. Although a detailed explanation of these properties
is outside the scope of this presentation, we want to stress that
without the model property the definition of  subtyping
would not be well-founded and without the convexity property the decision
procedure would not be sound (see~\cite{frisch08,castagna11} for details).

To interpret record values we follow~\citet{frisch04}, and represent each record
value by a quasi-constant function that either maps labels into values
(i.e., the elements of $\Domain$) or into $\Undef$.
Let us write  $\Domain_\Undef$ for $\Domain\cup\{\Undef\}$  where $\Undef$ is a
distinguished element not in $\Domain$.
Quasi-constant functions are total functions that map all but a finite set of elements of their
domain into the same default value.
Formally, we have the following definition.
\begin{restatable}[\cite{frisch04}]{definition}{quasiconstant}
  \label{def:qcf}
  A function $r:\Labels \to \Domain_\Undef$ is \emph{quasi-constant} if the set
  $\lbrace \ell \in \Labels \mid r(\ell)\not = \Undef\rbrace$ is finite.
  We use $\Labels \qcfun \Domain_\Undef$ to denote
  the set of quasi-constant functions from $\Labels$ to $\Domain_\Undef$ and
  $\qcf{\ell_1=\delta_1, \ldots,\ell_n=\delta_n,\,\wild=\Undef}$ to denote the
  quasi-constant function $r:\Labels\qcfun \Domain_\Undef$ defined
  by $r(\ell_i)=\delta_i$ for $i=1..n$ and $r(\ell)=\Undef$ for
  $\ell\in\Labels{\setminus}\lbrace\ell_1,\ldots,\ell_n\rbrace$.
\end{restatable}
We also have to interpret \emph{rows}, which are defined only on a cofinite \emph{subset} of $\Labels$.
In other terms, rows are \emph{partial} quasi-constant functions from
$\Labels$ to $\Domain_\Undef$, that we note $\Labels\pqcfun \Domain_\Undef$.
Since a total function is also a partial one, then we need just
the latter in our domain to interpret record types and rows. This
yields the following definition of domain.
\begin{definition}[Domain]\label{def:interpretation}
  The \emph{interpretation domain} $\Domain$ for types, is the set of finite terms $d$
  inductively produced by the following grammar, where $c$ ranges over the set
  $\Constants$ of constants, $\ell$ over the set $\Labels$ of labels, and $V$ over sets of variables contained
  in $\Vars=\typeVars\cup\fldVars\cup\rowVars$.
  The interpretation domain $\Domain_\Undef$ for fields (resp. $\Domainrow$ for
  rows) is the set of terms $\dundef$ (resp. $\drow$).
  \begin{align*}
    d & \Coloneqq  c^V \mid \{(d, \domega), \dots, (d, \domega)\}^V
    \mid \domrec{\drow}^V
      & \rdef(\drow) = \Labels \\[-1.3mm]
    \drow & \Coloneqq \domrow{\ell_1 = \dundef, \dots, \ell_n = \dundef}{L}^V
          & L \in \Pf(\Labels\setminus\{\ell_1,...,\ell_n\}) \\[-1.3mm]
    \domega & \Coloneqq d \mid \Omega\\[-1.3mm]
    \dundef & \Coloneqq d \mid \Undef^V
  \end{align*}
  We use $\dterm$ for an element that is either $d$, $\drow$ or $\dundef$.
  We define $\rdef(\domrow{(\ell = \dundef_\ell)_{\ell \in L_1}}{L_2}^V) =
  \Labels \setminus L_2$. We use $\Tag(\dterm)$ to denote the set of
  variables indexing $\dterm$, that is, $\Tag(c^V)=\Tag(\{(d, \domega), \dots, (d, \domega)\}^V)=\Tag(\domrec{\drow}^V)=\Tag(\domrow{\ell_1 = \dundef, \dots, \ell_n = \dundef}{L}^V)=\Tag(\Undef^V) =V$.
\end{definition}
The elements of the domain are constants $\Constants$ to interpret basic types,
sets of finite binary relations  $\Pf(\Domain \times \Domain_\Omega)$ to
interpret function types, and \emph{partial} quasi constant functions
$\Labels \pqcfun \Domain_\Undef$ to interpret rows (and record types
by the total ones). The fact
that functions are finite binary relations is a standard technique of semantic
subtyping, and corresponds to interpreting function spaces
into the infinite set of their finite approximations; that these binary relations can yield a distinguished
element $\Omega$ (which, intuitively, represents a type error) is also
a standard technique of semantic subtyping used to avoid $\Any\to\Any$ to be a supertype  of
all function types: since both aspects do not play any specific role in our
work we will not further
comment on them (see~\cite{frisch08} for a detailed explanation
or~\cite[Section 3.2]{castagna23a} for a shorter one).

If we look more closely at the definition of the row elements
in \cref{def:interpretation}, we see
that they are \emph{partial} quasi-constant functions in
$\Labels\pqcfun\Domain_\Undef$ with default value $\Undef$. More
precisely, the row element $\domrow{\ell_1 = \dundef_1, \dots, \ell_n = \dundef_n}{L}$
is the quasi-constant function $\qcf{\ell_1
= \dundef_1, \dots, \ell_n = \dundef_n,\wild=\Undef}$ in $(\Labels\setminus
L)\qcfun\Domain_\Undef$. When a row is total on $\Labels$, then it can
be wrapped in a $\domrec{}$ constructor yielding (the interpretation
of) a record value (the inverse operation is given in the \cref{def:row}
below).

All these elements are indexed by
a finite set of variables ranged over by $V$. This technique was introduced
by~\citet{gesbert15} to interpret type variables (cf.~\cref{def:indint}), while ensuring that the model
we obtain is \emph{convex} in the sense of~\citet{castagna11}. Convexity is
a property that prevents the definition of meaningless subtyping relations, by
imposing that the interpretation of types changes uniformly for any possible
change of the interpretation of the type variables.\footnote{Formally,
convexity states that for every finite set of types, if for every interpretation
of the type variables this sets contains at least one empty type, then it is because it
contains a type that is empty for all interpretations.}  A sufficient condition
to satisfy convexity is that the interpretation maps every type into an infinite
set. Indexing each element of the domain with a finite set of variables is an
easy way to guarantee this since, for instance, even the interpretation of the
singleton type $c$ is the infinite set $\{c^V\alt V\in\Pf(\Vars)\}$.
In summary, the domain of \cref{def:interpretation} is the
one by~\citet{gesbert15}, but where pairs (inhabiting product types) are replaced by
record values of the form $\domrec\drow$ and rows.

\begin{definition}\label{def:row}
  Let $\R = \rec{(\ell = \tau_\ell)_{\ell \in L}}{\tl}$.
  We define $\rectorow\R = \row{(\ell = \tau_\ell)_{\ell \in L}}{\tl}{\emptyset}$.
  We extend this definition homomorphically to Boolean combinations of
  record type atoms.
\end{definition}

We have now all ingredients needed to define our set-theoretic interpretation
for the types:

\begin{definition}[Interpretation]
  \label{def:indint}%
  Let $\Domain$ be the domain of~\cref{def:interpretation} and $\Types$ the
  types of \cref{def:types}. We define a binary predicate $(\dterm : \tterm)$
  (“the element $\dterm$ belongs to $\tterm$”) on $\Domain\times\Types \cup
  \Domain_\Undef \times \Types_\Undef \cup \Domainrow \times \Rows$
  by induction on the pair $(\dterm, \tterm)$ ordered lexicographically.
  The predicate is only defined if $\dterm$ is coherent with the kind of
  $\tterm$:
  $\dterm = d$ if $\tterm = t$, $\dterm = \dundef$ if $\tterm = \tau$, and
  $\dterm = \drow$ if $\tterm = \rw$ and $\dom(\dterm) = \rdef(\rw)$.

  \begin{description}[left=0mm]
    \item[{\framebox[14mm]{\rm Types:}}]
     \qquad\!\!\!\(\begin{array}[t]{c}
        (d : \alpha)
        = \alpha \in \Tag(d) \label{due}
        \qquad (c^V : b)
        = c \in \ConstantsInBasicType(b)
        \qquad (\domrec{\drow}^V:\R)
        = (\drow:\rectorow{\R})
\\[.2mm]        (\{(d_1, \domega_1), \dots, (d_n, \domega_n)\}^V : t_1 \to t_2)
        = \forall i \in [1.. n] . \:
        \mathsf{if} \: (d_i : t_1) \mathrel{\mathsf{then}} (\domega_i : t_2)
        \end{array}\)

    \item[{\framebox[14mm]{\rm Fields:}}] \qquad\(        (\dundef : \fvar) = \fvar \in \Tag(\dundef)
        \qquad\qquad\qquad (\Undef^V : \Undef) = \mathsf{true}\)\\[-1mm]

    \item[{\framebox[14mm]{\rm Rows:}}]\mbox{}\\[-9mm]
      \begin{equation}
        \label{eq:rowint}%
        \begin{split}
          \;\;(\domrow{(\ell = \dundef_\ell)_{\ell \in L_1}}{L_2}^V:\rw)
        &= (\forall \ell \in L_1. (\dundef_\ell:\rw(\ell)))\\[-1.8mm]
        &\phantom{=}\ \textand
        (\forall\ell \in \rdef(\rw)\setminus L_1. (\Undef^\emptyset:\rw(\ell)))\\[-1.8mm]
        &\phantom{=}\ \textand \tail(\rw) = \rho \implies \rho \in V
        \end{split}\vspace{-2mm}
    \end{equation}
    
    \item[{\framebox[14mm]{\rm All:}}]
      \qquad\!\!\!\(\begin{array}[t]{rll}
        (\dterm : \tterm_1 \lor \tterm_2)
        &=\quad(\dterm : \tterm_1) \textor (\dterm : \tterm_2)\hspace*{8mm}
        &\text{if $\tterm_1$, $\tterm_2$ of the same kind}\\
        (\dterm : \lnot \tterm)  &=\quad\mathsf{not} \: (\dterm : \tterm)
        &\text{if the kinds of $\dterm$ and $\tterm$ correspond} \\
        (\dterm : \tterm)  &=\quad\mathsf{false} & \text{otherwise}
        \end{array}\)
  \end{description}
  We define the interpretation $\TypeInter{\cdot} : \Types \to \Pd(\Domain)$
  as $\TypeInter{t} = \{d \in \Domain \mid (d : t)\}$.
\end{definition}\noindent
We cannot define $\TypeInter{\cdot}$ by induction on types, since
their coinductive definition would make the definition
ill-founded. Thus, \cref{def:indint} uses the predicate
$(\dterm : \tterm)$ for which an inductive definition is possible thanks to the inductive definition of $\Domain$. 
The interpretation of types in \cref{def:indint} is mostly the same as in~\cite{gesbert15}: a type variable $\alpha$ is interpreted as the set of all elements tagged by $\alpha$; a type $t_1\to t_2$ is the set all the finite approximations of functions that map inputs of type $t_1$ into results of type $t_2$; and union, intersection, and negation types are mapped into the corresponding set-theoretic counterparts.  The main difference with \cite{gesbert15} is
the interpretation of rows and, thus, of record types. Equation~\eqref{eq:rowint}
defines when a row element is in the
interpretation of a row  $r$: it requires that all the components of
the row element are in the interpretations of the types specified by
$r$ (first two lines), and if the tail of $r$ is a row variable,
then it must index the row element (last line).

\begin{definition}[Subtyping]\label{def:subtyping}
Let  $\TypeInter{\cdot} : \Types \to \Pd(\Domain)$ be the interpretation from \cref{def:indint}. It induces the
following subtype relation in $\Types\Times\Types$:\\[1pt]
\centerline{
\(
t_1\leq t_2\iffdef\TypeInter {t_1}\subseteq\TypeInter{t_2}\)}\\[2pt]
The interpretation also induces the subfield relation in
$\Types_\Undef\Times\Types_\Undef$  and subrow relation in
$\Rows\Times\Rows$ defined as\\[1pt]
\centerline{
\( \tau_1\leq\tau_2 \iffdef \IntF{\TypeInter
{\tau_1}}\subseteq\IntF{\TypeInter{\tau_2}}\qquad r_1\leq r_2 \iffdef \IntR{\TypeInter
{r_1}}\subseteq\IntR{\TypeInter{r_2}}\)}\\[2pt]
where, the interpretation $\IntF{\TypeInter{\cdot}}
  : \Types_\Undef \to \Pd(\Domain_\Undef)$ is defined as
  $\IntF{\TypeInter{\tau}} = \{\dundef \in \Domain_\Undef \mid
  (\dundef:\tau)\}$, and the interpretation $\IntR{\TypeInter{\cdot}}
  : \Rows \to \Pd(\Domainrow)$ is defined as $\IntR{\TypeInter{\rw}}
  = \{\drow \in \Domainrow \mid (\dbar:\rw)\}$.
  \end{definition}

\subsection{Deciding Subtyping}
\label{sec:subtyping}

Now that subtyping is defined, we need an
effective decision procedure sound and complete
with respect to this definition.
Deciding subtyping amounts to deciding the emptiness of a type, since $t_1 \leq
t_2$ is equivalent to $t_1 \wedge \neg t_2 \leq \Empty$.
From \cite{frisch08}, we know that any type can be equivalently rewritten into a
disjunctive normal form (DNF) of the form $\bigvee_{i\in I}(\bigwedge_{a\in
P_i} a\wedge\bigwedge_{a\in N_i}\neg a \wedge \bigwedge_{\alpha \in
V^p_i} \alpha \wedge \bigwedge_{\alpha \in V^n_i} \neg\alpha)$ where each intersection
contains only atoms $a$'s with the same type constructors: they are
all basic types, or all arrows, or all records.
Thus, checking emptiness of a type amounts to checking emptiness of all
these intersections.

We suppose the sets $V_i^p$ and $V_i^n$ to be disjoint, as otherwise the
$i$-th intersection is trivially empty and can be discarded.
Then, emptiness of the intersections cannot depend on the type variables
whose intersection is not empty.
Thus, we just need decision procedures for the emptiness of the
$\bigwedge_{a\in P_i} a\wedge\bigwedge_{a\in N_i}\neg a$ parts.
Those are already known for every intersection of atoms, except for polymorphic records.
What is still missing is a formula that characterizes the emptiness of an intersection of the form
$\bigwedge_{\R \in P} \R \wedge \bigwedge_{\R \in N} \neg\R$, that is,
that decides whether $\bigwedge_{\R \in P} \R \leq \bigvee_{\R \in N} \R$ holds.

To rephrase, given any type $t$, the subtyping procedure recursively apply
these two steps:
\begin{enumerate}
  \item  Reduce $t$ to a DNF $\bigvee_{i\in I} t_i$ with $t_i =
    \bigwedge_{a\in P_i} a\wedge\bigwedge_{a\in N_i}\neg a
    \wedge \bigwedge_{\alpha \in V^p_i} \alpha \wedge \bigwedge_{\alpha \in V^n_i} \neg\alpha$;
  \item Check the emptiness of each $t_i$ by checking if
    $\bigwedge_{a \in P_i} a \leq \bigvee_{a \in N_i} a$ is empty:
    \begin{itemize}[leftmargin=10pt]
      \item[-] If the atoms are not records, we use the existing corresponding
        functions given in \cite{frisch04}.
      \item[-] If they are (polymorphic) records, we use the new algorithm
        that we describe below.
    \end{itemize}
\end{enumerate}

\paragraph{Subtyping algorithm}

Let $t = \bigwedge_{\R \in P} \R \wedge \bigwedge_{\R \in N} \neg\R$.
  Deciding emptiness of this type is done in two main steps.
First, preprocess $t$ by normalizing the positive
side of the type to \emph{isolate} row variables.
For this, we use the equivalence between
$\rec{(\ell = \tau_\ell)_{\ell \in L}}{\tl}$ and
$\orec{(\ell = \tau_\ell)_{\ell \in L}} \wedge \rec{(\ell = \Any \lor
\Undef)_{\ell \in L}}{\tl}$.
Note that a type $\rec{(\ell = \Any \lor \Undef)_{\ell \in L}}{\tl}$ will be
abbreviated as $\rec{L}{\tl}$.
Second, apply the function $\Phi$, the core of our algorithm that we describe
below.

Let $L = \bigcup_{\rw \in P \cup N} \fin(\rw)$ be the set of labels appearing
explicitely in every (top-level) record atom of $t$.
Our starting type $t$ is equivalent to the following intermediate one:\vspace*{-1.3mm}
\begin{equation}
  \label{eq:subtyping_two}%
  \orec{(\ell = {\textstyle\bigwedge_{\R \in P} \R(\ell)})_{\ell \in L}}
  \wedge \bigwedge_{\R \in P} \rec {\fin(\R)}{\tail(\R)}
  \wedge \bigwedge_{\R \in N} \neg \R \vspace{-2.1mm}
\end{equation}
This type was obtained by transforming the positive $\bigwedge_{\R \in P} \R$ part of $t$: we merged the fields over $L$ into a single atomic record type,
and grouped the tails of positive records in a separate intersection.
Next, we are going to rewrite the type in \eqref{eq:subtyping_two} by
splitting the middle intersection $\bigwedge_{\R \in P} \rec
{\fin(\R)}{\tail(\R)}$ in two: an intersection with all the record
atoms whose tail is a row variable, and all the others that we will merge
with the leftmost record in \eqref{eq:subtyping_two}. For that, let us define
$\tl_\circ$ to represent the intersection of the tails of
the records in $P$ whenever this tail is either $\crecsign$ or $\orecsign$, that
is, $\tl_\circ = \crecsign$ if there is $\R \in P$ such that $\tail(\R) = \crecsign$, and
$\tl_\circ = \orecsign$ otherwise.
If we take all the records in $\bigwedge_{\R \in P} \rec
{\fin(\R)}{\tail(\R)}$ whose tail is not a row variable and intersect
them with the leftmost record in \eqref{eq:subtyping_two}, then
we obtain the record type $\R_\circ = \rec{(\ell = {\textstyle\bigwedge_{\R \in P}
\R(\ell)})_{\ell \in L}}{\tl_\circ}$.
Notice that $\R_\circ$ is a monomorphic record type.
For the remaining records in the middle intersection, let us denote by $V_p = \{\rho \mid \exists \R \in P. \tail(\R) = \rho\}$ the set of all
top-level type variables occurring in $P$.
The intersection of atoms in \eqref{eq:subtyping_two} is then equivalent to the following type, which is the one
for which we have to decide emptiness:\vspace*{-1.3mm}
\begin{equation}\label{eq:subtyping_three}
  \R_\circ
  \wedge \bigwedge_{\rho \in V_p} \rec{\Labels{\setminus}\rdef(\rho)}{\rho}
  \wedge \bigwedge_{\R \in N} \neg \R\vspace{-2mm}
\end{equation}
The second step of our algorithm is realized by the function
$\Phi(\R_\circ, V_p, N)$, where $\R_\circ\not\leq\Empty$:\\[.3mm]
\( \begin{array}{rcl}
  \Phi(\R_\circ, V_p, \emptyset)
  &\coloneq& \textsf{false}\\
  \Phi(\R_\circ, V_p, N \cup \{\R\})
  &\coloneq& \textsf{if } (\tail(\R) = \orecsign\ \textsf{ or}\ \tail(\R) = \tail(\R_\circ)
  \ \textsf{ or}\ \tail(\R) \in V_p) \textsf{ then}\\
  &&\quad \forall \ell{\in}\fin(\R_\circ).\ (\R_\circ(\ell) \leq \R(\ell) \textsf{ or }
  \Phi(\R_\circ \wedge \orec{\ell:\neg\R(\ell)}, V_p, N))\\
  &&\textsf{else } \Phi(\R_\circ, V_p, N)
\end{array} \)\\[.3mm]
The function must decide whether $\R_\circ \wedge \bigwedge_{\rho \in V_p} \rec{\Labels{\setminus}\rdef(\rho)}{\rho}
\leq \bigvee_{\R\in N} \R$, so it  picks an $\R\in N$ and
generates the conditions to test the containment. The first clause of
the definition states that if we already examined all $\R\in
N$, then subtyping does not hold, since $\R_\circ\nleq\Empty$ and
so its intersection with some row variables is also non-empty. If $\R_\circ$ is
open and $\R$ is closed, or if $\R$ is polymorphic, but its row
variable is not one in $V_p$, then the
containment cannot come from this particular $\R$, and it is
discarded: this corresponds to the \textsf{else} branch of second
clause. Otherwise, we are in the case in which either $\R_\circ$ is
closed, or we are comparing two records types with a common row
variable, or $\R$ is open. In these cases we compare $\R_\circ$ and $\R$
component-wise, and for each $\ell$ we check that either
$\R_\circ(\ell)\leq\R(\ell)$ or that the part that is in excess in
$\R_\circ(\ell)$ is contained in the records remaining in
$N$.\footnote{This formula generalizes
  the decomposition for tuples: e.g., if
  $t_\circ\Times t_\circ'\not\leq\Empty$ then
  $t_\circ\Times t_\circ'\leq t_1\Times t_1' \vee  t_2\Times t_2' \iff
  (t_\circ\leq t_1\textor (t_\circ{\setminus} t_1)\Times t_1'\leq
  t_2\Times t_2')\textsf{ and } (t_\circ'\leq t_1'\textor t_1\Times (t_\circ'{\setminus} t_1')\leq
  t_2\Times t_2')$: see~\cite[Appendix D]{castagna23a} for a
longer explanation.}

\loic{Nouvel exemple}
\begin{example}[Subtyping]
  Consider the verification of the following subtyping relation:\\[.3mm]
  \centerline{\(
    \rec{\Keyw{a} = \True, \Keyw{b} = \Int {\lor} \Bool}{\rho}
    \wedge \orec{\Keyw{b} = \Int {\lor} \String, \Keyw{c} = \Int}
    \leq \rec{\Keyw{a} = \Bool, \Keyw{b} = \Int}{\rho}
    \vee \rec{\Keyw{a} = \Int}{\rho'}
  \)}\\[.3mm]
  The first step of the subtyping algorithm isolates $\rho$, reducing the relation by equivalence into:
  \\[.3mm]
  \centerline{\(
    \orec{\Keyw{a} = \True, \Keyw{b} = \Int, \Keyw{c} = \Int}
    \wedge \rec{\Keyw{a}, \Keyw{b}}{\rho}
    \wedge \neg\rec{\Keyw{a} = \Bool, \Keyw{b} = \Int}{\rho}
    \wedge \neg\rec{\Keyw{a} = \Int}{\rho'}
    \leq \Empty
    \)}\\[.3mm]
  Following the notation used in equation~\eqref{eq:subtyping_three}, we have
  $\R_\circ = \orec{\Keyw{a} = \True, \Keyw{b} = \Int, \Keyw{c} =
  \Int}$,
  $V_p = \{\rho\}$ and
  $N = \{\rec{\Keyw{a} = \Bool, \Keyw{b} = \Int}{\rho},
  \rec{\Keyw{a} = \Int}{\rho'}\}$.
  The function $\Phi(\R_\circ, V_p, N)$ verifies subtyping in three steps.
  In the first step, it selects $\rec{\Keyw{a} = \Int}{\rho'}$ from $N$,
  which is discarded (i.e., in the second clause for $\Phi$, the \textsf{else} branch is taken), since $\rho' \notin V_p$.
  Second, it selects $\rec{\Keyw{a} = \Bool, \Keyw{b} = \Int}{\rho}$ from $N$.
  Since $\rho \in V_p$, it finishes by checking every label $\Keyw{a},
  \Keyw{b}$ and $\Keyw{c}$ w.r.t. $\R_\circ$ (i.e., the \textsf{then} branch in $\Phi$).
  Here, it successfully concludes for each of them, by directly checking the
  subtyping.\hfill\qed
\end{example}

The function presented generalizes the version for monomorphic records
given in \cite{castagna23a} and currently used in Elixir.
Interestingly, the sole difference is the addition of the test ($\tail(\R) \in V_p$).
Since $V_p$ is constant in the function, then the complexity of
this function and of its monomorphic version are the same.

\begin{restatable}[Soundness and completeness of $\Phi$]{lemma}{correctsubalg}
  \label{l:correct_subalg}%
  Let $\R_\circ$ be a monomorphic record type, $V_p \subset \rowVars$
  finite and $N$ a finite set of (polymorphic) atomic record types. Then,
  \[
    \R_\circ \wedge \bigwedge_{\rho \in V_p} \rec{\Labels{\setminus}\rdef(\rho)}{\rho}
    \leq \bigvee_{\R \in N} \R \iff
    \R_\circ \leq \Empty \textor \Phi(\R_\circ, V_p, N).
  \]\vspace{-3mm}
\end{restatable}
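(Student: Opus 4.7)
The plan is to proceed by induction on $|N|$, after disposing of the degenerate case $\R_\circ \leq \Empty$, in which both sides of the equivalence hold vacuously. I therefore fix $\R_\circ \not\leq \Empty$ and prove that the left-hand side (LHS) $\R_\circ \wedge \bigwedge_{\rho \in V_p} \rec{\Labels{\setminus}\rdef(\rho)}{\rho}$ is contained in $\bigvee_{\R \in N} \R$ iff $\Phi(\R_\circ, V_p, N)$ holds. For the base case $|N|=0$, $\Phi$ returns $\textsf{false}$, and I must show that the LHS is not empty. Picking a witness $\domrec{\drow}^V \in \TypeInter{\R_\circ}$, I obtain a witness for the full LHS by augmenting $\Tag(\drow)$ with every $\rho \in V_p$: this re-tagging is legitimate because elements of the domain in \cref{def:interpretation} are indexed by arbitrary finite subsets of $\Vars$; it does not affect membership in $\R_\circ$, whose tail is in $\{\crecsign, \orecsign\}$ and hence tag-insensitive; and it immediately secures membership in each $\rec{\Labels{\setminus}\rdef(\rho)}{\rho}$ by the last line of equation~\eqref{eq:rowint}.

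For the inductive step I write $N = N' \cup \{\R\}$ and split on the two branches of $\Phi$. In the \emph{discard case}, $\tail(\R) \notin \{\orecsign, \tail(\R_\circ)\} \cup V_p$, so either (i) $\tail(\R) = \crecsign$ with $\tail(\R_\circ) = \orecsign$, or (ii) $\tail(\R) = \rho'$ with $\rho' \notin V_p \cup \{\tail(\R_\circ)\}$. Since $\Phi$ recurses on $N'$, it suffices to show that containment of the LHS in $\bigvee_{\R'' \in N} \R''$ is equivalent to its containment in $\bigvee_{\R' \in N'} \R'$, and then apply the induction hypothesis. The non-trivial direction contrapositively assumes a witness $d$ in the LHS with $d \notin \bigvee_{\R' \in N'} \R'$ (hence $d \in \R$). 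In case (i) I form $d'$ from $d$ by adjoining a field at a fresh label $\ell^* \notin \fin(\R_\circ) \cup \bigcup_{\R'' \in N} \fin(\R'')$; in case (ii) I form $d'$ by removing $\rho'$ from $\Tag(\drow)$. A case analysis on the sort of $\tail(\R')$ for each $\R' \in N'$ shows that $d'$ still lies in the LHS and outside every $\R' \in N'$, whereas $d' \notin \R$ by construction---contradicting the assumed containment.

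In the \emph{test case}, $\tail(\R) \in \{\orecsign, \tail(\R_\circ)\} \cup V_p$, so the tag requirement possibly imposed by $\R$ is automatically met by every element of the LHS, and the containment of the LHS in $\R \vee \bigvee_{\R' \in N'} \R'$ reduces to a purely structural comparison. Because preprocessing guarantees $\fin(\R) \subseteq \fin(\R_\circ)$, I apply the multidimensional tuple-style decomposition recalled in the footnote following the definition of $\Phi$: the containment holds iff for every $\ell \in \fin(\R_\circ)$ either $\R_\circ(\ell) \leq \R(\ell)$, or the variant obtained by replacing $\R_\circ$ by $\R_\circ \wedge \orec{\ell:\neg\R(\ell)}$ is contained in $\bigvee_{\R' \in N'} \R'$. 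The induction hypothesis at strictly smaller $|N|$ identifies the latter with $\Phi(\R_\circ \wedge \orec{\ell:\neg\R(\ell)}, V_p, N')$, which is exactly what the recursive call of $\Phi$ checks.

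The main obstacle is the convexity-style argument in the discard case: the variant $d'$ must simultaneously (a) remain in $\R_\circ$ and in every $\rec{\Labels{\setminus}\rdef(\rho)}{\rho}$ for $\rho \in V_p$, and (b) stay outside every $\R' \in N'$ whose tail interacts delicately with the edit performed. This requires a meticulous split according to whether $\tail(\R')$ is $\crecsign$, $\orecsign$, or a row variable (equal to $\rho'$ or not, inside $V_p$ or not), and it crucially exploits the variable-indexing of domain elements in \cref{def:interpretation}, whose very raison d'être is to enforce convexity precisely so that such syntactic manipulations faithfully witness non-inclusion in the underlying semantic model.
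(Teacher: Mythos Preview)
Your argument is correct and takes a genuinely different route from the paper's. The paper proceeds algebraically: it decomposes $\R$ as $\bigwedge_{\ell \in L}\orec{\ell=\R(\ell)}\wedge\rec{L'}{\tl}$, negates and distributes to obtain two subtyping obligations (one per-field, one for the tail term $\neg\rec{L'}{\tl}$), and discharges the tail obligation by invoking \cref{l:subtyping} to characterise exactly when $\R_\circ\wedge\bigwedge_{\rho\in V_p}\rec{\Labels{\setminus}\rdef(\rho)}{\rho}\wedge\neg\rec{L'}{\tl}$ is empty---which turns out to coincide with the \textsf{if}-test in $\Phi$. By contrast, you work directly in the model of \cref{def:interpretation}: in the discard branch you build an explicit counterexample by editing a witness (adding a fresh field in the closed case, stripping $\rho'$ from the tag in the row-variable case); in the test branch you observe that the tail constraint of $\R$ is vacuous on the LHS and reduce to the per-field tuple decomposition. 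Your approach is more elementary---it does not rely on the machinery of \cref{l:subtyping}---but pays for this with the delicate case analysis you flag at the end, which must verify that the edit preserves membership in the LHS and non-membership in every $\R'\in N'$ across all four tail sorts. One small point of exposition: in the test branch your phrase ``the tag requirement possibly imposed by $\R$'' covers only the sub-case $\tail(\R)\in V_p$; when $\tail(\R)=\crecsign=\tail(\R_\circ)$ the constraint that $\R$ imposes outside $\fin(\R_\circ)$ is a \emph{field} constraint ($\Undef$), not a tag constraint, and it is discharged because $\R_\circ$ is also closed---you should say this explicitly. Both proofs tacitly use $\fin(\R)\subseteq\fin(\R_\circ)$ (ensured by preprocessing), without which the statement actually fails.
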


\begin{restatable}{proposition}{termsubalg}
  \label{l:term_subalg}%
  The subtyping algorithm terminates. As a corollary, subtyping is decidable.
\end{restatable}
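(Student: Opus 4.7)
The proof factors into two layers. The inner layer is the function $\Phi$ on its own, treated as a recursive procedure whose self-calls concern a strictly decreasing third argument. The outer layer is the full subtyping procedure, which performs the DNF normalization, dispatches each conjunction of atoms to the appropriate solver — invoking $\Phi$ on record conjuncts — and whose recursive calls on field-type subqueries are precisely the subtyping tests $\R_\circ(\ell) \leq \R(\ell)$ and $\R_\circ \leq \Empty$ that appear inside $\Phi$. Once termination is established at both layers, decidability follows at once from $t_1 \leq t_2 \iff t_1 \wedge \neg t_2 \leq \Empty$ together with the soundness and completeness of $\Phi$ proved in \cref{l:correct_subalg}.

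\textbf{Termination of $\Phi$.} I would argue by strong induction on $|N|$. The base case $N = \emptyset$ returns $\textsf{false}$ with no further call. In the step case $\Phi(\R_\circ, V_p, N \cup \{\R\})$, both branches of the \textsf{if} only spawn self-calls of the form $\Phi(-, V_p, N)$, whose third argument has strictly smaller cardinality; the \textsf{then} branch triggers at most $|\fin(\R_\circ)|$ such calls (one per label inside the outer conjunction), a finite fan-out. The tests $\R_\circ(\ell) \leq \R(\ell)$ inside the quantifier are not self-calls of $\Phi$ but queries handed to the outer procedure, so they do not threaten this induction.

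\textbf{Termination of the outer procedure and decidability.} Here I would invoke the regularity of types from \cref{def:types}: the initial query $t_1 \wedge \neg t_2$ has only finitely many distinct subterms. A direct inspection of the algorithm shows that every subtyping, subrow, or subfield query it generates is of the form $u_1 \leq u_2$ with $u_1, u_2$ Boolean combinations drawn from this finite pool — augmented only by the auxiliary atoms produced by DNF normalization and by the preprocessing that isolates row variables, both of which use only labels and component types already present in the input. Crucially, the refined record $\R_\circ \wedge \orec{\ell : \neg\R(\ell)}$ produced inside $\Phi$ only intersects components that are themselves subterms of the original query, so no new material is created. Up to the DNF used throughout, the set of reachable queries is therefore finite, and the standard memoization technique of semantic subtyping — treating a query currently under examination as provisionally true — terminates the recursion over this finite set. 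The main obstacle is exactly this bookkeeping: while $\Phi$'s own recursion on $|N|$ is a clean well-founded induction, one must justify that the field types carried inside successive refinements of $\R_\circ$, although no longer literal subterms of the input, remain within the finite closure of those subterms under Boolean combinations — which is what regularity plus DNF normalization delivers. Decidability of the subtyping relation is then the stated corollary.
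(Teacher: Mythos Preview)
Your proposal is correct and follows essentially the same approach as the paper: the paper's own proof also argues that $\Phi$'s recursion terminates because the third argument $N$ strictly decreases, and then delegates the outer-layer termination (the recursive field-type queries) to the known decidability of semantic subtyping for non-record constructors from \cite{castagna11}. Your outer-layer argument via regularity, finite closure under Boolean combinations, and memoization is simply an explicit unpacking of what that citation provides --- indeed, it is the standard ``plinth'' argument underlying the cited result --- so the two proofs coincide, yours being more self-contained where the paper's is more terse.
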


\subsection{Substitutions}%
\label{sec:rowsub}
The upcoming descriptions of the type system and of the inference algorithm
rely on type, row, and field substitutions.
\begin{definition}
  Substitutions, ranged over by $\sigma$, are total mappings from
  variables of kind $\kappa$ to terms of kind $\kappa$ (i.e., type
  variables to types, field variables to field-types, and row variables of
  definition space $L$ to rows of definition space $L$) that are the identity everywhere except on
  a finite set of variables. This set is called the \emph{domain}
  of the substitution $\sigma$ and is defined as
  $\dom(\sigma) = \{\alpha \mid \sigma(\alpha) \neq \alpha\}
  \cup \{\fvar \mid \sigma(\fvar) \neq \fvar\}
  \cup \{\rho \mid \sigma(\rho) \neq \row{}\rho{\Labels{\setminus}{\rdef(\rho)}}\}$.
\end{definition}

The application of a substitution $\sigma$ to a term $\tterm$ is denoted by
$\tterm\sigma$.
Notice that the application is defined both on field-types and  on rows, the
latter being useful only for tallying.
The application of a substitution must satisfy the following equalities.
\begin{equation*}
  \begin{array}{c@{\qquad}c@{\qquad}c@{\qquad}c}
    \alpha\sigma = \sigma(\alpha)
    & b\sigma = b
    & \Empty\sigma = \Empty
    &(t_1 \to t_2)\sigma = t_1\sigma \to t_2\sigma
    \\ \fvar\sigma = \sigma(\fvar)
    & \Undef\sigma = \Undef
    & (\neg \tterm)\sigma = \neg (\tterm\sigma)
    &(\tterm_1 \lor \tterm_2)\sigma = \tterm_1\sigma \lor \tterm_2\sigma
  \end{array}\vspace{-1.5mm}
\end{equation*}
\begin{equation}
  \label{eq:subrec}%
  \rec{(\ell = \tau_\ell)_{\ell \in L}}{\tl} \sigma
  = \begin{cases}
    \rec{(\ell = \tau_\ell\sigma)_{\ell \in L}}{\sigma(\rho)},
    &\text{ if } \tl = \rho \\
    \rec{(\ell = \tau_\ell\sigma)_{\ell \in L}}{\tl},
    &\text{ otherwise.}
  \end{cases}
\end{equation}\vspace{-1.5mm}
Where $\rec{(\ell = \tau_\ell)_{\ell \in L}}{\rw}
\eqdef \orec{(\ell = \tau_\ell)_{\ell \in L}} \wedge \rec{L}{\rw}$ and:
\begin{equation*}
  \rec{L}{\row{(\ell = \tau_\ell)_{\ell \in L'}}{\tl}{L}}
  \eqdef \rec{L, (\ell = \tau_\ell)_{\ell \in L'}}{\tl}
  \quad \rec{L}{\rw_1 \vee \rw_2} \eqdef \rec{L}{\rw_1} \vee \rec{L}{\rw_2}
  \quad \rec L {\neg\rw} \eqdef \neg\rec L \rw
\end{equation*}
The equalities above are standard, apart from the one
in \eqref{eq:subrec} which needs a definition for the notation
$\rec{(\ell = \tau_\ell\sigma)_{\ell \in L}}{\sigma(\rho)}$,
since $\sigma(\rho)$ is a row rather than a tail. The definition is
given right after \eqref{eq:subrec}
and simply states that $\rec{(\ell = \tau_\ell\sigma)_{\ell \in
L}}{\sigma(\rho)}$ stands for the record type obtained by recursively decomposing
the Boolean combinations of the rows in $\sigma(\rho)$, until we arrive at single rows that are
expanded in the record type (recall that rows are inductively defined).
Substitution for rows is defined in the same way as for records (it
suffices to
change the delimiting brackets).

As expected, if $\dom(\sigma) = \emptyset$, then $\tterm\sigma = \tterm$.
If $\sigma(\rho) \leq \Empty$ and $\tail(\R) = \rho$,
then $\R\sigma \leq \Empty$.
Thanks to the parametric interpretation of types, substitution
preserves subtyping:

\begin{restatable}{proposition}{subpreservesub}
  If $t_1 \leq t_2$, then $t_1\sigma \leq t_2\sigma$ for any row substitution
  $\sigma$.
\end{restatable}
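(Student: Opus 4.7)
The plan is to prove this directly from the semantic definition of subtyping: since $t_1 \leq t_2 \iff \TypeInter{t_1} \subseteq \TypeInter{t_2}$, it suffices to establish the inclusion $\TypeInter{t_1\sigma} \subseteq \TypeInter{t_2\sigma}$. The natural route is through a substitution lemma that, given $\sigma$, exhibits a transformation on domain elements that witnesses the action of $\sigma$ at the semantic level. Concretely, I would define for each $\sigma$ a retagging operation $\dterm \mapsto \dterm^{\sigma}$ on $\Domain \cup \Domain_\Undef \cup \Domainrow$, acting on the variable tags so that: a type variable $\alpha$ appears in a retagged tag exactly when the original element belongs to $\TypeInter{\sigma(\alpha)}$, and similarly for field variables with $\sigma(\fvar)$ and for row variables with the components of $\sigma(\rho)$. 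The core lemma to prove is then
\[
  (\dterm : \tterm\sigma) \iff (\dterm^{\sigma} : \tterm),
\]
after which the result is immediate: if $d \in \TypeInter{t_1\sigma}$ then $d^{\sigma} \in \TypeInter{t_1} \subseteq \TypeInter{t_2}$, hence $d \in \TypeInter{t_2\sigma}$.

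I would prove the lemma by induction on the lexicographic pair $(\dterm, \tterm)$ that underlies the inductive definition of the predicate $(\dterm:\tterm)$ in \cref{def:indint}. The cases for basic types, pure type/field variables, arrows, the $\Undef$ constant, and the Boolean connectives are routine and go through essentially by unfolding the definitions and invoking the induction hypothesis on strict subterms. Record types reduce to rows via the clause $(\domrec{\drow}^V:\R) = (\drow:\rectorow{\R})$, so the whole argument concentrates on the row case.

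The hard case, and the main obstacle, is the row case in the presence of row variables, because $\sigma(\rho)$ is in general a Boolean combination of rows rather than a single row. Here the equations right after \eqref{eq:subrec} that define $\rec{L}{\sigma(\rho)}$ and its analogue for rows instruct us to push substitution through the Boolean structure, so on the semantic side I must show that a single element $\drow$ belongs to $\TypeInter{\rec{L}{\sigma(\rho)}}$ exactly when its retagging belongs to $\TypeInter{\rec{L}{\rho}}$. Two subtleties must be addressed: first, the retagging for $\rho$ has to correctly account for the fact that $\drow$ may simultaneously satisfy several disjuncts (or none) of the Boolean combination $\sigma(\rho)$, which requires the retagging to collect, rather than choose, the set of row variables whose image contains $\drow$; second, one has to verify that the kinding/definition-space invariants on rows (items 1--3 after \cref{def:types}) are preserved, so that the equation \eqref{eq:rowint} applies uniformly on both sides of the equivalence.

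Once the substitution lemma is in place, the proposition follows in one line by the chain of implications $d \in \TypeInter{t_1\sigma} \Rightarrow d^{\sigma} \in \TypeInter{t_1} \Rightarrow d^{\sigma} \in \TypeInter{t_2} \Rightarrow d \in \TypeInter{t_2\sigma}$, where the middle step uses the hypothesis $t_1 \leq t_2$. The proof works uniformly for type, field, and row substitutions; the only machinery that has to be newly developed with respect to the analogous statement for type-polymorphic semantic subtyping of \cite{gesbert15} is the retagging on row elements and the verification of its correctness on Boolean combinations of rows.
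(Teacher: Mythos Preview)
Your approach is correct in principle but takes a genuinely different route from the paper's. The paper does \emph{not} work directly in the indexed interpretation $\TypeInter{\cdot}$; instead it detours through the \emph{parametrized} interpretation $\IntQ{\TypeInter{\cdot}_\eta}$ (Definition~\ref{def:indinteta}) and proves the classical ``substitution $=$ change of assignment'' lemma (Lemma~\ref{l:rowsub_commute}): for $\eta'(\vterm) \eqdef \IntQ{\TypeInter{\sigma(\vterm)}_\eta}$ one has $\IntQ{\TypeInter{\tterm\sigma}_\eta} = \IntQ{\TypeInter{\tterm}_{\eta'}}$. The proposition then follows in three lines by reducing subtyping to emptiness, passing to the parametrized model via Lemma~\ref{l:same_zeros_typeinter}, instantiating $\eta$ with $\eta'$, and coming back.

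The trade-off is this: in the parametrized model, the row case of the substitution lemma is nearly trivial, because Boolean combinations in $\sigma(\rho)$ are absorbed by simply setting $\eta'(\rho) = \IntQR{\TypeInter{\sigma(\rho)}_\eta}$; there is no retagging to construct and no restriction of row elements to $\rdef(\rho)$ to worry about. Your route is more self-contained (it never leaves the indexed model), but you pay for that by having to define the retagging carefully and argue the restriction/definition-space bookkeeping you flag as the ``hard case''---work the paper sidesteps entirely by reusing the equivalence between the two interpretations that it has already established for other purposes. Your sketch of that hard case is accurate; in particular, the observation that the tag of $\drow^\sigma$ must be computed from the \emph{original} $\drow$ (so that variables occurring inside $\sigma(\rho)$ are tested against the un-retagged element) is the key point that makes the iff go through.
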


\section{Language}
\label{sec:language}
We define the syntax, static and dynamic semantics of
a record calculus that we prove to be type sound
(\cref{sec:syntax}) and define a sound and complete typing algorithm for it (\cref{sec:typealgo}).
\subsection{Syntax and Semantics} \label{sec:syntax}
\[
  \begin{array}{lrclr}
    \textbf{Expressions} & e &\Coloneqq&
    c \alt
    x \alt
    e e \alt
    \lambda^{\wedge_{i \in I} (t_i \to t_i')}x.e \alt
    \erecord{} \alt
    \extrecord e \ell e \alt
    e.\ell \alt
    \delrecord e \ell
    \\
    \textbf{Values} & v &\Coloneqq&
    c \alt
    \lambda^{\wedge_{i \in I} (t_i \to t_i')}x.e \alt
    \erecord{} \alt
    \extrecord v \ell v
    \\
    \textbf{Evaluation contexts} & \evalctx &\Coloneqq&
    \ectx \alt
    \evalctx e \alt
    v \evalctx \alt
    \extrecord e \ell \evalctx \alt
    \extrecord \evalctx \ell v \alt
    \evalctx.\ell \alt
    \delrecord \evalctx \ell
  \end{array}
  \]
The syntax above describes a functional language with constants, functions, and
records with field selection ($e.\ell$), addition  ($\extrecord e \ell
e$), and deletion ($\delrecord e \ell$). As customary in
semantic subtyping, $\lambda$-abstractions are annotated by their type,
which is an
intersection of arrow types. We use
$\erecord{\ell_1 = e_1, \dots, \ell_n = e_n}$ as syntactic sugar for
\(  \extrecord{\dots\extrecord{\erecord{}}{\ell_1}{e_1}\dots} {\ell_n}{e_n}\).

The  semantics of the language is given by the call-by-value weak reduction defined
below:
\[ \begin{array}{rr@{\;}c@{\;}ll}
  \rrulename{app} &
  (\lambda^t x.e)v &\reduces& e \subs x v\\
  \rrulename[=]{sel} &
  \extrecord v {\ell}{v'}.\ell &\reduces& v'\\
  \rrulename[\neq]{sel} &
  \extrecord v {\ell'}{v'}.\ell &\reduces& v.\ell &\text{ if } \ell \neq \ell'\\
  \rrulename[=]{del} &
  \delrecord{\extrecord v {\ell}{v'}} \ell &\reduces&
  \delrecord v \ell\\
  \rrulename[\neq]{del} &
  \delrecord{\extrecord v {\ell'}{v'}} \ell &\reduces&
  \extrecord{\delrecord v \ell}{\ell'}{v'} &\text{ if } \ell \neq \ell'\\
  \rrulename{emp} &
  \delrecord{\erecord{}}\ell &\reduces& \erecord{}\\
  \rrulename{ctx} &
  \ctx{e} &\reduces& \ctx{e'} &\text{ if } e \reduces e'
\end{array} \]
where $e \subs x v$ is the term obtained by standard capture-avoiding substitution of
$v$ for $x$ in $e$, defined modulo
$\alpha$-equivalence.
Notice that the deletion of a label $\ell$ is defined for the empty
record $\erecord{}$ but selection is not: selection requires
the presence of the field $\ell$ while deletion does not.

\begin{figure}\vspace{-.5mm}
  \begin{mathpar}
    \inferrule*[left=\rulename{Const}]
    { }{\decseq \Delta \Gamma c {\basic(c)}}
    \and \inferrule*[left=\rulename{Var},right=\(x \in \dom(\Gamma)\)]
    { }{\decseq \Delta \Gamma x {\Gamma(x)}}
    \vspace{-1.4mm}\\
     \inferrule*[left=\rulename{Abs},right=\(\Delta'{=} \vars(\textstyle\bigwedge_{i \in I} t_i \to t_i')\)]
    {(\decseq {\Delta \cup \Delta'}{\Gamma, x:t_i} e {t_i'})_{i \in I}}
    {\decseq \Delta \Gamma {\lambda^{\wedge_{i \in I} (t_i \to t_i')} x.e}
    {\textstyle\bigwedge_{i \in I} (t_i \to t_i')}}
    \and \inferrule*[left=\rulename{App}]
    {\decseq \Delta \Gamma {e_1} {t_1 \to t_2}
    \and \decseq \Delta \Gamma {e_2} {t_1}}
    {\decseq \Delta \Gamma {e_1 e_2} {t_2}}
    \and \inferrule*[left=\rulename{Emp}]
    { }{\decseq \Delta \Gamma {\erecord{}}{\crec{}}}
    \vspace{-1.4mm}\\
    \inferrule*[left=\rulename{Ext}]
    {\decseq \Delta \Gamma {e} {t \leq \orec{\ell = \Undef}}
    \and \decseq \Delta \Gamma {e'} {t'}}
    {\decseq \Delta \Gamma {\extrecord{e}{\ell}{e'}}
    {\rec{\ell = t'}{\delrec{t}{\ell}}}}
    \and \inferrule*[left=\rulename{Del}]
    {\decseq \Delta \Gamma e {t \leq \orec{}}}
    {\decseq \Delta \Gamma {\delrecord e \ell}
    {\rec{\ell = \Undef}{\delrec t \ell}}}
    \vspace{-1.4mm}\\
    \inferrule*[left=\rulename{Sel}]
    {\decseq \Delta \Gamma e {\orec{\ell = t}}}
    {\decseq \Delta \Gamma {e.\ell} t}
    \and \inferrule*[left=\rulename{Inter}]
    {\decseq \Delta \Gamma e {t_1} \and \decseq \Delta \Gamma e {t_2}}
    {\decseq \Delta \Gamma e {t_1 \wedge t_2}}
    \vspace{-1.4mm}\\
    \inferrule*[left=\rulename{Sub}]
    {\decseq \Delta \Gamma e {t' \leq t}}
    {\decseq \Delta \Gamma e t}
    \and \inferrule*[left=\rulename{Inst},
    right={$\dom(\sigma) \cap \Delta = \emptyset$}]
    {\decseq \Delta \Gamma e t}
    {\decseq \Delta \Gamma e {t\sigma}}
  \end{mathpar}
  \caption{Declarative type system}
  \label{fig:decrules}
\end{figure}

The terms of the language are typed by the declarative type system
in \cref{fig:decrules}, whose judgments have the form
$\decseq \Delta \Gamma {e} {t}$, where $\Delta\subseteq\Pf(\typeVars{\cup}\fldVars{\cup}\rowVars)$ is a set of
monomorphic variables (i.e., variables that cannot be instantiated)
and $\Gamma$ a type environment from expression variables to types.

The rules for the functional part are inspired from those
by~\citet{castagna14}.%
\iflongversion
\footnote{More precisely, a simplified version
  where we do not track the relabeling of type annotations (these are only
  necessary in the presence of type-cases that can discriminate on functions of
different types: see \citet[Section 2]{castagna14}).}
\else\ 
\fi
Constants are typed by a given function
$\basic$ which maps each constant to its basic type
\rulename{const}.%
\footnote{
The functions $\basic$ and $\ConstantsInBasicType$ used in
  \cref{def:indint} must satisfy
$c\in\ConstantsInBasicType(\basic(c))$ for all $c\in\Constants$.}
Rule~\rulename{Abs} checks that a function has all the types declared in its
annotation: for each $t_i\To t_i'$, it checks that the body $e$ is
of type $t_i'$ under the environment in
which $x$ is given type $t_i$ and the set $\Delta'$ of
\emph{all} the variables in the annotation is added to the set of monomorphic
variables
($\vars(t)$ returns the set of type, field, and row variables in
$t$: cf.\ \cref{def:vars}).
The rules for intersection introduction \rulename{Inter} and subsumption
\rulename{Sub} are the usual ones: if an expression $e$ has two
types, then it also has their intersection; if an expression $e$ has
type $t'$, then it also has any supertype of $t´$ (we use the notation
$\decseq \Delta \Gamma {e} {t'}\leq t$ in the premises of a rule, to
indicate that the rule has premise $\decseq \Delta \Gamma {e} {t}$ and
side condition $t'\leq t$). The instantiation rule~\rulename{Inst},
can instantiate any type variable unless it is in the set of monomorphic variables $\Delta$, as this
would be unsound.

The new rules of this system are those for record expressions and their
operations. The empty record value has the closed empty record type:
\rulename{Emp}. Rule~\rulename{Sel} states that selection is typable only
if the selected field is present, in which case its type is given to the select
expression. Rule~\rulename{Ext} types a strict extension of an expression $e$ of
type $t$ by the expression $e'$ on label $\ell$, only if the field $\ell$ is
undefined in $e$, that is, the type of $e$ is a subtype of $\orec{\ell =
\Undef}$. Rule~\rulename{Del} states that we can delete a field $\ell$ from
an expression $e$ provided that it is  record (i.e., its
type is a subtype of $\orec{}$).

The types of the expressions typed by~\rulename{Ext} and~\rulename{Del} are
both obtained in similar ways.
First, we compute the operator $\delrec t \ell$ ---whose formal definition we
give below--- which returns $\rectorow t$ truncated by the field of label $\ell$.
Then, we put back the field of label $\ell$ with the desired field-type ($t'$ or
$\Undef$) using the operation $\rec{(\ell = \tau_\ell)_{\ell \in
L}}{\rw}$ we defined in \cref{sec:rowsub} for substitutions (cf.\ \cref{eq:subrec}).

We define the operator $\delrec t \ell$, on DNFs:
let $t = \bigvee_{i \in I} \bigwedge_{\R \in P_i} \R
\wedge \bigwedge_{\R \in N_i} \neg\R
\wedge \bigwedge_{\alpha \in V_i} \alpha
\wedge \bigwedge_{\alpha \in V_n} \neg\alpha$.
Then, $\delrec t \ell \eqdef
\bigvee_{i \in I} \bigwedge_{\R \in P_i} \delrec \R \ell
\wedge \bigwedge_{\R \in N_i} \delrec {(\neg\R)} \ell$ where for the
literals $\R$ and $\neg\R$ the definition is: \\[1mm]
\centerline{\(
  \begin{array}{r{c}ll}
    \delrec{\rec{({\elll} = \tau_{\elll})_{{\elll} \in L}}{\tl}}\ell
    & \!{=}\! & \begin{cases}
      \row{({\elll} = \tau_{\elll})_{{\elll} \in L \setminus \{\ell\}}}{\tl}{\{\ell\}}
    &\text{if } \ell \in L \text{ or } \tl \notin \rowVars\\
    \orow{({\elll} = \tau_{\elll})_{{\elll} \in L}}{\{\ell\}}
    &\text{otherwise}
    \end{cases}\\
    \delrec{(\neg\rec{({\elll} = \tau_{\elll})_{{\elll} \in L}}{\tl})}\ell
    & \!{=}\! & \begin{cases}
      \neg\row{({\elll} = \tau_{\elll})_{{\elll} \in L \setminus \{\ell\}}}{\tl}{\{\ell\}}
      &\hspace{-0.55em}\text{if } (\ell {\in} L \textand \tau_\ell = \Any {\lor} \Undef)
      \textor (\ell {\notin} L \textand \tl = \orecsign)\\
      \orow{}{\{\ell\}}
      &\hspace{-0.55em}\text{otherwise}
  \end{cases}
\end{array}\)}\\[1mm]
\noindent
First, notice that the type variables in the DNF are simply erased.
This is not restrictive in practice, as intersections with top-level type
variables are used only to implement bounded polymorphism which, as argued in
the introduction, cannot be used for extensions and deletions, these requiring
instead the use of row variables. The definition for the positive
literal  $\R$ consists of two cases. The first case is the intuitive one,
with no interference from a negation or a row variable: the row is
undefined on $\ell$, so we remove the field for $\ell$ (if any) and
index the row by $\{\ell\}$.
In the second case, $\R$ is polymorphic and $\ell$ is in the definition space of
its tail; therefore, similarly to what we do for type variables, we subsume $\R$ to an open record
before deleting $\ell$.
For instance, if $x: \rec{a = \Int}{\rho}$,
then the type deduced for $\delrec x b$ is $\orec{a = \Int, b = \Undef}$:
the information bound to $\rho$ must be forgotten since $b$ is in definition
space of $\rho$.
\footnote{\setlength{\baselineskip}{3pt}
  We can avoid this in practice.
  For instance, we can give to
  \felix{def add_delete(x), do: Map.delete(!\%!{x | a: x.a + 1}, :b)}
  the type:
  \felix{!\%!{f, a: integer()} -> !\%!{f, a: integer(), :b => none()} when f: fields()},
  where \felix{f} plays the same role as $\rho$ in the types above.
  Here we do not need to replace the second occurrence of \texttt{f} by
  ``\texttt{...}'' since, as we explained for the code in lines~\ref{nineteen}
  and \ref{twenty}, we enforce in practice that this type stands for
  \felix{!\%!{f, a: integer(), :b => term()} -> !\%!{f, a: integer(), :b => none()} when f: fields()}
and, thus, \felix{:b} is not in the definition space of \mbox{\felix{f}.}}
%

Next consider the case of a negative atom $\neg\R$.
First, suppose $\R(\ell) \neq \Any \lor \Undef$ (i.e., either $\ell \in L$ and
$\tau_\ell \neq \Any \lor \Undef$ or $\ell \notin L$ and $\tl = \crecsign$).
Then, the type $\neg\R$ contains, among others, all row elements such that the
field $\ell$ is not of type $\R(\ell)$ (since $\R(\ell) \neq \Any \lor \Undef$,
then there exists at least one such element), and every other field is of
arbitrary value.
Hence, the set obtained from removing the field $\ell$ from these values gives
all quasi-constant functions on $\Labels \setminus \{\ell\}$, without
restriction: the set entailed by $\orow{}{\{\ell\}}$.
Now, when $\R(\ell) = \Any \lor \Undef$, there are two possibilities.
The first is $\tl = \rho$ and $\ell \in L$.
There is a clash, and we subsume $\neg\R$ to the
top record type $\orec{}$ before deleting $\ell$, as we did in the positive case.
Note that $\orec{}$ is the smallest type to which we can subsume $\neg\R$.
The second possibility is given by the third case of the definition
(i.e., the first case for the negated literal).
There, it is impossible to have a value of a type different from
$\R(\ell) = \Any \lor \Undef$.
Hence, the constraints on the type must be on the other fields: we
simply remove $\ell$ from the definition definition space of the row.

The language satisfies the property of soundness.
Its proof is routine (subject reduction and progress, using inversion and generation
lemmas) and can be found in \cref{app:language}.
\begin{restatable}[Type soundness]{theorem}{typesoundness}
  \label{l:type_soundness}%
  Let $e$ be a well-typed closed expression, that is,
  $\decseq \emptyset \emptyset e t$ for some $t$.
  Then either $e$ diverges or it reduces to a value of type $t$.
\end{restatable}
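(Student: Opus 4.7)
The plan is the standard Wright--Felleisen syntactic approach: I would derive type soundness from a \emph{preservation} lemma (if $\decseq \emptyset \emptyset e t$ and $e \reduces e'$ then $\decseq \emptyset \emptyset e' t$) and a \emph{progress} lemma (a well-typed closed expression is either a value or can take a reduction step). Given these, any finite reduction sequence ends at a normal form which by progress must be a value, and by repeated preservation this value still has type $t$; the only alternative is divergence.

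The technical plumbing lies in inversion and generation lemmas that push the ``administrative'' rules \rulename{Sub}, \rulename{Inter}, and \rulename{Inst} through the typing of each value shape. For example, if $\decseq \emptyset \emptyset {\lambda^s x.e} t$ then $s \leq t$ and each arrow $t_i \to t_i'$ of $s$ is witnessed by a subderivation $\decseq {\vars(s)} {x:t_i} e {t_i'}$; if $\decseq \emptyset \emptyset {\extrecord v \ell {v'}} t$ then there exist $t_1, t_2$ with $\decseq \emptyset \emptyset v {t_1}$, $t_1 \leq \orec{\ell = \Undef}$, $\decseq \emptyset \emptyset {v'} {t_2}$, and $\rec{\ell = t_2}{\delrec{t_1}{\ell}} \leq t$; analogous inversions are needed for \rulename{Sel}, \rulename{Del}, and \rulename{Emp}. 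A substitution lemma for values into terms (using \rulename{Inst} to reconcile type-variable renamings arising from the annotation of abstractions) handles the \rrulename{app} case.

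Preservation then proceeds by case analysis on the reduction. The functional and selection cases are routine once the inversion lemmas are in place. The real obstacle is the deletion block \rrulename[=]{del}, \rrulename[\neq]{del}, \rrulename{emp}, which requires algebraic inequalities about $\delrec{\cdot}{\ell}$. Concretely, for $\ell \neq \ell'$ I need $\rec{\ell' = t'}{\delrec{(\rec{\ell = \Undef}{\delrec{t}{\ell}})}{\ell'}} \leq \rec{\ell = \Undef}{\delrec{(\rec{\ell' = t'}{\delrec{t}{\ell'}})}{\ell}}$, together with symmetric statements for $\ell = \ell'$ and for the empty record. Each of these must be verified on the DNF of $t$, literal by literal, handling the three kinds of tail (closed, open, row variable); the proof is mechanical because the definition of $\delrec{\cdot}{\ell}$ given just after rule~\rulename{Del} gives a direct formula in each case, but the bookkeeping across positive and negative literals is where the bulk of the effort sits.

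Progress follows by structural induction on $e$ using canonical-forms lemmas in the style of semantic subtyping: a closed value of a non-empty arrow type must be a $\lambda$-abstraction whose annotation is a subtype of that arrow type, and a closed value of a non-empty record type is either $\erecord{}$ or an $\extrecord v \ell {v'}$ with $\ell$ absent from $v$. The only subtle point concerns \rulename{Sel}: if $\decseq \emptyset \emptyset {v.\ell} t$ then inversion yields $v$ of some record type $t_v \leq \orec{\ell = t}$, and since $t$ is a proper type (hence $t \neq \Undef$) the subtyping algorithm of \cref{sec:subtyping} rules out $v = \erecord{}$; so $v$ has the shape $\extrecord{v'}{\ell'}{v''}$ and either $\ell = \ell'$ fires \rrulename[=]{sel} or $\ell \neq \ell'$ appeals to the induction hypothesis on $v'.\ell$. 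All remaining non-value cases close under the evaluation-context rule \rrulename{ctx} by the usual decomposition.
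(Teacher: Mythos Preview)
Your proposal is correct and follows essentially the same route as the paper: type soundness is obtained from subject reduction plus progress, with inversion and generation lemmas absorbing the non-structural rules \rulename{Sub}, \rulename{Inter}, and \rulename{Inst}, and the record cases discharged via the algebraic commutation properties of $\delrec{\cdot}{\ell}$. One small slip: in the progress case for selection with $\ell \neq \ell'$ you do not need any induction hypothesis on $v'.\ell$---the rule $\rrulename[\neq]{sel}$ already provides the single step $\extrecord{v'}{\ell'}{v''}.\ell \reduces v'.\ell$, which is all progress asks for.
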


\begin{figure}
  \begin{mathpar}\vspace{-3mm}
    \inferrule*[left=\rulename{Abs},right=\(\Delta'{=} \vars(\textstyle\bigwedge_{i \in I} t_i \to t_i')\)]
    {\left(\algseq{\Delta \cup \Delta'}{\Gamma, x:t_i} e {t_i''}
      \leqsub{\Delta \cup \Delta'} t_i' \right)_{i \in I}}
    {\algseq \Delta \Gamma {\lambda^{\wedge_{i \in I} (t_i \to t_i')} x.e}
    {\textstyle\bigwedge_{i \in I} (t_i \to t_i')}}\\\vspace{-3mm}
    \inferrule*[left=\rulename{App},right=\( t \in t_1 \apply{\Delta} t_2\)]
    {\algseq \Delta \Gamma {e_1} {t_1} \and \algseq \Delta \Gamma {e_2} {t_2}}
    {\algseq \Delta \Gamma  {e_1e_2} t}
    \and \inferrule*[left=\rulename{Del},
    right=\(\rw \in \del t \ell\)]
    {\algseq \Delta \Gamma e t}
    {\algseq \Delta \Gamma {\delrecord e \ell}
    {\rec{\ell = \Undef}{\rw}}}
    \and \inferrule*[left=\rulename{Sel},
    right=\(t_\ell \in \sel(t)\)]
    {\algseq \Delta \Gamma e t}
    {\algseq \Delta \Gamma {e.\ell} t_\ell}
    \and \inferrule*[left=\rulename{Ext},
    right=\(\rw \in \delu t \ell\)]
    {\algseq \Delta \Gamma e t
    \and \algseq \Delta \Gamma {e'} {t'}}
    {\algseq \Delta \Gamma {\extrecord{e}{\ell}{e'}}
    {\rec{\ell = t'}{\rw}}}
  \end{mathpar}
  \caption{Algorithmic type
  system (plus \rulename{Const}, \rulename{Var}, \rulename{Emp},
  which are as in~\cref{fig:decrules})}
  \label{fig:algrules}%
\end{figure}

\subsection{Algorithmic Type System}\label{sec:typealgo} The system in
\cref{fig:decrules} is not algorithmic: it is not syntax-directed and some of
its rules are not analytic\footnote{A rule is \emph{analytic} (as opposed to
\emph{synthetic}) when the input (i.e., $\Gamma$ and $e$) of the judgment at the
conclusion is sufficient to determine the inputs of the judgments at the
premises (cf.\ \cite{ML1994,types2019}).}.
In \cref{fig:algrules}, we give an algorithmic system that is sound
and complete with respect to the system in \cref{fig:decrules} (we
omitted three rules that are the same as in \cref{fig:decrules}).
The new system includes the algorithmic counterparts of the typing rules for record
operations which, apart from \rulename{Emp}, must be changed to
account for the fact that there is no \rulename{Inst} rule in the
algorithmic system and, thus, instantiation must be performed by the
algorithmic elimination rules.
For instance, in the declarative system, if $x:\alpha$ and $\alpha\not\in\Delta$, then
$x.\ell: \Int$ can be deduced by instantiating by \rulename{Inst} $\alpha$ to $\orec{\ell=\Int}$.
In the algorithmic system, in the absence of \rulename{Inst}, this instantiation must be done within
the algorithmic
rule~\rulename{Sel}.
For all record operations, the algorithmic system needs to perform a possible
instantiation of the type of the record. This is done in the side
conditions of the rules for record operations by the following
operators (that we explain after the \cref{def:c1} for $\leqsub\Delta$), which instantiate
the type of the record to match the conditions in the declarative system:
\begin{align*}
  \sel(t) &= \{ t' \mid [\sigma_i]_{i \in I} \Vdash t
    \leqsub\Delta \orec{\ell = \Any}
  \textand t' = (\textstyle\bigwedge_{i \in I} t\sigma_i).\ell\}
\\
  \del t \ell &= \{ \rw \mid [\sigma_i]_{i \in I} \Vdash t
    \leqsub\Delta \orec{}
  \textand \rw = \delrec{(\textstyle\bigwedge_{i \in I} t\sigma_i)}\ell \}
\\
  \delu t \ell &= \{ \rw \mid [\sigma_i]_{i \in I} \Vdash t
    \leqsub\Delta \orec{\ell = \Undef}
  \textand \rw = \delrec{(\textstyle\bigwedge_{i \in I} t\sigma_i)}\ell \}
\end{align*}
These side conditions are primarily of theoretical interest to satisfy
completeness. They address the case of record expressions that
return (parametric) polymorphic values (e.g., of type $\alpha$), which
are never encountered in practice. Consequently, the sets in
these side conditions are never computed in practice. In a practical setting, the rules \rulename{Ext} and
\rulename{Del} of \cref{fig:decrules} should be used instead, while  the rule
to use in practice for selection has premises $\algseq \Delta \Gamma e t$, side-condition $t
\leq \orec{\ell = \Any}$, and conclusion $\algseq \Delta \Gamma
{e.\ell}{t.\ell}$. Keeping the rules of \cref{fig:decrules}, we loose completeness, but only in theory, and we gain in
efficiency (\cite[Appendix B.3]{castagna15} discusses this point in detail for type polymorphism).

In any case, selection uses a new type operator $t.\ell$ (theoretically, in its
side condition to compute $t'$; in practical setting, in its conclusion).
Since the algorithmic type system does not
include a subsumption rule, we cannot assume that the type $t$ deduced for the
expression $e$ in \rulename{Sel} will be a record type atom of the form required
by the declarative system (i.e., $\orec{\ell = t}$): in general, $t$ will be a
union of intersections of such atoms, type variables, and their negations. Thus
the rule checks that $t$ is a record type in which the field $\ell$ is surely
defined, and delegates to the operator $t.\ell$ (defined below) the computation
of the type of the result.
\begin{restatable}[Field Selection]{definition}{typesel}
  \label{def:typesel}
  Let $t \leq \orec{\ell = \Any}$ be a DNF.
  We define the selection of the field $\ell$ of $t$ as
  $(\bigvee_{i \in I} t_i).\ell \eqdef \bigvee_{i \in I} t_i.\ell$
  and\vspace{-2mm} \[
    (\bigwedge_{\R \in P} \R \wedge \bigwedge_{\R \in N} \neg\R
    \wedge \bigwedge_{\alpha \in V_p} \alpha
    \wedge \bigwedge_{\alpha \in V_n} \neg\alpha) . \ell
    \eqdef \bigvee_{N' \subseteq N} \Bigg(
      \bigwedge_{\R \in P} \R(\ell)
      \wedge \bigwedge_{\R \in N'} \neg\R(\ell)
    \Bigg)
  \]
\end{restatable}
The condition $t \leq \orec{\ell = \Any}$ assures that $t.\ell \leq \Any$,
so that selection always returns a type (and not a generic field-type).
Indeed, $t.\ell$ is equivalent to $\min\{t'\mid
t\leq\orec{\ell=t'}\}$ (\cref{app:typealgo}).
Once more, the presence of top-level
intersections with type variables does not play any role in selection.

To finish explaining the algorithmic system, we need to introduce
the notations $\Vdash t \leqsub{\Delta} t'$ and $t \apply\Delta t'$,
whose definitions are taken verbatim from~\cite{castagna15}:
\begin{definition}[\cite{castagna15}]\label{def:c1}
  Let $t$ and $t'$ be two types and $\Delta$ a set of variables.
  \begin{align*}
    [\sigma_i]_{i \in I} \Vdash t \leqsub\Delta t'
    &\iffdef \textstyle\bigwedge_{i \in I} t\sigma_i \leq t' \textand
      \forall i \in I. \dom(\sigma_i) \cap \Delta = \emptyset\\[-1mm]
      \Vdash t \leqsub{\Delta} t'
    &\iffdef \exists[\sigma_i]_{i \in I} \text{ such that }
      [\sigma_i]_{i \in I} \Vdash t \leqsub{\Delta} t'
  \end{align*}
\end{definition}

\begin{definition}[\cite{castagna15}]\label{def:c2}
  Let $t_1$ and $t_2$ be two types with  $t \leq \Empty \to \Any$, and $\Delta$ a set of variables.
  We define $t_1 \apply\Delta t_2$ as the set of types for which there exist two
  sets of type substitutions (for variables not in $\Delta$) that make $t_2$
  compatible with the domain of $t_1$ (defined below):
  \[
    t_1 \apply\Delta t_2 \eqdef \left\{
   ~ t ~~ \begin{array}{|l}
        {[\sigma_j]_{j \in J}} \Vdash t_1 \leqsub\Delta \Empty \to \Any\\
        {[\sigma_i]_{i \in I}} \Vdash t_2 \leqsub\Delta \arrdom(\bigwedge_{j \in J} t_1\sigma_j)\\
        t = \textstyle\bigwedge_{j \in J} t_1\sigma_j \mapply \bigwedge_{i \in I} t_2\sigma_i
    \end{array}\right\}
  \]
  Where $t_1 \mapply t_2 \eqdef \min\{t \mid t \leq t_1 \to t_2\}$.
  For an arrow type $t' \leq \Empty \to \Any$,
  we have
  $t' \simeq \bigvee_{i \in I} (\bigwedge_{p \in P_i} (t_p \to t'_p) \wedge
  \bigwedge_{n \in N_i} \neg(t_n \to t'_n) \wedge \bigwedge_{\alpha \in
  V^p_i}
  \alpha \wedge \bigwedge_{\alpha \in V^n_i} \neg\alpha)$, and define
  $\arrdom(t') = \bigwedge_{i \in I} \bigvee_{p \in P_i} t_p$.
\end{definition}
These two definitions are used in the rule \rulename{App}, the key rule for the
algorithmic system (which again is taken verbatim from \cite{castagna15} where
more details can be found).
Essentially, \rulename{App} merges together intersection elimination (in this
case the standard terminology is \emph{expansion}), instantiation, and subsumption.
For the application $e_1e_2$ to be well typed, the type of the function must be
a functional type (i.e., a subtype of $\Empty\to\Any$, the type of all
functions) whose domain is a supertype of the type of the argument.
Therefore, the rule looks for two finite sets of type substitutions for the
variables not in $\Delta$, that make the type of the function subtype of
$\Empty\to\Any$ and the type of the argument subtype of the function's domain.
This search is collapsed in the definition of $t_1\apply\Delta t_2$.
Concretely, this operation finds two sets of substitutions $[\sigma_j]_{j\in J}$ and
$[\sigma_i]_{i \in I}$ such that (1) $\bigwedge_{j \in J} t_1\sigma_j\leq
\Empty\to\Any$ (this corresponds to the notation $t_1 \leqsub{\Delta}
\Empty\to\Any$ of \cref{def:c1}) and (2)~$\bigwedge_{i\in I}
t_2\sigma_i$ is a subtype of the domain of $\bigwedge_{j\in J} t_1\sigma_j$.
It then returns all the types of the result of the application of such two types.

The operator $\leqsub\Delta$ is also used to type
record operations, since it is used by the side conditions of the algorithmic
rules \rulename{Del}, \rulename{Sel}, and \rulename{Ext}: when applying
an operation on a record expression $e$ of type $t$, these side
conditions check whether there exists a set of substitutions making $t$
a subtype of, depending on the rule,  $\orec{\ell = \Any}$, $\orec{}$, or $\orec{\ell
= \Undef}$ and, if so, they apply the corresponding operation on the
instantiation of $t$. For instance, $\sel(t)$ looks for some
substitutions  $[\sigma_i]_{i \in I}$ such that $\bigwedge_{i \in I}
t\sigma_i\leq \orec{\ell = \Any}$ (i.e., a solution for $t
    \leqsub\Delta \orec{\ell = \Any}$) and returns its projection on $\ell$.

Finally, notice that, even if the typing rules~\rulename{Abs} and~\rulename{App} themselves are not new,
the process behind $t_1 \leqsub\Delta t_2$ and $t_1 \apply\Delta t_2$ are.
The novelty is that these operators now infer type substitutions that range not only over types,
but also over rows and field-types.
In the next section, we describe how to adapt the existing algorithms
to our framework.

As expected, the algorithmic type system is sound and complete with respect to
the declarative one, as stated by the following theorems (proofs are given in
\cref{app:typealgo}):\vspace*{-1mm}
\begin{restatable}[Soundness]{theorem}{soundnessalgo}
  \label{t:soundness_algo}%
  If $\algseq \Delta \Gamma e t$, then $\decseq \Delta \Gamma e t$.\vspace{-2mm}
\end{restatable}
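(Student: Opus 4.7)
The plan is a routine induction on the derivation of $\algseq \Delta \Gamma e t$, showing that each algorithmic rule can be simulated by a combination of declarative rules \rulename{Inst}, \rulename{Inter}, and \rulename{Sub} on top of the corresponding declarative typing rule. The three rules that are literally shared between the two systems (\rulename{Const}, \rulename{Var}, \rulename{Emp}) dispatch immediately, so the real work is in the four remaining cases.

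For the \rulename{Abs} case, I would apply the induction hypothesis to each premise $\algseq{\Delta \cup \Delta'}{\Gamma, x:t_i} e {t_i''}$, obtaining a declarative derivation of $\decseq{\Delta \cup \Delta'}{\Gamma, x:t_i} e {t_i''}$, and then use \rulename{Sub} together with the side condition $\Vdash t_i''\leqsub{\Delta\cup\Delta'} t_i'$ to lift it to $t_i'$; this requires unpacking \cref{def:c1} to extract a family of substitutions $[\sigma_k]$ with domains disjoint from $\Delta\cup\Delta'$ such that $\bigwedge_k t_i''\sigma_k \leq t_i'$, applying \rulename{Inst} for each $\sigma_k$, \rulename{Inter} to intersect them, and then \rulename{Sub} once. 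The declarative \rulename{Abs} rule then combines the individual arrow judgments.

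The \rulename{App} case is the main obstacle, since it folds expansion, instantiation, and subsumption into the single operator $t \in t_1 \apply{\Delta} t_2$. Here I would unwind \cref{def:c2}: from $t \in t_1 \apply{\Delta} t_2$, there exist $[\sigma_j]_{j\in J}$ and $[\sigma_i]_{i\in I}$ (all with domain disjoint from $\Delta$) such that $\bigwedge_j t_1\sigma_j \leq \Empty\to\Any$, $\bigwedge_i t_2\sigma_i \leq \arrdom(\bigwedge_j t_1\sigma_j)$, and $t = \bigwedge_j t_1\sigma_j \mapply \bigwedge_i t_2\sigma_i$. Applying the induction hypothesis to both premises, then \rulename{Inst} with each $\sigma_j$ (resp. $\sigma_i$), and \rulename{Inter}, yields $\decseq\Delta\Gamma{e_1}{\bigwedge_j t_1\sigma_j}$ and $\decseq\Delta\Gamma{e_2}{\bigwedge_i t_2\sigma_i}$. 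Subsumption on $e_1$ to the arrow type $\arrdom(\bigwedge_j t_1\sigma_j)\to t$ (which holds by minimality of $\mapply$) and on $e_2$ to $\arrdom(\bigwedge_j t_1\sigma_j)$, followed by the declarative \rulename{App}, produces the desired judgment with type $t$.

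The cases \rulename{Sel}, \rulename{Del}, and \rulename{Ext} follow the same template: the side condition $[\sigma_i]_{i\in I}\Vdash t \leqsub\Delta t^*$ (with $t^*$ being respectively $\orec{\ell=\Any}$, $\orec{}$, or $\orec{\ell=\Undef}$) produces substitutions we apply via \rulename{Inst} and \rulename{Inter} to obtain $\decseq\Delta\Gamma e{\bigwedge_i t\sigma_i}$ and hence by \rulename{Sub} a judgment at the shape demanded by the declarative rule; the computed result type ($t'.\ell$, or $\rec{\ell=\Undef}{\delrec{(\bigwedge_i t\sigma_i)}\ell}$, or $\rec{\ell=t'}{\delrec{(\bigwedge_i t\sigma_i)}\ell}$) then matches, modulo one final \rulename{Sub}, what the declarative \rulename{Sel}/\rulename{Del}/\rulename{Ext} rule produces. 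The only technical point is verifying that the operator $\delrec{\cdot}\ell$ and the projection $\cdot.\ell$ on a DNF are semantically compatible with the shape demanded by the declarative rules, which follows from \cref{def:typesel} and the characterization $t.\ell = \min\{t'\mid t\leq\orec{\ell=t'}\}$ mentioned after the definition.
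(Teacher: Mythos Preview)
Your overall plan matches the paper's: induction on the algorithmic derivation, unfolding $\leqsub\Delta$ and $\apply\Delta$ into chains of \rulename{Inst}, \rulename{Inter}, and \rulename{Sub}, then applying the corresponding declarative rule. The paper routes through an equivalent intermediate system with an $n$-ary \rulename{Inter} and a derived macro rule ``\rulename{$\leqsub{}$}'' that packages \rulename{Inst}+\rulename{Inter}+\rulename{Sub}, but that is only bookkeeping; your direct argument in the declarative system is fine. The \rulename{Abs}, \rulename{Sel}, \rulename{Del}, \rulename{Ext} cases are exactly as in the paper (for \rulename{Sel} the paper invokes the lemma $t\leq\orec{\ell=t.\ell}$, which is precisely your ``characterization of $\cdot.\ell$'').

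There is, however, a genuine error in your \rulename{App} case. You subsume $e_1$ to $\arrdom(\bigwedge_j t_1\sigma_j)\to t$, claiming this holds ``by minimality of $\mapply$''. It does not. Minimality of $\mapply$ gives you
\[
\textstyle\bigwedge_j t_1\sigma_j \;\leq\; \bigl(\bigwedge_i t_2\sigma_i\bigr)\to t,
\]
but \emph{not} $\bigwedge_j t_1\sigma_j \leq \arrdom(\bigwedge_j t_1\sigma_j)\to t$. A counterexample: take $f=(\Int\to\Bool)\wedge(\Bool\to\Int)$ and $t_2=\Int$; then $\arrdom(f)=\Int\vee\Bool$ and $t=f\mapply\Int=\Bool$, yet $f\nleq(\Int\vee\Bool)\to\Bool$ since $f$ applied to a $\Bool$ returns an $\Int$. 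The fix is trivial and is exactly what the paper does: subsume $e_1$ to $(\bigwedge_i t_2\sigma_i)\to t$ (valid by the displayed inequality), keep $e_2$ at $\bigwedge_i t_2\sigma_i$ with no further subsumption, and apply the declarative \rulename{App}.
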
\vspace*{-.5mm}
\begin{restatable}[Completeness]{theorem}{completenessalgo}
  \label{t:completeness_algo}%
  If $\decseq \Delta \Gamma e t$, then there is $t'$ such that
  $\algseq \Delta \Gamma e t'$ and $t' \leqsub\Delta t$.\vspace{-.5mm}
\end{restatable}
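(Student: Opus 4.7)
The plan is to prove completeness by induction on the derivation of $\decseq \Delta \Gamma e t$, building simultaneously an algorithmic derivation $\algseq \Delta \Gamma e {t'}$ and the witness substitutions $[\sigma_i]_{i \in I}$ for $t' \leqsub\Delta t$. The base rules \rulename{Const}, \rulename{Var}, and \rulename{Emp} are common to both systems, so taking $t' = t$ and the singleton identity substitution discharges them immediately. For \rulename{Abs}, each declarative premise $\decseq{\Delta \cup \Delta'}{\Gamma, x:t_i} e {t_i'}$ yields, by the induction hypothesis, some $t_i''$ typing $e$ algorithmically together with $t_i'' \leqsub{\Delta \cup \Delta'} t_i'$, which is exactly what the algorithmic \rulename{Abs} requires as a side condition.

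The structural rules \rulename{Sub} and \rulename{Inst} are handled by composition: from $t'' \leqsub\Delta t'$ provided by the IH, we obtain $t'' \leqsub\Delta t$ (resp.\ $t'' \leqsub\Delta t\sigma$) by composing the witness substitutions with the identity (resp.\ with $\sigma$, which is valid since $\dom(\sigma) \cap \Delta = \emptyset$), using the fact that composition of substitutions distributes over the intersection of instances. For \rulename{App}, the induction hypothesis delivers algorithmic types $s_1$ and $s_2$ with $s_1 \leqsub\Delta t_1 \to t_2$ and $s_2 \leqsub\Delta t_1$; the first immediately gives a witness for $s_1 \leqsub\Delta \Empty \to \Any$, and by \cref{def:c2} the domain computation $\arrdom(\bigwedge_j s_1 \sigma_j)$ is a supertype of $t_1$ after the relevant instantiations, so $s_2$ can be made subtype of it by composing substitutions. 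This places a suitable $t' \in s_1 \apply\Delta s_2$, which by the minimality of $\mapply$ satisfies $t' \leqsub\Delta t_2$. The record-operation rules \rulename{Sel}, \rulename{Ext}, \rulename{Del} follow the same pattern: the declarative side conditions $t \leq \orec{\ell = t}$, $t \leq \orec{\ell = \Undef}$, or $t \leq \orec{}$ translate directly into witnesses for $\leqsub\Delta$ against the same record patterns used by the algorithmic operators $\sel$, $\del$, $\delu$; the semantic characterization of $t.\ell$ as $\min\{t' \mid t \leq \orec{\ell = t'}\}$ (stated after \cref{def:typesel}) and the analogous minimality of $\delrec t \ell$ then ensure that the resulting algorithmic type is below the declarative one.

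The main obstacle is the rule \rulename{Inter}, since the algorithmic system lacks an explicit intersection-introduction rule outside the \rulename{Abs} case. From the IHs we obtain two algorithmic derivations $\algseq \Delta \Gamma e {t_1'}$ and $\algseq \Delta \Gamma e {t_2'}$ with $t_i' \leqsub\Delta t_i$; what we need is a single algorithmic type $t'$ with $t' \leqsub\Delta t_1 \wedge t_2$. The natural path is an auxiliary \emph{merge lemma} stating that whenever $e$ is algorithmically typable with $t_1'$ and with $t_2'$, then it is algorithmically typable with some $t'$ such that $\bigwedge_k t'\sigma_k \leq t_1' \wedge t_2'$ for a finite family of substitutions. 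This lemma is proved by a second induction on $e$: for variables, constants, and the empty record the two derivations must produce the same type; for abstractions the annotation is fixed, so the resulting type is identical; for applications and record operations the set-based side conditions $\sel$, $\del$, $\delu$, $\apply\Delta$ are closed under taking unions of substitution families, so the union of the two witnessing families produces an algorithmic type at least as precise as the intersection. Combining the merge lemma with the inductive cases above yields the completeness statement.
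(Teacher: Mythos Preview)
Your overall strategy is sound, but it diverges from the paper's organization and your sketch of the merge lemma glosses over exactly the points where the real work lies.

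The paper does \emph{not} do a direct induction on the declarative derivation. It first passes to an \emph{intermediate} system (Lemma~\ref{l:equiv_intermediate}) that has an $n$-ary \rulename{Inter} and a \rulename{Var} rule that may perform a renaming; it then shows (Lemma~\ref{l:canonical}) that every intermediate derivation can be put into a \emph{canonical form} where \rulename{Inst} and \rulename{Sub} occur only as the bundled macro \rulename{$\leqsub{}$} immediately below \rulename{Abs}, \rulename{App}, \rulename{Sel}, \rulename{Del}, or (the first premise of) \rulename{Ext}. Completeness is then proved by induction on this canonical derivation. The payoff is that in a canonical derivation the premises of an \rulename{Inter} node are necessarily instances of the \emph{same} structural rule (the one dictated by the shape of $e$), so the \rulename{Inter} case reduces to the corresponding structural case together with Lemma~\ref{l:leqsubcommute}. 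This is the standard machinery inherited from \cite{castagna15,castagna24b}; the paper reuses it essentially unchanged.

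Your merge lemma is a legitimate alternative way to discharge the \rulename{Inter} case, and it is provable, but not for the one-line reason you give. ``Closed under taking unions of substitution families'' is not literally what is needed: the two algorithmic derivations you want to merge may have been built from different algorithmic types for the sub-expressions, and the substitution families witnessing $\leqsub\Delta$ act on possibly different sets of variables. What makes the merge go through is (i) a renaming lemma so that the witnessing families have disjoint supports outside $\Delta$ (the paper's Lemma~\ref{l:renaming}), (ii) composition of substitutions together with Lemma~\ref{l:leqsubcommute} to combine the families, and (iii) monotonicity of the operators: $(.\ell)$ via Lemma~\ref{l:correct_sel}, and $\delrec{\cdot}{\ell}$ via Lemma~\ref{l:cutrow_correct}. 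The latter is monotonicity, not the ``minimality'' you invoke; there is no $\min$ characterization of $\delrec t \ell$ analogous to the one for $t.\ell$. Similarly, for \rulename{Ext} one also needs Lemma~\ref{l:extrec_leqsub} to push $\leqsub\Delta$ through $\rec{\ell=\cdot}{\rw}$ under a freshness side condition on variables. Once you make these dependencies explicit, your merge lemma and the paper's canonical-derivation argument are doing the same computation with substitutions; the paper's route just packages the bookkeeping differently.
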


\subsection{Tallying}
\label{sec:tallying}
\loic{Exemples enlevés de POPL: 4.1, 4.3, 4.4}

The algorithmic type system we have defined is parametric in
the decision procedure $s \leqsub\Delta t$,
which looks for an appropriate set of substitutions $[\sigma_i]_{i \in I}$ such that
$\bigwedge_{i \in I} s\sigma_i \leq t$.
While this has been tackled for type variables by \citet{castagna15},
here we need to extend the procedure to row and field variables
along the lines we outline below (for a full description, see the Appendix~\ref{app:tallying}).

Deciding $t_1 \leqsub\Delta t_2$ is done by testing the cardinality of the
set of substitutions $[\sigma_i]_{i \in I}$ we are looking for, by incremental
steps.
For $t_1 \apply\Delta t_2$, two sets of substitutions are sought for, and we can
follow a dove-tail order (more details are in
\cite[\S3.2.2-3.2.3]{castagna15}).

For each cardinality, we apply an instance of the \emph{tallying} algorithm.
Tallying is a unification problem acting on inequalities.
Given an initial set of subtyping constraints, tallying looks for a substitution
that satisfies these constraints.
For instance, deciding $t_1 \leqsub\Delta t_2$ is done first by trying to solve the
tallying problem for the constraint $(t_1, \leq, t_2)$, looking for a singleton
substitution set.
In case of a (non-fatal) failure, the next step is to look for a set of two
substitutions such that $[\sigma_1,\sigma_2] \Vdash t_1 \leqsub\Delta t_2$, which is
equivalent to solving the tallying problem for the constraint
$(t_1^1 \wedge t_1^2, \leq, t_2)$, where each $t_1^i$ is obtained from $t_1$ by
replacing all variables not in $\Delta$ by fresh ones.

Tallying returns a set of constraint-sets.
Indeed, the presence of subtyping, and in particular of the empty type,
implies that to solve a single set of constraints $C$ we may need to generate
several sets of constraints, yielding different solutions.
For instance, solving
$\{(\orec{\ell_1 = t_{\ell_1}, \ell_2 = t_{\ell_2}}, \leq,
\orec{\ell_1 = t'_{\ell_1}, \ell_2 = t'_{\ell_2}})\}$
generates three independent subproblems:
$\{(t_{\ell_1}, \leq, \Empty)\}$, $\{(t_{\ell_2}, \leq, \Empty)\}$ and
$\{(t_{\ell_1}, \leq, t'_{\ell_1}), (t_{\ell_2}, \leq, t'_{\ell_2})\}$.

We define the solving procedure for the type tallying of a
constraint-set as an extension of the existing one for type variables.
The procedure follows the steps that are given by~\cite{castagna15} (plus
an additional one), namely:
$(\oldstylenums 1)$ normalization, 
$(\oldstylenums 2)$ merging and saturation, 
$(\oldstylenums 3)$ harmonization (which is new and specific to row variables), 
$(\oldstylenums 4)$ transformation of constraints into equations, 
and $(\oldstylenums 5)$~creation of the substitution solutions. 
For space reasons, in what follows we only outline these steps.
For formal definitions, refer to the Appendixes~\ref{sec:normalization}
to~\ref{sec:solution}.

A constraint is defined as a triple $(\tterm_1,c,\tterm_2)$ ($c \in
\{\leq,\geq\}$) such that $\tterm_1$ and $\tterm_2$ are two terms of
the same kind (type, field-type, or rows of the same domain).
An arbitrary order on variables is given as an input to the algorithm in order
to guarantee well-formedness of the solution.

\paragraph{Normalization}
The first step of tallying decomposes the initial constraints on types
into a set $\constrset$ of \emph{normalized} constraint-sets. These
are constraint-sets where all constraints are of the shape
$(\vterm, c, \tterm)$, where $\tterm$ stands for a type, field-type or row and
$\vterm$ for a variable of the kind of $\tterm$.

Normalization works by decomposing types into elementary constraints.
For records and rows, it might be necessary to decompose the row variables across
different fields.
For instance, the record $\rec{}{\rho_0}$ might need to be decomposed
over, say, labels
$\ell_1$ and $\ell_2$.
In that case, we will spread the row variable $\rho_0$ into
$\rec{\ell_1=\rho_0.\ell_1,\ell_2=\rho_0.\ell_2}{\cutvar{\rho_0}
{\{\ell_1,\ell_2\}}}$, introducing $\rho.\ell_1$, $\rho.\ell_2$ and
$\cutvar{\rho_0}{\{\ell_1,\ell_2\}}$ as field and row variables with special names.
Then, the tallying algorithm may give constraints over $\rho_0.\ell_1$,
$\rho_0.\ell_2$ and $\cutvar {\rho_0}{\{\ell_1,\ell_2\}}$,
and we expect a solution of the shape $\sigma(\rho_0) = \row{\ell_1 =
\tau_1, \ell_2 = \tau_2}{\rw}{\emptyset}$, where each $\tau_i$ has been obtained
from the constraints on $\rho_0.\ell_i$, and $\rw$ from the ones on
$\cutvar{\rho_0}{\{\ell_1,\ell_2\}}$.

In the literature where rows are all atomic,
unification of a row variable with another row is performed component-wise.
Considering each field independently, while it works well with atomic rows, is
unsound when working with Boolean combination of rows, as shown by this example:
\begin{example}
  \label{ex:componentwise_unsound}%
  Let $\Keyw{result} =
  \crow{\Keyw{log} = \String, \Keyw{succ} = \True, \Keyw{val} = \Any}{\emptyset} \vee
  \crow{\Keyw{log} = \String, \Keyw{succ} = \False, \Keyw{val} = \Undef}{\emptyset}$.
  For the constraint
  $(\row{\Keyw{val} = \Any \lor \Undef}{\rho}{\emptyset}, \leq, \Keyw{result})$,
  independent component-wise tallying gives
  $\sigma(\rho) = \crow{\Keyw{log} = \String, \Keyw{succ} =
  \Bool}{\{\Keyw{val}\}}$ as a solution, which does not verify the
  constraint (the type obtained as ``solution'' contains record values in
  which \Keyw{succ} is \True{} and \Keyw{val} is undefined, which are
  not included in \Keyw{result}).
  Indeed, such a procedure effectively solves tallying for
    $(\row{\Keyw{val} = \Any \lor \Undef}{\rho}{\emptyset}, \leq,
    \crow{\Keyw{log} = \String \lor \String, \Keyw{succ} = \True \lor \False,
  \Keyw{val} = \Any \lor \Undef}{\emptyset})$ instead.\hfill\qed
\end{example}

To obtain a sound decomposition of rows, we can adapt the formula underlying our
subtyping algorithm (this formula is found in the statement of \cref{l:subtyping}).
Given a constraint $C$ on DNFs of rows, we still consider the set $L$ of all top-level
labels in the DNFs (in the last example, $L = \{\Keyw{log}, \Keyw{succ},
\Keyw{val}\}$), for which we will end up with independent constraints and
constraints over the cofinine domain of the rows deducted from $L$.
Doing so, we find no solution for the constraint in
\cref{ex:componentwise_unsound}, which is correct, since indeed the constraint cannot be satisfied.

Although this method yields a correct set of solutions, this set is
far from being complete: the more fields we decompose over, the more solutions
we might loose.
Yet, the previous method decomposes records and rows over the whole set of
top-level labels.
Our solution is to decompose rows over a set of labels as small as possible.
This technique is based on a general decomposition formula given in the
statement of \cref{l:decomposition}.
Now, while normalization is sound,
the algorithm is still not complete, due to the potentially necessary
decomposition of some row variables.
\begin{example}[Incompleteness]
  \label{ex:incomplete}%
  Consider the constraint-set
  $
  \{(\row{\Keyw{log} = \String}{\rho_2}{\emptyset} \to
    \row{\Keyw{log} = \String}{\rho_2}{\emptyset}, \geq,
    \Keyw{result} \to \Keyw{result}),
    (\row{\Keyw{log} = \String}{\rho_2}{\emptyset} \to
    \orow{}{\emptyset}, \geq,
    \row{\Keyw{succ} = \Any \lor \Undef, \Keyw{val} = \Any \lor
    \Undef}{\rho_1}{\emptyset} \to
    \row{\Keyw{succ} = \Any \lor \Undef, \Keyw{val} = \Any \lor
  \Undef}{\rho_1}{\emptyset})\}
  $.
  Let $\rho_1$ be of smaller order than $\rho_2$.
  By a decomposition over $L = \{\Keyw{log}\}$ in the first constraint
  and $L = \{\Keyw{succ}, \Keyw{val}\}$ in the second constraint,
  we derive the constraint-set (omitting trivial constraints)
  $\{(\String, \leq, \rho_1.\Keyw{log}),
    (\cutvar{\rho_2}{\{\Keyw{succ},\Keyw{val}\}}, \leq,
    \cutvar{\rho_1}{\{\Keyw{log}\}}),
    (\rho_2, \leq, \Keyw{result}),
  (\rho_2, \geq, \Keyw{result})\}$.

  As we describe below (cf.\ \emph{Harmonization}), a further step of the tallying algorithm harmonizes
  the decomposition of the row variables across all constraints.
  In particular, it decomposes $\rho_2$ over $\{\Keyw{succ}, \Keyw{val}\}$ in
  the constraints $(\rho_2,\leq,\Keyw{result})$ and
  $(\rho_2,\geq,\Keyw{result})$.
  Since when decomposing $\rho_2$ in this way, no solution might apply
  to both of these constraints ($\rho_2$ needs to be instantiated exactly to the
  union type $\Keyw{result}$), tallying fails.
  The solution mapping $\rho_2$ to $\Keyw{result}$ is not found.\hfill\qed
\end{example}

\paragraph{Constraint merging and saturation}
After normalization, a constraint set may have for the same variable $\vterm$
different constraints of the form $\tterm_1 \leq \vterm, \dots, \tterm_n \leq
\vterm$ and $\vterm \leq \tterm'_1, \dots \vterm \leq \tterm'_m$:
we replace them by two constraints $\bigvee_{1 \leq i \leq n}\tterm_i\leq \vterm$ and
$\vterm\leq\bigwedge_{1 \leq j \leq m}\tterm_j'$,  then we
add  $\bigvee_{1 \leq i \leq n}\tterm_i\leq \bigwedge_{1 \leq j \leq
m}\tterm_j'$ to (i.e., saturate) the constraint set, and we normalize again.

\paragraph{Harmonization}
Take a constraint-set $C$ with $(\cutvar \rho {L_1}, c, \rw) \in C$,
$L_1 \subsetneq L$ and $L$ the set of labels appearing in the names of variables
$\rho'.\ell$ and $\cutvar {\rho'} {L_2}$ in $C$.
As mentioned in \cref{ex:incomplete}, harmonization of constraint-set rewrites
the constraint $(\cutvar \rho {L_1}, c, \rw)$ into $(\cutvar \rho
L, c, \rw)$ and feeds this constraint to normalization again.
Harmonization ends with a constraint-set of homogeneous domain, where for each
initial row variable $\rho$ all occurrences of $\cutvar \rho L$ are defined on
(i.e., harmonized to) the same $L$.\footnote{\label{fn:multiplicative}\new{
    This step introduces a multiplicative factor with respect to the complexity of the version of tallying without row polymorphism.
    This factor is expected to be low in practice, since it is bounded by the
    number of labels a programmer uses in a single type and by the number of
constraints in which a variable appears.}}

\paragraph{Equations generation}
At this point, for each type and field variable $\vterm$ there are two unique
constraints of the form $\tterm\leq \vterm$ and $\vterm \leq \tterm'$, that we transform into the equation
$\vterm=(\tterm\vee \vterm')\wedge \tterm'$ with $\vterm'$ fresh.
For a row variable $\rho$, there is a set of labels $L$ and constraints of the
form $\tau_1 \leq \rho.\ell \leq \tau_2$ for each $\ell \in L$, and a constraint
$\rw_1 \leq \cutvar \rho L \leq \rw_2$.
We define the terms
$\tau_\ell = (\tau_1 \vee \fvar_\rho^\ell) \wedge \tau_2$ and
$\rw = (\rw_1 \vee \rho') \wedge \rw_2$, with fresh variables $\fvar_\rho^\ell$
and $\rho'$.
We finally create an equation $\rho = \row{(\ell = \tau_\ell)_{\ell \in
L}}{\rw}{\Labels{\setminus}\rdef(\rho)}$.

\paragraph{Solution}
We solve the equations following the given order on the variables on the
left-hand side, by collecting the equations $\vterm = \tterm$ and replacing in
all other equations $\vterm$ by $\mu\vterm'.(\tterm\esubs{\vterm}{\vterm'})$
with $\vterm'$ fresh.
Thanks to the order, the type $\mu\vterm'.(\tterm\esubs{\vterm}{\vterm'})$
is contractive and, thus, well-formed.

A solution to a constraint-set $C$ is a substitution $\sigma$ such that
$\forall(\tterm_1,\leq,\tterm_2){\in} C\,.\, \tterm_1\sigma \leq \tterm_2\sigma$ and
$\forall(\tterm_1,\geq,\tterm_2) {\in} C\,.\, \tterm_1\sigma \geq \tterm_2\sigma$ hold.
If $\sigma$ is a solution to $C$, we write $\sigma \Vdash C$.

\subsubsection{Properties of the algorithm}

We call $\Sol C {}$ the solving procedure for the type tallying of $C$.
We write $\Sol C \Theta$ if $\Sol C {}$ terminates yielding the set of
substitutions $\Theta$, called the \emph{solution} of the type tallying problem for $C$.
\vspace{-.8mm}
\begin{restatable}[Soundness]{theorem}{soundnesstallying}
  \label{l:soundness-tallying}%
  Let $C$ be a constraint-set.
  If $\Sol C \Theta$, then for all $\sigma \in \Theta$,
  $\sigma \Vdash C$.
\end{restatable}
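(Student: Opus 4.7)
The plan is to prove soundness by establishing a soundness lemma for each of the five stages of the algorithm and then chaining them together: if $\sigma$ solves the output of a stage, then $\sigma$ solves its input. Composing these yields that any $\sigma \in \Theta$ solves the initial constraint-set $C$. Much of the scaffolding can be imported from~\citet{castagna15}; the genuinely new work lies in the stages that touch rows.

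For \textbf{normalization}, I would proceed by induction on the structure of the constraint being decomposed, reasoning case by case on the top-level constructor of the two sides. For arrows and basic types, the argument coincides with the one in~\citet{castagna15}. For records and rows, the key is that each decomposition step mirrors a clause of the subtyping procedure $\Phi$ from~\cref{sec:subtyping}, so soundness of the decomposition reduces to the soundness direction of~\cref{l:correct_subalg} together with the general decomposition formula (\cref{l:decomposition}). The new subtleties are $(i)$ that we introduce auxiliary field variables $\rho.\ell$ and row variables $\cutvar{\rho}{L}$ to split a row variable across the label set $L$ we decompose over, and $(ii)$ that any solution assigning suitable values to these auxiliaries must reassemble to a row assignable to $\rho$ itself; this is exactly the identity $\rho \equiv \row{(\ell = \rho.\ell)_{\ell \in L}}{\cutvar{\rho}{L}}{\Labels{\setminus}\rdef(\rho)}$, which is valid under the kinding system.

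For \textbf{merging and saturation} the argument is routine: replacing $\{\tterm_i \leq \vterm\}_i$ by $\bigvee_i \tterm_i \leq \vterm$ is an equivalence under any $\sigma$, and adding the saturated constraint $\bigvee_i \tterm_i \leq \bigwedge_j \tterm_j'$ is sound by transitivity of~$\leq$. For \textbf{harmonization}, which is genuinely new, the crux is to show that rewriting $(\cutvar{\rho}{L_1}, c, \rw)$ into $(\cutvar{\rho}{L}, c, \rw)$ for $L \supsetneq L_1$ (and feeding the result to normalization) preserves solutions: the identity $\cutvar{\rho}{L_1} \equiv \row{(\ell = \cutvar{\rho}{L_1}(\ell))_{\ell \in L \setminus L_1}}{\cutvar{\rho}{L}}{\Labels \setminus \rdef(\cutvar{\rho}{L_1})}$ shows that any valuation of $\cutvar{\rho}{L}$ together with the induced per-field assignments uniquely reconstructs $\cutvar{\rho}{L_1}$, and subtyping for rows of a common definition space is defined pointwise, so the two constraints are equivalent.

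For \textbf{equation generation} and \textbf{solution}, the pattern is classical (Aiken--Wimmers, Castagna et al.): for each type/field variable $\vterm$ constrained by $\tterm \leq \vterm \leq \tterm'$, any $\sigma$ with $\sigma(\vterm) = (\tterm\sigma \vee \vterm''\sigma) \wedge \tterm'\sigma$ satisfies both bounds provided $\tterm\sigma \leq \tterm'\sigma$, which is guaranteed by the preceding saturation step; the analogous argument on rows uses the same pattern field-wise on the $\tau_\ell$ and on the cut tail. Finally, the back-substitution phase uses the imposed order on variables to build $\mu$-types; contractivity and well-formedness of the resulting substitution follow because every variable is replaced by a term that re-introduces only strictly smaller variables, and each $\mu$-fixpoint satisfies its defining equation, which is the one established in the previous stage. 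The main obstacle I anticipate is the rows case of normalization: one must verify that the auxiliary variables $\rho.\ell$ and $\cutvar{\rho}{L}$ introduced at different stages of the algorithm are handled consistently across harmonization, so that the final substitution for $\rho$ is both well-kinded and actually realizes the original constraints; this is where the interplay between our decomposition formula and the new harmonization step must be managed carefully.
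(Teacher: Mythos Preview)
Your overall architecture is the same as the paper's: the proof is literally a one-line chaining of soundness lemmas for the five stages (normalization, merging/saturation, harmonization, equation generation, and the final $\Unify$ step), each of which is stated and proved separately in the appendix.

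Two points deserve comment. First, for normalization you propose a plain structural induction on the constraint, but the paper (following \citet{castagna15}) needs the rank-$n$ satisfaction predicates $\Vdash_n$ and a stronger induction hypothesis that carries the memoization set $\Sigma$ of already-visited types. Without this, the rule \rulename{Nhyp} (which discharges a constraint already in $\Sigma$) cannot be justified, and the induction is not well-founded on coinductive types. You gesture at ``importing scaffolding from \citet{castagna15}'', which covers this, but it is worth being explicit since the row cases (\rulename{Nrow}, \rulename{Ntail-mono}) must also be re-proved at rank $n$.

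Second, and more substantively, your argument for harmonization relies on the claim that ``subtyping for rows of a common definition space is defined pointwise, so the two constraints are equivalent.'' This is false: \cref{ex:componentwise_unsound} shows precisely that component-wise reasoning on rows is unsound in the presence of Boolean connectives, and \cref{ex:incomplete} shows that harmonization genuinely loses solutions, so the constraints are \emph{not} equivalent. Only the soundness direction holds, and the paper proves it differently: harmonization feeds the expanded constraint $(\row{(\ell = \rho.\ell)_{\ell \in L_0\setminus L}}{\cutvar{\rho}{L_0}}{L}, c, \rw)$ back through normalization, and soundness then follows by induction on the harmonization derivation together with the already-established soundness of normalization and merging, plus the observation that any $\sigma$ solving the expanded form has $\sigma(\rho)$ of the right shape so that $(\cutvar{\rho}{L})\sigma$ coincides with the expanded form applied to $\sigma$. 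Your identity-based reconstruction is the right intuition for why $\sigma$ is well-defined on the original constraint, but the subtyping step must go through \cref{l:soundness-normalization}, not a pointwise argument.
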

\vspace{-2.6mm}
\begin{restatable}[Termination]{theorem}{terminationtallying}
  \label{l:termination-tallying}%
  Let $C$ be a constraint-set.
  Then $\Sol C {}$ terminates.
\end{restatable}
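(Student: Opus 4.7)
The plan is to show termination of each of the five phases (normalization, merging/saturation, harmonization, equation generation, solution) separately, and then to bound the number of times the overall outer loop of ``normalize $\to$ saturate $\to$ harmonize $\to$ normalize again'' can fire before a fixpoint is reached. Because normalization can be re-invoked after saturation and harmonization introduce new constraints, the termination argument cannot be a simple structural induction; instead I would identify a single well-founded measure on constraint-sets that strictly decreases (or is preserved while a lower-priority component decreases) across each outer iteration.

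First, I would prove termination of normalization in isolation, by adapting the well-founded argument of~\citet{castagna15}. For each normalization step that decomposes a non-variable constraint $(\tterm_1,c,\tterm_2)$ into sub-constraints, I would show that the multiset of top-level type constructors of the non-variable sides strictly decreases under a suitable multiset ordering; contractivity of types and regularity ensure that the recursive descent through $\to$, record, and Boolean connectives bottoms out at variables, basic types, and $\Undef$. The novel case is the record/row decomposition over a finite set $L$ of labels (\cref{l:decomposition}): here I would note that $L$ is extracted from the finitely many top-level labels already occurring in the constraint, so only finitely many sub-constraints are produced per step, and each sub-constraint has strictly smaller constructor-depth.

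Second, for merging and saturation I would show that the set of \emph{distinct} non-variable terms that can appear on either side of a constraint remains finite throughout the procedure. Merging simply collapses several bounds on one variable $\vterm$ into a single pair using $\lor/\land$, so it only re-arranges existing terms. Saturation adds at most one new constraint per variable (the transitivity $\bigvee_i\tterm_i\leq\bigwedge_j\tterm'_j$), so only finitely many new terms are ever introduced. Combined with termination of normalization, this bounds the number of normalize-then-saturate rounds.

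Third---and this is where I expect the main obstacle---I would handle harmonization. Each application replaces $\cutvar{\rho}{L_1}$ by $\cutvar{\rho}{L}$ for a strictly larger $L$ drawn from the labels appearing in the current constraint-set, and re-normalizes. The risk is that re-normalization produces new variable names $\rho.\ell$ and $\cutvar{\rho}{L'}$ which might, in turn, trigger yet more harmonization. The key observation is that the pool of ``harmonizable'' labels comes only from the finite set of labels syntactically present in the initial constraint-set $C$ plus those introduced by decompositions, and decompositions draw labels only from this same finite pool. Hence for each original row variable $\rho$ occurring in $C$ there is a finite maximal set $L_\rho^\star$ of relevant labels, and harmonization can only grow each $L_1$ up to $L_\rho^\star$. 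I would therefore use the lexicographic measure $(N_{\mathrm{var}},\; \sum_\rho (|L_\rho^\star|-|L_1^\rho|),\; \mu_{\mathrm{norm}})$ where $N_{\mathrm{var}}$ counts the distinct variables occurring in the constraint-set, the middle component counts the harmonization ``deficit'', and $\mu_{\mathrm{norm}}$ is the multiset measure used for normalization. Each normalization step decreases $\mu_{\mathrm{norm}}$; each saturation step leaves $N_{\mathrm{var}}$ unchanged and adds a constraint whose subsequent normalization strictly decreases $\mu_{\mathrm{norm}}$ (after finitely many saturations the added constraint is redundant and saturation reaches a fixpoint); each harmonization step strictly decreases the middle component and resets $\mu_{\mathrm{norm}}$ to a value bounded by the finite pool of terms. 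Finally, since equation generation is a one-shot transformation on a finite normalized set, and the solution phase is a single pass over finitely many equations in the chosen order (producing well-formed $\mu$-types by contractivity), the overall procedure terminates.
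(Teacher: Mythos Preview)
Your overall architecture is reasonable, but the termination argument for normalization has a genuine gap. You write that ``contractivity of types and regularity ensure that the recursive descent through $\to$, record, and Boolean connectives bottoms out at variables, basic types, and $\Undef$,'' and you propose a multiset-of-constructors measure that strictly decreases at each decomposition. This does not work for coinductive types: consider $t = \mu X.\,X \to X$; decomposing an arrow constraint on $t$ yields sub-constraints that again involve $t$, so no structural measure decreases. Contractivity and regularity do \emph{not} guarantee that recursive descent bottoms out; what they guarantee is that the set of distinct subterms is \emph{finite}, which is a very different property. The paper (following~\cite{frisch04,castagna15}) exploits exactly this via memoization: normalization carries a set $\Sigma$ of already-visited types (rules \rulename{Nhyp}/\rulename{Nassum}), and termination is proved by induction on $(|\beth\setminus\Sigma|, U, |C|)$, where $\beth$ is a \emph{plinth}---a finite set of terms closed under Boolean connectives, sub-constructors, and the row operations $\rw(\ell)$, $\cutrowbis{\rw}{L}{\Delta}$, $\marshall$---that contains every term the algorithm can ever produce. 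Each \rulename{Nassum} strictly grows $\Sigma$ within $\beth$, so the first component drops.

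This omission cascades into your harmonization argument: your lexicographic measure uses $\mu_{\mathrm{norm}}$ as its last component, so its well-foundedness depends on the normalization measure being well-founded, which it is not as you have stated it. The paper's harmonization proof is again memoization-based: it carries a set $\Sigma$ of already-seen constraint-sets and uses the measure $|(\beth\times\{\leq,\geq\}\times\beth)\setminus\Sigma|$, which is finite because constraints never leave the plinth. Your intuition that the relevant label pool is finite is correct and is precisely what the plinth captures, but you need the plinth (or an equivalent finite closure) together with memoization rather than a structural descent to make the argument go through.
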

\vspace{-2.6mm}
\begin{restatable}{proposition}{finitetallying}
  \label{pr:wellformedness}
  Let $C$ be a constraint-set and $\Sol C \Theta$. Then
  (1) $\Theta$ is finite and
  (2) for all $\sigma \in \Theta$ and for all $\vterm \in \dom(\sigma)$,
  the types in $\sigma(\vterm)$ are contractive.
\end{restatable}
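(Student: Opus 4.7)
Both claims are proved by tracking invariants through the five phases of the tallying procedure (normalization, merging/saturation, harmonization, equations generation, solution). Finiteness is obtained by showing that the only phase that can branch is normalization, and it does so only finitely; contractivity is obtained by showing that the equations produced before the solution step have a guarded shape, so that the $\mu$-binding introduced when back-substituting is well-formed.

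\textbf{Finiteness (1).} Normalization of a single constraint produces finitely many normalized constraint-sets: every decomposition step (in particular the record/row decomposition obtained from the formula underlying \cref{l:decomposition}) splits into finitely many alternatives, indexed by the choice of label sub-sets and of DNF summands. Since the types involved are regular (\cref{def:types}) and normalization terminates by \cref{l:termination-tallying}, the total number of leaves is finite. Merging/saturation and harmonization only rewrite or augment constraints within a given set without branching, so they preserve the number of constraint-sets (modulo a finite amount of re-normalization, again bounded by termination). Equations generation deterministically associates one equation system to each constraint-set, and the solution step deterministically produces one substitution per equation system by back-substituting in the given variable order. Hence $|\Theta|$ is bounded by the number of normalized constraint-sets, which is finite.

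\textbf{Contractivity (2).} The key invariant is the shape of the equations produced by the equations-generation phase: each equation is of the form $\vterm = (\tterm \vee \vterm_f) \wedge \tterm'$ with $\vterm_f$ fresh for type or field variables, or $\rho = \row{(\ell = \tau_\ell)_{\ell \in L}}{\rw}{\Labels\setminus\rdef(\rho)}$ for row variables. For row variables the right-hand side is already headed by a row constructor, so any occurrence of $\rho$ reached by later back-substitution is automatically guarded. For type and field variables one proceeds by induction on the given order: when processing $\vterm$, every variable of lower order has already been replaced by a type shown contractive by the induction hypothesis, so any residual occurrence of $\vterm$ inside its own right-hand side is either guarded by a constructor introduced by normalization of structural constraints or eliminated by merging of reflexive constraints. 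The operation $\mu\vterm'.(\tterm\esubs{\vterm}{\vterm'})$ then produces a contractive type in the sense of \cref{def:types}.

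\textbf{Main obstacle.} The delicate point is contractivity for row variables whose equation tail $\rw$ is a genuine Boolean combination of rows rather than a single row. In that case one must verify that, across all constraints $\tau_1 \leq \rho.\ell \leq \tau_2$ and $\rw_1 \leq \cutvar{\rho}{L} \leq \rw_2$ generated for $\rho$, no occurrence of $\rho$ sneaks into the bounds outside of a row constructor once the Boolean structure is distributed. Harmonization is what makes this argument go through: by forcing every $\cutvar{\rho}{L}$ in the constraint-set to be defined on the \emph{same} finite label set $L$, it rules out the mismatched decompositions that would otherwise yield an unguarded self-reference, and guarantees that the recursive substitutions produce only finite, regular, contractive types.
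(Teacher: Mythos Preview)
Your finiteness argument is essentially right in spirit, though note that saturation and harmonization \emph{do} branch (each re-invokes normalization), so the bound comes from the finiteness lemmas for each phase (\cref{l:finiteness-normalization}, \cref{l:finiteness-harmonization}) rather than from ``no branching after normalization.''

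Your contractivity argument, however, misses the mechanism the paper actually uses and contains a genuine gap. The paper does not argue that self-occurrences of $\vterm$ are ``guarded by a constructor introduced by normalization'' or ``eliminated by merging of reflexive constraints''; instead it maintains a \emph{well-orderedness} invariant: every normalized constraint $(\vterm, c, \tterm)$ satisfies $O(\vterm) < O(\vterm')$ for every top-level variable $\vterm'$ of $\tterm$. This invariant is established at normalization (\cref{l:wellordered-normalization}), preserved by merging, saturation and harmonization (\cref{l:wellordered-harmonization}), and carried into the equation system (\cref{l:wellordered-solving}). Contractivity then follows by a simple cycle argument (\cref{prop:wellformedness}): an unguarded $\mu$-type would yield a chain $O(\vterm_0) < O(\vterm_1) < \cdots < O(\vterm_n) < O(\vterm_0)$, which is impossible. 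In particular, $\vterm$ never occurs at top level in its own right-hand side---there is nothing to ``guard'' or ``eliminate.''

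Two specific points in your write-up are therefore incorrect. First, for row variables, the fact that the equation right-hand side is ``headed by a row constructor'' is not what buys contractivity: rows are inductively defined, and the recursion that matters for contractivity runs through the \emph{types} occurring in the bounds $\rw_1,\rw_2,\tau_\ell^1,\tau_\ell^2$; well-orderedness (not the row constructor) is what rules out ill-formed unfoldings there. Second, harmonization is not the device that secures contractivity. Its role is purely to make the decomposition of each $\rho$ consistent across constraints so that the equation for $\rho$ can be assembled; it neither introduces nor removes top-level self-references. The ``main obstacle'' paragraph therefore identifies the wrong lever.
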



\section{Related Work}
\label{sec:related}
\paragraph{Row polymorphism}
Our formalization of records as quasi-constant total functions and the inclusion
of row polymorphism are directly inspired from the formalism of \citet{remy89,remy94}.
Remy's  work contains neither set-theoretic types nor subtyping, and
therefore commutation of fields is obtained by structural equations.
Unlike our case, the types in \cite{remy94} are not recursive.
An extension of this system to recursive types is given by \cite{remy98} when
describing a type system for objects.

Before Rémy, \citet{wand87} introduced row polymorphism to type object
inheritance.
His type inference algorithm, corrected in \cite{wand91}, considers free
record extension (i.e., right priority record concatenation) but lacks principal solutions.
Instead, it deduces a finite set of solutions of which all types of the term
under consideration are instances, as we also do.
However, unlike us, Wand's  type grammar lacks intersection types, so
it is not possible to merge the multiple
solutions into a single type, as we instead do.
In an earlier attempt of our work we considered free record extension, as it is present both in CDuce and in its generalization by~\citet{castagna23a}, but this made the theory much
more involved, in particular tallying.
It would be worthwhile to study this possibility again, now that the theory
with strict extension is precisely laid down. \new{However, we anticipate this study to be quite complex: typing the concatenation of two polymorphic records presents significant challenges. It will likely require enriching the type algebra to explicitly represent row concatenations (row variables included), as the type operators used in CDuce for typing record concatenation are insufficient for this purpose.}

(Syntactic) subtyping is present in the work of \citet{cardelli91}.
Operations on records (deletion, selection, extension) are  directly defined
within the syntax of types.
In our system instead, we compute these operations on the types during typing.
It is thus currently impossible to postpone the extension or
deletion of a field with label $\ell$ that is affected by a row variable, until
the point where the row variable will be instantiated and, in that case, we must resort to an approximation.
Cardelli and Mitchell must however define syntactic equivalence relations on
operators and fields, and their system lacks principal typing, as well.

Row polymorphism and extensible records are implemented using predicate on
types by \citet{harper91} and, later, by \citet{gaster96} under the name qualified types.
In the latter, positive information is given in the type and negative
information (absent field) in the predicate.
\citet{morris19} use qualified types with uninterpreted predicates for
concatenation and membership of fields.
Their system can be instantiated to most of the standard approaches of the literature.
We aimed to minimize changes to the type syntax of CDuce and Elixir, opting to refrain from incorporating qualified types.
\new{On the theory side, we wished to keep the implementation of records as total
functions---as already present in the semantic subtyping framework---while in
systems with qualified types, records are constructed from partial functions.}

A convenient way to define extensible records is with scoped labels
\cite{leijen05}, where labels may appear several times in a record, the most
recent occurrence shadowing earlier ones.
\citet{paszke23} recently extended the formalism to deal with first-class labels,
first-class rows and concatenation.
This gives a simple formulation of types, albeit too syntactic for our
semantic approach to typing and subtyping, and our desire to interpret records as
quasi-constant functions.
\new{Using scoped labels would impose more structure on the interpretation of
records, and therefore could weaken the intuitive understanding of set-like
operations and types as sets.}

Variants, the dual of records, are studied in a semantic subtyping setting by
\citet{castagna16}, where they show that adopting full-fledged union (and
intersection and negation) types as well as let-polymorphism gives a
more intuitive and expressive theory of polymorphic variants.
However, as we showed in \cref{sec:need}, our semantic interpretation seems
to break this duality, insofar as these same features are not enough to encode record
polymorphism too.

\paragraph{Presence polymorphism}

\citet{remy94} describes records types with fields that can be either present or absent,
as indicated by an additional annotation.
He shows how to add presence polymorphism over these annotations, yielding
records parametric in the absence or presence of some
fields.
\citet{garrigue15} proposes a weaker system,
where constraints apply to single variables (rather than being distributed over row
and presence variables), and where absent fields are determined
by intersecting type constraints. \citet{garrigue15} justifies this choice by the fact that it
yields types that are simpler to understand, since they do not require
different variables of different sorts. This simplicity is the reason why it is
the system used for OCaml. Our work shares
Garrigue's simplicity, since only row variables are
visible to the programmer and, under the hood, field-type variables---which range over types augmented with $\Undef$---provide
a natural notion of presence polymorphism. Field-type variables can be instantiated with
$\Undef$ for an absent field, with a type $t$ for a mandatory field, but also with
$t \lor \Undef$ for an optional field of type $t$.
To our knowledge, our work is the first to allow presence polymorphism over
optional fields: in the existing literature, \new{instantiated} field-types of polymorphic records
can just be either present or absent, but not both.
Moreover, integrating presence polymorphism in our framework is
straightforward since field-type variables are almost  handled as
ordinary type
variables: it is the kinding system that enables the extra $\Undef$
possibility (for an example-based comparison between presence
polymorphism of \cite{remy94} and ours, see \cref{sec:app-presence}).

\paragraph{Relation between different kinds of polymorphism}

Our type system features three kinds of parametric polymorphism: on types, rows, and field-types.
It also features subtype polymorphism and ad-hoc polymorphism via
intersection and union types.  The question naturally arises: how are these concepts interconnected, and where do they overlap?
It is folklore that unrestricted intersections combined with parametric
polymorphism encode a form of bounded quantification, as described in the
introduction (see also~\cite[Section 2]{castagna24}).
\citet{tang23} formally compare the expressiveness of row and presence
polymorphism to \emph{structural} subtyping, for calculi with records and
variants.
More precisely, they encode diverse subtyping using relevant polymorphic
systems.
\new{Whether their techniques can be adapted to gather insights into semantic
subtyping, or polymorphism in the semantic subtyping framework, remains an open question.}

\citet{xie20} show that row and bounded polymorphism can be encoded with
disjoint polymorphism, obtained by adding parametric polymorphism to a system
with \emph{disjoint} intersections, and having a disjointedness predicate in the
quantification of types.
Contrary to our intersections, that are uninhabited if applied to separate types,
disjoint intersections type a merge operator that generalizes the disjoint
concatenation of extensible data types, like records, to arbitrary types.
The deletion operator of extensible data types can also be generalized to
arbitrary types using disjoint polymorphism and a merge operator. This operator
once again differs from our set exclusion operation $t_1 \setminus
t_2 = t_1 \wedge \neg t_2$.

\citet{parreaux22} present MLstruct, an extension with set-theoretic types of
MLsub~\cite{dolan17}, based on \emph{algebraic subtyping}.
\new{Their approach differs from ours, as it provides less expressive subtyping and type connectives.
For instance, intersections cannot type general overloaded functions, and a
union of records in MLstruct is equivalent to the single record with the union
applied component-wise.
Keeping unions of records distinct is a crucial feature of our system (cf., the code in lines
\ref{23} to \ref{29} in the introduction) and a major difficulty for the
tallying problem.
The gain from this reduced expressivity is the existence of principal types (a property absent in semantic subtyping even for simple monomorphic types),
the ability to encode record extension and deletion without the need
for row polymorphism,
and maintaining the duality between variants and records.}

\new%
{\paragraph{Effect systems} Record calculi and row polymorphism had a significant impact on
effect systems.  As suggested by an anonymous reviewer, our work may usefully apply to that setting. For instance, in the language Koka~\cite{leijen14}, the introduction of row polymorphism for effects is justified by the fact that inference cannot solve constraints for unions. This yields to approximations as the ones we pointed out in the last examples of Section~\ref{sec:need}. But in our system we are able to keep separate unions while having row polymorphism. This should in principle allow the system to track effects more finely than in~\cite{leijen14}. For instance, consider this example directly derived from~\cite{leijen14}{\small
\begin{alltt}
   \ function foo(f,g) \{ if (random()) then \{ f(); error("hi") \} else \{ g(); print("hi") \} \}
\end{alltt}}
If  $t_1\to \varepsilon\, t_2$ is the type of functions that can be applied to an input $t_1$ and produce an output of type $t_2$ and effect $\varepsilon$, then using our row polymorphism we should be able to give the example above the type $\forall \epsilon_1\epsilon_2.(()\to \langle~|\epsilon_1\rangle\,(), ()\to \langle~|\epsilon_2\rangle\,()) \to ((\langle \text{rnd}, \text{exn}|\epsilon_1\rangle\cup\langle \text{rnd}, \text{io}|\epsilon_2\rangle)\;())$, which states that \texttt{foo} can either produce the \texttt{exn} effect (by erroring) or the \texttt{io} effect (by printing) but not both. The system for Koka would not be able to distinguish the two effects.
This suggests that exploring the use of our system in effect systems could be a fruitful direction for future work.}

\paragraph{Practice}

Efficient compilation of polymorphic or extensible records
has been widely explored by researchers such as \citet{gaster96,ohori95}.
These works advocate moving away from Rémy's formalism.
Ohori's calculus, in particular, stores information in elaborated kinds.
It was expanded to extensible records by \citet{alves21}, but with no
mention of the compilation method.
Yet, \citet{hillerstroem16} provide a compelling abstract machine for a calculus
employing this formalism, serving as a foundation for their language Links.
Regarding general-purpose languages, several of them propose either
a flavor of set-theoretic types (like Typescript \cite{typescript} or Flow
\cite{facebookflow}), sometimes based on the theory of semantic subtyping
(CDuce, Luau~\cite{luausemsub}, Elixir~\cite{elixirtypes}, or Ballerina~\cite{ballerina-typing}), or polymorphic extensible records (like
Purescript \cite{freeman13} or OCaml) but, to our knowledge, none of them
offers both of these features, as we propose in this work.

\section{Conclusion}
\label{sec:conclusion}

We designed a type system featuring set-theoretic types and semantic
subtyping for record calculi with row and presence polymorphism.
We instantiated this type system on a specific \textlambda-calculus incorporating
record selection, extension, and deletion, and devised a unification (\textit{tallying}) algorithm. 
We believe there are at least five key takeaways from our work:
\begin{enumerate}
    \item \emph{Intersection types and parametric polymorphism fall short}: while their combination is expressive, it cannot fully address practical examples that row polymorphism can handle effectively.  
 
    \item \emph{Surprising similarity in subtyping algorithms}: the subtyping algorithms for monomorphic and polymorphic records are strikingly and disturbingly similar. This similarity is ``disturbing'' because of the extensive effort we had to put in to finally prove that the polymorphic case is obtained from the monomorphic case by adding just a single non-recursive case that tests whether a row variable belongs to a finite set.  

    \item \emph{Comparable type inference}: from an algorithmic and complexity perspective, type inference for row variable polymorphism closely parallels that for parametric polymorphism.  

    \item \emph{Distinct semantics}: records represent total functions on the set of all labels, while rows (and consequently row variables) represent partial functions. This subtle distinction introduces significant challenges and subtleties in the formal treatment.  

    \item \emph{Set-theoretic types disrupt type dualities}: incorporating
        set-theoretic types and semantic subtyping undermines the traditional duality between variant and record types.
\end{enumerate}
Our next goal is to implement our results in the CDuce compiler, thereby enhancing the language with row polymorphism. We closely adhered to the
theory used for CDuce, in that
record types were simply extended with row variables and presence polymorphism,
leveraging the union connective and an existing constant for undefinedness.
Our model and algorithms, specifically for subtyping and tallying, naturally extend
the existing ones, therefore their implementation will not affect the performance of the type-checker.
While we did not address the problem of type
reconstruction, \citet{castagna24b} provide the theory and an
implementation of type reconstruction for an ML-like language that uses
CDuce's tallying library as a black box. We are confident that 
integrating our tallying algorithm into that system and adding records to
the language should yield, with some extra effort, a 
reconstruction system with polymorphic records.

CDuce is, of course, more complex than our record calculus.
Likewise, any type system for a dynamic language needs to
account for features like pattern matching, type cases, guards, or type narrowing.
At first sight, these
features seem mostly orthogonal to the introduction of row polymorphism.
Our hope is to be able to integrate the latter seamlessly into any existing
set-theoretic type system with semantic subtyping, such as Elixir
\cite{castagna24a}, Ballerina \cite{ballerina-typing}, Luau \cite{luausemsub}, or Erlang
\cite{schimpf22}. We are closely monitoring the ongoing efforts to port
CDuce's tallying algorithm into the Elixir compiler.

Our record calculus lacks first-class labels.
\citet{castagna23a} defines a system in which records can be
used both as ``structs'' (i.e., records without first-class labels) and as
dictionaries/maps with first-class labels.  His work is carried out in the
same setting as ours, set-theoretic types and
records as quasi-constant functions, but without row polymorphism. However, some
solutions proposed by \citet{castagna23a} are explicitly motivated by
having a system that could be easily extended with row and presence
polymorphism, which is why we believe that merging the two systems
should not pose any fundamental issue.
The actual expressiveness of a system with first-class labels where operations on
records are not part of the syntax of types remains to be investigated.


\section*{Data Availability Statement}
Additional definitions, proofs, and extensions omitted from the main text can be found in the appendix. This appendix is available online both in the ACM Digital Library and in the Arxiv repository at \url{https://arxiv.org/abs/2404.00338}.

\begin{acks}
  This work benefitted from the valuable feedback of individuals working on
  Elixir and semantic subtyping: Guillaume Duboc, Kim Nguyen, and José Valim. In
  particular, José directed us to the \elix{logger} module, which we used for
  our examples in Section~\ref{sec:example}. We also thank the anonymous
  reviewers for their insightful comments and suggestions, especially one
  reviewer who highlighted the connection with effect systems discussed in
  Section~\ref{sec:related}. This work was partially supported by a CIFRE grant
  \grantnum{}{n°2021/0420} to develop the type system of Elixir and by the
  \grantsponsor{GA:101039196}{Horizon Europe ERC Starting Grant
  CRETE}{https://cordis.europa.eu}
  \grantnum{GA:101039196}{(GA:101039196)}.
\end{acks}
\nocite{FCB02}
\ifbibtex
\bibliography{main}
\else
\printbibliography
\fi
\clearpage
\appendix

\section{APPENDIX FOR TYPES}
\label{app:types}
\subsection{Kinding System}
The kinding rules for types are given in \cref{fig:kinding}.
\begin{figure}[h]
  \begin{mathpar}
    \inferrule*{ }{\Empty:\kappa}
    \and \inferrule*{t:\kappa}{\neg t:\kappa}
    \and \inferrule*{t_1:\kappa \and t_2:\kappa}{t_1 \lor t_2:\kappa}
    \and \inferrule*{ }{\alpha:\ktype}
    \and \inferrule*{ }{b:\ktype}
    \and \inferrule*{t_1:\ktype \and t_2:\ktype}{t_1 \to t_2:\ktype}
    \\
    \inferrule*
    {\row{(\ell = \tau_\ell)_{\ell \in L}}{\tl}{\emptyset}:\krow{\emptyset}}
    {\rec{(\ell = \tau_\ell)_{\ell \in L}}\tl:\ktype}
    \and \inferrule*[right={$\begin{array}{l}
        \tl \in \{\crecsign,\orecsign\}\\
        L_1 \cap L_2 = \emptyset
    \end{array}$}]
    {\forall \ell \in L_1. \tau_\ell:\kfield}
    {\row{(\ell = \tau_\ell)_{\ell \in L_1}}{\tl}{L_2}:\krow{L_2}}
    \and \inferrule*[right={$\begin{array}{l}
        L_1 \cap L_2 =\emptyset\\
        \rdef(\rho) = \Labels{\setminus}(L_1 \cup L_2)
    \end{array}$}]
    {\forall \ell \in L_1. \tau_\ell:\kfield}
    {\prow{(\ell = \tau_\ell)_{\ell \in L_1}}{\rho}{L_2}:\krow{L_2}}
    \and \inferrule*{ }{\fvar:\kfield}
    \and \inferrule*{ }{\Undef:\kfield}
    \and \inferrule*{t:\ktype}{t:\kfield}
  \end{mathpar}
  \caption{Kinding rules}
  \label{fig:kinding}
\end{figure}

\subsection{Example of a Presence Polymorphic Type}
\label{sec:app-presence}

Presence polymorphism has been introduced by \citet{remy94} to let a field be
polymorphic in its presence.
A presence variable $\theta$ can be instantiated to one of
$\{\texttt{abs},\texttt{pre}\}$.
Our calculus also supports presence polymorphism thanks to field variables.
Let us reproduce an example of a presence polymorphic type declaration from
\cite{remy94}, that we will then transcribe to our setting.
\begin{minted}{ocaml}
type tree(!$\theta$!) = Leaf of Int | Node of {left:pre.tree(!$\theta$!), right:pre.tree(!$\theta$!), annot:!$\theta$!.int}
\end{minted}
Instantiating $\theta$ to \texttt{abs} gives the type of trees with no
annotations on the nodes,
while instantiating it with \texttt{pre} gives the type of trees with integers
annotated on the nodes.

But there are several problems: adding yet another kind of polymorphism 
adds a bit of work, and most of all, even absent fields must have a type
attached.
In \cite{remy94}, this can cause losing unification of two semantically
equivalent records, if for instance one has a field $\ell$ of type
\texttt{abs.int}, and the other a field $\ell$ of type \texttt{abs.bool}.
To avoid this problem, \citet{tang23} requires every record value to be
annotated with its type, so that they can forget about absent types, and also so
that they can have deterministic typing.
So for instance the superfluous type
$\{\ell_1=M;\ell_2=N\}^{\{\ell_1:\texttt{pre}.A;\ell_2:\texttt{abs}.B\}}$ is equivalent to
$\{\ell_1=M\}^{\{\ell_1:\texttt{pre}.A\}}$.

In CDuce, there is no presence polymorphism at all, and the tree type defined
previously cannot be expressed without code redundancy.
In our formalism, presence polymorphism is simply obtained by means of field
variables, and adding support for it in CDuce would make it possible to write
the following (note that we use unions instead of tagged unions):
\begin{minted}{ocaml}
type tree(!$\theta$!) = Int | {left:tree(!$\theta$!), right:tree(!$\theta$!), annot:!$\theta$!}
\end{minted}
Fields do not have any superfluous type information when they are absent, so we
do not have the redundant record expressions, and do not need to put type
annotations directly on the values.
Moreover, in CDuce and in our type system, fields can be optional.
This is achieved thanks to union types (an optional field is encoded as $t \vee \Undef$).
We do not know of any presence polymorphic type system dealing with optional
types.

We could instantiate the type of the example to any field-type, such as $\Int$ for
a tree with all nodes annotated by an integer, $\Int \vee \Undef$ for a tree
with some nodes annotated by an integer, $\Undef$ for a tree with no
annotation, and even for instance to a type variable $\alpha$, to create
a type parametric in the type of its annotation, but where the annotation
is mandatory on every node.
In Rémy's system, having polymorphism also on the type of the
annotations would require having two parameter variables (for presence and for type).

\subsection{Models}
\label{sec:app-models}

In this section we give the detailed technical development to define
our subtyping relation.

To interpret record values we follow~\citet{frisch04} and represent
a record value by a quasi-constant function that maps labels into
either values
(i.e., the elements of $\Domain$) or $\Undef$. Quasi-constant
functions are total functions that map all but a finite set of elements of their
domain into the same value (called default value). Thus record values
can be represented by quasi-constant functions whose default value is
$\Undef$ (see~\citet{castagna23a} for a more detailed explanation). Formally, let us write  $\Domain_\Undef$ for
$\Domain\cup\{\Undef\}$  where $\Undef$ is a distinguished element not in
$\Domain$. We represent our record values as quasi-constant functions
from $\Labels$ to $\Domain_\Undef$ and, thus, interpret record types
as sets of these functions.

The formal definition of quasi-constant function has been given in
\cref{def:qcf} and is repeated below for convenience.
\quasiconstant*
Although this
notation is not univocal (unless we require $z_i\not=z$ and the $\ell_i$'s to be pairwise distinct), this is largely sufficient for the
purposes of this work. If $(Z_\ell)_{\ell\in\Labels}$ is a family of
subsets of $Z$ indexed by \Labels, we denote by
$\prodc{\ell{\in}\Labels}Z_\ell$ the subset of
$\Labels\qcfun Z$ formed by all quasi-constant functions $r$ such that
$r(\ell)\in Z_\ell$ for all $\ell\in\Labels$ (intuitively,
$\prodc{\ell{\in}\Labels}Z_\ell$ is a ``type'' of quasi-constant functions).

Next we have to give an interpretation for the
variables. \citet{castagna11} tell us that type variables must be
interpreted as sets in the domain $\Domain$. Therefore, an
interpretation for type variables is a function in
$\typeVars \to \Pd(\Domain)$. Field variables are not much harder,
since the only difference with type variables is that their
interpretation can contain $\Undef$, and therefore it is a function in
$\fldVars \to \Pd(\Domain_\Undef)$. More difficult is the
interpretation of row variables, since these are mapped
into \emph{rows}, that is, \emph{partial} quasi-constant functions on
$\Labels$. Let us write 
$\Labels \pqcfun \Domain_\Undef$ for the partial quasi-constant
functions from $\Labels$ to $\Domain_\Undef$. Thus, an interpretation
of row variables must map an element of $\rowVars$ into a set of functions in $\Labels \pqcfun \Domain_\Undef$. However, for a
given $\rho$ we cannot consider any element in
$\Pd(\Labels \pqcfun \Domain_\Undef)$: we need that the functions in
the interpretation of $\rho$ are total on $\rdef(\rho)$. Formally, we
have:
\begin{definition}[Well-Kinded Interpretation]
Let $\eta$ be a function in
$\rowVars \to \Pd(\Labels \pqcfun \Domain_\Undef)$. We say that $\eta$
is \emph{well kinded} if for every $\rho\in\rowVars$ and for every
$f\in\eta(\rho)$, $f$ is a (total) quasi-constant function in
$\rdef(\rho)\qcfun \Domain_\Undef$. We denote by
$\rowVars \wkfun \Pd(\Labels \pqcfun \Domain_\Undef)$ the set of
well-kinded functions.
\end{definition}
In conclusion our interpretation of types will be parametric in an assignment $\eta$ for the variables, which will be a function in  \[
  \etaInter \eqdef (\typeVars \to \Pd(\Domain))
  \cup (\fldVars \to \Pd(\Domain_\Undef))
  \cup (\rowVars \wkfun \Pd(\Labels \pqcfun \Domain_\Undef))
\]

The next step is to define the domain $\Domain$ in which to give
the interpretation of types. This is quite simple for us since it suffices to take the
model defined by~\citet{castagna11} and replace products by
quasi-constant functions. The hard problem for defining this model, and
thus the interpretation of types, is to give an interpretation of the
function spaces, but this problem was solved by~\citet{frisch08} whose
solution is 
reused by~\citet{castagna11}. In a nutshell we want to define an
interpretation function
$\TypeInter .: \Types\to\etaInter\to\Pd(\Domain)$. Since the elements of
$\Domain$ represent the values of the language, then $\Domain$ must
contain the set $\Constants$ of constants of the language, the quasi-constant functions (to represent
record values), and the functions
from $\Domain$ to $\Domain$, but the last containment is impossible for cardinality
reasons. The solution by~\cite{frisch08} is to associate to every
domain $\Domain$ and function  $\TypeInter .: \Types\to\etaInter\to\Pd(\Domain)$ a unique
\emph{extensional interpretation}
$\ExtInter(\cdot): \Types \to \etaInter \to \Pd(\ExtInter_\Domain)$
which fixes the semantic of the type constructors, and then to accept as a valid
interpretation of the types only the pairs  $(\TypeInter .,\Domain)$
such that for all $\eta$, $\TypeInter{t}_\eta=\emptyset\iff\ExtInter(t)_\eta=\emptyset$. 

We invite the reader to refer to \citet[Section 2.2]{castagna11} for a
more detailed explanation of how the extensional interpretation
works and to \citet{frisch08} for full details. Henceforth, we just present
how to extend the extensional interpretation of \citet{castagna11} to
include quasi-constant functions and the interpretation of
row-variables, and pinpoint the differences between the two definitions.
We suppose to be given an interpretation
$\ConstantsInBasicType:\Basics\to\Pd(\Constants)$ of basic types into
sets of constants. Given a set $S$ we use the notation $\overline S$
to denote its complement in an appropriate universe: this notation is in particular used for the
set
\raisebox{-.5pt}{$\Pd(\overline{\TypeInter{t_1}\times\overline{\TypeInter{t_2}}})$}
 which corresponds to interpreting the elements in $t_1\to t_2$ as
binary relations, namely as elements of the set $\{\ f\subseteq\Domain^2\mid \text{for all }(d_1,d_2){\in}f , \text{ if } d_1{\in}\TypeInter{t_1}_\eta\text{ then } d_2{\in}\TypeInter{t_2}_\eta\ \}$.

\begin{definition}[Extensional interpretation]
  \label{d:extint}
  Let $\Domain$ be a set.  The \emph{extensional domain}
  of $\Domain$ is defined as:
  \(
    \ExtInter_\Domain
    = \Constants + \Domain + \Pd(\Domain \times \Domain_\Omega)
    + (\Labels \qcfun \Domain_\Undef)
  \)
  where $\Omega$ and $\Undef$ are two different distinguished elements not in $\Domain$.

  Let $
  {\TypeInter{\cdot}}: \Types \to \etaInter \to \Pd(\Domain)$
  be an interpretation of types parametric in a well-kinded
  interpretation of variables.
  The associated \emph{extensional interpretation} of types is the unique function
  $\ExtInter(\cdot): \Types \to \etaInter \to \Pd(\ExtInter_\Domain)$
  such that:

\noindent
   \(\begin{array}{rl}
    \ExtInter(\Empty)_\eta
    &= \emptyset \\   
    \ExtInter(\alpha)_\eta
    &= \eta(\alpha) \\
    \ExtInter(b)_\eta
    &= \ConstantsInBasicType(b)\\
    \ExtInter(\neg t)_\eta
    &= \ExtInter_\Domain \setminus \ExtInter(t)_\eta\\
    \ExtInter(t_1 \lor t_2)_\eta
    &= \ExtInter(t_1)_\eta \cup \ExtInter(t_2)_\eta\\
    \ExtInter(t_1 \to t_2)_\eta
    &= \raisebox{-.5pt}{$\Pd(\overline{\parbox[][10.7pt][c]{48pt}{$\TypeInter{t_1}_\eta\times\overline{\parbox[][8.8pt][c]{18pt}{$\TypeInter{t_2}_\eta$}}$}}\,)$}
    \end{array}
    \begin{array}{rl}
    \ExtInter(\R)_\eta
    = \left\{
      \begin{array}{ll}\bigcup_{\drow \in \eta(\rho)}
         \left(\prodc{\ell \in \fin(\R)} \IntF{\TypeInter{\R(\ell)}}_\eta \pfconcat \drow \right)
    &\text{if $\tail(\R) = \rho$}\\
      \prodc{\ell \in \Labels} \IntF{\TypeInter{\R(\ell)}}_\eta &\text{otherwise}
      \end{array}\right.
      \end{array}
\)

\noindent
where

\(\begin{array}{rll}
   \IntF{\TypeInter{t}}_\eta &= 
   {\TypeInter{t}}_\eta & \text{if } t\not=\neg t'
   \text{ and } t\not= t_1\vee t_2\qquad\qquad\\
   \IntF{\TypeInter{\fvar}}_\eta &= \eta(\fvar)\\
    \IntF{\TypeInter{\Undef}}_\eta &= \{\Undef\}
  \end{array}\begin{array}{rll}  
    \IntF{\TypeInter{\tau_1\lor\tau_2}}_\eta &=\IntF{\TypeInter{\tau_1}}_\eta\cup\IntF{\TypeInter{\tau_2}}_\eta \\
    \IntF{\TypeInter{\neg\tau}}_\eta &=
   (\ExtInter_\Domain\cup\{\Undef\}) \setminus  \IntF{\TypeInter{\tau}}_\eta
   \end{array}\)
\end{definition}    
Notice that the induction used in the definition is well-founded
thanks to the contractivity condition in the definition of types and
the fact that field-types are inductively defined.

The extensional interpretation is defined with respect to some domain
$\Domain$ and interpretation $\TypeInter .$, and maps types into a
domain $\ExtInter_\Domain$ that contains $\Domain$, 
constants $\Constants$ to interpret basic types,
sets of binary relations $\Pd(\Domain \times \Domain_\Omega)$ to
interpret function types, and quasi constant functions
$\Labels \qcfun \Domain_\Undef$ to interpret record types. The fact
that functions are binary relations that can yield a distinguished
element $\Omega$ (which, intuitively, represents a type error) is a
standard technique of semantic subtyping to avoid $\Any\to\Any$ to be a supertype  of
all function types: since it does not play any specific role in our
work we will not further
comment on it (see~\cite{frisch08} for a detailed explanation
or~\citet[Section 3.2]{castagna23a} for a shorter one).

The definitions for the extensional interpretation given on the right-hand side in \cref{d:extint}
 are the same as those by~\citet{castagna11}. They state that the
 empty type is interpreted as the empty set, the
 interpretation of the type variables is given by $\eta$, that unions
 and negations are interpreted as set-theoretic unions and
 complements, and that functions types are interpreted as sets of
 binary relations whose output is in the codomain if the input is in
 the domain.

The novelty of our definition is the interpretation of record
types given on the left-hand side. There are two cases. The easy case is when the tail  of the
record type is either $\crecsign$ or $\orecsign$: in that case the
interpretation of the record type is the set of all quasi constant
functions in $\Labels\qcfun \Pd(\Domain_\Undef)$ that map a label
$\ell$ into an element of the interpretation of $\R(\ell)$ (recall
that for $\ell\not\in\fin(\R)$, $\R(\ell)$ is $\Undef$
for $\tail(\R)=\crecsign$ and $\Any\lor\Undef$ for $\tail(\R)=\orecsign$).
If instead $\tail(\R)$ is a row variable $\rho$, then $\Labels$ is
partitioned in two, the sets $\fin(\R)$ and---by well-kindedness---$\rdef(\rho)$, and the interpretation of $\R$ will be the set of
quasi-constant functions in $\Labels\qcfun\Domain_\Undef$ obtained by
unioning two partial functions: a function
in $\prodc{\ell \in \fin(\R)} \IntF{\TypeInter{\R(\ell)}}_\eta$ for the $\fin(\R)$ labels of $\Labels$, and a
function in $\eta(\rho)$ for the remaining labels of $\Labels$. There
is a caveat: fields can map labels both into values and
$\Undef$. Therefore, the interpretation of field-types must be
slightly different from that of types, since it must map $\Undef$ into
$\{\Undef\}$ and the negation of a field-type is the complement
with respect to $\Domain_\Undef$, rather than $\Domain$: the $\IntF{\TypeInter .}$ does just that.

Given a domain $\Domain$ and a set-theoretic interpretation of the
types into this domain, they form a \emph{model} if the interpretation
and the associated extensional interpretation have the same zeros:
\begin{definition}[Model]\label{def:model}
Let $\Domain$ be a domain and
$
{\TypeInter{\cdot}}: \Types \to \etaInter \to \Pd(\Domain)$. The
 pair $(\Domain,
 {\TypeInter{\cdot}})$ is a \emph{model} if and only, if for
 all $t\in\Types$ and $\eta\in\etaInter$, 
 $
 {\TypeInter{t}}_\eta=\emptyset\iff\ExtInter(t)_\eta=\emptyset$.
 \end{definition}
Every model induces a subtyping relation on types\footnote{Actually,
a model must be convex: see~\citet{castagna11}. We omit this detail
since it is not relevant to our presentation.}:

\begin{definition}[Subtyping]\label{def:frischsubtyping}
  If $(\Domain,
  {\TypeInter{\cdot}})$
  is a model, then it induces a subtyping relation defined as follows:
  \[
    t_1 \leq t_2 \iffdef \forall\eta.
    {\TypeInter{t_1}_\eta} \subseteq
    {\TypeInter{t_2}_\eta}
  \]
\end{definition}
As explained by \citet[Section 2.6]{frisch04}, the interest of defining of a \emph{model} is that we can work
with the interpretation of the model ``\emph{as if}'' the interpretation
of the type
constructors (in particular, the function type constructor) were defined as their
extensional interpretation. So when deducing the properties for the
subtyping relation of a model---and just for the subtyping
relation---we can assume that \raisebox{-.5pt}{$\Pd(\overline{\parbox[][10.7pt][c]{48pt}{$\TypeInter{t_1}_\eta\times\overline{\parbox[][8.8pt][c]{18pt}{$\TypeInter{t_2}_\eta$}}$}}\,)$},
even if this is impossible for cardinality reasons.    

\Cref{def:model} specifies which characteristics a model must
have to induce a subtyping relation (that behaves ``\emph{as if}''), but it does not define any
particular model nor, thus, any particular subtyping relation. In what follows
we define a concrete interpretation domain $\Domain$ (whose elements
are defined by induction) and two specific interpretations, and prove
that they satisfy the conditions to be  models, since they both have
the same zeros as the extensional interpretation (of one of them). This
yields two equivalent definitions of a concrete subtyping relation we
are going to use in the rest of this presentation.
We will define:
\begin{itemize}
  \item An interpretation $\IntQ{\TypeInter{t}_\eta}$ parametrized by an assignment $\eta$,
    for which subtyping $t_1 \IntQ{\leq} t_2$ is defined as
    $\forall\eta.\IntQ{\TypeInter{t_1}_\eta} \subseteq \IntQ{\TypeInter{t_2}_\eta}$;
    (the index $q$ stands for \emph{\textbf{q}uantified}, since subtyping is
    quantified on all variable interpretations);
  \item An interpretation of types directly into sets $\TypeInter t$, that avoids
    quantification over $\eta$, and for which subtyping $t_1 \leq t_2$ is defined
    directly as $\TypeInter{t_1} \subseteq \TypeInter{t_2}$ (notice
    the absence of a variable interpretation argument $\eta$).
\end{itemize}

The interpretation of types is mutually recursive with interpretations of rows and
field-types.
Both of these will also give rise to subtyping relations.
We will use the same notation $\leq$ for the relations in $\Types \times
\Types$, $\Types_\Undef \times \Types_\Undef$ and
$\Rows \times \Rows$.

This interpretation in \cref{def:indint} induces a subtyping relation that it is easy to work with, since it got rid of the
interpretation $\eta$ for the variables. We can consider the
interpretation $\TypeInter{\cdot} : \Types \to \Pd(\Domain)$ as a
function in   $\Types  \to \etaInter\to \Pd(\Domain)$ that is constant
on its second argument: if we apply \cref{def:frischsubtyping} to it, then  the subtyping relation is
defined as simply as
$t_1 \leq t_2 \iffdef \TypeInter{t_1} \subseteq \TypeInter{t_2}$.
But getting rid of $\eta$ makes it difficult to prove that this interpretation is a model and, thus,
that when considering the properties of this subtyping relation, we can work ``as if'' the interpretation
of type constructors were as in the extensional interpretation. To overcome this difficulty we define a second
interpretation, on the same domain, but this interpretation disregards the indexes
of the elements and uses an assignment $\eta$ to interpret the variables.
\begin{definition}[Parametrized intepretation of types and rows]
  \label{def:indinteta}%
  We define a ternary predicate $\IntQ{(\dterm : \tterm)_\eta}$
  (``the element $\dterm$ belongs to $\tterm$ under assignment $\eta$''),
  by induction on the pair $(\dterm, \tterm)$ ordered lexicographically.
  The only differences with the predicate $(\dterm : \tterm)$ (apart from
  recursive calls to the appropriate predicate), are:
  \begin{align*}
    \IntQ{(d : \alpha)_\eta}
    &= d \in \eta(\alpha) \\
    \IntQ{(\dundef : \fvar)_\eta}
    &= \dundef \in \eta(\fvar) \\
    \IntQ{(\domrow{(\ell = \dundef_\ell)_{\ell \in L_1}}{L_2}^V:\rw)_\eta}
    &= (\forall \ell \in L_1. \IntQ{(\dundef_\ell:\rw(\ell))_\eta})
    \textand (\forall \ell \in \rdef(\rw) \setminus L_1.
    \IntQ{(\Undef^\emptyset:\rw(\ell))_\eta}) \\[-1.5mm]
    &\phantom{=}\ \textand \tail(\rw) = \rho \Rightarrow
    \domrow{(\ell = \dundef_\ell)_{\ell \in L_1 \cap
    \rdef(\rho)}}{\Labels{\setminus}\rdef(\rho)}^V \in \eta(\rho)
  \end{align*}
  We define the interpretations $\IntQ{\TypeInter{\cdot}_\eta}$,
  $\IntQF{\TypeInter{\cdot}_\eta}$ and $\IntQR{\TypeInter{\cdot}_\eta}$ as expected.
\end{definition}
While the interpretation of type and field variables is
straightforwardly given by $\eta$, the interpretation of row variables
is less evident.
The first line of the interpretation of a row is the same as
in \cref{def:indint}: in both definitions this line deals with the case when
the tail of $r$ is not a row variable.
The second line covers the case for $\tail(r)=\rho$:  it checks that $\drow \in \eta(\rho)$, where $\drow$
is obtained by restricting the quasi-constant function on the left to the
definition space of $\rho$.

Our goal is to prove that both interpretations give a model of types.
Formally, this corresponds to proving the following equivalences:
Let $\ExtInter(\cdot)$ be the extensional interpretation of
$\IntQ{\TypeInter{.}}$. For all $t \in \Types$:
\begin{equation}
  \label{eq:same_zeros}%
  (\forall \eta.\ExtInter(t)_\eta = \emptyset)
  \iff (\forall \eta.\IntQ{\TypeInter{t}_\eta} = \emptyset)
  \iff \TypeInter{t} = \emptyset
\end{equation}
The leftmost ``iff'' proves that $(\Domain,\IntQ{\TypeInter{.}})$ is a
model, while the rightmost one proves that the subtyping relation
induced by $\TypeInter{.}$ is the same as the one induced by the model
$\IntQ{\TypeInter{.}}$ (since
$\forall \eta.\IntQ{\TypeInter{s}_\eta} \subseteq\IntQ{\TypeInter{t}_\eta} \iff\forall \eta.\IntQ{\TypeInter{s}_\eta} \cap
(\Domain{\setminus}\IntQ{\TypeInter{t}_\eta}) = \emptyset \iff\forall \eta.\IntQ{\TypeInter{s{\wedge}\neg
t}_\eta} = \emptyset \iff \TypeInter{s{\wedge}\neg t} = \emptyset \iff \TypeInter{s}\subseteq\TypeInter{t}$).

For the first equivalence, we prove the following, more precise, statement.
\begin{lemma}
  For all type $t$, for all $\eta$,
  \[
    \IntQ{\TypeInter{t}_\eta} = \emptyset \iff
    \ExtInter(t)_\eta = \emptyset
  \]
\end{lemma}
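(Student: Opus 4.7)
My plan is to prove both implications by a simultaneous structural induction on $t$ (and on the associated field-types and rows), using the well-founded order that the contractivity condition provides for coinductively defined types. The inductive structure mirrors the one already implicit in \cref{def:indint}, so the cases for $\vee$, $\neg$, and $\Empty$ reduce instantly to the corresponding set-theoretic identities; the genuine work lies in the atomic cases (variables, basic types, arrows, records/rows), which I would handle by building a translation in each direction between concrete domain elements and extensional ones.

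For the ``concrete non-empty $\Rightarrow$ extensional non-empty'' direction, I would define a forgetful map $\phi : \Domain \to \ExtInter_\Domain$ that simply drops every variable-tag $V$: $\phi(c^V)=c$, $\phi(\{(d_i,\omega_i)\}_i^V) = \{(\phi(d_i),\phi(\omega_i))\}_i$, $\phi(\domrec{\drow}^V)$ is the total quasi-constant function obtained by erasing tags from $\drow$, and similarly on $\dundef$. Then, by induction on $(d,t)$ ordered lexicographically (the same order used in \cref{def:indint}), I would show $\IntQ{(d:t)_\eta} \Rightarrow \phi(d)\in\ExtInter(t)_\eta$. The only case requiring care is the record case with $\tail(\R)=\rho$: there, the clause of $\IntQR{\cdot}$ places the $\rdef(\rho)$-restriction of the row component into $\eta(\rho)$, which is exactly the witness needed to enter the appropriate branch of the union $\bigcup_{\drow \in \eta(\rho)} (\prodc{\ell \in \fin(\R)} \IntF{\TypeInter{\R(\ell)}}_\eta \pfconcat \drow)$ in $\ExtInter(\R)_\eta$.

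For the ``extensional non-empty $\Rightarrow$ concrete non-empty'' direction, I would construct, from a given $\dext \in \ExtInter(t)_\eta$, a witness $d \in \Domain$ satisfying $\IntQ{(d:t)_\eta}$. The atomic cases go as follows: for $b$, pick any $c \in \ConstantsInBasicType(b)$ and take $c^\emptyset$; for $\alpha$, pick any element of $\eta(\alpha) \subseteq \Domain$ directly; for $t_1 \to t_2$, observe that the empty relation $\emptyset^\emptyset$ vacuously inhabits every arrow type, so this case is trivially non-empty on both sides; for records with $\tl \in \{\crecsign,\orecsign\}$, use the inductive hypothesis on each $\IntF{\TypeInter{\R(\ell)}}_\eta$ over the finite set $\fin(\R)$ and wrap the resulting quasi-constant function as $\domrec{\cdot}^\emptyset$; for records with $\tail(\R)=\rho$, pick $\drow \in \eta(\rho)$ and glue it with inductively chosen field values, then wrap as $\domrec{\cdot}^\emptyset$. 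The fact that each concrete element may be tagged by an arbitrary finite $V$ is essential here: the $V$ does not interfere with membership in any atom and can always be chosen (e.g.\ $V=\emptyset$) to make the variable clauses vacuously true.

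The main obstacle I anticipate is the interaction between the DNF decomposition and the row-variable case. One must show that the emptiness of an intersection like $\R_1 \wedge \dots \wedge \R_n \wedge \neg \R_1' \wedge \dots$ transfers between the two interpretations, and this requires splitting on whether the $\R_i$ share a row variable (in which case the $\eta(\rho)$ witnesses must be \emph{simultaneously} chosen to satisfy all positive atoms) or not. I expect the cleanest way through this is to first prove a lemma stating that for every $t$, the map $\phi$ restricted to $\IntQ{\TypeInter{t}_\eta}$ surjects onto $\ExtInter(t)_\eta \cap \phi(\Domain)$, and then to show that $\phi(\Domain)$ is ``large enough'' in $\ExtInter_\Domain$ for emptiness purposes—specifically, that every non-empty extensional interpretation meets $\phi(\Domain)$. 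This second claim is where the finite approximation of extensional functions by concrete finite relations plays its role.
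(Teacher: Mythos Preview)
Your plan has a genuine gap at the negation case. You propose to prove the emptiness equivalence by induction on $t$, transferring witnesses in each direction. But for $t=\neg t'$, emptiness of $\IntQ{\TypeInter{\neg t'}_\eta}$ means $\IntQ{\TypeInter{t'}_\eta}=\Domain$, while emptiness of $\ExtInter(\neg t')_\eta$ means $\ExtInter(t')_\eta=\ExtInter_\Domain$. Your inductive hypothesis speaks only about \emph{emptiness} of the two interpretations of $t'$, not about them being \emph{full}, so the step does not go through. The same issue resurfaces for intersections hidden inside arbitrary Boolean combinations; this is also why your anticipated ``DNF obstacle'' looks hard---it is an artifact of an induction hypothesis that is too weak.

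The paper avoids this by proving the stronger pointwise statement $\IntQ{(d:t)_\eta}\iff d\in\ExtInter(t)_\eta$ for every $d$ (with the obvious identification that drops only the \emph{top-level} tag of $d$). This works almost for free because, by \cref{d:extint}, the extensional interpretation of each constructor (arrow, record) is defined \emph{in terms of} $\IntQ{\TypeInter{\cdot}_\eta}$ itself: for instance $\ExtInter(t_1\to t_2)_\eta$ is a set of relations whose membership condition on components is literally $\IntQ{(\cdot:t_i)_\eta}$. So the constructor cases use no induction at all, and only the finitely many outer connectives are peeled off inductively. Your recursive forgetful map $\phi$ is therefore also off target: the components of an extensional arrow element still live in the \emph{tagged} domain $\Domain$, so the correct identification drops only the outermost tag, not all of them. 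Once you strengthen the statement to pointwise equality and fix $\phi$ to be top-level only, the DNF worry and the glue-together-row-witnesses step both disappear.
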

\begin{proof}
  For all $d$, we show $\IntQ{(d:t)_\eta} \iff d \in \ExtInter(t)_\eta$
  by induction on $t$ in both directions.
  This induction is well-founded because the cases for type constructors do not
  use induction, $\ExtInter(t)_\eta$ is defined on top of
  $\IntQ{\TypeInter{t}_\eta}$, and the number of type connectives is finite by
  regularity of the types.

  We start with the left-to-right implication and detail the case $t = \R$.
  By hypothesis there is $\drow = \domrow{(\ell = \dundef_\ell)_{\ell \in
  L}}{\emptyset}^V$
  such that $\IntQ{(\domrec{\drow}^{V'}:\R)_\eta}$.
  By hypothesis, $\forall \ell \in L. \IntQ{(\dundef_\ell : \R(\ell))_\eta}$,
  so $\dundef_\ell \in \IntQF{\TypeInter{\R(\ell)}_\eta}$,
  and $\forall\ell \in \Labels \setminus L.
  \IntQ{(\Undef^\emptyset:\R(\ell))_\eta}$,
  so $\Undef^\emptyset \in \IntQF{\TypeInter{\R(\ell)}_\eta}$.
  It is easy to see that this implies respectively
  $\dundef_\ell \in \IntF{\TypeInter{\R(\ell)}_\eta}$
  and $\Undef^\emptyset \in \IntF{\TypeInter{\R(\ell)}_\eta}$.
  Moreover, if $\tail(\rho) = \rho$, then
  $\domrow{(\ell = \dundef_\ell)_{\ell \in L \cap
  \rdef(\rho)}}{\Labels{\setminus}\rdef(\rho)}^V \in \eta(\rho)$.
  So $\drow \in \ExtInter_\eta(\R)$.

  Now, for the right-to-left implication.
  Let $\drow = \domrow{(\ell = \dundef_\ell)_{\ell \in L}}{\emptyset}^V \in \ExtInter_\eta(\R)$.
  For all $\ell \in \fin(\R)$, we have by hypothesis $\dundef \in
  \IntF{\TypeInter{\R(\ell)}}$, which implies
  $\IntQF{(\dundef:\R(\ell))_\eta}$.
  Let $\ell \notin \fin(\R)$.
  If $\ell \in L_1$, then $\IntQ{(\dundef_\ell : \R(\ell))_\eta}$ holds.
  If $\ell \notin L_1$: if $\tail(\R) \in \Vars$, by definition $\R(\ell) = \Any
  \lor \Undef$, and if $\tail(\R) \notin \Vars$, $\R(\ell) = \Any \lor \Undef$,
  or $\R(\ell) = \Undef$.
  In any case, $\IntQ{(\Undef^\emptyset : \R(\ell))_\eta}$ holds.
  Finally, if $\tail(\R) = \rho \in \Vars$, then
  $\domrow{(\ell = \dundef_\ell)_{\ell \in L \cap
  \rdef(\rho)}}{\Labels{\setminus}\rdef(\rho)}^V \in \eta(\rho)$.
\end{proof}

We have thus shown that the subtyping relation generated by
$\IntQ{\TypeInter{\cdot}}$ has the expected properties described by
$\ExtInter(\cdot)$.
In particular, it is a \emph{set-theoretic} model because type operators are
interpreted as set operators.

Since it will be easier to work directly with interpretations as sets rather
than to quantify over $\eta$, we now show that the subtyping relation generated
by $\TypeInter{\cdot}$ is equivalent to the parametrized one.
We show that equivalence not only on types, but also on field-types and rows.
The main element of the proof is the canonical assignment $\hat\eta$, defined as
\begin{align}
  \hat\eta(\alpha)
  &= \{ d \in \Domain \mid \alpha \in \Tag(d) \}\\
  \hat\eta(\fvar)
  &= \{ \dundef \in \Domain_\Undef \mid \fvar \in \Tag(\dundef) \}\\
  \hat\eta(\rho) 
  &= \{ \drow \in \Domainrow \mid \rdef(\drow) = \rdef(\rho) \textand \rho \in \Tag(\drow) \}
\end{align}

\begin{lemma}
  \label{l:hatmu}%
  For every $\tterm \in \Types_\Undef \cup \Rows$,
  $\TypeInter{\tterm} = \IntQ{\TypeInter{\tterm}_{\hat\eta}}$.
\end{lemma}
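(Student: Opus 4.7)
The plan is to prove the stronger pointwise claim
$(\dterm : \tterm) \iff \IntQ{(\dterm : \tterm)_{\hat\eta}}$
for all $\dterm$, $\tterm$ of compatible kind, from which the lemma follows immediately by taking the set of $\dterm$ satisfying each side. I would proceed by induction on $(\dterm, \tterm)$ ordered lexicographically, exactly as in the definitions of the two predicates in \cref{def:indint} and \cref{def:indinteta}; this is well-founded for the same reasons those definitions are.

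For the inductive step, I would first dispatch the cases where the two definitions agree verbatim: basic types, arrows, record atoms with tail $\crecsign$ or $\orecsign$, the $\Undef$ field-type, unions, negations, and the ``kind mismatch'' catch-all. In each of these cases, the equivalence reduces to the induction hypothesis applied to strict sub-terms of $\dterm$ or $\tterm$. Then I would handle the three places where the definitions genuinely differ, which is where the choice of $\hat\eta$ does its work:
\begin{itemize}
\item \emph{Type variables.} By definition $(d : \alpha) \iff \alpha \in \Tag(d) \iff d \in \hat\eta(\alpha) \iff \IntQ{(d : \alpha)_{\hat\eta}}$.
\item \emph{Field variables.} Identical reasoning using $\hat\eta(\fvar)$.
\item \emph{Rows with $\tail(\rw) = \rho$.} The first two conjuncts of both definitions coincide (modulo the induction hypothesis applied to the field-type components). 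It remains to show that the third conjunct $\rho \in V$ in \cref{def:indint} is equivalent to the membership condition
$\domrow{(\ell = \dundef_\ell)_{\ell \in L_1 \cap \rdef(\rho)}}{\Labels{\setminus}\rdef(\rho)}^V \in \hat\eta(\rho)$
of \cref{def:indinteta}. The restriction has definition space $\rdef(\rho)$ by construction (so the $\rdef$ side-condition in $\hat\eta(\rho)$ is automatic), and its tag set is still $V$, so membership in $\hat\eta(\rho)$ reduces exactly to $\rho \in V$.
\end{itemize}

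The main conceptual point, and the only place where something nontrivial happens, is the row-variable case: one has to check that taking the restriction of the row element to $\rdef(\rho)$ does not disturb the tag, so that the ``local'' tag test in $(\dterm : \tterm)$ and the ``global'' membership test in $\IntQ{(\dterm : \tterm)_{\hat\eta}}$ yield the same answer. This is immediate from the grammar of $\drow$ in \cref{def:interpretation}, where the tag $V$ sits at the top level and is preserved by restricting the finite part of the quasi-constant function. Once this is established, unfolding the definition of $\TypeInter{\cdot}$, $\IntF{\TypeInter{\cdot}}$ and $\IntR{\TypeInter{\cdot}}$ gives the stated equality $\TypeInter{\tterm} = \IntQ{\TypeInter{\tterm}_{\hat\eta}}$ uniformly in all three kinds.
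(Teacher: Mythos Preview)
Your proposal is correct and follows the same approach as the paper: prove the pointwise equivalence $(\dterm:\tterm) \iff \IntQ{(\dterm:\tterm)_{\hat\eta}}$ by induction on $(\dterm,\tterm)$, with the row-variable tail case handled by observing that restriction to $\rdef(\rho)$ preserves the tag $V$. The paper is marginally more explicit in the $\ell \notin L_1$ sub-case (it does a small secondary induction on the field-type $\rw(\ell)$ rather than invoking the main IH on $(\Undef^\emptyset,\rw(\ell))$), but this is a presentational detail, not a different argument.
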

\begin{proof}
  For any $\dterm$ and $\tterm$ we prove that $(\dterm:\tterm) \iff
  \IntQ{(\dterm:\tterm)_{\hat\eta}}$
  by induction on $(\dterm,\tterm)$.
  The only interesting case is when $\tterm = \rw$
  and $\dterm = \domrow{(\ell = \dundef_\ell)_{\ell \in L_1}}{L_2}^V$.
  For all $\ell \in L_1$ we have $(\dundef_\ell : \rw(\ell)) \iff
  \IntQ{(\dundef_\ell : \rw(\ell))_{\hat\eta}}$ by induction hypothesis.
  Let $\ell \notin L_1$.
  We show that $(\Undef^\emptyset : \rw(\ell)) \iff
  \IntQ{(\Undef^\emptyset : \rw(\ell))_\eta}$ by induction on $\rw(\ell)$.
  This induction is well-founded : for $\rw(\ell) = \Undef$, both propositions
  are true and they are false for any other type constructor (in particular
  $\rw(\ell) = \fvar$). The inductive cases on type operators are
  straightforward.
  If $\tail(\rw) \notin \Vars$, we are done.
  Otherwise, let $\tail(\rw) = \rho$.
  On the left side, we have $\rho \in V$.
  On the right side, we have
  $\domrow{(\ell = \dundef_\ell)_{\ell \in
  L\cap\rdef(\rho)}}{\Labels{\setminus}\rdef(\rho)}^V \in \hat\eta(\rho)$.
  By definition of $\hat\eta(\rho)$, this is equivalent to $\rho \in V$.
\end{proof}

\begin{lemma}
  \label{l:hateta_to_eta}%
  Let $W \in \Pf(\Vars)$ and $T_W = \{\tterm \in \Types_\Undef \cup \Rows
  \mid \vars(\tterm) \subseteq W\}$.
  For every $\tterm \in T_W$,
  \[
    \IntQ{\TypeInter{\tterm}_{\hat\eta}} = \emptyset \iff
    \forall\eta. \IntQ{\TypeInter{\tterm}_\eta} = \emptyset.
  \]
\end{lemma}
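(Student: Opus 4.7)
My proof plan proceeds as follows.

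The ($\Leftarrow$) direction is immediate: if $\IntQ{\TypeInter{\tterm}_\eta} = \emptyset$ for every $\eta$, then in particular this holds for $\eta = \hat\eta$. The content of the lemma lies entirely in the ($\Rightarrow$) direction, which I will prove by contraposition: assuming there is some $\eta$ and some $\dterm \in \IntQ{\TypeInter{\tterm}_\eta}$, I will build a \emph{canonical companion} $\dterm^\sharp \in \Domain \cup \Domain_\Undef \cup \Domainrow$ (same coherence-kind as $\dterm$) lying in $\IntQ{\TypeInter{\tterm}_{\hat\eta}}$.

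The construction of $\dterm^\sharp$ is a recursive relabeling on the structure of $\dterm$. Each sub-element $e$ carrying an index $V$ is rewritten into an element with the same outer shape as $e$ and index
\[ V^\sharp \;=\; \{\, v \in W \mid e \in \eta(v) \,\}, \]
where the membership test $e \in \eta(v)$ is to be read according to the kind of $v$ (value, field, or the appropriate row-restriction for a row variable $\rho$; i.e., for $\rho$ we test whether the restriction $\domrow{(\ell = \dundef_\ell)_{\ell \in L_1 \cap \rdef(\rho)}}{\Labels{\setminus}\rdef(\rho)}^V$ belongs to $\eta(\rho)$, matching exactly the third clause of \cref{def:indinteta}). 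The recursion is well founded because $\Domain$ is defined inductively.

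The core of the proof is to show, by induction on $(\dterm, \tterm)$ ordered lexicographically exactly as in \cref{def:indint}, the \emph{equivalence}
\[ \IntQ{(\dterm : \tterm)_\eta} \iff \IntQ{(\dterm^\sharp : \tterm)_{\hat\eta}} \]
for every $\tterm$ with $\vars(\tterm) \subseteq W$. The equivalence (rather than a single implication) is needed because of contravariance in the arrow case: to conclude $(\dterm^\sharp : t_1 \to t_2)_{\hat\eta}$ from $(\dterm : t_1 \to t_2)_\eta$, one must convert an assumption $(d_i^\sharp : t_1)_{\hat\eta}$ back into $(d_i : t_1)_\eta$ before applying the arrow hypothesis and invoking induction on $t_2$. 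The base cases for variables are exactly what motivates the definition of $V^\sharp$: for $\tterm = \alpha \in W$, the definition forces $\alpha \in V^\sharp$ iff $\dterm \in \eta(\alpha)$, which matches $\hat\eta(\alpha)$ by construction; the cases for basic types, connectives, $\Undef$, and field variables are then routine.

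The main obstacle, as anticipated, is the row case. Two subtleties must be handled simultaneously: $(i)$ the tag on a row must reflect membership of its \emph{restriction} to $\rdef(\rho)$ in $\eta(\rho)$, since this is what \cref{def:indinteta} tests, and $(ii)$ the definition space $\rdef(\drow^\sharp)$ must agree with $\rdef(\rho)$ for $\drow^\sharp \in \hat\eta(\rho)$ to even be meaningful. The relabeling leaves $L_1$ and $L_2$ unchanged, so $\rdef(\drow^\sharp) = \rdef(\drow)$, and the coherence requirement in \cref{def:indint} ensures $\rdef(\drow) = \rdef(\rw)$; when $\tail(\rw) = \rho$, this forces $\rdef(\drow^\sharp) = \rdef(\rho)$, giving $(ii)$. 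For $(i)$, the definition of $V^\sharp$ for row variables is tailored precisely so that $\rho \in V^\sharp$ iff the required restricted row lies in $\eta(\rho)$, which closes the induction. Once the equivalence is established, instantiating it at the witness $\dterm$ yields $\dterm^\sharp \in \IntQ{\TypeInter{\tterm}_{\hat\eta}}$, contradicting the hypothesis that this set is empty.
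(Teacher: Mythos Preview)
Your proof is correct and follows the paper's approach: contraposition via a recursive relabeling (the paper's $F_W^\eta$) that replaces each tag by the set of $v \in W$ with $e \in \eta(v)$, followed by an induction on $(\dterm,\tterm)$ showing that membership transfers between $\eta$ and $\hat\eta$. You are in fact slightly more careful than the paper's own write-up on two points—explicitly proving the \emph{equivalence} (forced by contravariance in the arrow case, which the paper leaves to a reference), and defining the row-variable test via the restriction to $\rdef(\rho)$, which is precisely what the third clause of \cref{def:indinteta} demands when $\fin(\rw)\neq\emptyset$.
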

\begin{proof}
  The right-to-left implication is trivial, by instantiation of the quantifier
  by $\eta'$.
  The left-to-right implication is by contraposition: for an arbitrary $W$ and $T_W$, we prove
  $\forall \tterm \in T_W. (\exists \eta. \IntQ{\TypeInter{\tterm}_\eta} \neq \emptyset
  \implies \IntQ{\TypeInter{\tterm}_{\hat\eta}} \neq \emptyset)$.
  For this, we define the functions $F_W^\eta:\Domain_\Undef \cup \Domainrow
  \cup \{\Omega\} \to \Domain_\Undef \cup \Domainrow \cup \{\Omega\}$
  as $F_W^\eta(\Omega) = \Omega$ and:
  \[
    F_W^\eta(\dterm) =
    \begin{cases}
      c^{\hat V(\dterm)}
      &\text{if } \dterm = c;\\
      \{(F_W^\eta(d_1),F_W^\eta(\domega_1)), \dots,
      (F_W^\eta(d_n),F_W^\eta(\domega_n))\}^{\hat V(\dterm)}
      &\text{if } \dterm = \{(d_1,\domega_1), \dots, (d_n, \domega_n)\}^V;\\
      \domrec{\drow}^{\hat V(\dterm)}
      &\text{if } \dterm = \domrec{\drow}^V;\\
      \Undef^{\hat V(\dterm)}
      &\text{if } \dterm = \Undef^V;\\
      \domrow{(\ell = F_W^\eta(\dundef_\ell))_{\ell \in L_1}}{L_2}^{\hat V(\dterm)}
      &\text{if } \dterm = \domrow{(\ell = \dundef_\ell)_{\ell \in L_1}}{L_2}^V
    \end{cases}
  \]
  where $\hat V(\dterm) =
  \{ \alpha \in W \mid \dterm \in \eta(\alpha) \} \cup
  \{ \fvar \in W \mid \dterm \in \eta(\fvar) \}
  \cup \{ \rho \in W \mid \dterm \in \eta(\rho) \}$.
  The finiteness of $W$ ensures that $\hat V$ is finite.
  We prove the following statement, for an arbitrary $\eta$ and by induction on
  $(\dterm,\tterm)$ ordered lexicographically:
  \[
    \forall \tterm \in T_W. \forall \dterm \in \Domain_\Undef \cup \Rows.
    \IntQ{(\dterm:\tterm)_\eta} \implies \IntQ{(F_W^\eta(\dterm):\tterm)_{\hat\eta}}
  \]
  \begin{itemize}
    \item $\tterm = \alpha$.
      We have
      $\IntQ{(F_W^\eta(\dterm):\alpha)_{\hat\eta}}
      \iff \alpha \in \Tag(F_W^\eta(\dterm))
      \iff \alpha \in \hat V(\dterm)
      \iff \dterm \in \eta(\alpha) \textand \alpha \in W
      \iff \IntQ{(\dterm:\alpha)_\eta}$.
      The last equivalence holds by the hypothesis that $\tterm \in T_W$.
      The case for $\tterm = \fvar$ is similar.
    \item $\tterm = \Undef$ and $\dterm = \Undef^V$.
      $\IntQ{(F_W^\eta(\Undef^V):\Undef)_{\hat\eta}}
      = \IntQ{(\Undef^{\hat V(\dterm)}:\Undef)_\eta}$ holds.
    \item $\tterm = \R$ and $\dterm = \domrec{\drow}^V$.
      By hypothesis, we have $\IntQ{(\drow:\rectorow{\R})_\eta}$.
      By induction, this implies
      $\IntQ{(F_W^\eta(\drow):\rectorow{\R})_{\hat\eta}}$
      and thus $\IntQ{(F_W^\eta(\dterm):\R)_{\hat\eta}}$.
    \item $d = \domrow{(\ell = \dundef_\ell)_{\ell \in L_1}}{L_2}^V$
      and $t = \rw$.
      The statement holds for labels in and outside of $L_1$ by induction
      hypothesis.
      If $\tail(\rw) \notin \Vars$, we are done.
      Let $\tail(\rw) = \rho$.
      By hypothesis, $\rho \in W$.
      By definition of the predicate, there is
      $\drow = \domrow{(\ell = \dundef_\ell)_{\ell \in
      L_1\cap\rdef(\rho)}}{\Labels{\setminus}\rdef(\rho)}^V \in \eta(\rho)$.
      As in the case for type variables, we show:
      $\IntQ{(F_W^\eta(\drow):\rho)_{\hat\eta}}
      \iff \rho \in \Tag(F_W^\eta(\drow))
      \iff \rho \in \hat V(\drow)
      \iff \dterm \in \eta(\rho) \textand \rho \in W$.
    \item Other cases can be found in \cite[Lemma 2.8]{petrucciani19}
      or are direct by falsity of the premise.
      \qedhere
  \end{itemize}
\end{proof}

\begin{lemma}
  \label{l:same_zeros_typeinter}
  For all $t_1$ and $t_2$,
  $t_1 \IntQ{\leq} t_2 \iff t_1 \leq t_2$.
\end{lemma}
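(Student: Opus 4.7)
The plan is to reduce subtyping to emptiness of an intersection with a negation, and then to invoke the two preceding lemmas to conclude. Concretely, for any types $t_1, t_2$ both notions of subtyping are equivalent to an emptiness statement about $t_1 \wedge \neg t_2$:
\[
  t_1 \leq t_2 \iff \TypeInter{t_1 \wedge \neg t_2} = \emptyset,
  \qquad
  t_1 \IntQ{\leq} t_2 \iff \forall\eta.\,\IntQ{\TypeInter{t_1 \wedge \neg t_2}_\eta} = \emptyset.
\]
Both equivalences follow immediately from the fact that negation and intersection are interpreted set-theoretically in both interpretations (using the identities $\IntQ{\TypeInter{\neg t}_\eta} = \Domain \setminus \IntQ{\TypeInter{t}_\eta}$ and $\TypeInter{\neg t} = \Domain \setminus \TypeInter{t}$, which are direct from~\cref{def:indint,def:indinteta}). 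Therefore it suffices to prove, for an arbitrary type $t$, the equivalence
\[
  \TypeInter{t} = \emptyset \iff \forall\eta.\,\IntQ{\TypeInter{t}_\eta} = \emptyset,
\]
and then instantiate it with $t \eqdef t_1 \wedge \neg t_2$.

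To show this equivalence, I will first apply~\cref{l:hatmu} to rewrite $\TypeInter{t} = \IntQ{\TypeInter{t}_{\hat\eta}}$, so that the left-hand side becomes $\IntQ{\TypeInter{t}_{\hat\eta}} = \emptyset$. Then I will observe that, by the regularity condition in~\cref{def:types}, $t$ has only finitely many distinct subterms and hence a finite set of free variables $W = \vars(t) \in \Pf(\Vars)$, with $t \in T_W$. This is exactly the hypothesis needed to apply~\cref{l:hateta_to_eta}, which yields
\[
  \IntQ{\TypeInter{t}_{\hat\eta}} = \emptyset \iff \forall\eta.\,\IntQ{\TypeInter{t}_\eta} = \emptyset.
\]
Chaining the two equivalences closes the proof.

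The only subtle point (and the main place where something could go wrong) is the passage from subtyping to emptiness, because one must check that $\neg$ and $\wedge$ really commute with both interpretations in the expected set-theoretic way; for $\IntQ{\TypeInter{\cdot}_\eta}$ this is immediate from the predicate clauses under \textbf{All} in~\cref{def:indint} (kept unchanged in~\cref{def:indinteta}), and for $\TypeInter{\cdot}$ it follows from the definition $\TypeInter{t} = \{d \mid (d:t)\}$ together with the same clauses. Once this is in place, the combination with~\cref{l:hatmu,l:hateta_to_eta} gives the result essentially for free; the genuine work (the construction of the tag-restricting map $F_W^\eta$ and the regularity-based finiteness of $W$) has already been discharged in those two lemmas.
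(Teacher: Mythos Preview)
Your proposal is correct and follows essentially the same route as the paper: reduce both subtyping relations to emptiness of $t_1 \wedge \neg t_2$, then chain \cref{l:hatmu} and \cref{l:hateta_to_eta} to bridge the two interpretations. The paper's proof is just a more compressed version of exactly this chain of equivalences; your added remarks on the set-theoretic interpretation of $\neg$ and $\wedge$ and on the finiteness of $\vars(t)$ via regularity simply make explicit what the paper leaves implicit.
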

\begin{proof}
  By definition, \cref{l:hateta_to_eta} and \cref{l:hatmu}, we show:
  \[
    t_1 \IntQ{\leq} t_2
    \iff \forall \eta. \IntQ{\TypeInter{t_1 \typediff t_2}_\eta} = \emptyset
    \iff \IntQ{\TypeInter{t_1 \typediff t_2}_{\hat\eta}} = \emptyset
    \iff \TypeInter{t_1 \typediff t_2} = \emptyset
    \iff t_1 \leq t_2
    \qedhere
  \]
\end{proof}

\subsection{Subtyping Relation}

\begin{lemma}
  \label{l:normalize_row_easy}%
  Let $\rw$ be an atomic row of definition space $\Labels \setminus L_\rw$ and $L$ such
  that $\fin(\rw) \subseteq L \subseteq \rdef(\rw)$. Then, \[
    \rw \simeq \orow{(\ell = \rw(\ell)}{L_\rw}
    \wedge \row{L'}{\tail(\rw)}{L_\rw}
    \simeq \bigwedge_{\ell \in L} \orow{\ell = \rw(\ell)}{L_\rw}
    \wedge \row{L'}{\tail(\rw)}{L_\rw}
  \] where $L' = \fin(\rw)$ if $\tail(\rw) \in \Vars$ and $L' = L$ otherwise.
\end{lemma}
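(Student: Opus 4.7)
The plan is to establish both equivalences directly from \cref{def:indint}, by showing that all three rows denote the same subset of $\Domainrow$; the equivalences in subtyping then follow by \cref{def:subtyping}.

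First I would verify that all three expressions are well-kinded rows of definition space $\Labels\setminus L_\rw$. The left-hand side is well-kinded by assumption. The open conjunct $\orow{(\ell=\rw(\ell))_{\ell\in L}}{L_\rw}$ is well-kinded using $L\cap L_\rw=\emptyset$, which follows from $L\subseteq\rdef(\rw)=\Labels\setminus L_\rw$. For the tail conjunct $\row{L'}{\tail(\rw)}{L_\rw}$ I would split on $\tail(\rw)$: if $\tail(\rw)\in\{\crecsign,\orecsign\}$ then $L'=L$ and well-kindedness reduces again to $L\cap L_\rw=\emptyset$; if $\tail(\rw)=\rho$ then $L'=\fin(\rw)$, and I would check the partition $\fin(\rw)\uplus L_\rw\uplus(\rdef(\rho)\setminus\fin(\rw))=\Labels$ using $\rdef(\rho)=\rdef(\rw)$.

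For the first equivalence, I would pick an arbitrary $\drow=\domrow{(\ell=\dundef_\ell)_{\ell\in L_1}}{L_\rw}^V$ of the correct domain and unfold the predicate $(\drow:\cdot)$ on both sides via~\eqref{eq:rowint}. Membership $(\drow:\rw)$ amounts to three conditions: (i) $(\dundef_\ell:\rw(\ell))$ for every $\ell\in L_1$; (ii) $(\Undef^\emptyset:\rw(\ell))$ for every $\ell\in\rdef(\rw)\setminus L_1$; and (iii) $\rho\in V$ when $\tail(\rw)=\rho$. Membership in $\orow{(\ell=\rw(\ell))_{\ell\in L}}{L_\rw}$ contributes exactly the constraints of $\rw$ on the labels in $L$, which, since $L\supseteq\fin(\rw)$ and $\rw(\ell)$ is the default value on $L\setminus\fin(\rw)$, carries no information beyond what $\rw$ already requires there. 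Membership in $\row{L'}{\tail(\rw)}{L_\rw}$ then contributes the remaining field constraints on $\rdef(\rw)\setminus L'$---namely, the default field-type determined by $\tail(\rw)$, which coincides with $\rw(\ell)$ on this set---together with the tagging condition on $V$ exactly when $\tail(\rw)$ is a row variable. A case split on $\tail(\rw)\in\Vars$ then matches the combined conditions to (i)--(iii).

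For the second equivalence, I would exploit that intersection is set intersection in the interpretation (\cref{def:indint}). Each $\orow{\ell=\rw(\ell)}{L_\rw}$ constrains only the field $\ell$, imposing the top field-type on every other label of $\rdef(\rw)$; therefore the intersection $\bigwedge_{\ell\in L}\orow{\ell=\rw(\ell)}{L_\rw}$ enforces precisely the pointwise constraints collected in $\orow{(\ell=\rw(\ell))_{\ell\in L}}{L_\rw}$. Conjoining the common tail factor $\row{L'}{\tail(\rw)}{L_\rw}$ preserves the equivalence. The main obstacle, though more bookkeeping than conceptual, is the careful case analysis on $\tail(\rw)\in\Vars$, which simultaneously fixes the choice of $L'$ and activates the tagging constraint on the variable index $V$; the rest of the argument rests on the routine observation that enlarging $L$ beyond $\fin(\rw)$ adds no genuine constraint, since $\rw(\ell)$ takes the default value on $L\setminus\fin(\rw)$ in every kind of tail.
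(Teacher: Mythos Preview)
Your approach is correct and is precisely what the paper's one-line proof (``Straightforward by the definition of the models'') unpacks to: unfolding the membership predicate of \cref{def:indint} on both sides and matching conditions label by label, with the case split on whether $\tail(\rw)$ is a row variable. One minor slip to fix: in the row-variable case $\rdef(\rho)=\rdef(\rw)\setminus\fin(\rw)$, not $\rdef(\rho)=\rdef(\rw)$; since $\fin(\rw)\cap\rdef(\rho)=\emptyset$ anyway, the partition you wrote collapses to the correct $\fin(\rw)\uplus L_\rw\uplus\rdef(\rho)=\Labels$.
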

\begin{proof}
  Straightforward by the definition of the models.
\end{proof}

The function $\Phi$ we define in \cref{sec:subtyping} to decide subtyping
crucially relies on the formula we give in \cref{l:subtyping}.
It gives a characterization of the emptiness of $\bigwedge_{\rw \in
P} \rw \wedge \bigwedge_{\rw \in N} \neg\rw$.
While function $\Phi$ is stated on records, we prefer to state this lemma on
rows, as we will refer back to it in that way for tallying
(\cref{app:tallying}).
The corollary for records follows immediately by $t_1 \leq t_2 \iff
\rectorow{t_1} \leq \rectorow{t_2}$ when $t_1, t_2 \leq \orec{}$.
This lemma  generalizes to rows and polymorphic record types the
decomposition of monomorphic ones defined
by \citet{frisch04}.
The main difference is the addition the third
condition on line~\eqref{eq:subtyping_vars},
that checks whether a row variable appears both in the positive and in the
negative fragment.
\begin{restatable}{lemma}{subtyping}
  \label{l:subtyping}%
  Let $P$ and $N$ be sets of atomic row types $r$ each of definition space
  $\Labels{\setminus}L_\rw$.
  Let $L$ be a finite set of labels such that $\bigcup_{\rw \in P \cup N} \fin(\rw) \subseteq
  L \subseteq \Labels{\setminus}L_\rw$. Let
  $P_\Vars = \{\rw \in P \mid \tail(\rw) \in \Vars\}$ and likewise for $N_\Vars$.
  For every $\rw$, we define its default type $\Def(\rw)$ as: $\Def(\rw) = \Undef$ if $\rw$ is closed, and
  $\Def(\rw) = \Any \lor \Undef$ otherwise.
  The relation
  $\bigwedge_{\rw \in P} \rw \leq \bigvee_{\rw \in N} \rw$
  holds \textit{iff}
  $\forall\iota: N \to L\cup\lbrace\underline{~~}\rbrace$,
  \begin{align}
    \label{eq:subtyping_dom}%
    &\displaystyle\left(\exists \ell \in L. \bigwedge_{\rw\in P}
    \!\rw(\ell)~~~\leq\!\!\bigvee_{\rw\in \iota^{-1}(\ell)}\!\!\!\!\rw(\ell)\right)\\
    \label{eq:subtyping_nvars}%
    &\textor \displaystyle\left(\exists \rw_\circ \in \iota^{-1}(\_) \setminus N_\Vars.(
      \bigwedge_{\rw \in P} \Def(\rw) \leq \Def(\rw_\circ)
    )\;\right)\\
    \label{eq:subtyping_vars}%
    &\textor \left(\exists \rw_\circ \in \iota^{-1}(\_) \cap N_\Vars.
      \exists \rw \in P_\Vars.
    \tail(\rw_\circ) = \tail(\rw)\right)
  \end{align}
\end{restatable}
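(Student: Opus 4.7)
The plan is to establish the equivalence semantically, using the row-membership predicate from equation~\eqref{eq:rowint} of \cref{def:indint}. Both directions are proved by contrapositive.

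\textbf{Necessity.} Suppose some $\iota$ falsifies each of (1), (2) and (3) simultaneously; I build a witness $\drow$ lying in $\IntR{\TypeInter{\bigwedge_{\rw \in P}\rw}}$ but outside $\IntR{\TypeInter{\bigvee_{\rw \in N}\rw}}$. For each $\ell \in L$, failure of (1) furnishes a field value $\dundef_\ell \in \IntF{\TypeInter{\bigwedge_{\rw \in P}\rw(\ell)}}$ that avoids $\IntF{\TypeInter{\bigvee_{\rw \in \iota^{-1}(\ell)}\rw(\ell)}}$. On the cofinite set $\Labels \setminus L_\rw \setminus L$, where each $\rw \in P \cup N$ takes its default, failure of (2) lets me pin the default pattern of $\drow$ so that every closed $\rw_\circ \in \iota^{-1}(\underline{~~}) \setminus N_\Vars$ is refuted at some label outside $L$. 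Finally, failure of (3) makes the tail sets $\{\tail(\rw) \mid \rw \in P_\Vars\}$ and $\{\tail(\rw_\circ) \mid \rw_\circ \in \iota^{-1}(\underline{~~}) \cap N_\Vars\}$ disjoint, so I can fix the tag $V$ of $\drow$ to contain the first set while excluding the second. Unfolding \eqref{eq:rowint} on the assembled $\drow$ then confirms the required membership and non-memberships.

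\textbf{Sufficiency.} Given $\drow \in \IntR{\TypeInter{\bigwedge_{\rw \in P}\rw}} \setminus \IntR{\TypeInter{\bigvee_{\rw \in N}\rw}}$, I extract an $\iota$ by choosing, for each $\rw \in N$, a witness of $\drow \notin \IntR{\TypeInter{\rw}}$: if some $\ell \in L$ witnesses a field mismatch, set $\iota(\rw) = \ell$; otherwise set $\iota(\rw) = \underline{~~}$, the failure then coming from a label in $\Labels \setminus L_\rw \setminus L$ or from the tag test $\tail(\rw) \in V$. For this $\iota$, each condition is refuted in turn: (1) at any $\ell$ would force $\drow(\ell) \in \IntF{\TypeInter{\bigvee_{\rw \in \iota^{-1}(\ell)}\rw(\ell)}}$, contradicting the construction of $\iota$; (2) would supply a closed $\rw_\circ \in \iota^{-1}(\underline{~~}) \setminus N_\Vars$ whose default absorbs $\drow$ outside $L$; and (3) would make $V$ include a negative tail we chose to exclude.

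\textbf{Main obstacle.} The delicate step is condition~(2). Because $\Def(\rw) \in \{\Undef,\, \Any \lor \Undef\}$, the value $\bigwedge_{\rw \in P}\Def(\rw)$ collapses to $\Undef$ exactly when some positive row is closed, in which case (2) becomes automatic as soon as a non-polymorphic $\rw_\circ$ lies in $\iota^{-1}(\underline{~~})$; conversely, open $\rw_\circ$ trivially satisfy (2) regardless. Building the counterexample element therefore requires a case split on the openness of the positive rows and of the non-polymorphic negative rows in $\iota^{-1}(\underline{~~})$, together with a careful placement of one or more non-$\Undef$ values at labels of $\Labels \setminus L_\rw \setminus L$, so that all closed $\rw_\circ \in \iota^{-1}(\underline{~~}) \setminus N_\Vars$ are refuted at once without disturbing membership in $\bigwedge_{\rw \in P} \rw$.
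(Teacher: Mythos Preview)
Your approach is correct and reaches the same conclusion, but the route differs from the paper's. The paper first rewrites $\bigwedge_{\rw\in P}\rw \wedge \bigwedge_{\rw\in N}\neg\rw$ syntactically: using \cref{l:normalize_row_easy} to split each atom over $L$ and distributing over the negative part, it obtains a union indexed by $\iota$ whose summands each factor as $\rw_\iota^1 \wedge \rw_\iota^2 \wedge \rw_\iota^3$, with $\rw_\iota^i$ empty exactly when condition~$(i)$ holds. One direction is then purely algebraic; for the other, a witness element is assembled from one inhabitant of each non-empty $\rw_\iota^i$. You instead work semantically throughout, building and dissecting $\drow$ directly against the membership predicate~\eqref{eq:rowint} in both directions.

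The paper's syntactic preprocessing makes the independence of the three ``axes'' (labels in $L$, default behaviour outside $L$, tag set $V$) explicit, so the element-assembly step is a one-liner. Your direct route avoids the algebraic normalization lemma but pays for it with the case analysis you flag under ``Main obstacle'': in the sufficiency direction you must still argue that an \emph{open} non-polymorphic $\rw_\circ$ can never land in $\iota^{-1}(\underline{~~})$ (its only possible failure points lie in $L$), and that when some positive row is closed, no \emph{closed} $\rw_\circ$ can land there either (since $\drow$ is then $\Undef$ outside $L$). These points are implicit in your sketch but should be spelled out; once they are, the proof goes through.
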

\begin{proof}
  In the following, we let $L_i = \fin(\rw_i)$ and $\tl_i = \tail(\rw_i)$.
  Using \cref{l:normalize_row_easy}, we decompose the conjunction into:
  \begin{equation}\label{eq:ten}
    \bigwedge_{\rw_p \in P}
    (\orow{(\ell = \rw_p(\ell))_{\ell \in L}}{L_\rw} \wedge
    \row {L_p'}{\tl_p}{L_\rw}) \wedge
    \bigwedge_{\rw_n \in N}
    (\vee_{\ell \in L} \orow{\ell = \neg{\rw_n(\ell)}}{L_\rw}
    \vee \neg\row {L_n'}{\tl_n}{L_\rw})
  \end{equation}
  Where $L_i' = L_i$ if $r_i \in \Vars$ and $L_i' = L$ otherwise.
  We can distribute the intersection of the elements of $N$ on the right of
  \eqref{eq:ten} over the unions in the second brackets.
  We obtain a union of intersections of, each time, $|N|$ elements, where each
  intersection is a possible combination of the individual rows present in the
  second line.
  Each combination is described by a function $\iota: N \to L \cup
  \{\_\}$, where $\iota(\rw_n) = \ell$ means that the element $\orow{\ell =
  \neg{\rw_n(\ell)}}{L_\rw}$ is present in the combination given by $\iota$, while
  $\iota(\rw_n) = \_$ means that the element $\neg\row {L_n'}{\tl_n}{L_\rw}$ is present
  in the combination.
  For each $\rw_n \in N$, let us write
  $\rw^n_\ell = \orow{\ell = \neg{\rw_n(\ell)}}{L_\rw}$ and
  $\rw^n_{-} = \neg\row{L_n'}{\tl_n}{L_\rw}$.
  Therefore the row in~\eqref{eq:ten} is equivalent to:
  \begin{equation}
      \bigwedge_{\rw_p \in P}
      (\orow{(\ell = \rw_p(\ell))_{\ell \in L}}{L_\rw} \wedge
      \row {L_p'}{\tl_p}{L_\rw}) \wedge
      \bigvee_{\iota: N \to L \cup \{\_\}}
      (\bigwedge_{\rw_n \in N} \rw^n_{\iota(\rw_n)})
  \end{equation}
  By distributing the intersection over the union we obtain
  \begin{equation}
      \bigvee_{\iota: N \to L \cup \{\_\}}\left(
      \bigwedge_{\rw_p \in P}
      (\orow{(\ell = \rw_p(\ell))_{\ell \in L}}{L_\rw} \wedge
      \row {L_p'}{\tl_p}{L_\rw}) \wedge
      \bigwedge_{\rw_n \in N} \rw^n_{\iota(\rw_n)}\right)
    \end{equation}
  A union is empty if and only if each summand of the union is
  empty. Therefore the row above is empty if and only if for all $\iota: N \to L \cup \{\_\}$,
  the following is empty:
  \begin{align*}
    &\bigwedge_{\rw_p \in P}
    (\orow{(\ell = \rw_p(\ell))_{\ell \in L}}{L_\rw} \wedge
    \row {L_p'}{\tl_p}{L_\rw}) \wedge
    \bigwedge_{\rw_n \in N} \rw^n_{\iota(\rw_n)}\\
    &\simeq \bigwedge_{\rw_p \in P}
    (\orow{(\ell = \rw_p(\ell))_{\ell \in L}}{L_\rw} \wedge
    \row {L_p'}{\tl_p}{L_\rw}) \wedge
    \bigwedge_{\ell \in L \cup \{\_\}} \bigwedge_{\rw_n \in \iota^{-1}(\ell)} \rw^n_{\ell}\\
    &\simeq \bigwedge_{\rw_p \in P}
    (\orow{(\ell = \rw_p(\ell))_{\ell \in L}}{L_\rw} \wedge
    \row {L_p'}{\tl_p}{L_\rw}) \wedge
    \bigwedge_{\rw_n \in \iota^{-1}(\_)} \rw_{-}^n
    \wedge \bigwedge_{\ell \in L} \bigwedge_{\rw_n \in \iota^{-1}(\ell)} \rw^n_\ell\\
    &= \bigwedge_{\rw_p \in P}
    (\orow{(\ell = \rw_p(\ell))_{\ell \in L}}{L_\rw} \wedge
    \row {L_p'}{\tl_p}{L_\rw})
    \wedge \bigwedge_{\rw_n \in \iota^{-1}(\_)} \neg\row{L_n'}{\tl_n}{L_\rw}
    \wedge \bigwedge_{\ell \in L} \bigwedge_{\rw_n \in \iota^{-1}(\ell)}
    \orow{\ell = \neg{\rw_n(\ell)}}{L_\rw}\\
    &\simeq \bigwedge_{\rw_p \in P}
    (\orow{(\ell = \rw_p(\ell))_{\ell \in L}}{L_\rw} \wedge
    \row {L_p'}{\tl_p}{L_\rw})
    \wedge \bigwedge_{\rw_n \in \iota^{-1}(\_)} \neg\row{L_n'}{\tl_n}{L_\rw}
    \wedge \orow{(\ell = \textstyle{\bigwedge_{\rw_n \in \iota^{-1}(\ell)}}
    \neg{\rw_n(\ell)})_{\ell \in L}}{L_\rw}\\
    &\simeq \orow{(\ell = \bigwedge_{\rw_p \in P} \rw(\ell) \wedge
    \bigwedge_{\rw_n \in \iota^{-1}(\ell)} \neg{\rw(\ell)})_{\ell \in L}}{L_\rw}
    \wedge \bigwedge_{\rw_p \in P} \row{L_p'}{\tl_p}{L_\rw}
    \wedge \bigwedge_{\rw_n \in \iota^{-1}(\_)} \neg\row{L_n'}{\tl_n}{L_\rw}\\
    \begin{split}
      &\simeq \orow{(\ell = \bigwedge_{\rw_p \in P} \rw_p(\ell)
      \wedge \bigwedge_{\rw_n \in \iota^{-1}(\ell)} \neg{\rw_n(\ell)})_{\ell \in L}}{L_\rw}\\
      &\quad\wedge \bigwedge_{\rw_p \in P_{\bar\Vars}} \row{L}{\tl_p}{L_\rw}
      \wedge \bigwedge_{\rw_n \in \iota^{-1}(\_) \cap N_{\bar\Vars}}
      \neg\row{L}{\tl_n}{L_\rw}
      \wedge \bigwedge_{\rw_p \in P_\Vars} \row{L_p}{\tl_p}{L_\rw}
      \wedge \bigwedge_{\rw_n \in \iota^{-1}(\_) \cap N_\Vars}
      \neg\row{L_n}{\tl_n}{L_\rw}
    \end{split}
  \end{align*}

  Let:
  \begin{itemize}
    \item $\rw_\iota^1 = \orow{(\ell = \bigwedge_{\rw_p \in P} \rw_p(\ell) \wedge
        \bigwedge_{\rw_n \in \iota^{-1}(\ell)} \neg{\rw_n(\ell)})_{\ell \in
      L}}{L_\rw}$;
    \item $\rw_\iota^2 = \bigwedge_{\rw_p \in P_{\bar\Vars}} \row {L}{\tl_p}{L_\rw} \wedge
      \bigwedge_{\rw_n \in \iota^{-1}(\_) \cap N_{\bar\Vars}} \neg\row
      {L}{\tl_n}{L_\rw}$;
    \item $\rw_\iota^3 = \bigwedge_{\rw_p \in P_\Vars} \row {L_p}{\tl_p}{L_\rw} \wedge
      \bigwedge_{\rw_n \in \iota^{-1}(\_) \cap N_\Vars} \neg\row {L_n}{\tl_n}{L_\rw}$.
  \end{itemize}
  We can see that $\rw_\iota^1$ is empty \textit{iff}
  condition~\eqref{eq:subtyping_dom} holds,
  $\rw_\iota^2$ is empty \textit{iff} condition~\eqref{eq:subtyping_nvars} does
  (in the case where
  $P_{\bar\Vars}$ is empty, notice that the intersection is equal to $\Any \lor
  \Undef$), and $\rw_\iota^3$ is empty \textit{iff}
  condition~\eqref{eq:subtyping_vars} holds.
  We directly obtain that if one of the conditions holds, then the row
  $\rw_\iota$ is empty.
  We now show that if $\rw_\iota$ is empty, then there is $1 \leq i \leq 3$ such
  that $\rw_\iota^i$ is empty.

  For this, we suppose that none of the subtypes is empty and build an element
  $\drow \in \IntR{\TypeInter{\rw_\iota}}$.
  \begin{enumerate}
    \item Since $\rw_\iota^1$ is not empty, for all $\ell \in L$ there is an
      element $\dundef_\ell^1 \in \IntF{\TypeInter{\bigwedge_{\rw_p \in P} \rw_p(\ell) \wedge
      \bigwedge_{\rw_n \in \iota^{-1}(\ell)} \neg\rw_n(\ell)}}$.
    \item Since $\rw_\iota^2$ is not empty, there is an element
      $\domrow{(\ell = \dundef^2_\ell)_{\ell \in L_2}}{}^{V_2} \in
      \IntR{\TypeInter{\rw_\iota^2}}$.
    \item Since $\rw_\iota^3$ is not empty, there is an element $\drow_3 \in
      \IntR{\TypeInter{\rw_\iota^3}}$.
      However, the restrictions on the set of elements in
      $\IntR{\TypeInter{\rw_\iota^3}}$ only concern their tags so that any
      element $\drow'$ with $\Tag(\drow') = \Tag(\drow_3)$ and $\rdef(\drow') =
      \rdef(\drow_3)$ is in $\IntR{\TypeInter{\rw_3}}$.
      Let $V_3 = \Tag(\drow_3)$.
  \end{enumerate}
  We build the element
  $\domrow{(\ell = \dundef^1_\ell)_{\ell \in L},
  (\ell = \dundef^2_\ell)_{\ell \in L_2{\setminus}L}}{}^{V_3}$.
  This element belongs to $\IntR{\TypeInter{\rw_\iota}}$, which is a
  contradiction.
\end{proof}


\subsection{Subtyping Algorithm}

Naively implementing the above subtyping formula requires backtracking.
Indeed, for all map $\iota: N \to L \cup \{\_\}$, we have to check if subtyping
holds on one of the labels $\ell \in L$.
On that recursive call, since the types are coinductive, we need to assume
that the type we are checking is empty.
So, we are collecting a series of emptiness assumptions along the call stack.
If later a contradiction arises, we need to backtrack to the point where the
wrong assumption was introduced, to then take another branch, in our case, check
subtyping for another $\ell$.
Our function $\Phi$ avoids backtracking to compute subtyping more efficiently,
following \cite[Chapter 7]{frisch04}.

\correctsubalg*
\begin{proof}
  If $\R_\circ \simeq \Empty$, the result holds.
  Otherwise, we prove this by induction on the cardinality of $N$.
  In the following, given a variable $\rho \in V_p$, let us write $L_\rho$ for
  $\Labels \setminus \rdef(\rho)$.

  If $N = \emptyset$, the union $\bigvee_{\R \in N} \R$ is empty.
  So the statement holds if and only if $\R_\circ \simeq \Empty$,
  since $\Phi(\R_\circ, V_p, \emptyset) = \texttt{false}$,
  and $\bigwedge_{\rho \in V_p} \rec{L_\rho}{\rho}$ is never empty.

  Now, let $N = N' \cup \{\R\}$.
  Let $L = \fin(\R_\circ)$.
  $\R$ can be decomposed as
  $\bigwedge_{\ell \in L} \orec{\ell = \R(\ell)} \wedge \rec{L'} \tl$,
  where $L' = L$ if $\tl \notin \Vars$ and $L' = \fin(\R)$ otherwise.
  The left-hand side of the statement is thus equivalent to
  $\R_\circ \wedge \bigwedge_{\rho \in V_p} \rec{L_\rho}{\rho}
  \wedge (\bigvee_{\ell \in L} \orec{\ell = \neg\R(\ell)} \vee
  \neg\rec{L'} \tl)
  \leq \bigvee_{n \in N'} \R_n$.
  We can distribute the intersection over the unions.
  We must then prove an equivalence between $(\R_\circ \simeq \emptyset \textor
  \Phi(\R_\circ, V_p, N))$ and:
  \begin{align}
    \label{eq:correct_subalg1}
    &\forall \ell \in L. \R_\circ \wedge \orec{\ell=\neg\R(\ell)}
    \wedge \bigwedge_{\rho \in V_p} \rec{L_\rho}{\rho}
    \leq \bigvee_{n \in N'} \R_n\\
    \label{eq:correct_subalg2}
    \textand\ & \R_\circ \wedge \bigwedge_{\rho \in V_p}
    \rec{L_\rho}{\rho} \wedge \neg\rec{L'} \tl
    \leq \bigvee_{n \in N'} \R_n
  \end{align}
  We now have to verify that each of these statements are equivalent to
  $\Phi(\R_\circ, V_p, N)$.

  We start with \eqref{eq:correct_subalg1}.
  Let $\ell \in L$ and let $\R_\circ^\ell = \R_\circ \wedge \orec{\ell =
  \neg\R(\ell)}$.
  By the induction hypothesis, we have that $\R_\circ^\ell \wedge
  \bigwedge_{\rho \in V_p} \rec{L_\rho}{\rho} \leq \bigvee_{n \in N'} \R_n
  \iff \R_\circ^\ell \simeq \Empty$ or $\Phi(\R_\circ^\ell, V_p, N')$.
  Since $\R_\circ \nsimeq \Empty$, $\R_\circ^\ell \simeq \Empty \iff
  \R_\circ(\ell) \leq \R(\ell)$.

  We continue with \eqref{eq:correct_subalg2}.
  We define $\Psi(\R_\circ,V_p,\rec{L'}{\tl})$ as
  $(\tl = \orecsign \textor \tl = \tail(\R_\circ) \textor \tl \in V_p)$.
  We show that~\eqref{eq:correct_subalg2} is equivalent to:
  $\Psi(\R_\circ,V_p,\rec{L'}{\tl}) \textor
  (\neg\Psi(\R_\circ,V_p,\rec{L'}{\tl})
  \textand \Phi(\R_\circ, V_p, N'))$.
  It is easy to show using \cref{l:subtyping} that $\R_\circ \wedge
  \bigwedge_{\rho \in V_p} \rec{L_\rho}{\rho} \wedge
  \neg\rec{L'}\tl$ is empty \textit{iff}
  $\Psi(\R_\circ,V_p,\rec{L'}{\tl})$ holds.
  In particular, the first condition of \cref{l:subtyping} never holds since
  $L' = L$ when $\tl \notin \Vars$.
  Then, there are two cases.
  \begin{enumerate}
    \item $\R_\circ \wedge \bigwedge_{\rho \in V_p}
      \rec{L_\rho}{\rho} \wedge \neg\rec{L'} \tl \leq \Empty$.
      This means that $\Psi(\R_\circ,V_p,\rec{L'}{\tl})$ holds, and also that
      $\R_\circ \wedge \bigwedge_{\rho \in V_p}
      \rec{L_\rho}{\rho} \wedge \neg\rec{L'} \tl \leq \bigvee_{n \in N'} \R_n$.
    \item $\R_\circ \wedge \bigwedge_{\rho \in V_p}
      \rec{L_\rho}{\rho} \wedge \neg\rec{L'} \tl \nleq \Empty$.
      This means that $\Psi(\R_\circ,V_p,\rec{L'}{\tl})$ does not hold.
      Thus, there are two possible cases: either (a) $\tail(\R_\circ) =
      \crecsign$ and $\tl = \orecsign$, or (b) $\tl = \rho \notin V_p$.
      We show that
      $\R_\circ \wedge \bigwedge_{\rho \in V_p} \rec{L_\rho}{\rho}
      \wedge \neg\rec{L'}{\tl} \leq \bigvee_{\R \in N'} \R$ is equivalent to
      $\R_\circ \wedge \bigwedge_{\rho \in V_p} \rec{L_\rho}{\rho}
      \leq \bigvee_{\R \in N'} \R$.
      From this, we use the induction hypothesis to obtain the equivalence with
      $\Phi(\R_\circ, V_p, N')$ (since $\R_\circ \nsimeq \Empty$).

      The right-to-left implication of the equivalence is trivial.
      For the converse implication, we use \cref{l:subtyping} on
      $\R_\circ \wedge \bigwedge_{\rho \in V_p} \rec{L_\rho}{\rho}
      \leq \neg\rec{L'} \tl \vee \bigvee_{\R \in N'} \R$.
      By the hypothesis (a) and (b) in the corresponding cases, the second and
      third conditions of that lemma never hold.
      Thus, we have by hypothesis that
      $\forall \iota:N' \cup {\rec{L'}{\tl}} \to L \cup \{\_\}. \exists \ell \in
      L. \R_\circ(\ell) \leq \bigvee_{\R \in \iota^{-1}(\ell)} \R(\ell)$.
      The implication holds because $\forall \ell \in L. \rec{L'}{\tl}(\ell)
      = \Any \lor \Undef$.
  \end{enumerate}

  Summing up, we have proved by induction that if
  $\R_\circ \wedge \bigwedge_{\rho \in V_p} \rec{L_\rho}{\rho}$ is not empty,
  checking that it is a subtype of $\bigvee_{n \in N} \R_n$ is equivalent to
  checking both these two propositions:
  \begin{enumerate}
    \item $\forall \ell \in L\;.\;
      (\;(\R_\circ(\ell) \leq \R(\ell)) \quad\textsf{or}\quad
      \Phi(\R_\circ \wedge \orec{\ell = \neg\R(\ell)}, V_p, N')\;)$
    \item $(\Psi(\R_\circ,V_p,\rec{L'}{\tl})) \quad\textsf{or}\quad
      (\neg\Psi(\R_\circ,V_p,\rec{L'}{\tl}) \textsf{ and } \Phi(\R_\circ, V_p, N'))$
  \end{enumerate}
  To conclude, notice that if
  $(\neg\Psi(\R_\circ,V_p,\rec{L'}{\tl}) \textsf{ and } \Phi(\R_\circ, V_p, N'))$
  holds, then by induction hypothesis we have
  $\R_\circ \wedge_{\rho \in V_p} \rec{L_\rho}{\rho} \leq \bigvee_{n \in N'} \R_n$
  and, \textit{a fortiori}, $\R_\circ \wedge \bigwedge_{\rho \in V_p}
  \rec{L_\rho}{\rho} \leq \R \vee \bigvee_{n \in N'} \R_n$.
  It is therefore useless to check the other proposition $(1)$ above, which thus
  must be checked only when $\Psi(\R_\circ,V_p,\rec{L'}{\tl})$ holds.
  This yields
  \begin{align*}
    &(\:\neg\Psi(\R_\circ,V_p,\rec{L'}{\tl}) \textsf{ and }\ \Phi(\R_\circ, V_p, N')\: )\\
    \textsf{ or } &
    (\ \Psi(\R_\circ,V_p,\rec{L'}{\tl}) \textsf{ and }\ (\forall\ell \in
    L\,.\,(\R_\circ(\ell) \leq \R(\ell) \ \textsf{ or
    }\ \Phi(\R_\circ\wedge\orec{\ell = \neg\R(\ell)}, V_p, N')))
  \end{align*}
  which corresponds to the second clause of the definition of $\Phi$.
\end{proof}

\termsubalg*
\begin{proof}
  The number of disjoint types in a DNF is finite.
  Also, the preprocessing of a conjunction of record types can be defined for
  all such types and computed in a finite number of steps.
  Finally, in function $\Phi$, the number of elements in the third parameter
  decreases at each recursive call.
  Moreover, the subtyping relation on non-record types is decidable
  \cite{castagna11}, from which we get decidability of the subtyping relation on
  field-types.
\end{proof}

\subsection{Substitutions}

\begin{lemma}
  \label{l:rowsub_commute}
  Let $\IntQQ{\TypeInter{\cdot}_\eta}$ be the appropriate interpretation
  among $\IntQ{\TypeInter{\cdot}_\eta}$,
  $\IntQF{\TypeInter{\cdot}_\eta}$ and
  $\IntQR{\TypeInter{\cdot}_{\eta}}$.
  For every $\tterm$, $\sigma$ and $\eta$, if $\eta'$ is defined by
  $\eta'(\vterm) = \IntQQ{\TypeInter{\sigma(\vterm)}_\eta}$,
  then $\IntQQ{\TypeInter{\tterm\sigma}}_\eta = \IntQQ{\TypeInter{\tterm}}_{\eta'}$.
\end{lemma}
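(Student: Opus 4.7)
The plan is to prove the lemma by establishing the pointwise equivalence of the underlying predicates: for every $\dterm$ of the appropriate kind and every $\tterm$,
\[
\IntQ{(\dterm : \tterm\sigma)_\eta} \iff \IntQ{(\dterm : \tterm)_{\eta'}}.
\]
This equivalence immediately yields the three set equalities between $\IntQ{\TypeInter{\cdot}}$, $\IntQF{\TypeInter{\cdot}}$, and $\IntQR{\TypeInter{\cdot}}$ claimed by the lemma. The induction is on $(\dterm, \tterm)$ ordered lexicographically, exactly as in Definition~\ref{def:indinteta}. Crucially, the induction is on $\tterm$ rather than on $\tterm\sigma$: the syntactic structure of $\tterm\sigma$ is determined by that of $\tterm$ through the substitution equations of Section~\ref{sec:rowsub}, so a case analysis on $\tterm$ tells us how to unfold both sides of the equivalence simultaneously, and contractivity of types together with the inductive structure of field-types and rows ensures well-foundedness in the presence of the coinductive arrow and record clauses.

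The base cases are the variable clauses, where substitution actually acts. For a type variable $\alpha$, the left-hand side unfolds to $d \in \IntQ{\TypeInter{\sigma(\alpha)}_\eta}$, which by definition of $\eta'$ equals $d \in \eta'(\alpha)$, and this is exactly the right-hand side. The field-variable case is identical, reading $\dundef$ and $\eta'(\fvar)$. The clauses for constants, basic types, the empty type, arrow types, and the Boolean connectives (union and negation, at each of the three kinds) all propagate by a direct application of the induction hypothesis to strict subterms, since substitution commutes structurally with these constructors.

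The main obstacle is the record/row case, where substitution rewrites a tail $\rho$ into a Boolean combination $\sigma(\rho)$ of rows via the clauses $\rec{L}{\rw_1 \lor \rw_2} = \rec{L}{\rw_1} \lor \rec{L}{\rw_2}$, $\rec{L}{\neg \rw} = \neg \rec{L}{\rw}$, and the absorption of atomic rows into the outer record. I would discharge this with an auxiliary distributivity lemma, proved by induction on the (inductive) structure of rows: for any record shell and any row $\rw$ of matching kind,
\[
\IntQ{(\domrec{\drow}^V : \rec{(\ell = \tau_\ell)_{\ell \in L}}{\rw})_\eta} \iff \bigl(\forall \ell \in L.\, \IntQF{(\drow(\ell) : \tau_\ell)_\eta}\bigr) \,\land\, \IntQR{(\drow_{|\rdef(\rw)}^V : \rw)_\eta},
\]
and the analogous fact at the row level for $\row{L}{\rw}{L'}$. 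These sublemmas show that the Boolean bookkeeping introduced by substitution on tails is semantically transparent: it faithfully mirrors the set-theoretic decomposition of $\IntQR{\TypeInter{\cdot}_\eta}$ across $\lor$ and $\neg$.

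With those distributivity facts in hand, the record case with tail $\rho$ reduces to the row case, and the row case with tail $\rho$ reduces to checking $(\drow : \sigma(\rho))_\eta \iff \drow \in \eta'(\rho)$, which is again the definition of $\eta'$ applied to a row variable. For the component-wise constraints on labels in $L_1$ and the $\Undef^\emptyset$ constraints on labels in $\rdef(\rw) \setminus L_1$, the induction hypothesis on field-types closes the argument; implicit throughout is the preservation of kinding by substitution, which guarantees that every recursive call lands in the correct domain.
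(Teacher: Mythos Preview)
Your proposal is correct and follows essentially the same route as the paper: induction on $(\dterm,\tterm)$ at the level of the predicate, with the variable clauses discharged directly by unfolding the definition of $\eta'$ and the record/row-with-tail-$\rho$ case reduced to the row-variable clause after separating the finite labels from the tail. The paper's proof is terser and leaves implicit what you call the distributivity lemma for $\rec{L}{\rw}$ over the Boolean structure of $\rw$; making it explicit, as you do, is a reasonable and arguably cleaner way to bridge the gap between the syntactic expansion of $\rec{L}{\sigma(\rho)}$ and the semantic decomposition of the predicate, but it is not a genuinely different argument.
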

\begin{proof}
  For an arbitrary $\sigma$ and $\eta$, we show that
  \[
    \forall \tterm \in \Types_\Undef \cup \Rows.
    \forall \dterm \in \Domain_\Undef \cup \Domainrow.
    \IntQ{(\dterm:\tterm\sigma)_\eta} \iff \IntQ{(\dterm:\tterm)_{\eta'}}
  \]
  by induction on $(\dterm,\tterm)$ and with $\eta'$ defined as before.
  We detail two cases, the others are straightforward.
  \begin{itemize}
    \item $\tterm = \alpha$ and $\dterm = d$.
      On the left, we have $\IntQ{(d:\alpha\sigma)}_\eta
      = \IntQ{(d:\sigma(\alpha))_\eta}$
      and on the right $\IntQ{(d:\alpha)}_{\eta'}
      = d \in \eta'(\alpha)
      = d \in \IntQ{\TypeInter{\sigma(\alpha)}_\eta}
      = \IntQ{(d:\sigma(\alpha))}$.
    \item $\tterm = \rec{(\ell = \tau_\ell)_{\ell \in L}}{\rho}$
      and $\dterm = \domrecord{(\ell = \dundef_\ell)_{\ell \in L_d}}^V$.
      The case for rows is similar.
      We have $\tterm\sigma = 
      \orec{(\ell = \tau_\ell)_{\ell \in L}} \wedge \rec{L}{\sigma(\rho)}$.
      Let $\drow = \domrow{(\ell = \dundef_\ell)_{\ell \in L_d\setminus
      L}}{L}^V$.
      On the left, we have:
      \[
        \IntQ{(\dterm:\tterm\sigma)}_\eta
        = (\forall \ell \in L_d.
        \IntQ{(\dundef_\ell:\tterm(\ell)\sigma)}_\eta) \textand
        (\forall \ell \notin L_d.
        \IntQ{(\Undef^\emptyset:\tterm(\ell)\sigma)_\eta}
        \textand \IntQ{(\drow:\sigma(\rho))}_\eta
      \]
      By induction hypothesis, the first two conditions are equivalent to
      $\forall \ell \in L_d. \IntQ{(\dundef_\ell:\tterm(\ell))}_{\eta'}$
      and $\forall \ell \notin L_d. \IntQ{(\Undef^\emptyset:\tterm(\ell))_{\eta'}}$.
      By the same reasoning as in the previous case,
      the last condition is equivalent to
      $\IntQ{(\drow:\rho)_{\eta'}}$,
      which altogether give $\IntQ{(\dterm:\tterm)_{\eta'}}$.
      \qedhere
  \end{itemize}
\end{proof}

\subpreservesub*
\begin{proof}
  By definition, $t_1 \leq t_2 \iff \TypeInter{t_1 \typediff t_2} = \emptyset$.
  By \cref{l:same_zeros_typeinter}, this is equivalent to
  $\forall \eta.\IntQ{\TypeInter{t_1 \typediff t_2}_\eta}$.
  In particular, this holds for $\eta'$ defined as in \cref{l:rowsub_commute},
  so that $\IntQ{\TypeInter{t_1 \typediff t_2}_{\eta'}} = \emptyset$.
  By \cref{l:rowsub_commute}, this implies $\IntQ{\TypeInter{(t_1 \typediff
  t_2)\sigma}_\eta} = \emptyset$ which means $t_1\sigma \leq t_2\sigma$.
\end{proof}

%

\section{APPENDIX FOR LANGUAGE}
\label{app:language}
\subsection{Syntax and Semantics}

\begin{definition}[Top-level variables]\label{def:tlv}
  The top-level variables of a type (resp.\ field-type, row) are defined as
  $\tlv(t) = \tlv'(t) \cap \typeVars$ (resp.\ $\tlv(\tau) = \tlv'(\tau) \cap
  \fldVars$, $\tlv(\rw) = \tlv'(\rw) \cap \rowVars$).
  \[\begin{array}{ll}
    \tlv'(\alpha) = \{\alpha\} \\
    \tlv'(\fvar) = \{\fvar\} \\
    \tlv'(\row{\ell = \tau,\ldots,\ell=\tau}{\rho}{L}) =
    \tlv'(\rec{\ell=\tau,\ldots,\ell=\tau}{\rho}) = \{\rho\} \\
    \tlv'(T_1\vee T_2) = \tlv'(T_1) \cup \tlv'(T_2) \\
    \tlv'(\neg T) = \tlv'(T) \\
    \tlv'(T) = \emptyset & \text{ otherwise}
  \end{array}\]
\end{definition}

\begin{definition}
  \label{def:vars}%
  Given a type term $\tterm$, we write $\vars(\tterm)$ the set of variables
  occurring in it.
  The following equalities hold.
  \[ \begin{array}{ccc}
    \vars(\alpha) = \{\alpha\}
    & \vars(\tterm_1 \to \tterm_2) = \vars(\tterm_1) \cup \vars(\tterm_2)
    & \vars(b) = \emptyset\\
    \vars(\fvar) = \{\fvar\}
    & \vars(\tterm_1 \vee \tterm_2) = \vars(\tterm_1) \cup \vars(\tterm_2)
    & \vars(\Undef) = \emptyset\\
    \vars(\row{(\ell = \tau_\ell)_{\ell \in L_1}}{\tl}{L_2}) = \{\tl\} \cap \rowVars
    & \vars(\neg \tterm) = \vars(\tterm)
    &\vars(\Empty) = \emptyset\\
    \vars(\R) = \vars(\rectorow{\R})
  \end{array} \]
\end{definition}

\begin{lemma}
  \label{l:cutrow_correct}%
  If $t_1 \leq t_2$ with $t_2 \leq \orec{}$
  then $\delrec{t_1}\ell \leq \delrec{t_2}\ell$.
\end{lemma}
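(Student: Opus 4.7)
The plan is to prove this by semantic reasoning, showing that the syntactic operator $\delrec{\cdot}{\ell}$ is essentially the image (up to variable tagging) of a monotone set-theoretic restriction operation on record elements. I would first define a partial map $\drop_\ell : \TypeInter{\orec{}} \to \Domainrow$ that sends a record element $\domrec{\drow}^V$ to the row obtained from $\drow$ by removing the $\ell$-entry, re-indexing its kind to $\krow{\{\ell\}}$, and carrying over the variable tag. This is well-defined on all of $\TypeInter{\orec{}}$ since only record elements are involved.

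The core of the proof would be a \emph{coverage lemma}: for every $t \leq \orec{}$,
$$\IntR{\TypeInter{\delrec{t}{\ell}}} \;\supseteq\; \{\drop_\ell(r) \mid r \in \TypeInter{t}\},$$
proved by induction on the DNF of $t$. For a positive atom $\R$, one verifies directly from the two cases of $\delrec{\R}{\ell}$ that removing the $\ell$-field from any $r \in \TypeInter{\R}$ lands in $\IntR{\TypeInter{\delrec{\R}{\ell}}}$: when $\ell \in \fin(\R)$ or $\tail(\R)\notin\rowVars$, the field-types on $L\setminus\{\ell\}$ are preserved and the tail is unchanged; when $\ell \notin \fin(\R)$ and $\tail(\R)=\rho$, the row constraint $\rho \in V$ on $r$ is weakened to the open row $\orow{(\ell=\tau_\ell)_{\ell\in L}}{\{\ell\}}$, which is still satisfied. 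The negative case mirrors this split, and the second clause of $\delrec{(\neg\R)}{\ell}$ (returning $\orow{}{\{\ell\}}$) exactly covers the situation where the field $\ell$ carries no constraint in $\R$, so $\neg\R$ constrains only the other fields.

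With the coverage lemma in hand, the lemma's statement follows easily. Given $r' \in \IntR{\TypeInter{\delrec{t_1}{\ell}}}$, one first needs a converse inclusion — that every element of $\IntR{\TypeInter{\delrec{t_1}{\ell}}}$ actually arises as $\drop_\ell(r)$ for some $r \in \TypeInter{t_1^\sharp}$, where $t_1^\sharp$ is $t_1$ with its erased top-level type variables reinstated as $\Any$ in positive and $\Empty$ in negative positions (this matches exactly the information $\delrec$ discards). This converse is proved symmetrically by constructing a witness $r$ field-by-field from $r'$. Since $t_1 \leq t_2$ implies $t_1^\sharp \leq t_2^\sharp$ (the erasure is monotone in positive/negative positions separately), we obtain $r \in \TypeInter{t_2^\sharp}$, and then the coverage lemma applied to $t_2$ gives $r' = \drop_\ell(r) \in \IntR{\TypeInter{\delrec{t_2}{\ell}}}$.

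The main obstacle will be the converse (witness-construction) half of the coverage lemma, especially its interaction with the indexing $V$ and with the case split in $\delrec{(\neg\R)}{\ell}$: one must verify that for every tagged row in $\IntR{\TypeInter{\delrec{t}{\ell}}}$ there exists a compatible choice of value for the $\ell$-field (and of row-variable membership when $\tail(\R)=\rho$ and $\ell\in L$) that lifts the row back to a record element of the appropriate (variable-erased) type. The cardinality of $\Domain_\Undef$ and the infinite supply of variable tags make such witnesses always available, but tracking the side conditions through each subcase of the definition is the delicate step.
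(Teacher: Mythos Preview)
Your overall plan—relating the syntactic operator $\delrec{\cdot}{\ell}$ to a semantic restriction on record elements and reading off monotonicity—is the right idea and is exactly what the paper does. But there is a concrete gap in how you handle row-variable tags.

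Your $\drop_\ell$ is single-valued and ``carries over the variable tag''. This makes the converse half of your coverage lemma fail. Take $t = \rec{}{\rho}$ with $\ell \in \rdef(\rho)$: then $\delrec{t}{\ell} = \orow{}{\{\ell\}}$, whose interpretation contains rows with \emph{any} tag, in particular rows $\drow'$ with $\rho \notin \Tag(\drow')$. But every $\domrec{\drow}^V \in \TypeInter{t}$ has $\rho \in \Tag(\drow)$, so $\drop_\ell$ (carrying over the tag) always produces a row whose tag contains $\rho$; hence $\drow'$ is not in the image. Your $t^\sharp$ does not help here: it only reinstates erased top-level \emph{type} variables, whereas this example involves no type variables at all—what is lost is the \emph{row}-variable constraint, and $t^\sharp = t$. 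The same issue arises for negative atoms whenever $\delrec{(\neg\R)}{\ell}$ collapses to $\orow{}{\{\ell\}}$.

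The paper's fix is to make the semantic restriction set-valued: from a row element $\drow$ with tag $V$, the operation $\delrec{\drow}{\ell}$ returns all rows obtained by removing the $\ell$-entry \emph{and} freely re-choosing membership in every row variable whose definition space contains $\ell$, i.e.\ with tag $(V \setminus V_\ell) \cup V'$ for arbitrary $V' \subseteq V_\ell$, where $V_\ell = \{\rho \mid \ell \in \rdef(\rho)\}$. With this definition one proves the exact equality
\[
  \IntR{\TypeInter{\delrec{t}{\ell}}} \;=\; \bigcup_{(\domrec{\drow}^V : t)} \delrec{\drow}{\ell}
\]
for every $t \leq \orec{}$, and monotonicity is then immediate from $\TypeInter{t_1} \subseteq \TypeInter{t_2}$—no $t^\sharp$ detour is needed. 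This also sidesteps your auxiliary claim $t_1 \leq t_2 \Rightarrow t_1^\sharp \leq t_2^\sharp$, whose justification (``the erasure is monotone in positive/negative positions separately'') is suspect: the erasure is not a substitution, since the same variable may occur with both polarities across different summands of the DNF, and semantic subtyping does not respect syntactic polarity in the way your argument assumes.
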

\begin{proof}
  We show that for any $t \leq \orec{}$, we have
  $\IntR{\TypeInter{\delrec t \ell}} =
  \bigcup_{(\domrec{\drow}^V:t)} \delrec \drow \ell$,
  where $\delrec {\domrow{(\ell = \tau_\ell)_{\ell \in L}}{\emptyset}^V} \ell
  = \{\domrow{(\ell = \tau_\ell)_{\ell \in L{\setminus}\ell}}{\{\ell\}}^{(V \setminus V_\ell) \cup V'}
  | V' \subseteq V_\ell\}$,
  where for all $\ell \in L$, $V_\ell \eqdef \{\rho \mid \ell \in
  \dom(\rho)\}$.

  We start with $\bigcup_{(\domrec{\drow}^V:t)} \delrec \drow \ell \subseteq
  \IntR{\TypeInter{\delrec t \ell}}$.
  Let $(\domrec{\domrow{(\ell = \delta_\ell)_{\ell \in L \cup \{\ell\}}}\emptyset^V}^{V_0}:t)$
  and $t$ in DNF.
  Let $\drow_\ell = \domrow{(\ell = \tau_\ell)_{\ell \in L}}{\{\ell\}}^{
  (V \setminus V_\ell) \cup V'}$ with $V' \subseteq V_\ell$.
  The proof is by induction on the top-level type connectives of $t$.
  \begin{itemize}
    \item $t = \rec{(\ell = \tau_\ell)_{\ell \in L'}}{\rho}$.
      \begin{itemize}
        \item $\ell \in L'$ or $\tl \notin \rowVars$.
          Then $\delrec t \ell =
          \row{(\ell = \tau_\ell)_{\ell \in L' \setminus \{\ell\}}}{\tl}{\{\ell\}}$
          and it is clear that $(\drow_\ell:\delrec t \ell)$.
        \item $\ell \notin L'$ and $\tl = \rho \in \rowVars$.
          Then $\delrec t \ell =
          \orow{(\ell = \tau_\ell)_{\ell \in L'}}{\{\ell\}}$.
          By construction, for each $\ell \in L'$, we have either
          $(\delta_\ell:t(\ell))$ if $\ell \in L$, or
          $(\Undef^\emptyset:t(\ell))$ otherwise.
          So $(\drow_\ell:\delrec t \ell)$ since there is no constraint on the
          tagged set of variables in the interpretation of $\delrec t \ell$.
      \end{itemize}
    \item $t = \neg\rec{(\ell = \tau_\ell)_{\ell \in L'}}{\tl}$.
      \begin{itemize}
        \item $\ell \in L'$ or $\tl \notin \rowVars$.
          Without loss of generality, we can suppose $\ell \in L'$ even in the
          second case.
          If $\tau_\ell \neq \Any \lor \Undef$, then
          $\delrec t \ell = \orow{}{\{\ell\}}$ and $(\drow_\ell: \delrec t \ell)$ is
          trivial.
          If $\tau_\ell = \Any \lor \Undef$, then $\delrec t \ell =
          \neg\row{(\ell = \tau_\ell)_{\ell \in L' \setminus \{\ell\}}}{\tl}{\{\ell\}}$
          and there is $\ell' \neq \ell$ such that $(\delta_\ell:t(\ell))$ is false,
          or $(\Undef^\emptyset:t(\ell))$ is false.
          Thus, $(\drow_\ell: \delrec t \ell)$.
        \item $\ell \notin L'$ and $\tl = \rho \in \rowVars$.
          Then $\delrec t \ell = \orow{}{\{\ell\}}$ and $(\drow_\ell: \delrec t
          \ell)$ is trivial.
      \end{itemize}
    \item $t = t_1 \wedge t_2$.
      We have $\delrec t \ell = \delrec{t_1}\ell \wedge \delrec{t_2}\ell$.
      By induction hypothesis, $(\delrec\drow\ell:\delrec{t_i}\ell)$
      for $i \in \{1,2\}$, so $(\delrec\drow\ell:\delrec t \ell)$.
    \item $t = t_1 \vee t_2$.
      We have $\delrec t \ell = \delrec{t_1}\ell \vee \delrec{t_2}\ell$.
      By induction hypothesis, there is $i \in \{1,2\}$ such that
      $(\delrec\drow\ell:\delrec{t_i}\ell)$.
      Thus, $(\delrec\drow\ell:\delrec t \ell)$.
  \end{itemize}

  Now, we consider $\IntR{\TypeInter{\delrec t \ell}} \subseteq
  \bigcup_{(\domrec{\drow}^V:t)} \delrec \drow \ell$.
  If $t \leq \Empty$, then by definition $\delrec t \ell \leq \Empty$.
  Otherwise, let $(\drow_\ell:\delrec t \ell)$ and
  $t = \bigvee_{i \in I} \bigwedge_{\R \in P_i} \R \wedge \bigwedge_{\R \in
  N_i} \neg\R$.
  For each $i \in I$, let $P_\Vars^i = \{\R \in P_i \mid \tail(\R) = \rho
  \textand \ell \notin \rdef(\rho)\}$, similarly for $N_\Vars^i$.
  By the reasoning of \cref{l:decomposition},
  we have \begin{align*}
    t &\simeq \bigvee_{i \in I} \bigvee_{N_i' \subseteq N_i}
    \bigvee_{N'_{i,\Vars} \subseteq N_i' \cap N_\Vars^i} t_{N'_{i,\Vars}}\\
      &= \bigvee_{i \in I} \bigvee_{N_i' \subseteq N_i}
      \bigvee_{N'_{i,\Vars} \subseteq N_i' \cap N_\Vars^i}
      \Big(\orec{\ell = \bigwedge_{\R \in P_i} \R(\ell) \wedge \bigwedge_{\R
        \in N_i{\setminus}N_i'} \neg\R(\ell)}
      \wedge \bigwedge_{\R \in P_i} \rec{\ell}{\cutrow {\rectorow\R} \ell} \\
      &\phantom{={}}\wedge \bigwedge_{\R \in N_i' \setminus N'_{i,\Vars}}
      \neg\rec{\ell}{\cutrow{\rectorow\R}\ell}
      \bigwedge_{\R \in P^i_\Vars} \rec{\fin(\R)}{\tail(\R)}
      \wedge \bigwedge_{\R \in N'_{i,\Vars}} \neg\rec{\fin(\R)}{\tail(\R)}
    \Big)
  \end{align*}
  where $\cutrow{\rectorow\R}\ell$ is the operation defined on positive atomic row
  types in \cref{def:cutrow}.
  This operation coincides with $\delrec \R \ell$ on a positive and atomic $\R$.
  For each $i$ and $N'_{i,\Vars}$, we have
  $\delrec{t_{N'_{i,\Vars}}}\ell
  = \bigwedge_{\R \in P_i} \rec{\ell}{\cutrow{\rectorow\R} \ell}
  \wedge \bigwedge_{\R \in N_i'{\setminus}N'_{i,\Vars}}
  \neg\rec{\ell}{\delrec{\rectorow\R}\ell})$.
  By hypothesis, there are $i \in I$ and $N'_{i,\Vars}$ such that
  $(\drow_\ell:\delrec{t_{N'_{i,\Vars}}}\ell)$.
  Let $\delta_\ell$ be such that $(\delta_\ell:\bigwedge_{\R \in P_i} \R(\ell) \wedge
  \bigwedge_{\R \in N'_i} \neg\R(\ell))$.
  Let $V = (\Tag(\drow_\ell) \cup \{\tail(\R) \mid \R \in P^i_\Vars\}) \setminus
  \{\tail(\R) \mid \R \in N'_{i,\Vars}\}$.
  We take $\drow$ to be $\drow_\ell$ completed by $\ell = \delta$ and with
  $\Tag(\drow) = V$ and we have $(\domrec{\drow}^V:t)$.
  Since $\{\tail(\R) \mid \R \in P^i_\Vars\}$ and $\{\tail(\R) \mid \R \in N'_{i,\Vars}\}$ are
  subsets of $V_\ell$, we have $\drow_\ell \in \delrec \drow \ell$.

  For the second case to work, we need to show that for all $t \leq \orec{}$,
  $\cutrow t \ell \simeq \cutrow{\spl(t)}\ell$,
  where $\spl(t)$ is the type obtained by the previous decomposition.
  First, let $t = \bigwedge_{\R \in P} \R \wedge \bigwedge_{\R \in N} \neg\R$.
  The proof is by induction on $|N|$.
  Let $P_\Vars$ and $N_\Vars$ be defined as before relative to $P$ and $N$.
  \begin{itemize}
    \item $N = \emptyset$. Then, \[
        \spl(t) = \orec{\ell = \bigwedge_{\R \in P} \R(\ell)} \wedge
        \bigwedge_{\R \in P} \rec{\ell}{\cutrow{\rectorow\R}\ell} \wedge
        \bigwedge_{\R \in P_\Vars} \rec{\fin(\R)}{\tail(\R)}
      \]
      and $\delrec{\spl(t)}\ell = \bigwedge_{R \in P} \cutrow{\rectorow\R}\ell = \delrec t \ell$.
    \item $N = N_0 \cup \{\R_n = \rec{\ell = \tau}{\rw}\}$ where $\rw$ is an atomic row.
      By equivalence, this covers all cases where $\tail(\R_n) \notin \rowVars$
      and the ones where $\tail(\R_n) = \rho$ and $\ell \notin \rdef(\rho)$.
      Let $t_0 = \bigwedge_{\R \in P} \R \wedge \bigwedge_{\R \in N_0} \neg\R$,
      so that $\delrec t \ell = \delrec{t_0}\ell \wedge \delrec{\R_n}\ell$.
      The type $t$ is decomposed as follows:
      \begin{align*}
        \spl(t)
        = &\bigvee_{N' \subseteq N_0}
        \bigvee_{N_\Vars' \subseteq N' \cap N_\Vars}
        \bigvee_{k \in \{0,1\}}
        \Big(\orec{\ell = \bigwedge_{\R \in P} \R(\ell)
          \wedge \bigwedge_{\R \in N{\setminus}N'} \R(\ell)
        \bigwedge_{k = 0} \neg\tau} \\
        &\quad\wedge \bigwedge_{\R \in P} \rec{\ell}{\cutrow{\rectorow\R} \ell}
          \wedge \bigwedge_{\R \in N'{\setminus}N_\Vars'}
          \neg\rec{\ell}{\cutrow{\rectorow\R}\ell}
          \bigwedge_{k = 1} \neg\rec{\ell}{\rw} \\
        &\quad\wedge \bigwedge_{\R \in P_\Vars} \rec{\fin(\R)}{\tail(\R)}
        \wedge \bigwedge_{\R \in N'_\Vars} \neg\rec{\fin(\R)}{\tail(\R)}
      \Big)
      \end{align*}
      There are two cases.
      \begin{enumerate}
        \item $\tau = \Any \lor \Undef$.
          Since $\neg\tau = \Empty$, the disjunction with $k = 0$ is empty,
          and the DNF of $\spl(t)$ is equal to
          \begin{align*}
            \bigvee_{N' \subseteq N_0}
            \bigvee_{N'_\Vars \subseteq N' \cap N_\Vars}
            \Big(&\orec{\ell = \bigwedge_{\R \in P} \R(\ell)
                 \wedge \bigwedge_{\R \in N{\setminus}N'} \R(\ell)}\\
                 &\wedge \bigwedge_{\R \in P} \rec{\ell}{\cutrow{\rectorow\R}\ell}
                 \wedge \bigwedge_{\R \in N'{\setminus}N'_\Vars}
               \neg\rec{\ell}{\cutrow{\rectorow\R}\ell}\Big)
               \wedge \neg\rec{\ell}{\rw}
          \end{align*}
          which is equal to $\spl(t_0) \wedge \neg\rec{\ell}{\rw}$.
          So by induction hypothesis and since $\delrec{\R_n}\ell = \neg\rw$, \[
            \delrec{\spl(t)}\ell
            = \delrec{\spl(t_0)}\ell \wedge \neg\rw
            \simeq \delrec{t_0}\ell \wedge \neg\rw
            = \delrec t \ell
          \]
        \item $\tau \neq \Any \lor \Undef$.
          Then, $\delrec{\spl(t)}\ell$ is equal to
          \begin{align*}
            \bigvee_{N' \subseteq N_0}
            \bigvee_{N'_\Vars \subseteq N' \cap N_\Vars} \Big(
              &\bigwedge_{\R \in P} \cutrow{\rectorow\R}\ell
              \wedge \bigwedge_{\R \in N'{\setminus}N'_\Vars}
              \neg(\cutrow{\rectorow\R}\ell)\\
            \vee &\bigwedge_{\R \in P} \cutrow{\rectorow\R}\ell
              \wedge \bigwedge_{\R \in N'{\setminus}N'_\Vars} \neg(\cutrow{\rectorow\R}\ell)
              \wedge \bigwedge_{\R \in N'{\setminus}N'_\Vars} \neg\rw
              \Big)
          \end{align*}
          So that $\delrec{\spl(t)}\ell$ is trivially equivalent to
          $\delrec{\spl(t_0)}\ell$,
          and by induction hypothesis and the fact that $\delrec{\R_n}\ell =
          \orow{}{\ell}$ we have
          $\delrec{\spl(t_0)}\ell \simeq \delrec{t_0}\ell = \delrec{t}\ell$.
      \end{enumerate}
    \item $N = N_0 \cup \{\R_n = \rec{(\ell' = \tau_\ell')_{\ell' \in L}}{\rho}\}$ where $\ell \notin L$.
      Let $t_0 = \bigwedge_{\R \in P} \R \wedge \bigwedge_{\R \in N_0} \neg\R$,
      so that $\delrec t \ell = \delrec{t_0}\ell \wedge \delrec{\R_n}\ell$.
      The type $t$ is decomposed as follows:
      \begin{align*}
        \spl(t)
        = &\bigvee_{N' \subseteq N_0}
        \bigvee_{N_\Vars' \subseteq N' \cap N_\Vars}
        \bigvee_{k \in \{1,2\}}
        \Big(\orec{\ell = \bigwedge_{\R \in P} \R(\ell)
          \wedge \bigwedge_{\R \in N{\setminus}N'} \R(\ell)} \\
        &\quad\wedge \bigwedge_{\R \in P} \rec{\ell}{\cutrow{\rectorow\R} \ell}
          \wedge \bigwedge_{\R \in N'{\setminus}N_\Vars'}
          \neg\rec{\ell}{\cutrow{\rectorow\R}\ell}
          \bigwedge_{k = 1} \neg\orec{\ell = \Any \lor \Undef, (\ell' = \tau_\ell')_{\ell' \in L}} \\
        &\quad\wedge \bigwedge_{\R \in P_\Vars} \rec{\fin(\R)}{\tail(\R)}
        \wedge \bigwedge_{\R \in N'_\Vars} \neg\rec{\fin(\R)}{\tail(\R)}
        \bigwedge_{k = 2} \neg\rec{L}{\rho}
      \Big)
      \end{align*}
      Thus, $\delrec{\spl(t)}$ is trivially equivalent to the type in case (2) above
      (with $\neg\orow{(\ell' = \tau_\ell')_{\ell' \in L}}{\{\ell\}}$ instead of
      $\neg\rw$ in the second line), which is equal to $\delrec{\spl(t_0)}\ell$
      and we conclude in the same way since here also $\delrec{\R_n}\ell =
      \orow{}{\{\ell\}}$.
  \end{itemize}
  Now, if $t = \bigvee_{i \in I} t_i$ where the $t_i$'s are conjunctions,
  we have $\delrec t \ell
  = \bigvee_{i \in I} (\delrec{t_i}\ell)
  \simeq \bigvee_{i \in I} (\delrec{\spl(t_i)}\ell)
  = \delrec{\spl(t)}\ell$.
\end{proof}

\begin{lemma}[Inversion]
  \label{l:inversion}%
  Let $v = \extrecord{v_1}{\ell}{v_2}$.
  If there is a derivation $\decseq \Delta \Gamma v t$,
  then there are derivations
  $\decseq \Delta \Gamma {v_1}{t_1 \leq \orec{\ell = \Undef}}$ and
  $\decseq \Delta \Gamma {v_2}{t_2}$ such that
  $\rec{\ell = t_2}{\cutrow{t_1}\ell} \leq t$.
\end{lemma}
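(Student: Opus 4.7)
The plan is to proceed by induction on the derivation of $\decseq \Delta \Gamma v t$ where $v = \extrecord{v_1}{\ell}{v_2}$. Since $v$ is neither a constant, a variable, a $\lambda$-abstraction, nor the empty record, the only declarative rules whose conclusion can match $v$ are \rulename{Ext}, \rulename{Inter}, \rulename{Sub}, and \rulename{Inst}. We treat each in turn.

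\textbf{Base case \rulename{Ext}.} Direct: the rule supplies the premises $\decseq \Delta \Gamma {v_1}{t_1 \leq \orec{\ell=\Undef}}$ and $\decseq \Delta \Gamma {v_2}{t_2}$ together with $t = \rec{\ell = t_2}{\delrec{t_1}{\ell}}$, so the conclusion holds with reflexive subtyping.

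\textbf{Case \rulename{Sub}.} We have $\decseq \Delta \Gamma v {t'}$ with $t' \leq t$. Applying the induction hypothesis yields $t_1,t_2$ with $\rec{\ell = t_2}{\delrec{t_1}{\ell}} \leq t'$, and transitivity of subtyping closes the case.

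\textbf{Case \rulename{Inter}.} Suppose $t = t_a \wedge t_b$ with subderivations $\decseq \Delta \Gamma v {t_a}$ and $\decseq \Delta \Gamma v {t_b}$. The induction hypothesis produces $t_1^a,t_2^a$ and $t_1^b,t_2^b$. I combine them via \rulename{Inter} into $\decseq \Delta \Gamma {v_1}{t_1^a \wedge t_1^b}$ and $\decseq \Delta \Gamma {v_2}{t_2^a \wedge t_2^b}$; the first type is below $\orec{\ell=\Undef}$ since each conjunct is. It remains to show $\rec{\ell = t_2^a \wedge t_2^b}{\delrec{(t_1^a \wedge t_1^b)}{\ell}} \leq t_a \wedge t_b$, for which it suffices to prove subtyping against each side. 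Here I invoke \cref{l:cutrow_correct}: from $t_1^a \wedge t_1^b \leq t_1^a \leq \orec{}$ we get $\delrec{(t_1^a \wedge t_1^b)}{\ell} \leq \delrec{t_1^a}{\ell}$, and then component-wise subtyping of record types yields $\rec{\ell = t_2^a \wedge t_2^b}{\delrec{(t_1^a \wedge t_1^b)}{\ell}} \leq \rec{\ell = t_2^a}{\delrec{t_1^a}{\ell}} \leq t_a$, and symmetrically for $t_b$.

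\textbf{Case \rulename{Inst}.} Suppose $\decseq \Delta \Gamma v {t'}$ and $t = t'\sigma$ with $\dom(\sigma) \cap \Delta = \emptyset$. The induction hypothesis gives $t_1,t_2$ and derivations with $\rec{\ell = t_2}{\delrec{t_1}{\ell}} \leq t'$. Applying \rulename{Inst} to the sub-derivations of $v_1$ and $v_2$ I obtain $\decseq \Delta \Gamma {v_1}{t_1\sigma}$ and $\decseq \Delta \Gamma {v_2}{t_2\sigma}$; moreover $t_1\sigma \leq (\orec{\ell = \Undef})\sigma = \orec{\ell = \Undef}$ because this record type is ground, and by Proposition~\ref{subpreservesub} $\rec{\ell = t_2}{\delrec{t_1}{\ell}}\sigma \leq t'\sigma = t$. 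Unfolding the substitution on the left gives $\rec{\ell = t_2\sigma}{\delrec{t_1}{\ell}\,\sigma}$, so the case reduces to showing $\delrec{(t_1\sigma)}{\ell} \leq \delrec{t_1}{\ell}\,\sigma$, which I expect to be the main obstacle: the definition of $\delrec{\cdot}{\ell}$ performs a case analysis on whether the tail of a record atom is a row variable, and substitution can expand a row variable into a Boolean combination of rows, so the equality need not hold up to syntactic identity. I would therefore prove it as an auxiliary lemma by induction on the DNF of $t_1$, exploiting the definition of substitution on records (equation \eqref{eq:subrec}) to push $\sigma$ through the Boolean skeleton, and handling the two atomic cases ($\ell \in L$ or $\tl \notin \rowVars$; and $\tl = \rho$ with $\ell \in \rdef(\rho)$) by inspecting each branch of the definition of $\delrec{\cdot}{\ell}$. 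Once this auxiliary subtyping is established, monotonicity of record types in the tail closes the case.
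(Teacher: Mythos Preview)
Your proof is correct and is more thorough than the paper's on two points.  For \rulename{Inter}, the paper only asserts that after two uses of the IH ``both types obtained are supertypes of $\rec{\ell = t_2}{\delrec{t_1}\ell}$'', without explaining how the two IH pairs are combined into a single $(t_1,t_2)$; your explicit choice $t_1 = t_1^a \wedge t_1^b$, $t_2 = t_2^a \wedge t_2^b$ together with \cref{l:cutrow_correct} is exactly what is needed to close this gap.  More notably, the paper simply omits the \rulename{Inst} case (justifying this only later, in the proof of \cref{l:generation}, by claiming record values cannot receive polymorphic types---an argument that is not quite airtight, since \rulename{Sub} can introduce type variables that a subsequent \rulename{Inst} substitutes non-trivially).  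Your explicit handling is therefore more rigorous.  The auxiliary inequality $\delrec{(t_1\sigma)}{\ell} \leq (\delrec{t_1}{\ell})\sigma$ that you isolate does hold: in every clause where the definition of $\delrec{\cdot}{\ell}$ discards a row variable (by replacing the tail with $\orecsign$ or by returning the top row $\orow{}{\{\ell\}}$), the result is an upper bound, so instantiating first and then deleting can only yield a smaller row.  Your proposed induction on the DNF of $t_1$, inspecting the two branches of the atomic definition, will go through; the only technical care needed is that substituting a row variable may expand a single atom into a Boolean combination of record atoms, so you must renormalize to DNF and check that the operator distributes over the resulting top-level unions and intersections.
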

\begin{proof}
  By induction on $\decseq \Delta \Gamma v t$, with a case analysis on the last
  rule used, that has to be of \rulename{Ext}, \rulename{Inter} or
  \rulename{Sub}.
  \begin{description}
    \item[\rulename{Ext}] Straightforward.
    \item[\rulename{Inter}] We apply the induction hypothesis twice.
      Since both types obtained are supertypes of
      $\rec{\ell = t_2}{\delrec{t_1}\ell}$,
      their intersection is also.
    \item[\rulename{Sub}] By induction hypothesis and transitivity of subtyping.
      \qedhere
  \end{description}
\end{proof}

\begin{lemma}[Subject reduction]
  \label{l:sr}%
  Let $e$ be an expression and $t$ a type.
  If $\decseq \Delta \Gamma e t$ and $e \reduces e'$,
  then $\decseq \Delta \Gamma {e'} t$.
\end{lemma}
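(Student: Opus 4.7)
The plan is to proceed by induction on the derivation of $e \reduces e'$, with a case analysis on the reduction rule applied at the root. Throughout, I will repeatedly rely on the fact that the declarative type system enjoys an inversion property analogous to \cref{l:inversion} for each expression form (value, application, selection, deletion, extension), the standard substitution lemma (if $\decseq \Delta {\Gamma,x{:}t_1} e t_2$ and $\decseq \Delta \Gamma v t_1$ then $\decseq \Delta \Gamma {e\subs x v} t_2$), and on \cref{l:cutrow_correct} to transfer subtyping across $\delrec{\cdot}{\ell}$.

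For the \rrulename{app} case, inversion on $\decseq \Delta \Gamma {(\lambda^{\wedge_i(t_i\to t_i')}x.e_0) v} t$ yields a type of the form $\bigwedge_{i \in I}(t_i\to t_i')$ for the abstraction and a corresponding typing of $v$ compatible with $t_1$. By the inversion/generation of \rulename{Abs} we get $\decseq {\Delta\cup\Delta'} {\Gamma,x{:}t_i} {e_0} {t_i'}$ for each $i$, and then the substitution lemma produces $\decseq \Delta \Gamma {e_0\subs x v} {t_i'}$; we conclude with \rulename{Inter} and \rulename{Sub}. The \rrulename{ctx} case is immediate by an outer induction on the evaluation context: a straightforward replacement lemma says that if $\decseq \Delta \Gamma {\ctx{e}} t$ then $e$ has some type $t_0$ and replacing $e$ by any $e'$ with $\decseq \Delta \Gamma {e'} {t_0}$ preserves the typing of the whole context.

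The record-operation cases are where the real work lies. For \rrulename[=]{sel}, i.e.\ $\extrecord v \ell {v'}.\ell \reduces v'$, inversion on \rulename{Sel} yields $\decseq \Delta \Gamma {\extrecord v \ell {v'}} {\orec{\ell = t}}$ with $t \leq t_{\text{res}}$; then \cref{l:inversion} gives $\decseq \Delta \Gamma v {t_1 \leq \orec{\ell = \Undef}}$ and $\decseq \Delta \Gamma {v'} {t_2}$ with $\rec{\ell = t_2}{\cutrow{t_1}\ell} \leq \orec{\ell = t}$, which forces $t_2 \leq t$ and hence $v'$ has type $t$ by \rulename{Sub}. The case \rrulename[\neq]{sel} is similar: inversion gives the same data with $\rec{\ell' = t_2}{\cutrow{t_1}\ell'} \leq \orec{\ell = t}$; projecting on the field $\ell$ shows that $\cutrow{t_1}{\ell'}$ assigns $\ell$ to a subtype of $t$, from which, using that $\ell \neq \ell'$ and the definition of $\delrec{\cdot}{\ell'}$, one recovers $t_1 \leq \orec{\ell = t}$ and can type $v.\ell$ by \rulename{Sel}. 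For \rrulename{emp} one uses \rulename{Emp} and the subsumption coming from the side condition of \rulename{Del}.

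The trickiest cases are \rrulename[=]{del} and \rrulename[\neq]{del}, because we must reconstruct a typing for $\delrecord v \ell$ or $\extrecord{\delrecord v \ell}{\ell'}{v'}$ from the typing of $\delrecord{\extrecord v {\ell^{(\prime)}}{v'}}\ell$. The plan is: apply inversion on \rulename{Del} to get a type $s \leq \orec{}$ for $\extrecord v {\ell^{(\prime)}}{v'}$ with the expected type being $\rec{\ell = \Undef}{\delrec s \ell}$; then apply \cref{l:inversion} to obtain typings $t_1,t_2$ for $v$ and $v'$ with $\rec{\ell^{(\prime)} = t_2}{\cutrow{t_1}{\ell^{(\prime)}}} \leq s$. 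The crucial observation, obtained from the definition of $\delrec{\cdot}{\ell}$ and the equation $\cutrow{\cutrow{t_1}{\ell'}}{\ell} = \cutrow{\cutrow{t_1}{\ell}}{\ell'}$ together with \cref{l:cutrow_correct}, is that $\delrec{\rec{\ell^{(\prime)}=t_2}{\cutrow{t_1}{\ell^{(\prime)}}}}{\ell}$ can be rewritten as $\cutrow{t_1}{\ell}$ (equal case) or as $\rec{\ell' = t_2}{\cutrow{\cutrow{t_1}{\ell'}}{\ell}}$ (distinct case), which lets us retype the reduct by \rulename{Del} (equal case) or by \rulename{Ext} applied after \rulename{Del} on $v$ (distinct case). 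The main obstacle is precisely verifying these algebraic identities on $\delrec{\cdot}{\ell}$ for Boolean combinations of record types, including the delicate subcases in its definition where $\R(\ell) = \Any \lor \Undef$ or $\tail(\R)$ is a row variable; once those identities are in place, all remaining steps amount to routine subsumption.
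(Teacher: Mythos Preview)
Your plan is sound but inverts the induction relative to the paper: the paper proves subject reduction by induction on the \emph{typing} derivation $\decseq \Delta \Gamma e t$, with a case analysis on its last rule, whereas you induct on the \emph{reduction} $e \reduces e'$. The practical payoff of the paper's choice is that the non-syntax-directed rules \rulename{Sub}, \rulename{Inter}, and \rulename{Inst} become one-line cases (apply the induction hypothesis to the premise, then reapply the rule), and consequently only a single inversion lemma is ever needed---\cref{l:inversion} for extension values---since in the \rulename{Sel} and \rulename{Del} cases the last rule already pins down the shape of the record subexpression's type. Your route, by contrast, must absorb those three rules into inversion lemmas for every elimination form (application, selection, deletion), each quantifying over arbitrary stacks of \rulename{Sub}/\rulename{Inter}/\rulename{Inst} above the structural rule; these are provable but not immediate, and your ``replacement lemma'' for \rrulename{ctx} is essentially the same work repackaged. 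On the record-specific content the two approaches converge: both rely on \cref{l:cutrow_correct} and on exactly the algebraic identities you flag (that deleting $\ell$ after extending by $\ell$ returns the original deletion, and that $\delrec{\cdot}{\ell}$ and $\delrec{\cdot}{\ell'}$ commute for $\ell\neq\ell'$), which the paper invokes just as tersely as you do.
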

\begin{proof}
  The proof is by induction on the derivation of $\decseq \Delta \Gamma e t$
  and by a case analysis on the last rule used in the derivation of
  $\decseq \Delta \Gamma e t$.
  We detail the cases related to the rules for records, for the rest, see
  \textit{e.g.} \cite{frisch08}.
  \begin{description}
    \item[\rulename{Emp}] $e = \erecord{}$, so it does not reduce.
    \item[\rulename{Ext}] $e = \extrecord{e_1}{\ell}{e_2}$.
      Necessarily, we have $e' = \extrecord{e_1'}{\ell}{e_2}$ 
      or $e' = \extrecord{e_1}{\ell}{e_2'}$ and this is direct by
      induction hypothesis.
    \item[\rulename{Sel}] If $e = e_0.\ell \reduces e_0'.\ell = e'$
      or if $e = \extrecord{e_1}{\ell'}{e_2}.\ell$ and the reduction occurs in
      $e_1$ or $e_2$, this is direct by induction hypothesis.
      Otherwise, we have $e = \extrecord{v}{\ell'}{v'}.\ell$ and
      $\decseq \Delta \Gamma {\extrecord{v}{\ell'}{v'}} {\orec{\ell = t}}$.
      By \cref{l:inversion}, there are derivations
      $\decseq \Delta \Gamma v {t_1 \leq \orec{\ell' = \Undef}}$
      and $\decseq \Delta \Gamma {v'}{t_2}$
      such that $\rec{\ell'=t_2}{\delrec{t_1}{\ell'}} \leq \orec{\ell = t}$.
      \begin{itemize}
        \item If $\ell = \ell'$, then $e' = v'$.
          Since $\rec{\ell = t_2}{\delrec{t_1}{\ell}} \leq \orec{\ell = t}$,
          we have $t_2 \leq t$ and we conclude by \rulename{Sub}.
        \item If $\ell \neq \ell'$, then $e' = v.\ell$.
          Since $\rec{\ell'=t_2}{\delrec{t_1}{\ell'}} \leq \orec{\ell = t}$,
          we have $t_1 \leq \orec{\ell = t, \ell' = \Undef}$.
          We conclude by rules~\rulename{Sub} and~\rulename{Sel}.
      \end{itemize}
    \item[\rulename{Del}] If $e = \delrecord{\erecord{}}\ell \reduces \erecord{}$,
      since $\delrecord{\erecord{}} \ell = \erecord{}$, we can use the same
      derivation.
      If $e = \delrecord{\extrecord{e_1}{\ell'}{e_2}} \ell$ and the reduction
      occurs in $e_1$ or $e_2$, this is direct by induction hypothesis.
      Otherwise, we have $e = \delrecord{\extrecord{v}{\ell'}{v'}} \ell$.
      By \cref{l:inversion}, there are derivations
      $\decseq \Delta \Gamma v {t_1 \leq \orec{\ell' = \Undef}}$
      and $\decseq \Delta \Gamma {v'}{t_2}$
      such that $\extrec{t_1}{\ell'}{t_2} \leq t$.
      \begin{itemize}
        \item If $\ell = \ell'$, then $e' = \delrecord v \ell$.
          Since $t_1 \leq \orec{\ell = \Undef}$, $v$ can be typed with
          $t_1 = \delrec{t_1}\ell = \delrec{\extrec{t_1}{\ell}{t_2}}\ell$.
          We conclude because $\extrec{t_1}\ell{t_2} \leq t$.
        \item If $\ell \neq \ell'$, then $e' = \extrecord{\delrecord v \ell}{\ell'}{v'}$.
          There is a derivation $\decseq \Delta \Gamma {\delrecord v \ell}
          {\delrec{t_1} \ell}$.
          Since $\extrec{\delrec{t_1}\ell}{\ell'}{t_2} =
          \delrec{\extrec{t_1}{\ell'}{t_2}} \ell$,
          we conclude by rule~\rulename{Ext}.
      \end{itemize}
    \item[\rulename{Inst}] Direct by induction hypothesis.
      \qedhere
  \end{description}
\end{proof}

\begin{lemma}[Generation for values]
  \label{l:generation}%
  Let $v$ be a value such that
  $\decseq \Delta \Gamma v t$ with $t \leq \orec{}$.
  Then $v$ has one of the forms $\erecord{}$ or $\extrecord{v_1}{\ell}{v_2}$.
\end{lemma}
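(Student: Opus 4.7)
The plan is to proceed by case analysis on the syntactic form of $v$. By the value grammar of \cref{sec:syntax}, $v$ must be a constant $c$, a $\lambda$-abstraction $\lambda^{\bigwedge_{i \in I}(t_i \to t_i')}x.e$, the empty record $\erecord{}$, or an extension $\extrecord{v_1}{\ell}{v_2}$. The last two cases directly match the conclusion, so the remaining work is to rule out the constant and $\lambda$-abstraction cases under the hypotheses $\decseq \Delta \Gamma v t$ and $t \leq \orec{}$.

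For both non-record cases, the strategy is to exhibit an inhabitant of $\TypeInter{t}$ that is not of the form $\domrec{\drow}^V$, which contradicts $t \leq \orec{}$ because by the clause for records in \cref{def:indint} every element of $\TypeInter{\orec{}}$ must have that shape. I would prove the following strengthening by induction on the derivation $\decseq \Delta \Gamma v t$: if $v = c$, then $c^V \in \TypeInter{t}$ for some finite $V \subseteq \Vars$; and if $v = \lambda^\sigma x.e$, then $\emptyset^V \in \TypeInter{t}$ for some such $V$ (the empty relation vacuously satisfies the clause for arrows in \cref{def:indint}). The base cases use \rulename{Const} and \rulename{Abs}: the first is immediate from $c \in \ConstantsInBasicType(\basic(c))$, the second from the vacuous satisfaction just noted. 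The inductive cases \rulename{Sub} and \rulename{Inter} preserve the witness by set containment and by the fact that an element lying in both conjuncts lies in their intersection; \rulename{Inst} preserves the shape of the witness thanks to \cref{l:rowsub_commute}, which lets us transfer the witness across substitution by picking an appropriate retagging.

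The main obstacle will be the \rulename{Inst} case, where the interaction between the witness's tag set $V$ and the domain of $\sigma$ needs to be handled carefully: one has to observe that the domain $\Domain$ is closed under arbitrary retagging by finite sets of variables, so a suitable $V'$ can always be found to place the reshaped witness in $\TypeInter{t\sigma}$. Once this bookkeeping is in place, the overall argument is routine, and the contradiction with $t \leq \orec{}$ yields the desired conclusion.
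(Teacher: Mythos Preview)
Your overall plan---rule out constants and $\lambda$-abstractions by exhibiting a non-record inhabitant of $\TypeInter{t}$---is sound, but the invariant you carry through the induction is too weak. For the \rulename{Inter} case you write that ``an element lying in both conjuncts lies in their intersection'', yet your induction hypothesis only gives you \emph{some} $V_1$ with $\emptyset^{V_1}\in\TypeInter{t_1}$ and \emph{some} $V_2$ with $\emptyset^{V_2}\in\TypeInter{t_2}$; there is no reason these coincide, and $\TypeInter{t_1\wedge t_2}$ need not contain any $\emptyset^{V}$. The \rulename{Inst} case has the same defect: \cref{l:rowsub_commute} relates $\TypeInter{t\sigma}$ to $\IntQ{\TypeInter{t}_{\eta'}}$ for a \emph{different} assignment $\eta'$, and a single witness $\emptyset^{V}\in\TypeInter{t}=\IntQ{\TypeInter{t}_{\hat\eta}}$ does not transfer to $\IntQ{\TypeInter{t}_{\eta'}}$ by mere retagging---indeed, nothing prevents $t=\alpha$ from satisfying your ``some $V$'' invariant while $t\sigma$ is empty.

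The fix is to strengthen the invariant to $\emptyset^{V}\in\TypeInter{t}$ \emph{for all} $V$ (and $c^{V}\in\TypeInter{t}$ for all $V$ in the constant case). This is equivalent to $\Any\to\Empty\leq t$ (resp.\ $\basic(c)\leq t$), since $\TypeInter{\Any\to\Empty}=\{\emptyset^{V}\mid V\in\Pf(\Vars)\}$. With this formulation \rulename{Inter} is immediate, and \rulename{Inst} follows because $\Any\to\Empty$ and $\basic(c)$ are closed types, so substitution preserves the lower bound by the stability of subtyping under substitution. The contradiction with $t\leq\orec{}$ then goes through exactly as you describe. The paper's own proof takes a more syntactic route (arguing that \rulename{Inst} can be dispensed with and that derived types of values are non-empty), but your semantic argument, once strengthened, is a clean alternative.
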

\begin{proof}
  It is easy to see that a derivation for $v$ is obtained by a rule
  \rulename{Emp} followed by rules~\rulename{Ext}, \rulename{Inter} or
  \rulename{Sub}.
  Remark that there are no rules~\rulename{Inst} because it is impossible to
  derive a polymorphic type $t$ for $v$, in particular since for instance
  $\crec{} \nleq \rec{}{\rho}$.
  Moreover, we can show by induction on the depth of the derivation that if
  $\decseq \Delta \Gamma v t$ is derivable, then $t \nsimeq \Empty$.
  The proof is by induction on the derivation.
  \begin{description}
    \item[\rulename{Emp}] This is the base case, where $v = \erecord{}$.
    \item[\rulename{Ext}] $v$ is of the second form.
    \item[\rulename{Inter}] $t_1$ and $t_2$ are reducible to disjunctive normal
      forms $t_1^\R \wedge t_1'$ and $t_2^\R \wedge t_2'$,
      such that $t_1^\R, t_2^\R \leq \orec{}$ and by hypothesis,
      $t_1' \wedge t_2' \leq \Empty$.
      We can show by induction on the derivation of $v$ that this last property
      does not hold if $v$ is not a record expression.
    \item[\rulename{Sub}] We have $\decseq \Delta \Gamma v {t' \leq t}$.
      $t' \nleq \Empty$ since $v$ is a value, so we can apply the induction
      hypothesis. 
      \qedhere
  \end{description}
\end{proof}

\begin{lemma}[Progress]
  \label{l:progress}%
  Let $e$ be a well-typed closed expression, that is,
  $\decseq \emptyset \emptyset e t$ for some $t$.
  If $e$ is not a value, then
  there exists an expression $e'$ such that $e \reduces e'$.
\end{lemma}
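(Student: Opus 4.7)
The plan is to proceed by structural induction on the expression $e$, using the assumption $\decseq \emptyset \emptyset e t$ and a case analysis on the shape of $e$. The constant, variable, abstraction and empty-record cases are immediate: a closed variable is impossible, and the other three produce values, contradicting the hypothesis that $e$ is not a value.

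For each composite construct, I would first try to reduce under an evaluation context using \rulename{ctx}: by the induction hypothesis, if any strict subexpression is not a value, it reduces, and we plug its reduct into the corresponding context. The interesting work is therefore confined to the situation in which every subexpression is a value. The four sub-cases are then:
\begin{itemize}
\item Application $v_1 v_2$: $v_1$ must be a $\lambda$-abstraction. This requires a canonical-forms result for arrow types---any value typed as $t \leq \Empty \to \Any$ is a $\lambda$-abstraction---which is proved, as for Lemma~\ref{l:generation}, by induction on the typing derivation, observing that \rulename{Emp}, \rulename{Ext}, and \rulename{Const} do not yield a non-empty arrow type. Once $v_1 = \lambda^{t'}x.e'$, rule \rrulename{app} fires.
\item Extension $\extrecord{v_1}{\ell}{v_2}$: this expression is itself a value, contradicting the hypothesis.
\item Selection $v.\ell$: by inversion on \rulename{Sel} (together with the subsumption/intersection/instantiation rules, handled exactly as in Lemma~\ref{l:generation}) the value $v$ has type $t' \leq \orec{\ell = t}$, and in particular $t' \leq \orec{}$ with $t' \not\leq \orec{\ell = \Undef}$. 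By Lemma~\ref{l:generation}, $v$ is either $\erecord{}$ or $\extrecord{v_1}{\ell'}{v_2}$; the empty record is excluded because $\crec{} \leq \orec{\ell = \Undef}$ and hence its type cannot be a subtype of $\orec{\ell=t}$ for $t \leq \Any$. Thus $v = \extrecord{v_1}{\ell'}{v_2}$ and one of \rrulename[=]{sel} or \rrulename[\neq]{sel} applies depending on whether $\ell = \ell'$.
\item Deletion $\delrecord{v}{\ell}$: by inversion on \rulename{Del}, $v$ has a type $\leq \orec{}$, so Lemma~\ref{l:generation} gives $v \in \{\erecord{},\, \extrecord{v_1}{\ell'}{v_2}\}$. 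The first case triggers \rrulename{emp}; the second triggers \rrulename[=]{del} or \rrulename[\neq]{del}.
\end{itemize}

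The main obstacle I anticipate is the bookkeeping for the elimination rules (selection, deletion, and application), since the declarative system has non-syntax-directed rules \rulename{Inter}, \rulename{Sub}, and \rulename{Inst} that can be interleaved arbitrarily before the structural rule is reached. The cleanest way to manage this is to strengthen the argument with inversion/generation lemmas analogous to Lemma~\ref{l:inversion} and Lemma~\ref{l:generation}: for an application, the generation lemma must yield that the type of $v_1$ is a subtype of an arrow; for selection and deletion, it must yield that the type of $v$ is a subtype of $\orec{\ell = \Any}$ or $\orec{}$ respectively. Once those inversion principles are in hand, each case reduces to invoking Lemma~\ref{l:generation} (or its arrow analogue) and applying the appropriate reduction rule, and the proof is routine.
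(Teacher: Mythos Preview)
Your proposal is correct and would go through; the difference from the paper is the induction scheme. You induct structurally on the expression $e$ and then, at each elimination form, invert through an arbitrary stack of \rulename{Sub}/\rulename{Inter}/\rulename{Inst} to recover a type for the subexpression before invoking Lemma~\ref{l:generation}. The paper instead inducts on the typing derivation and case-splits on the last rule: the non-structural rules \rulename{Sub}, \rulename{Inter}, \rulename{Inst} become their own (trivial) cases handled by the IH on the smaller derivation, and in the \rulename{Sel}/\rulename{Del} cases the premise already hands you $e_0 : \orec{\ell = t}$ or $e_0 : t \leq \orec{}$ without any inversion work. Your route requires the extra inversion lemmas you anticipate (and the arrow canonical-forms lemma you mention, which the paper outsources to~\cite{frisch08}); the paper's route is a bit lighter precisely because it absorbs that bookkeeping into the induction itself. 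Both are standard progress-proof styles and lead to the same case analysis once the types of the subexpressions are in hand.
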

\begin{proof}
  The proof is by induction on the derivation of $\decseq \Delta \Gamma e t$
  and by a case analysis on the last rule used in the derivation of
  $\decseq \Delta \Gamma e t$.
  We detail the cases related to records, for the rest, see
  \textit{e.g.}~\cite{frisch08}.
  \begin{description}
    \item[\rulename{Emp}] $e = \erecord{}$ is a value.
    \item[\rulename{Ext}] $e = \extrecord{e_1}{\ell}{e_2}$.
      If $e_1$ or $e_2$ can be reduced, $e$ can also.
      Otherwise, $e_1$ and $e_2$ are values by induction and so is $e$.
    \item[\rulename{Del}] $e = \delrecord{e_0}{\ell}$.
      If $e_0$ can be reduced, so can $e$.
      Otherwise, we have by induction hypothesis that $e_0$ is a value.
      By \cref{l:generation}, either $e_0 = \extrecord{v}{\ell'}{v'}$
      and $e$ reduces with $\rrulename[=]{del}$ or $\rrulename[\neq]{del}$,
      or $e_0 = \erecord{}$ and $e$ reduces with
      $\rrulename{emp}$.
    \item[\rulename{Sel}] $e = e_0.\ell$.
      If $e_0$ can be reduced, so can $e$.
      Otherwise, we have by induction hypothesis that $e_0$ is a value.
      By \cref{l:generation}, $e_0 = \extrecord{v}{\ell'}{v'}$ or $e_0 = \erecord{}$.
      In the first case, $e$ reduces with
      $\rrulename[=]{del}$ or $\rrulename[\neq]{del}$.
      The second case is impossible, since there is no derivation for $e_0$ of
      type $\orec{\ell = t}$.
    \item[\rulename{Inst}]
      Directly by induction hypothesis.
      \qedhere
  \end{description}
\end{proof}

\typesoundness*
\begin{proof}
  Consequence of \cref{l:sr,l:progress}.
\end{proof}

\subsection{Algorithmic Type System}\label{app:typealgo}

\subsubsection{Field selection}

We remind here the definition of the field selection operator.
\typesel*
For an arbitrary type $t \leq \orec{\ell = \Any}$, we define $t.\ell \eqdef
(\dnf(t \wedge \orec{}))$.

\begin{lemma}
  \label{l:remove_tlrv}%
  Let $t = \bigwedge_{\R \in P} \R \wedge \bigwedge_{\R \in N} \neg\R$ and $\rho
  \in \rowVars$ such that $\rho \notin \tlv(\rectorow{t})$.
  Then, $t \wedge \rec{\Labels{\setminus}\rdef(\rho)}{\rho} \leq \Empty
  \iff t \leq \Empty$.
\end{lemma}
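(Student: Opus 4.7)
}

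The ($\Leftarrow$) direction is immediate: since $t \wedge \rec{\Labels{\setminus}\rdef(\rho)}{\rho} \leq t$, emptiness is preserved by intersection. I focus on the nontrivial direction ($\Rightarrow$), which I prove by contraposition: assuming $t \not\leq \Empty$, I will exhibit an element of $\TypeInter{t \wedge \rec{\Labels{\setminus}\rdef(\rho)}{\rho}}$.

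The key observation is that the type $\rec{\Labels{\setminus}\rdef(\rho)}{\rho}$ imposes only one nontrivial constraint on an element $\domrec{\drow}^V \in \Domain$: by \cref{def:indint}, the row-interpretation clause requires that $\rho \in \Tag(\drow)$; all fields are typed by $\Any \lor \Undef$ which hold unconditionally. So my strategy is to take any witness $d_0 = \domrec{\drow_0}^{V_0}$ of $\TypeInter{t}$ and build $d_1 = \domrec{\drow_1}^{V_0}$, where $\drow_1$ is obtained from $\drow_0 = \domrow{(\ell = \dundef_\ell)_{\ell \in L_1}}{\emptyset}^{V'}$ by replacing the inner tag set $V'$ with $V' \cup \{\rho\}$. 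By construction, $d_1 \in \TypeInter{\rec{\Labels{\setminus}\rdef(\rho)}{\rho}}$.

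It remains to verify that $d_1$ is still in $\TypeInter{t}$. Since $t$ is a conjunction of positive and negative record atoms, I check the predicate $(d_1 : \R)$, resp. $(d_1 : \neg\R)$, for each such atom. By \cref{def:indint}, $(d_1 : \R)$ reduces to $(\drow_1 : \rectorow{\R})$, which examines the fields $(\dundef_\ell)_{\ell \in L_1}$ (unchanged) and, if $\tail(\R) = \rho'$ is a row variable, checks $\rho' \in V' \cup \{\rho\}$. The hypothesis $\rho \notin \tlv(\rectorow{t})$ is exactly the statement that no top-level atom of $t$ has tail equal to $\rho$; hence $\rho' \neq \rho$ for every positive and every negative record atom in $t$, and consequently the membership test $\rho' \in V' \cup \{\rho\}$ has the same truth value as $\rho' \in V'$. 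Thus all positive atom memberships and all negative atom non-memberships are preserved from $d_0$ to $d_1$, and $d_1 \in \TypeInter{t} \cap \TypeInter{\rec{\Labels{\setminus}\rdef(\rho)}{\rho}}$, as desired.

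The only delicate point to double-check is the definition of $\tlv$ (\cref{def:tlv}): it recurses through $\vee$ and $\neg$ without descending into type constructors, so $\tlv(\rectorow{t})$ collects exactly the row variables appearing as the tail of some top-level positive or negative record atom of $t$. This confirms that the hypothesis $\rho \notin \tlv(\rectorow{t})$ captures precisely the top-level tail check, which is the only place in the interpretation where the inner tag set $V'$ is consulted relative to $\rho$.
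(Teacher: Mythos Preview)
Your proof is correct and follows essentially the same approach as the paper's. The paper's proof is a terse one-liner observing that, since $\rho \notin \tlv(\rectorow{t})$, the elements of $\TypeInter{t \wedge \rec{\Labels{\setminus}\rdef(\rho)}{\rho}}$ are obtained from those of $\TypeInter{t}$ by adding $\rho$ to the inner tag set $\Tag(\drow)$; your argument spells out this same bijection-of-witnesses idea explicitly, including the verification that membership in each positive and negative atom is preserved because the only tag-dependent clause in the row interpretation checks a tail variable distinct from $\rho$.
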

\begin{proof}
  Because $\rho \notin \tlv(\rectorow{t})$,
  the set of elements $\domrec{\drow} \in \TypeInter{t ..}$ is exactly the
  elements of $\TypeInter{t}$, where $\rho$ is added from $\Tag(\drow)$.
\end{proof}

\begin{lemma}
  \label{l:correct_sel}%
  Let $t \leq \orec{\ell = \Any}$.
  Then, for all $t_\ell$, $t \leq \orec{\ell = t_\ell} \iff t.\ell \leq t_\ell$.
  In particular, $t.\ell \leq \Any$ and $t \leq \orec{\ell = t.\ell}$.
\end{lemma}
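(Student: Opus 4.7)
The plan is to prove both directions by reducing $t$ to its disjunctive normal form and then invoking Lemma~\ref{l:subtyping}. Since $t \leq \orec{\ell = \Any} \leq \orec{}$, we have $t \simeq t \wedge \orec{}$, so $t.\ell$ can be read off directly from the DNF of $t$. Writing $t \simeq \bigvee_{i \in I} t_i$ where $t_i = \bigwedge_{\R \in P_i} \R \wedge \bigwedge_{\R \in N_i} \neg\R \wedge V_i$ and $V_i$ collects the (negated) top-level type variable atoms, both sides of the equivalence distribute over the union (the left side by the semantics of union, the right side by the defining clause $(\bigvee_i t_i).\ell = \bigvee_i t_i.\ell$), so it suffices to prove the equivalence for each $t_i$. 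Moreover, by a tag-adding argument analogous to Lemma~\ref{l:remove_tlrv}, the atoms in $V_i$ can be discarded, since adding or removing a type variable from $\Tag(d)$ does not affect membership in a record type. Hence I may assume $t_i = \bigwedge_{\R \in P_i} \R \wedge \bigwedge_{\R \in N_i} \neg\R$.

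For the $(\Leftarrow)$ direction, I would first establish the intermediate claim $t_i \leq \orec{\ell = t_i.\ell}$ semantically. Any record value $v$ in $t_i$ satisfies $v(\ell) \in \R(\ell)$ for every $\R \in P_i$, and for every $\R \in N_i$ the mismatch $v \notin \R$ happens at some label or on the tail. Letting $N' \subseteq N_i$ collect precisely those $\R$ for which $v(\ell) \notin \R(\ell)$, the value $v(\ell)$ belongs to the corresponding summand $\bigwedge_{\R \in P_i} \R(\ell) \wedge \bigwedge_{\R \in N'} \neg\R(\ell)$ in the definition of $t_i.\ell$. The monotonicity of the open-record constructor then yields $(\Leftarrow)$ from $t_i.\ell \leq t_\ell$.

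The main obstacle is $(\Rightarrow)$, for which I would apply Lemma~\ref{l:subtyping} carefully. Rewrite the hypothesis as $\bigwedge_{\R \in P_i} \R \leq \bigvee_{\R \in N_i \cup \{\R^\star\}} \R$, with $\R^\star = \orec{\ell = t_\ell}$. For each $N' \subseteq N_i$ I would instantiate the universal quantifier of Lemma~\ref{l:subtyping} with the map $\iota_{N'}$ that sends $\R^\star$ and each $\R \in N'$ to $\ell$ and each $\R \in N_i \setminus N'$ to $\_$. A case analysis on the three conditions of the lemma shows that either condition~(1) at $\ell$ directly yields $\bigwedge_{\R \in P_i} \R(\ell) \wedge \bigwedge_{\R \in N'} \neg\R(\ell) \leq t_\ell$, which is what we need, or one of the remaining cases (condition~(1) at a label $\ell'' \neq \ell$, or conditions~(2) or (3)) forces $t_i \simeq \Empty$ or makes the $N'$-summand of $t_i.\ell$ semantically empty, and hence trivially contained in $t_\ell$. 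Taking the disjunction over $N' \subseteq N_i$ yields $t_i.\ell \leq t_\ell$. The \emph{in particular} clause follows immediately from the main equivalence: $t.\ell \leq \Any$ is $(\Rightarrow)$ applied to $t_\ell = \Any$ together with the hypothesis $t \leq \orec{\ell = \Any}$, while $t \leq \orec{\ell = t.\ell}$ is $(\Leftarrow)$ applied to $t_\ell = t.\ell$.
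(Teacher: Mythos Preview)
Your reduction to a single DNF summand and the semantic argument for $(\Leftarrow)$ are fine, and they match the spirit of the paper's proof. The problem is your argument for $(\Rightarrow)$.

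The specific choice $\iota_{N'}$ (sending $\R^\star$ and every $\R\in N'$ to $\ell$, and every $\R\in N_i\setminus N'$ to~$\_$) is not enough. You claim that when condition~\eqref{eq:subtyping_nvars} or~\eqref{eq:subtyping_vars} of Lemma~\ref{l:subtyping} fires for this $\iota_{N'}$, it forces either $t_i\simeq\Empty$ or emptiness of the $N'$-summand; but this is false. Take $P_i=\{\orec{\ell=\Int}\}$, $N_i=\{\orec{m=\Int}\}$ (with $m\neq\ell$), and $N'=\emptyset$. Then $\iota_\emptyset$ sends $\orec{m=\Int}$ to~$\_$, and condition~\eqref{eq:subtyping_nvars} is satisfied because both records are open: $\Def(\orec{\ell=\Int})=\Any\lor\Undef\leq\Any\lor\Undef=\Def(\orec{m=\Int})$. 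Yet neither $t_i=\orec{\ell=\Int}\wedge\neg\orec{m=\Int}$ nor the $\emptyset$-summand $\bigwedge_{\R\in P_i}\R(\ell)=\Int$ is empty. So from this single $\iota_\emptyset$ you extract no information about $t_\ell$, and your case analysis stalls. The underlying issue is that conditions~\eqref{eq:subtyping_nvars} and~\eqref{eq:subtyping_vars} concern the \emph{tails} of records in $N_i\setminus N'$, which is orthogonal to the $\ell$-field; their being satisfied says nothing about field $\ell$.

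The paper avoids this by not picking one $\iota$ per summand. Instead it invokes Lemma~\ref{l:decomposition} with $L=\{\ell\}$, which rewrites $t_i$ as a disjunction of terms of the form $\orec{\ell=\tau}\wedge S$ where $S$ genuinely does not constrain field~$\ell$ (every atom in $S$ has $\ell$-field $\Any\lor\Undef$). Once you have that factorization, the equivalence with $t.\ell\leq t_\ell$ follows by a product-style argument on each disjunct. If you want to stay with Lemma~\ref{l:subtyping}, you would effectively have to range over \emph{all} $\iota$'s and reconstruct this factorization by hand, which amounts to reproving Lemma~\ref{l:decomposition}.
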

\begin{proof}
  By \cref{l:decomposition}, $t$ is equivalent to:
  \begin{align*}
    \bigvee_{i \in I}
    \bigvee_{N' \subseteq N_i}
    \bigvee_{N'_\Vars \subseteq N' \cap N_\Vars}
    \Bigg(
    &\orec{\ell = \bigwedge_{\R \in P} \R(\ell)
    \wedge \bigwedge_{\R \in N \setminus N'} \neg\R(\ell)}\\
    &\wedge \bigwedge_{\R \in P}
    \rec{\ell = \Any \lor \Undef}{\delrec \R \ell}
    \wedge \bigwedge_{\R \in N' \setminus N'_\Vars}
    \neg\rec{\ell = \Any \lor \Undef}{\delrec \R \ell}\\
    &\wedge \bigwedge_{\R \in P_\Vars} \rec{\fin(\R)}{\tail(\R)}
    \wedge \bigwedge_{\R \in N'_\Vars} \neg\rec{\fin(\R)}{\tail(\R)}
  \Bigg)
  \end{align*}
  where $P_\Vars = \{\R \in P \mid \tail(\R) = \rho \textand \ell \in
  \rdef(\rho)\}$, similarly for $N_\Vars$,
  and because for any atom $\R$, $\cutrow{\rectorow{\R}}{\{\ell\}} = \delrec \R \ell$.
  Then, for any $t_\ell$ it is clear that $t \leq \orec{\ell = t_\ell}$ is
  equivalent to $t.\ell \leq t_\ell$.
\end{proof}



\begin{corollary}
  Let $t \leq \orec{\ell = \Any}$ and $[\sigma_i]_{i \in I}$ be a set of substitutions.
  Then $(\bigwedge_{i \in I} t\sigma_i).\ell \leq \bigwedge_{i \in I} t.\ell\sigma$.
\end{corollary}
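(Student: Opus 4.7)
The plan is to derive this as a direct consequence of the characterization of $t.\ell$ given in Lemma~\ref{l:correct_sel}, namely the equivalence $t \leq \orec{\ell = t_\ell} \iff t.\ell \leq t_\ell$ (for any $t \leq \orec{\ell = \Any}$), together with the fact that substitution preserves subtyping (Proposition~\ref{subpreservesub}).

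First, I would instantiate Lemma~\ref{l:correct_sel} with $t_\ell = t.\ell$ to obtain the ``unit'' inequality $t \leq \orec{\ell = t.\ell}$. Applying each substitution $\sigma_i$ to both sides and using that substitution commutes with the record type constructor (equation~\eqref{eq:subrec}), this yields $t\sigma_i \leq \orec{\ell = t.\ell\sigma_i}$ for every $i \in I$. Taking the intersection over $i$ on both sides gives
\[
\bigwedge_{i \in I} t\sigma_i \;\leq\; \bigwedge_{i \in I} \orec{\ell = t.\ell\sigma_i} \;=\; \orec{\ell = \textstyle\bigwedge_{i \in I} t.\ell\sigma_i},
\]
where the last equality is the standard distributivity $\bigwedge_i \orec{\ell = s_i} \simeq \orec{\ell = \bigwedge_i s_i}$ (easily checked semantically from \cref{def:indint}, since open record types with a single field intersect component-wise on that field).

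Next, I would observe that the hypothesis $t \leq \orec{\ell = \Any}$ lifts to the intersection: each $t\sigma_i \leq \orec{\ell = \Any}\sigma_i = \orec{\ell = \Any}$ by Proposition~\ref{subpreservesub}, hence $\bigwedge_{i \in I} t\sigma_i \leq \orec{\ell = \Any}$ as well. This ensures that $(\bigwedge_{i \in I} t\sigma_i).\ell$ is well-defined in the sense of Definition~\ref{def:typesel}, so Lemma~\ref{l:correct_sel} applies to it. Instantiating that lemma with the type $\bigwedge_{i \in I} t\sigma_i$ and the field type $\bigwedge_{i \in I} t.\ell\sigma_i$, the displayed inclusion above is exactly the left-hand side of the equivalence, and its right-hand side is the desired conclusion
\[
\Bigl(\bigwedge_{i \in I} t\sigma_i\Bigr).\ell \;\leq\; \bigwedge_{i \in I} t.\ell\sigma_i.
\]

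There is no real obstacle here; the whole argument is a corollary of Lemma~\ref{l:correct_sel} (which already does the semantic work of showing $t.\ell$ behaves as the minimum with respect to the enclosing $\orec{\ell = \cdot}$) combined with the algebraic identity for intersections of single-field open records. The only point meriting an explicit check is that the substitutions may introduce row or field variables inside $t.\ell$, but this is harmless: substitution acts homomorphically both on $\orec{\ell = \cdot}$ and on arbitrary Boolean combinations of field-types, so the rewriting $\orec{\ell = t.\ell}\sigma_i = \orec{\ell = t.\ell\sigma_i}$ is unconditional, and Proposition~\ref{subpreservesub} preserves the ambient subtyping used in both applications of Lemma~\ref{l:correct_sel}.
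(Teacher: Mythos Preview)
Your proof is correct and is precisely the argument the paper intends: the statement is given without proof as an immediate corollary of Lemma~\ref{l:correct_sel}, and your derivation---using the ``unit'' inclusion $t \leq \orec{\ell = t.\ell}$, preservation of subtyping by substitution, the identity $\bigwedge_i \orec{\ell = s_i} \simeq \orec{\ell = \bigwedge_i s_i}$, and then the converse direction of Lemma~\ref{l:correct_sel}---is exactly the expected unpacking.
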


\subsubsection{Taming non-structural rules}

To prove soundness and completeness of the algorithmic type system, we go
through an intermediate type system, where the intersection rule is n-ary, and
the introduction of a term variable can perform a renaming of polymorphic variables.
The introduction of a renaming aims at eliminating trivial instantiations (only
renamings) in the uses of \rulename{Inst}.
More details are given by
\ifbibtex\cite[Section I.1]{castagna24b}.\else\textcite[Section I.1]{castagna24b}.\fi
\[
  \inferrule*[left=\rulename{Var},
  right={$\begin{array}{l}
      x \in \dom(\Gamma)\\[-.7mm]
      \dom(\sigma) \cap \Delta = \emptyset\\[-.7mm] \text{and $\sigma$ is a renaming}
  \end{array}$}]
  { }{\intseq \Delta \Gamma x {\Gamma(x)\sigma}}
  \quad  \inferrule*[left=\rulename{Inter},
  right={$|I| > 0$}]
  {(\intseq \Delta \Gamma e {t_i})_{i \in I}}
  {\intseq \Delta \Gamma e {\bigwedge_{i \in I} t_i}}
\]
Rules different from \rulename{Inst} and \rulename{Var} are the same as in the
declarative type system.

\begin{lemma}
  \label{l:equiv_intermediate}%
  $\intseq \Delta \Gamma e t \iff \decseq \Delta \Gamma e t$.
\end{lemma}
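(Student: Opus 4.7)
The plan is to prove both implications by a routine induction on the typing derivation. The two systems differ only in the shape of the \rulename{Var} and \rulename{Inter} rules, and all remaining rules (including \rulename{Inst}, \rulename{Sub}, \rulename{Abs}, \rulename{App}, \rulename{Ext}, \rulename{Sel}, \rulename{Del}, \rulename{Emp}, \rulename{Const}) coincide, so in the inductive step those cases transport verbatim; only the two differing rules require actual work.

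For the forward direction $\intseq \Delta \Gamma e t \Rightarrow \decseq \Delta \Gamma e t$, I would induct on the intermediate derivation. The case of the new \rulename{Var}, which yields $\Gamma(x)\sigma$ for some renaming $\sigma$ with $\dom(\sigma)\cap\Delta = \emptyset$, is simulated by first applying the declarative \rulename{Var} to obtain $\decseq \Delta \Gamma x {\Gamma(x)}$ and then immediately applying \rulename{Inst} with the very same $\sigma$: this is legal because a renaming is in particular a type substitution, and its domain is disjoint from $\Delta$ by assumption. The case of the n-ary \rulename{Inter} with $|I|=1$ is trivial (the premise \emph{is} the conclusion); for $|I|\geq 2$, apply the induction hypothesis to each of the $|I|$ premises and combine the resulting declarative derivations with $|I|-1$ iterated applications of the binary \rulename{Inter}.

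For the backward direction $\decseq \Delta \Gamma e t \Rightarrow \intseq \Delta \Gamma e t$, I would again induct on the derivation. The declarative \rulename{Var} producing $\Gamma(x)$ is obtained as the intermediate \rulename{Var} applied with the identity renaming, which trivially satisfies the side condition $\dom(\sigma)\cap\Delta = \emptyset$ (its domain is empty). The binary \rulename{Inter} is the special case $|I|=2$ of the intermediate n-ary \rulename{Inter}. All other rules, \rulename{Inst} included, are shared between the two systems and apply directly to the inductively obtained premises.

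There is no real obstacle here: the lemma is essentially bookkeeping, justifying the introduction of the intermediate presentation as a convenient stepping stone toward the algorithmic system. The only point requiring a brief comment is that both the identity substitution on the declarative side and arbitrary renamings on the intermediate side qualify under the $\dom(\sigma)\cap\Delta = \emptyset$ side condition, so no constraint on $\Delta$ is ever violated during the translation.
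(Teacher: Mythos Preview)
Your proof is correct and follows essentially the same approach as the paper's: both directions are handled by induction on the derivation, simulating the intermediate \rulename{Var} by declarative \rulename{Var} followed by \rulename{Inst}, and the $n$-ary \rulename{Inter} by $n{-}1$ applications of the binary one (and conversely using the identity renaming and the $|I|=2$ instance). Your write-up is in fact more explicit than the paper's sketch, which states the same idea in two sentences (and, incidentally, appears to have the two directions labelled the wrong way around in its prose).
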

\begin{proof}
  The left-to-right direction is straightforward since the rules in the
  intermediate system generalize the ones of the declarative system.
  The right-to-left direction is obtained by replacing instances of
  \rulename{Var} by instances of \rulename{Var} followed by \rulename{Inst} in
  the declarative system, and by replacing occurences of n-ary intersections by
  $n-1$ \rulename{Inter} nodes.
\end{proof}

Next, we want to restrict derivations in the intermediate system to a canonical
form, where the apparition of \rulename{Inst} and \rulename{Sub} nodes is controlled.
For convenience, we introduce the following rule macro:\footnote{%
In the appendix, we also use the metavariable $s$ for types.}
\[
  \inferrule*[left=\rulename{$\leqsub{}$},right={$s \leqsub\Delta t$}]
  {\intseq \Delta \Gamma e s}
  {\intseq \Delta \Gamma e t}
\]
which is a stands for \rulename{Sub} when $s \leq t$ and otherwise for:
\begin{mathpar}
  \inferrule*[Left=\rulename{Sub}]{
    \inferrule*[Left=\rulename{Inter}]{
      \inferrule*[Left=\rulename{Inst},right={$\forall i \in I$}]{
      \intseq \Delta \Gamma e s}{\intseq \Delta \Gamma e {s\sigma_i}}
    }{\intseq \Delta \Gamma e {\bigwedge_{i \in I} s\sigma_i}}
  }{\intseq \Delta \Gamma e t}
\end{mathpar}

\begin{definition}[Canonical derivation]
  A derivation is \emph{canonical} if
  every \rulename{Inst} node it contains is part of a \rulename{$\leqsub{}$}
  pattern and
  every \rulename{$\leqsub{}$} and \rulename{Sub} nodes are either:
  \begin{itemize}
    \item The premise of an \rulename{Del} or \rulename{Sel} node, or
    \item The first premise of a \rulename{Ext} node, or
    \item One of the premises of an \rulename{Abs} or \rulename{App} node.
  \end{itemize}
\end{definition}

\begin{lemma}
  \label{l:renaming}%
  A derivation of $\intseq \Delta \Gamma e t$ can be transformed into a
  derivation $\intseq \Delta \Gamma e {t\sigma}$, for any renaming $\sigma$ such
  that $\dom(\sigma) \cap \Delta = \emptyset$, without changing the structure of
  the derivation.
\end{lemma}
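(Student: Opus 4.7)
The plan is to proceed by induction on the structure of the derivation of $\intseq \Delta \Gamma e t$, performing a case analysis on the last rule applied. The goal at each step is to produce a derivation of $\intseq \Delta \Gamma e {t\sigma}$ using the same rule at the root. For the terminal rules \rulename{Const} and \rulename{Emp}, the type $t$ is a basic type or $\crec{}$ and therefore contains no variables, so $t\sigma = t$ and the same derivation applies. For \rulename{Var}, the conclusion has the form $\Gamma(x)\sigma_0$ for some renaming $\sigma_0$ with $\dom(\sigma_0) \cap \Delta = \emptyset$; I would replace it by the composition $\sigma_0\sigma$, which is again a renaming with domain contained in $\dom(\sigma_0) \cup \dom(\sigma)$, hence disjoint from $\Delta$.

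For the structural rules \rulename{Inter}, \rulename{Sub}, \rulename{App}, \rulename{Sel}, \rulename{Ext}, \rulename{Del}, I would apply the induction hypothesis to each premise. In the cases of \rulename{Sub} and of premises whose types transform under $\sigma$, the key invariant needed is that subtyping is preserved by substitution (proved by \textsf{subpreservesub}), which ensures that the side conditions of these rules remain valid after renaming. For \rulename{Inst}, whose conclusion has the form $t\sigma'$ with $\dom(\sigma') \cap \Delta = \emptyset$, I would replace $\sigma'$ by $\sigma'\sigma$; since $\sigma$ is a renaming disjoint from $\Delta$, the side condition $\dom(\sigma'\sigma) \cap \Delta = \emptyset$ still holds.

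The main obstacle is the \rulename{Abs} case, because the variables $\Delta' = \vars(\bigwedge_{i \in I} t_i \to t_i')$ occurring in the annotation are added to the monomorphic set of the premises. If $\sigma$ moves some variable $\alpha \in \Delta'$ to some $\beta$, then the premises of the renamed derivation must use the monomorphic set $\Delta \cup \sigma(\Delta')$ rather than $\Delta \cup \Delta'$, so $\sigma$ no longer has its domain disjoint from the monomorphic set when pushed into the premises. To handle this, I would decompose $\sigma = \sigma_1 \cup \sigma_2$ where $\dom(\sigma_1) \cap \Delta' = \emptyset$ and $\dom(\sigma_2) \subseteq \Delta'$, apply $\sigma_1$ by the induction hypothesis (since its domain avoids $\Delta \cup \Delta'$), and then handle $\sigma_2$ as an alpha-conversion of the annotation: since $\sigma$ is a renaming, $\sigma_2$ simply renames bound variables of the annotation to fresh ones (possibly after a preliminary alpha-conversion to avoid clashes with the range of $\sigma_2$), which does not change the structure of the subderivations.

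Thus I would likely need a mildly strengthened induction hypothesis that allows renaming a part of the monomorphic context as well, formulated so that the \rulename{Var} and \rulename{Inst} side conditions behave well under composition. With this strengthening, the \rulename{Abs} case becomes routine, and the remaining cases are straightforward propagations that preserve the shape of the derivation tree.
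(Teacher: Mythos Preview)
Your proposal is essentially correct and follows the standard induction argument that one would expect here; the paper itself does not spell out a proof but simply defers to \cite[Lemma I.14]{castagna24b}, so there is no detailed comparison to make.

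Two small remarks. First, for \rulename{Ext} and \rulename{Del} you invoke only preservation of subtyping, but the conclusion of these rules involves the operator $\delrec{t}{\ell}$; you also need that $\delrec{(t\sigma)}{\ell} = (\delrec{t}{\ell})\sigma$ for a renaming $\sigma$. This holds because a renaming preserves DNF shape and maps row variables to row variables, so the case analysis defining $\delrec{\cdot}{\ell}$ is stable, but it is worth stating. Second, your handling of \rulename{Abs} via alpha-conversion of the annotation is exactly the right idea, and it is consistent with the paper's remark that it uses ``a simplified version where we do not track the relabeling of type annotations''; the phrase ``without changing the structure'' should be read modulo this relabeling.
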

\begin{proof}
  As in \cite[Lemma I.14]{castagna24b}.
\end{proof}

\begin{lemma}
  \label{l:leqsubcommute}%
  Let $I$, $\Delta$ and $t_i' \leqsub\Delta t_i$ for all $i \in I$. Then,
  $\bigwedge_{i \in I} t_i' \leqsub\Delta \bigwedge_{i \in I} t_i$.
\end{lemma}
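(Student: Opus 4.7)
The plan is to prove this by a direct construction: combine the witness substitution sets for each $i \in I$ into a single witness set via disjoint union, and verify the resulting intersection satisfies the required subtyping.

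Unfolding the definition, the hypothesis $t_i' \leqsub{\Delta} t_i$ gives, for each $i \in I$, a finite family of substitutions $[\sigma_{i,j}]_{j \in J_i}$ such that $\dom(\sigma_{i,j}) \cap \Delta = \emptyset$ for all $j$, and $\bigwedge_{j \in J_i} t_i'\sigma_{i,j} \leq t_i$. I would form the combined family $[\tau_k]_{k \in K}$ indexed by $K = \{(i,j) \mid i \in I,\ j \in J_i\}$, setting $\tau_{(i,j)} = \sigma_{i,j}$. Each $\tau_k$ trivially has domain disjoint from $\Delta$, and $K$ is finite provided each $J_i$ and $I$ itself are finite (if needed, this finiteness is built into the meta-level definition of the witness sets used by $\leqsub{}$; the degenerate case $I = \emptyset$ is handled by taking $K$ to be a singleton containing the identity substitution, which trivially realizes $\top \leqsub{\Delta} \top$).

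The core calculation is then to show
\[
  \bigwedge_{k \in K}\Bigl(\bigwedge_{i' \in I} t_{i'}'\Bigr)\tau_k \;\leq\; \bigwedge_{i \in I} t_i.
\]
Using the fact that substitution distributes over intersection, the left-hand side equals $\bigwedge_{(i'',j) \in K}\bigwedge_{i' \in I} t_{i'}'\sigma_{i'',j}$. For each fixed $i \in I$, this (larger) intersection is bounded above by the sub-intersection restricted to $i'' = i$ and $i' = i$, namely $\bigwedge_{j \in J_i} t_i'\sigma_{i,j}$, which by hypothesis is $\leq t_i$. Since this holds for every $i$, the full intersection on the right is also bounded. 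Hence $[\tau_k]_{k \in K} \Vdash \bigwedge_{i \in I} t_i' \leqsub{\Delta} \bigwedge_{i \in I} t_i$, as required.

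There is no real obstacle here: the argument relies only on the distributivity of substitution over intersection (immediate from the definition of substitution application given in \cref{sec:rowsub}) and on the monotonicity observation that a larger intersection of types is a subtype of any sub-intersection. No induction on the type structure, no appeal to convexity, and no row-specific reasoning is needed — the lemma is purely about the algebra of the witness families in \cref{def:c1}.
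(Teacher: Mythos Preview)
Your proof is correct and follows the natural direct construction. The paper itself does not give a proof here but defers to \cite[Proposition~I.15]{castagna24b}; your argument---forming the disjoint union of the witness families and using that a larger intersection is a subtype of any sub-intersection---is exactly the standard way to establish this, and almost certainly what the cited reference does as well.
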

\begin{proof}
  As in \cite[Proposition I.15]{castagna24b}.
\end{proof}

\begin{lemma}
  \label{l:extrec_leqsub}%
  Let $s' \leqsub\Delta s$, $\ell$ and $\rw$ of definition space $\Labels \setminus \ell$
  such that $\vars(s') \cap \vars(\rw) \subseteq \Delta$.
  Then, $\rec{\ell = s'}\rw \leqsub\Delta \rec{\ell = s}\rw$.
\end{lemma}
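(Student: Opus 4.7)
The plan is to unfold the definition of $\leqsub\Delta$ and then reuse the witnessing substitutions, adjusted so that they leave the shared tail $\rw$ untouched. By \cref{def:c1}, the hypothesis $s' \leqsub\Delta s$ gives us a finite family of substitutions $[\sigma_i]_{i \in I}$ with $\dom(\sigma_i) \cap \Delta = \emptyset$ and $\bigwedge_{i \in I} s'\sigma_i \leq s$. I will construct, for each $i$, a new substitution $\sigma_i'$ that agrees with $\sigma_i$ on $\vars(s')$ but is the identity on $\vars(\rw)$, and show that the family $[\sigma_i']_{i \in I}$ witnesses $\rec{\ell = s'}\rw \leqsub\Delta \rec{\ell = s}\rw$.

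Concretely, let $\sigma_i'$ be defined by $\sigma_i'(\vterm) = \sigma_i(\vterm)$ if $\vterm \notin \vars(\rw)$ and $\sigma_i'(\vterm) = \vterm$ (or, for row variables, the trivial row) otherwise. Clearly $\dom(\sigma_i') \subseteq \dom(\sigma_i) \setminus \vars(\rw)$, so $\dom(\sigma_i') \cap \Delta = \emptyset$ still holds. The key observation is that $s'\sigma_i' = s'\sigma_i$: since $\sigma_i$ and $\sigma_i'$ only differ on $\vars(\rw)$, they agree on $\vars(s') \setminus \vars(\rw)$; and on $\vars(s') \cap \vars(\rw) \subseteq \Delta$ both are the identity because $\dom(\sigma_i), \dom(\sigma_i') \cap \Delta = \emptyset$. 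Similarly, $\rw\sigma_i' = \rw$ since $\sigma_i'$ is the identity on $\vars(\rw)$. Using the definition of substitution on records (cf.\ \cref{eq:subrec} together with the expansion $\rec{\ell = s'}\rw = \orec{\ell = s'} \wedge \rec{\ell}\rw$), this yields $(\rec{\ell = s'}\rw)\sigma_i' = \rec{\ell = s'\sigma_i}\rw$.

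It then suffices to show that $\bigwedge_{i \in I} \rec{\ell = s'\sigma_i}\rw \leq \rec{\ell = s}\rw$. Since all these record types share the same tail $\rw$, the semantic interpretation gives $\bigwedge_{i \in I} \rec{\ell = s'\sigma_i}\rw \simeq \rec{\ell = \bigwedge_{i \in I} s'\sigma_i}\rw$: componentwise on $\ell$ we take the intersection of the field-types, while on every other label both sides are determined by $\rw$. Finally, by monotonicity of the record constructor in its field component (an immediate consequence of the interpretation in \cref{def:indint}), the assumption $\bigwedge_{i \in I} s'\sigma_i \leq s$ yields $\rec{\ell = \bigwedge_{i \in I} s'\sigma_i}\rw \leq \rec{\ell = s}\rw$, as required.

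The only delicate point is the equality $s'\sigma_i' = s'\sigma_i$, which is exactly where the side condition $\vars(s') \cap \vars(\rw) \subseteq \Delta$ is used. Everything else is a routine manipulation of substitutions and a direct appeal to the semantic interpretation of records to commute intersection with the single-field record constructor of fixed tail.
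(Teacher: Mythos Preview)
Your proposal is correct and follows essentially the same approach as the paper: restrict the witnessing substitutions so they act as the identity on $\vars(\rw)$ (the paper does this by assuming $\dom(\sigma_i) \subseteq \vars(s')$, you do it by explicitly excising $\vars(\rw)$), then use the side condition $\vars(s') \cap \vars(\rw) \subseteq \Delta$ to conclude $s'\sigma_i' = s'\sigma_i$ and $\rw\sigma_i' = \rw$, and finish by commuting the intersection inside the record constructor. Your write-up is just a more detailed unfolding of the paper's two-line argument.
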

\begin{proof}
  Let $\{\sigma_i\}_{i \in I}$ such that $\bigwedge_{i \in I} s'\sigma_i \leq s$,
  with $\dom(\sigma) \subseteq \vars(s')$.
  We have $\bigwedge_{i \in I} \rec{\ell = s'}{\rw} \sigma_i
  \simeq \rec{\ell = \bigwedge_{i \in I} s'\sigma_i}{\rw}
  \leq \rec{\ell = s}{\rw}$.
\end{proof}

\begin{lemma}
  \label{l:canonical}%
  Any derivation of $\intseq \Delta \Gamma e t$ can be transformed into a
  canonical derivation of $\intseq \Delta \Gamma e {t'}$, where
  $t' \leqsub\Delta t$.
\end{lemma}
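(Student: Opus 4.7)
The plan is to proceed by induction on the derivation of $\intseq \Delta \Gamma e t$ with a case analysis on the last rule used. The strategy is to push every \rulename{Sub} and \rulename{Inst} node downward through the derivation, either absorbing them into the $\leqsub\Delta$ relation satisfied by the conclusion type (so they simply disappear) or repackaging them into a \rulename{$\leqsub{}$} macro at one of the three allowed structural positions.

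For atomic base cases (\rulename{Const}, \rulename{Emp}, \rulename{Var}), the derivation is already canonical, with $t' = t$ and $t \leqsub\Delta t$ trivially. For \rulename{Sub} with premise type $s$ and side condition $s \leq t$, the induction hypothesis yields a canonical derivation of type $s'$ with $s' \leqsub\Delta s$; since $s \leq t$ entails $s \leqsub\Delta t$ (take the identity substitution) and $\leqsub\Delta$ is transitive (by composing substitution sequences), we get $s' \leqsub\Delta t$, so the subderivation itself is canonical. The \rulename{Inst} case is analogous: when the conclusion is $s\sigma$ with $\dom(\sigma) \cap \Delta = \emptyset$, the singleton sequence $[\sigma]$ witnesses $s \leqsub\Delta s\sigma$, and composing with the IH gives $s' \leqsub\Delta s\sigma$. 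For \rulename{Inter} with premises $(\intseq \Delta \Gamma e {t_i})_{i \in I}$, applying the IH to each premise yields canonical subderivations of type $t_i' \leqsub\Delta t_i$; reassembling via a single n-ary \rulename{Inter} produces a canonical derivation of $\bigwedge_{i \in I} t_i'$, and \cref{l:leqsubcommute} delivers $\bigwedge_i t_i' \leqsub\Delta \bigwedge_i t_i = t$.

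For the remaining structural rules (\rulename{Abs}, \rulename{App}, \rulename{Del}, \rulename{Sel}, \rulename{Ext}), we apply the IH to each subderivation and then wrap each canonical subderivation in a \rulename{$\leqsub{}$} macro---placed at exactly one of the positions permitted by the definition of canonical---to recover a premise of the type demanded by the original rule. The conclusion of the reassembled derivation is $t$ itself (so we take $t' = t$), with the exception of \rulename{Ext}, for which the canonical form forbids a \rulename{$\leqsub{}$} node on the second premise. There, we keep the IH-produced type $t_2'$ of the second premise and conclude $\rec{\ell = t_2'}{\delrec{t_1}\ell}$ rather than the original $\rec{\ell = t_2}{\delrec{t_1}\ell}$, then invoke \cref{l:extrec_leqsub} (after a preliminary use of \cref{l:renaming} to secure its variable-disjointness hypothesis) to derive the required $\leqsub\Delta$ relation. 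This asymmetry between the two premises of \rulename{Ext}, together with the bookkeeping of variable freshness when composing substitution sequences, is the main technical obstacle; the other cases reduce cleanly to transitivity of $\leqsub\Delta$ and \cref{l:leqsubcommute}.
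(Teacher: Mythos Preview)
Your proposal is correct and follows essentially the same approach as the paper's proof: induction on the derivation, stripping \rulename{Sub}/\rulename{Inst} via transitivity of $\leqsub\Delta$, handling \rulename{Inter} via \cref{l:leqsubcommute}, and for the structural rules wrapping the IH-produced subderivations in \rulename{$\leqsub{}$} at the permitted positions, with the \rulename{Ext} second-premise asymmetry resolved by \cref{l:renaming} and \cref{l:extrec_leqsub} exactly as you describe. Your write-up is in fact somewhat more explicit than the paper's (which handles \rulename{Sub}/\rulename{Inst} with the terse ``remove the root and let its premise be the new one'').
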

\begin{proof}
  By induction on the size of the derivation and through a case analysis on the
  root of the derivation tree used.
  \begin{description}
    \item[\rulename{Inst} or \rulename{Sub}] We remove the root and let its
      premise be the new one.
    \item[\rulename{Inter}] Let $t = \bigwedge_{i \in I} s_i$.
      By induction hypothesis, for all $i \in I$ we have derivations
      $\intseq \Delta \Gamma e {s_i'}$ with $s_i' \leqsub\Delta s_i$.
      By rule \rulename{Inter}, we have a derivation of
      $t' = \bigwedge_{i \in I} s_i'$, and $t' \leqsub\Delta t$ is verified by
      \cref{l:leqsubcommute}.
    \item[\rulename{Const}, \rulename{Var}, \rulename{Emp}] Alredy canonical.
    \item[\rulename{Ext}] Let $e = \extrecord{e_1}{\ell}{e_2}$ and
      $t = \rec{\ell = t_2}{\delrec{t_1}{\ell}}$.
      By induction hypothesis, we have canonical derivations
      $\intseq \Delta \Gamma {e_1}{t_1'}$ and $\intseq \Delta \Gamma
      {e_2}{t_2'}$ with $t_1' \leqsub\Delta t_1$ and $t_2' \leqsub\Delta t_2$.
      By rule~\rulename{$\leqsub{}$}, we derive $\intseq \Delta \Gamma
      {e_1}{t_1}$.
      Let $t' = \rec{\ell = t_2'}{\ell}{t_1}$.
      Rule~\rulename{Ext} node gives a canonical derivation of
      $\intseq \Delta \Gamma e t'$.
      We can suppose that the variables in $t_2'$ and $t_1$ are disjoint
      (otherwise we use \cref{l:renaming}), and conclude $t' \leqsub\Delta t$ by
      \cref{l:extrec_leqsub}.
    \item[\rulename{Abs}] By induction hypothesis, for each $i \in I$ we have derivations
      $\intseq{\Delta \cup \vars(t)} {\Gamma,x:t_i} e {s_i'}$ with $s_i'
      \leqsub\Delta s_i$.
      By rule \rulename{$\leqsub{}$}, we have derivations
      $\intseq{\Delta \cup \vars(t)}{\Gamma,x:t_i} e {s_i}$ and we conclude by
      rule~\rulename{Abs}.
    \item[\rulename{App}, \rulename{Sel}, \rulename{Del}] Similar to the
      previous case.
      \qedhere
  \end{description}
\end{proof}

\subsubsection{Soundness and completeness}

\soundnessalgo*
\begin{proof}
  By induction on the algorithmic typing derivation.
  By \cref{l:equiv_intermediate}, it sufices to give a derivation
  $\intseq \Delta \Gamma e t$.
  We proceed by a case analysis on the last rule used in the derivation.
  \begin{description}
    \item[\rulename{Const}, \rulename{Var}] Straightforward.
    \item[\rulename{Abs}] By induction hypothesis, for each $i \in I$ we have
      $\intseq {\Delta \cup \Delta'} {\Gamma, x: t_i} e {s_i'}$.
      Since $s_i' \leqsub{\Delta \cup \Delta'} s_i$, we derive
      $\intseq {\Delta \cup \Delta'} {\Gamma, x: t_i} e {s_i}$ by rule
      \rulename{$\leqsub{}$}.
      We conclude by rule~\rulename{Abs} in the declarative system.
    \item[\rulename{App}] By hypothesis, we have $t \in t_1 \apply\Delta t_2$,
      so $t$ is such that there are two substitution sets with
      $\bigwedge_{j \in J} t_1\sigma_j \leq \bigwedge_{i \in I}
      t_2\sigma_i \to t$.
      By induction hypothesis and \rulename{$\leqsub{}$},
      we obtain $\intseq \Delta \Gamma {e_1}{\bigwedge_{j \in J} t_1\sigma_j}$
      and $\intseq \Delta \Gamma {e_2}{\bigwedge_{i \in I}t_2\sigma_i}$.
      By~\rulename{Sub}, we have $\intseq \Delta \Gamma {e_1}{\bigwedge_{i \in
      I}t_2\sigma_i \to t}$.
      We conclude by rule~\rulename{App} in the declarative system.
    \item[\rulename{Emp}] Straightforward.
    \item[\rulename{Ext}] By induction hypothesis, we have
      $\intseq \Delta \Gamma e t$, $\intseq \Delta \Gamma {e'} {t'}$
      and sets of substitutions $[\sigma_i]_{i \in
      I}$ such that $\bigwedge_{i \in I} t\sigma_i \leq \orec{}$ and
      $\rw = \delrec{(\bigwedge_{i \in I} t\sigma_i)}\ell$.
      By rules~\rulename{Inst} and~\rulename{Inter}, we have
      $\intseq \Delta \Gamma e {\bigwedge_{i \in I} t\sigma_i}$,
      and we conclude with rule~\rulename{Del}.
    \item[\rulename{Del}] Similar to the case for~\rulename{Ext}, without the
      derivation of $t'$.
    \item[\rulename{Sel}] By induction hypothesis, we have
      $\intseq \Delta \Gamma e t$ and a set of substitutions $[\sigma_i]_{i \in
      I}$ such that $\bigwedge_{i \in I} t\sigma_i \leq \orec{\ell = \Any}$
      and $t_\ell = (\bigwedge_{i \in I} t\sigma_i).\ell$.
      By \cref{l:correct_sel}, we have $\bigwedge_{i \in I} t\sigma_i \leq
      \orec{\ell = t_\ell}$, so we conclude with rule~\rulename{$\leqsub{}$}
      and~\rulename{Sel}.
      \qedhere
  \end{description}
\end{proof}



\completenessalgo*
\begin{proof}
  By \cref{l:equiv_intermediate,l:canonical} we transform the input derivation
  into a canonical derivation $\intseq \Delta \Gamma e s$,
  where $s \leqsub\Delta t$.
  The proof is by induction on that derivation (where we use
  \rulename{$\leqsub{}$} instead of the corresponding pattern).
  In the end, we obtain $\algseq \Delta \Gamma e t'$ with $t' \leqsub\Delta s$
  and thus conclude since by transitivity of $\leqsub\Delta$, we have $t'
  \leqsub\Delta t$.
  \begin{description}
    \item[\rulename{Const}, \rulename{Var}] Straightforward.
    \item[\rulename{Abs}] By induction hypothesis, for each $i \in I$, we have
      $\algseq{\Delta \cup \Delta'}{\Gamma,x:t_i} e {s_i}$
      with $s_i' \leqsub\Delta s_i$.
      We conclude with rule~\rulename{Abs} in the algorithmic system.
    \item[\rulename{App}] By induction hypothesis, we have
      $\algseq \Delta \Gamma {e_1}{t}$
      and $\algseq \Delta \Gamma {e_2}{s}$,
      where $t \leqsub\Delta t_1 \to t_2$ and $s \leqsub\Delta t_1$.
      So $t_2 \in t \apply\Delta s$ since $(t_1 \to t_2) \mapply t_1 = t_2$.
    \item[\rulename{Emp}] Straightforward.
    \item[\rulename{Ext}] By induction hypothesis, we have
      $\algseq \Delta \Gamma e s$,
      a set of substitutions such that
      $[\sigma_i]_{i \in I} \Vdash s \leqsub\Delta t \leq \orec{\ell = \Undef}$,
      $\algseq \Delta \Gamma {e'} {s'}$ and $s' \leqsub\Delta t'$.
      Thus, we have $[\sigma_i]_{i \in I} \Vdash s \leqsub\Delta \orec{\ell = \Undef}$.
      Since $t \leq \orec{\ell = \Undef}$.
      Let $\rw = \delrec{(\bigwedge_{i \in I} t\sigma_i)}\ell$.
      By \cref{l:cutrow_correct}, we have $\rw \leq \delrec t \ell$, so that
      $\rec{\ell = s'}{\rw} \leqsub\Delta \orec{\ell = t'}{\delrec t \ell}$.
    \item[\rulename{Del}] By induction hypothesis, we have
      $\algseq \Delta \Gamma e s$ and a set of substitutions such that
      $[\sigma_i]_{i \in I} \Vdash s \leqsub\Delta t \leq \orec{}$.
      Thus, we have $[\sigma_i]_{i \in I} \Vdash s \leqsub\Delta \orec{}$.
      Let $\rw = \delrec{(\bigwedge_{i \in I} t\sigma_i)}\ell$.
      By \cref{l:cutrow_correct}, we have $\rw \leq \delrec t \ell$, so that
      $\rec{\ell = \Undef}{\rw} \leqsub\Delta \rec{\ell = \Undef}{\delrec t \ell}$.
    \item[\rulename{Sel}] By induction hypothesis, we have
      $\algseq \Delta \Gamma e s$ and a set of substitutions such that
      $[\sigma_i]_{i \in I} \Vdash s \leqsub\Delta \orec{\ell = t}$.
      Thus, we have $[\sigma_i]_{i \in I} \Vdash s \leqsub\Delta \orec{\ell =
      \Any}$.
      Let $t_\ell = (\bigwedge_{i \in I} s\sigma_i).\ell$.
      We have $t_\ell \in \sel(s)$, and by \cref{l:correct_sel}
      $t_\ell \leq t$ so $t_\ell \leqsub\Delta t$.
    \item[\rulename{$\leqsub{}$}] Straightforward by induction hypothesis and
      transitivy of $\leqsub{}$.
    \item[\rulename{Sub}] Straightforward by induction hypothesis, inclusion of
      $\leq$ in $\leqsub{}$ and transitivity of $\leqsub{}$.
    \item[\rulename{Inter}] By hypothesis, there is $I$ and derivations
      $\intseq \Delta \Gamma e {s_i}$ for all $\{s_i\}_{i \in I}$,
      with $t = \bigwedge_{i \in I} s_i$.
      Since the derivation is canonical, we know that each of these derivations
      ends with (the same kind of) structural rule.
      According to the other cases, for all $i \in I$ we have derivations
      $\intseq \Delta \Gamma e {s_i'}$, where $s_i' \leqsub\Delta s_i$.
      By \cref{l:leqsubcommute} we have $t' = \bigwedge_{i \in I} s_i'
      \leqsub\Delta \bigwedge_{i \in I} s_i = t$.
      \qedhere
  \end{description}
\end{proof}

\subsubsection{Alternative incomplete type system}
\label{sec:altrules}

\begin{figure}
  \begin{mathpar}
    \inferrule*[left=\rulename{Ext}]
    {\seq \Delta \Gamma e {t \leq \orec{\ell = \Undef}} {}
    \and \seq \Delta \Gamma {e'} {t'} {}}
    {\seq \Delta \Gamma {\extrecord{e}{\ell}{e'}}
    {\rec{\ell = t'}{\delrec t \ell}} {}}
    \and \inferrule*[left=\rulename{Del}]
    {\seq \Delta \Gamma e {t \leq \orec{}} {}}
    {\seq \Delta \Gamma {\delrecord e \ell}
    {\rec{\ell = \Undef}{\delrec t \ell}} {}}
    \and \inferrule*[left=\rulename{Sel}]
    {\seq \Delta \Gamma e t {}}
    {\seq \Delta \Gamma {e.\ell} {t.\ell} {}}
  \end{mathpar}
  The rules \rulename{Const}, \rulename{Var}, \rulename{Abs},
  \rulename{App}, \rulename{Emp} are the same as in the algorithmic system.
  \caption{Alternative algorithmic system}
  \label{fig:altrules}%
\end{figure}

\begin{lemma}
  The type system in \cref{fig:altrules} is sound with respect to the
  declarative type system.
\end{lemma}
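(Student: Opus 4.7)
The plan is to proceed by induction on the derivation of $\seq \Delta \Gamma e t {}$ in the alternative system, with a case analysis on the final rule applied. The cases Const, Var, Abs, App, and Emp are shared verbatim with the algorithmic system of Figure~\ref{fig:algrules}, so their soundness follows directly from the corresponding subproofs inside Theorem~\ref{t:soundness_algo} (via Lemma~\ref{l:equiv_intermediate} if one prefers to route through the intermediate system). Only the three new cases for record operations require any dedicated argument.

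For Ext and Del, the rules in Figure~\ref{fig:altrules} coincide \emph{exactly} with the declarative rules of Figure~\ref{fig:decrules}: the side conditions $t \leq \orec{\ell = \Undef}$ and $t \leq \orec{}$ are the same, and the synthesized output type $\rec{\ell = t'}{\delrec t \ell}$ (resp.\ $\rec{\ell = \Undef}{\delrec t \ell}$) is the same as in the declarative case. So applying the induction hypothesis to the premises yields declarative derivations of the same judgments, and a single application of the declarative \rulename{Ext} (resp.\ \rulename{Del}) closes the case.

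The only case that needs a small extra step is Sel. Here the alternative rule has premise $\seq \Delta \Gamma e t {}$ and conclusion $\seq \Delta \Gamma {e.\ell}{t.\ell}{}$, with the implicit well-formedness side condition $t \leq \orec{\ell = \Any}$ (required for $t.\ell$ to be defined according to Definition~\ref{def:typesel}). By Lemma~\ref{l:correct_sel}, this side condition entails $t \leq \orec{\ell = t.\ell}$. The induction hypothesis gives $\decseq \Delta \Gamma e t$; one application of \rulename{Sub} then yields $\decseq \Delta \Gamma e {\orec{\ell = t.\ell}}$, and the declarative \rulename{Sel} concludes with $\decseq \Delta \Gamma {e.\ell}{t.\ell}$.

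No serious obstacle is expected: the proof is a routine rule-by-rule translation, and the only technical nuance is that the alternative Sel rule relies on the type operator $t.\ell$ (justified through Lemma~\ref{l:correct_sel}) rather than on the instantiation-and-subsumption machinery of the algorithmic system. This is also precisely the reason for the loss of completeness mentioned in the main text, since no set of substitutions is considered that could refine $t$ before projecting on $\ell$.
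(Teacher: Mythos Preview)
Your proposal is correct and follows essentially the same approach as the paper, which merely states that the proof is ``Similar to the proof of \cref{t:soundness_algo}.'' You have in fact spelled out the details more explicitly than the paper does: the \rulename{Ext} and \rulename{Del} cases are indeed identical to the declarative rules, and your treatment of \rulename{Sel} via Lemma~\ref{l:correct_sel} plus \rulename{Sub} mirrors the corresponding case in the proof of Theorem~\ref{t:soundness_algo} (just without the layer of substitutions, which are absent in the alternative system).
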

\begin{proof}
  Similar to the proof of \cref{t:soundness_algo}.
\end{proof}

\section{APPENDIX FOR TALLYING}
\label{app:tallying}

We start with giving formal definitions of several notions mentioned in the text.

\begin{definition}[Constraints]
  A constraint $(\tterm_1,c,\tterm_2)$ is a triple such that
  $c \in \{\leq,\geq\}$ and $(\tterm_1,\tterm_2) \in
  (\Types \times \Types) \cup (\Types_\Undef \times \Types_\Undef) \cup (\Rows
  \times \Rows)$.
  $\tterm_1$ and $\tterm_2$ must be of the same kind which, in particular,
  implies $\rdef(\tterm_1) = \rdef(\tterm_2)$ if  $\tterm_1$ and $\tterm_2$ are rows.
  We denote with $\Constraints$ the set of all constraints.

  Given two sets of constraints
  $\constrset_1, \constrset_2 \subseteq \Pd(\Constraints)$, we define
  their union as $\constrset_1 \sqcup \constrset_2 = \constrset_1 \cup \constrset_2$ and
  their intersection as $\constrset_1 \sqcap \constrset_2 =
  \{C_1 \cup C_2 \mid C_1 \in \constrset_1 , C_2 \in \constrset_2\}$.
\end{definition}

\begin{definition}
  \label{def:vars_const}%
  Given a constraint-set $C \subseteq \Constraints$, the set of type row and
  field variables occurring in $C$ is defined as
  $\vars(C) = \bigcup_{(\tterm_1,c,\tterm_2) \in C} \vars(\tterm_1) \cup
  \vars(\tterm_2)$,
  with $\vars(\tterm_i)$ defined in \cref{def:vars}.
\end{definition}

\begin{definition}[Ordering]
  \label{def:ordering}%
  Let $V$ and $\Delta$ be sets of variables and $L$ a set of labels.
  An ordering $O_\vterm$ on $V$ is an injective map from $V$ to $\mathbb N$,
  an ordering $O_\ell$ on $L$ is an injective map from $L$ to $\mathbb N$.
  An ordering $O$ on $V$ and $L$ is defined as a lexicographic ordering on
  $\mathbb N \times \mathbb N$ according to $O_\vterm$ and $O_\ell$, where:
  $O(\rho.\ell) = (O_\vterm(\rho), O_\ell(\ell))$,
  $O(\cutvar\rho L) = (O_\vterm(\rho), O'(\rho,L))$ and
  $O(\vterm) = (O_\vterm(\vterm), 0))$ otherwise;
  for all $\vterm_1 \notin \Delta$ and $\vterm_2 \in \Delta$ of the same
  kind, $O(\vterm_1) < O(\vterm_2)$;
  and $O'(\rho,L)$ is an integer obtained in a canonical way from the set $L
  \cup \rdef(\rho)$ and different from any $O_\ell(\ell)$.
\end{definition}

\begin{definition}[Decomposition of row variables]
  \label{d:rowvar_decomposition}%
  We introduce two new constructors:
  (1)~$\rho.\ell$, that we treat as a field variable, and
  (2) $\cutvar \rho L$, that we treat as a row variable of definition space $\rdef(\rho){\setminus} L$.
  Substitution is extended to $(\rho.\ell)\sigma = \tau$ if $\sigma(\rho) \simeq
  \row{\ell = \tau}{\rw}{L'}$, and $(\cutvar \rho L)\sigma = \rw$ if
  $\sigma(\rho) \simeq \row{(\ell = \tau_\ell)_{\ell \in L \cap
  \rdef(\rho)}}{\rw}{L'}$.
  It is undefined if $\sigma(\rho)$ does not have any of these shapes.
\end{definition}
\noindent
Hereafter, the name $\rho$ ranges over row variables and constructors
$\cutvar {\rho} L$, the name $\fvar$  ranges over field
variables and constructors $\rho.\ell$, and the name $\vterm$ ranges
over all kinds of variables plus these new
constructors.
We consider $\cutvar \rho L$ up to the equivalence generated by
identifying $\rho \in \rowVars$ with $\cutvar \rho L$ for all $L \subseteq
\Labels{\setminus}\rdef(\rho)$,
and $\cutvar{(\cutvar \rho {L_1})}{L_2}$ with $\cutvar \rho {(L_1 \cup L_2)}$.
With an abuse of notation we write $\rho$ for the row
$\prow{}{\rho}{\Labels{\setminus}\rdef(\rho)}$---in particular in rows constraints (e.g., $(\rho, c, \rw)$)---when no confusion arises.

\begin{definition}[Constraint solution]
  \label{def:solution}%
  Let $C \subseteq \Constraints$ be a constraint-set.
  A solution to $C$ is a substitution $\sigma$ such that
  $\forall(\tterm_1,\leq,\tterm_2){\in} C\,.\, \tterm_1\sigma \leq \tterm_2\sigma$ and
  $\forall(\tterm_1,\geq,\tterm_2) {\in} C\,.\, \tterm_1\sigma \geq \tterm_2\sigma$ hold.
  If $\sigma$ is a solution to $C$, we write $\sigma \Vdash C$.
  In particular, $\tterm_1\sigma$ and $\tterm_2\sigma$ must be defined, that is,
  for all $\rho \in \dom(\sigma)$:
  \begin{itemize}
    \item[-] $\forall \ell \in \Labels.\ \rho.\ell \in \vars(C)
      \implies \exists \tau. \exists \rw. \sigma(\rho) \simeq \row{\ell = \tau} \rw
      {\Labels{\setminus}\rdef(\rho)}$, and
    \item[-] $\forall L \subseteq \Labels.\ \cutvar{\rho}{L} \in \vars(C) \implies
      \exists (\tau_\ell)_{\ell \in L}. \exists r.
      \sigma(\rho) \simeq \row{(\ell = \tau_\ell)_{\ell \in
      L}}{\rw}{\Labels{\setminus}\rdef(\rho)}$.
  \end{itemize}
 where $\vars(C)$ is the set of all type/row/field variables occurring in $C$.
\end{definition}
\noindent
The tallying algorithm
is parametric on a total order $O$ on variables, used to
ensure that the step (\oldstylenums{5}) of tallying produces
contractive types (see \cref{pr:wellformedness}).
A set of constraints is \emph{well-ordered} if for all constraints of the shape
$(\vterm, c, \tterm)$ and for all variables in $\tterm$ that are not
occurring under a type constructor,  $O(\vterm) <
O(\vterm')$ holds (we call such variables the \emph{toplevel
variables} of $T$).

\subsection{Examples Regarding the Restriction of Solutions to Atomic Rows}
\label{sec:app-atomic}

We give two examples where we discuss considering only atomic rows instead
of Boolean combinations of them in the grammar.
The first one shows that we can find tallying solutions even for unions of
records thanks to the unification technique.
The second demonstrates that this technique is not enough to recover all desired
solutions, and therefore that Boolean combinations of rows, as we adopt them in
our system, are welcome.

\begin{example}
  \label{ex:expansionrows}%
  Let us consider the types of the example on \cpageref{ex:figure}, that we
  rewrite as follows:
  $t_1 = \rec{p = \Int}{\rho} \to \rec{p = \Float}{\rho}$ and
  $t_2 = \crec{s = \texttt{"circle"}, p = \Int, d = \Float}
  \vee \crec{s = \texttt{"polygon"}, p = \Int, e = \Int}$.
  We would like to be able to unify the parameter of the function with the
  argument, even though the latter is a union.
  As explained in \cite[\S C.2.1]{castagna15}, the problem of computing $t_1
  \apply\emptyset t_2$ can be reduced to solving $\{(t', \leq, t_2 \to \alpha)\}$,
  where $\alpha$ is a fresh variable, and $t' = \bigwedge_{i \in I} t_1\sigma_i$,
  where the $\sigma_i$ are renamings of $\rho$.
  The cardinality of $I$ will be increased during the search for a solution.

  With a cardinality $|I| = 1$, we need in particular to find a solution for the
  constraint $(\rec{p = \Int}{\rho}, \geq, t_2)$.
  A component-wise unification gives the most precise solution (since we
  restrict ourselves to atomic rows):
  $\sigma(\rho) = \crow{s = \texttt{"circle"} \vee \texttt{"polygon"}, d =
  \Float \vee \Undef, e = \Int \vee \Undef}{\{s\}}$.

  This is however not the solution we want.
  Incrementing the cardinality of $I$, we now look for a solution of
  $\{((\rec{p = Int}{\rho_1} \to \rec{p = \Int}{\rho_1})
    \wedge (\rec{p = Int}{\rho_2} \to \rec{p = \Int}{\rho_2}),
  \leq, t_2 \to \alpha)\}$.
  Thus, we look in particular for a solution to the constraint
  $(\rec{p = Int}{\rho_1} \vee \rec{p = Int}{\rho_2}, \geq, t_2)$.
  A component-wise unification gives the solution
  $\sigma(\rho_1) = \crow{s = \texttt{"circle"}, d = \Float}{\{s\}},
  \sigma(\rho_2) = \crow{s = \texttt{"polygon"}, e = \Int}{\{s\}}$,
  thanks to which we retrieve the desired solution, even with the restriction
  that rows are all atomic.
\end{example}

\begin{example}[Necessity of connectives on rows]
  \label{ex:boolrows}%
  This second examples illustrates why considering only atomic rows is not satisfactory.
  Take $t = \rec{}{\rho} \to \rec{}{\rho}$ and
  $s = \orec{a = \Undef} \wedge \neg\crec{}$ and let us look for a solution of
  $t \apply\emptyset s$.
  As in the previous example, this problem can be reduced to
  solving $\{(t', \leq, s \to \alpha)\}$, where $\alpha$ is a fresh variable,
  and $t' = \bigwedge_{i \in I} t\sigma_i$, where the $\sigma_i$ are renamings of $\rho$.

  At first, we try $|I| = 1$, so we look for a solution of $\{(\rec{}{\rho} \to
  \rec{}{\rho}, \leq, \orec{a = \Undef} \wedge \neg\crec{} \to \alpha)\}$.
  After normalization, we obtain the constraint-set
  $\{(\alpha, \geq, \rec{}{\rho}),
  (\rho, \geq, \orow{a = \Undef}{\emptyset} \wedge \neg\crow{}{\emptyset})\}$.
  If we were to consider only atomic rows, we could only give the solution
  $\sigma(\rho) = \orow{}\emptyset$ and $\sigma(\alpha) = \orec{}$.

  To try to find a more precise solution, we run tallying again after
  incrementing the cardinal of $I$.
  This yields the following constraint-set:
  $\{((\rec{}{\rho_1} \to \rec{}{\rho_1}) \wedge (\rec{}{\rho_2} \to
  \rec{}{\rho_2}), \leq, s \to \alpha)\}$,
  which normalizes first to 
  $\{(\alpha, \geq, \rec{}{\rho_1} \wedge \rec{}{\rho_2}),
  (\rho_1 \vee \rho_2, \geq, \orow{a = \Undef}\emptyset \wedge \neg\crow{}\emptyset)\}$,
  and then (assuming $O(\rho_1) \leq O(\rho_2)$) to
  $\{(\alpha, \geq, \rec{}{\rho_1} \wedge \rec{}{\rho_2}),
  (\rho_1, \geq, \orow{a = \Undef}\emptyset \wedge \neg\crow{}\emptyset \wedge \neg\rho_2)\}$.
  With atomic rows only, we still do not have a satisfactory solution,
  and further expansions do not help.
\end{example}

\subsection{General Decomposition of Rows}

\begin{definition}
  \label{def:cutrow}%
  Let $\rw = \row{(\ell = \tau_\ell)_{\ell \in L_1}}{\tl}{L_2}$
  and $L$ a finite set of labels.
  We define $\cutrow \rw L =
  \row{(\ell = \tau_\ell)_{\ell \in L_1{\setminus}L}}{\tl'}{L_2 \cup L}$,
  where $\tl' = \orecsign$ if $\tl \in \Vars$ and $L \nsubseteq L_1 \cup L_2$,
  and $\tl' = \tl$ otherwise.
\end{definition}

\begin{lemma}
  \label{l:normalize_row}%
  Let $\rw$ be an atomic row and $L$ a set of labels.
  Let $\rw' = \row{L \cap \fin(\rw)}{\cutrow \rw {(L \cap \fin(\rw))}}{}$
  if $\tail(\rw) = \rho$ and $\rdef(\rho) \cap L \neq \emptyset$, and
  $\rw' = \row{L}{\cutrow \rw L}{}$ otherwise.
  \begin{enumerate}
    \item $\rw \simeq \orow{(\ell = \rw(\ell))_{\ell \in L}}{} \wedge \rw'
      \simeq \bigwedge_{\ell \in L} \orow{\ell = \rw(\ell)}{} \wedge \rw'$
    \item $\row{L \cap \fin(\rw)}{\cutrow \rw {(L \cap \fin(\rw))}}{}
      \simeq \row L {\cutrow \rw L} {} \wedge
      \row{\fin(\rw)}{\tail(\rw)}{}$ if $\tail(\rw) \in \Vars$.
  \end{enumerate}
\end{lemma}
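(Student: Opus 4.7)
The plan is to prove both parts by direct semantic comparison using the interpretation of rows from Definition~\ref{def:indint}, in the same spirit as the simpler Lemma~\ref{l:normalize_row_easy}. For both equivalences, I would fix an arbitrary $\drow = \domrow{(\ell = \dundef_\ell)_{\ell \in L_1}}{L_\rw}^V$ and check that the membership predicate $(\drow : \cdot)$ evaluates to the same value on each side, using the fact that the interpretation of an intersection imposes the constraints pointwise (field by field) and on the tag set $V$.

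For part~1, the key observation is that $\rw'$ is constructed so that, pointwise, $\rw'(\ell) = \Any\vee\Undef$ on labels $\ell \in L\cap\rdef(\rw)$ and $\rw'(\ell) = \rw(\ell)$ on labels $\ell \in \rdef(\rw)\setminus L$, while the constraint $\tail(\rw)=\rho \Rightarrow \rho \in V$ is preserved in $\rw'$ (because in both subcases of its definition, the tail of $\rw'$ still carries $\rho$ whenever $\rho$'s definition space is not entirely consumed). The open row $\orow{(\ell = \rw(\ell))_{\ell\in L}}{}$, by contrast, fixes the fields in $L$ to $\rw(\ell)$ and leaves every other label at $\Any\vee\Undef$, and imposes no constraint on $V$. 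Intersecting the two therefore restores exactly the pointwise constraints of $\rw$ on every label, together with the tag constraint. The second equivalence of part~1 then follows because the open row distributes over its fields: $\orow{(\ell=\rw(\ell))_{\ell \in L}}{} \simeq \bigwedge_{\ell \in L}\orow{\ell=\rw(\ell)}{}$, which is immediate from the interpretation since each conjunct only constrains one label and leaves all others as $\Any\vee\Undef$.

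For part~2, I would compare $\row{L\cap\fin(\rw)}{\cutrow\rw{(L\cap\fin(\rw))}}{}$ to $\row L{\cutrow\rw L}{} \wedge \row{\fin(\rw)}{\tail(\rw)}{}$ again pointwise, in the case $\tail(\rw)=\rho$. The LHS encodes the original types on $\fin(\rw)$, sets $L\setminus\fin(\rw)$ to $\Any\vee\Undef$ (through the open tail produced by $\cutrow$ on a non-trivial cut, per Definition~\ref{def:cutrow}), and carries the tag constraint $\rho\in V$ because $\cutrow\rw{(L\cap\fin(\rw))}$ still has tail $\rho$. On the right, $\row L{\cutrow\rw L}{}$ imposes $\rw(\ell)$ on $\fin(\rw)\setminus L$ and $\Any\vee\Undef$ on $L$ but \emph{loses} the tag constraint (since $\rdef(\rho)\cap L\neq\emptyset$ triggers $\tl'=\orecsign$), while $\row{\fin(\rw)}{\tail(\rw)}{}$ contributes back exactly the missing information: the types on $\fin(\rw)$ and the constraint $\rho\in V$.

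The main obstacle will be the careful bookkeeping around definition spaces and the overloaded notation $\row{L}{\rw}{L'}$, which plugs the labels of $L$ into an inner row: I need to check in each subcase that the resulting row is well-kinded according to the rules of Figure~\ref{fig:kinding} and that $\cutrow$ composes with this plugging operation as expected, particularly when $\tail(\rw)$ is a row variable whose definition space intersects $L$ only partially. Everything else is routine case analysis on whether $\tail(\rw)\in\Vars$, $\tail(\rw)=\orecsign$, or $\tail(\rw)=\crecsign$, reusing the pointwise reasoning outlined above.
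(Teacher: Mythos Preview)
Your approach—fix a row element $\drow$ and verify the membership predicate pointwise on each label and on the tag set $V$—is exactly what the paper does; it merely unfolds the abbreviations explicitly and declares each resulting check ``straightforward.'' Your analysis of part~1 is correct.

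For part~2, however, your unfolding of both sides is off. Recall that $\row{L}{\tl}{}$ is shorthand for the row assigning $\Any\lor\Undef$ to every label in $L$ (the convention introduced just after equation~\eqref{eq:subtyping_two}), a fact you use correctly in part~1 but not here. Writing $\rw=\row{(\ell=\tau_\ell)_{\ell\in L_1}}{\rho}{L_2}$ with $L\nsubseteq L_1\cup L_2$, the LHS expands to
\[
  \row{(\ell=\Any\lor\Undef)_{\ell\in L\cap L_1},\ (\ell=\tau_\ell)_{\ell\in L_1\setminus L}}{\rho}{L_2},
\]
so the labels in $L\cap\fin(\rw)$ are set to $\Any\lor\Undef$, not kept at ``the original types on $\fin(\rw)$''; and the tail is $\rho$ (since $L\cap L_1\subseteq L_1\cup L_2$, Definition~\ref{def:cutrow} gives $\tl'=\rho$), not an open tail—indeed you assert both, which is a contradiction. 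On the RHS, $\row{\fin(\rw)}{\tail(\rw)}{}$ assigns $\Any\lor\Undef$ to every label in $\fin(\rw)$ and contributes \emph{only} the tag constraint $\rho\in V$, not ``the types on $\fin(\rw)$''; the original field-types on $L_1\setminus L$ come from the other conjunct $\row{L}{\cutrow\rw L}{}$, whose tail is the one that is genuinely open. With these corrected expansions, the pointwise comparison you propose goes through verbatim and coincides with the paper's proof.
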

\begin{proof}
  If $r' = \row L {\cutrow \rw L}{}$, the proof is straightforward.
  Otherwise, let $\rw = \row{(\ell = \tau_\ell)_{\ell \in
  L_1}}{\rho}{L_2}$, with $L \nsubseteq L_1 \cup L_2$.
  For the first item, since $\rw(\ell) = \Any \lor \Undef$ for any $\ell
  \notin L_1 \cap L_2$, we must show
  $\rw \simeq \orow{(\ell = \tau_\ell)_{\ell \in L \cap L_1},
  (\ell = \Any \lor \Undef)_{\ell \in L\setminus L_1}}{}
  \wedge \row{L \cap L_1, (\ell = \tau_\ell)_{\ell \in L_1 \setminus
  L}}{\rho}{}$.
  For the second item, we must show
  $\row{L \cap L_1, (\ell = \tau_\ell)_{\ell \in L_1 \setminus L}}{\rho}{}
  \simeq \orow{L,(\ell = \tau_\ell)_{\ell \in L_1{\setminus}L}}{}
  \wedge \row{\fin(\rw)}{\rho}{}$.
  Both of them are straightforward.
\end{proof}

\begin{lemma}
  \label{l:decomposition}%
  Let $P$ and $N$ be sets of atomic rows of the same definition space and $L$ be a
  finite set of labels.
  Let $P_\Vars = \{\rw \in P \mid \tail(\rw) = \rho \textand L \cap
  \rdef(\rho) \neq \emptyset\}$,
  similarly for $N_\Vars$.
  The relation $\bigwedge_{\rw \in P} \rw \leq \bigvee_{\rw \in N} \rw$
  holds \textit{iff} for every map
  $\iota: N \to L\cup\{\_\}$,
  for every $N' \subseteq \iota^{-1}(\_) \cap N_\Vars$:
  \begin{align*}
    &\left(\exists \ell \in L. \bigwedge_{\rw \in P}
    \rw(\ell) \leq \bigvee_{\rw \in \iota^{-1}(\ell)} \rw(\ell)\right)
    \textor \left(
      \bigwedge_{\rw \in P} {\cutrow \rw L} \leq
      \bigvee_{\rw \in \iota^{-1}(\_) \setminus N'}
      {\cutrow \rw L}
    \;\right)\\
    &\textor \left(
      \bigwedge_{\rw \in P_\Vars} \row{\fin(\rw)}{\tail(\rw)}{}
      \leq \bigvee_{\rw \in N'} \row{\fin(\rw)}{\tail(\rw)}{}
    \;\right)
  \end{align*}
\end{lemma}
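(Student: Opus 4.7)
My plan is to mimic the proof of \cref{l:subtyping} but to carry along the extra splitting step that \cref{l:normalize_row} grants for atomic rows whose tail is a variable whose definition space meets $L$. The overall argument amounts to distributing a large intersection of unions and then observing that the resulting summands live in three syntactically independent ``worlds'' whose individual emptiness exactly matches the three displayed conditions.

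\medskip\noindent\textbf{Step 1 (normalization).} I first apply \cref{l:normalize_row} to every atomic row in $P \cup N$. For $\rw\in P$ this yields
\[
  \rw \;\simeq\; \bigwedge_{\ell\in L}\orow{\ell=\rw(\ell)}{}\ \wedge\ \rw_L,
\]
where $\rw_L$ is $\row{L}{\cutrow{\rw}{L}}{}$ when $\rw\notin P_\Vars$ and, by item~(2) of the same lemma, further decomposes into $\row{L}{\cutrow{\rw}{L}}{}\wedge\row{\fin(\rw)}{\tail(\rw)}{}$ when $\rw\in P_\Vars$. For $\rw\in N$, De~Morgan's laws turn this into a union of two disjuncts (a per-label negation, or $\lnot\rw_L$) when $\rw\notin N_\Vars$, and of three disjuncts (a per-label negation, or $\lnot\row{L}{\cutrow{\rw}{L}}{}$, or $\lnot\row{\fin(\rw)}{\tail(\rw)}{}$) when $\rw\in N_\Vars$.

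\medskip\noindent\textbf{Step 2 (distribution).} I then distribute the intersection $\bigwedge_{\rw\in N}(\cdots)$ over the unions. Each summand is selected by a pair $(\iota,N')$ where $\iota\colon N\to L\cup\{\_\}$ picks, for every $\rw\in N$, either a label $\ell\in L$ (the per-label negation) or the symbol $\_$ (a cut-side negation), and $N'\subseteq\iota^{-1}(\_)\cap N_\Vars$ records exactly those rows for which we keep the variable-tail negation $\lnot\row{\fin(\rw)}{\tail(\rw)}{}$ rather than $\lnot\row{L}{\cutrow{\rw}{L}}{}$. Thus $\bigwedge_{\rw\in P}\rw\wedge\bigwedge_{\rw\in N}\lnot\rw$ becomes a union, indexed by such pairs $(\iota,N')$, of rows of the shape
\[
\underbrace{\orow{(\ell=\textstyle\bigwedge_{\rw\in P}\rw(\ell)\wedge\bigwedge_{\rw\in\iota^{-1}(\ell)}\lnot\rw(\ell))_{\ell\in L}}{}}_{\text{``field block''}}
\;\wedge\;
\underbrace{\bigwedge_{\rw\in P}\row{L}{\cutrow{\rw}{L}}{}\wedge\!\!\!\!\bigwedge_{\rw\in\iota^{-1}(\_)\setminus N'}\!\!\!\!\lnot\row{L}{\cutrow{\rw}{L}}{}}_{\text{``cut block''}}
\;\wedge\;
\underbrace{\bigwedge_{\rw\in P_\Vars}\!\!\row{\fin(\rw)}{\tail(\rw)}{}\wedge\bigwedge_{\rw\in N'}\!\lnot\row{\fin(\rw)}{\tail(\rw)}{}}_{\text{``variable block''}}.
\]

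\medskip\noindent\textbf{Step 3 (independence and conclusion).} The subtyping statement holds iff every such summand is empty. The three blocks above involve pairwise disjoint sets of labels in their nontrivial constraints: the field block only constrains fields at $L$, the cut block only constrains fields outside $L$ and uniform on them, and the variable block only constrains tagged variables in $V$. Exactly as in the argument used at the end of the proof of \cref{l:subtyping} (building a witness by combining one element from each nonempty block), the intersection of the three blocks is empty if and only if at least one of them is empty. Emptiness of the field block is condition~(i); emptiness of the cut block is condition~(ii); emptiness of the variable block is condition~(iii). This yields the equivalence.

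\medskip\noindent\textbf{Main obstacle.} The only genuinely new ingredient compared to \cref{l:subtyping} is justifying the refinement by $N'$: one must check that negating a row $\rw\in N_\Vars$ really decomposes into the union of the three disjuncts listed above, not just two, and that different choices of $N'$ correspond to genuinely independent constraints. This is handled by item~(2) of \cref{l:normalize_row} together with De~Morgan; the rest of the proof simply threads this extra choice through the distribution. The independence argument at the end is essentially identical to the one for products/records in~\cite{frisch04,castagna23a}, so no new semantic technology is needed.
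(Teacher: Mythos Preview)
Your proposal is correct and follows essentially the same route as the paper's proof. The paper organizes the distribution in two stages (first by $\iota$, then---after invoking item~(2) of \cref{l:normalize_row}---by $N'\subseteq\iota^{-1}(\_)\cap N_\Vars$), whereas you apply both items of \cref{l:normalize_row} up front and distribute once over pairs $(\iota,N')$; the resulting summands and the final three-block independence argument are identical.
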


\begin{proof}
  For each $\rw \in P_\Vars \cup N_\Vars$,
  let $\rw_{-} = \row{L \cap \fin(\rw)}{\cutrow \rw {(L \cap \fin(\rw)}}{}$
  and for each $\rw \in (P \cup N) \setminus (P_\Vars \cup N_\Vars)$,
  let $\rw_{-} = \row{L}{\cutrow \rw L}{}$.

  Using \cref{l:normalize_row}, we decompose the type
  in the statement into:
  \begin{equation}
    \label{eq:first-step-decomposition}%
    \bigwedge_{\rw \in P} (\orow{(\ell = \rw(\ell))_{\ell \in L}}{} \wedge
    \rw_{-} \wedge
    \bigwedge_{\rw \in N}
    (\bigvee_{\ell \in L} \neg\orow{\ell = \rw(\ell)}{}
    \vee \neg\rw_{-})
  \end{equation}
  We can distribute the intersection of the elements of $N$ on the right of
  \eqref{eq:first-step-decomposition} over the unions in the second brackets.
  We obtain a union of intersections of, each time, $|N|$ elements, where each
  intersection is a possible combination of the individual types present in the
  second line.
  Each combination is described by a function $\iota: N \to L \cup
  \{\_\}$, where $\iota(n) = \ell$ means that the element $\orow{\ell =
  \neg\rw(\ell)}{}$ is present in the combination given by $\iota$, while
  $\iota(n) = \_$ means that the element $\neg\rw_{-}$ is present in the
  combination.
  For each $\rw \in N$ and $\ell \in L$, let us write $\rw_\ell = \orow{\ell =
  \rw(\ell)}{}$.
  Therefore the type in~\eqref{eq:first-step-decomposition} is equivalent to:
  \begin{equation}
    \bigwedge_{\rw \in P} (\orow{(\ell = \rw(\ell))_{\ell \in L}}{} \wedge
    \rw_{-}) \wedge
    \bigvee_{\iota: N \to L \cup \{\_\}}
    (\bigwedge_{\rw \in N} \neg\rw_{\iota(\rw)})
  \end{equation}
  By distributing the intersection over the union we obtain
  \begin{equation}
    \bigvee_{\iota: N \to L \cup \{\_\}}\left(
      \bigwedge_{\rw \in P}
      (\orow{(\ell = \rw(\ell))_{\ell \in L}}{} \wedge
      \rw_{-}) \wedge
    \bigwedge_{\rw \in N} \neg\rw_{\iota(\rw)}\right)
  \end{equation}
  A union is empty if and only if each summand of the union is
  empty. Therefore the type above is empty if and only if for all $\iota: N \to L \cup \{\_\}$,
  the following type is empty:
  \begin{align*}
    &\bigwedge_{\rw \in P}
    (\orow{(\ell = \rw(\ell))_{\ell \in L}}{} \wedge
    \rw_{-}) \wedge
    \bigwedge_{\rw \in N} \neg\rw_{\iota(\rw)}\\
    &\simeq \bigwedge_{\rw \in P}
    (\orow{(\ell = \rw(\ell))_{\ell \in L}}{} \wedge
    \rw_{-}) \wedge
    \bigwedge_{\ell \in L \cup \{\_\}} \bigwedge_{\rw \in \iota^{-1}(\ell)}
    \neg\rw_\ell\\
    &\simeq \bigwedge_{\rw \in P}
    (\orow{(\ell = \rw(\ell))_{\ell \in L}}{} \wedge
    \rw_{-}) \wedge
    \bigwedge_{\rw \in \iota^{-1}(\_)} \neg\rw_{-}
    \wedge \bigwedge_{\ell \in L} \bigwedge_{\rw \in \iota^{-1}(\ell)} \neg\rw_\ell\\
    &\simeq \bigwedge_{\rw \in P}
    (\orow{(\ell = \rw(\ell))_{\ell \in L}}{} \wedge
    \rw_{-})
    \wedge \bigwedge_{\rw \in \iota^{-1}(\_)} \neg\rw_{-}
    \wedge \orow{(\ell = \textstyle{\bigwedge_{\rw \in \iota^{-1}(\ell)}}
    \neg\rw(\ell))_{\ell \in L}}{}\\
    &\simeq \orow{(\ell = {\textstyle{\bigwedge_{\rw \in P}}} \rw(\ell)
    \wedge {\textstyle{\bigwedge_{\rw \in \iota^{-1}(\ell)}}} \neg\rw(\ell))_{\ell \in L}}{}
    \wedge \bigwedge_{\rw \in P} \rw_{-}
    \wedge \bigwedge_{\rw \in \iota^{-1}(\_)} \neg\rw_{-}
  \end{align*}
  Let $\rw_1 = 
  \orow{(\ell = {\textstyle{\bigwedge_{\rw \in P}}} \rw(\ell)
    \wedge {\textstyle{\bigwedge_{\rw \in \iota^{-1}(\ell)}}}
  \neg\rw(\ell))_{\ell \in L}}{}$.
  The last type is equivalent to
  \begin{align*}
    \rw_1
    &\wedge \bigwedge_{\rw \in P} \row L {\cutrow \rw L}{}
    \wedge \bigwedge_{\rw \in P_\Vars} \row{\fin(\rw)}{\tail(\rw}{}\\
    &\wedge \bigwedge_{\rw \in \iota^{-1}(\_) \setminus N_\Vars}
    \neg\row L {\cutrow \rw L}{}
    \wedge \bigwedge_{\rw \in \iota^{-1}(\_) \cap N_\Vars}
    (\neg\row L {\cutrow \rw L}{} \vee
    \neg\row {\fin(\rw)}{\tail(\rw)}{})\\
    \simeq \rw_1
    &\wedge \bigwedge_{\rw \in P} \row L {\cutrow \rw L}{}
    \wedge \bigwedge_{\rw \in P_\Vars} \row{\fin(\rw)}{\tail(\rw}{}
    \wedge \bigwedge_{\rw \in \iota^{-1}(\_) \setminus N_\Vars}
    \neg\row L {\cutrow \rw L}{}\\
    &\wedge \bigvee_{N' \subseteq \iota^{-1}(\_) \cap N_\Vars}
    (\bigwedge_{\rw \in (\iota^{-1}(\_) \cap N_\Vars) \setminus N'}
    \neg\row L {\cutrow \rw L}{}
    \wedge \bigwedge_{\rw \in N'}
    \neg\row {\fin(\rw)}{\tail(\rw)}{})\\
    \simeq &\bigvee_{N' \subseteq \iota^{-1}(\_) \cap N_\Vars}
    \Bigg(\rw_1
      \wedge \bigwedge_{\rw \in P} \row L {\cutrow \rw L}{}
      \wedge \bigwedge_{\rw \in \iota^{-1}(\_) \setminus N'}
      \neg\row L {\cutrow \rw L}{}\\
           &\wedge \bigwedge_{\rw \in P_\Vars} \row{\fin(\rw)}{\tail(\rw)}{}
           \wedge \bigwedge_{\rw \in N'} \neg\row{\fin(\rw)}{\tail(\rw)}{}
         \Bigg)
  \end{align*}
  This type is empty if and only if the conjunctions are all empty for each
  $\iota$ and $N' \subseteq \iota^{-1}(\_) \cap N_\Vars$.
  Take $\iota$ and $N'$ and let $\rw_2 = 
  \bigwedge_{\rw \in P} \row L {\cutrow \rw L}{}
  \wedge \bigwedge_{\rw \in \iota^{-1}(\_) \setminus N'}
  \neg\row L {\cutrow \rw L}{}$
  and $\rw_3 =
  \wedge \bigwedge_{\rw \in P_\Vars} \row{\fin(\rw)}{\tail(\rw)}{}
  \wedge \bigwedge_{\rw \in N'} \neg\row{\fin(\rw)}{\tail(\rw)}{}$.
  Let $\rw_\iota = \bigwedge_{1 \leq i \leq 3} \rw_i$.
  It is immediate that $\rw_1$ is empty \textit{iff} the first condition of the
  statement holds, $\rw_2$ is empty \textit{iff} the second does,
  and $\rw_3$ is empty \textit{iff} the third does.
  We directly obtain that if one of the conditions holds, then the type
  $\rw_\iota$ is empty.
  We now show that if $\rw_\iota$ is empty, then there is $1 \leq i \leq 3$ such
  that $\rw_i$ is empty.

  For this, we suppose that none of the subtypes is empty and build an element
  $\drow \in \IntR{\TypeInter{\rw_\iota}}$.
  \begin{enumerate}
    \item Since $\rw_1$ is not empty, for all $\ell \in L$ there is an
      element $\dundef_\ell^1 \in \IntF{\TypeInter{\bigwedge_{\rw \in P} \rw(\ell) \wedge
      \bigwedge_{\rw \in \iota^{-1}(\ell)} \neg\rw(\ell)}}$.
    \item Since $\rw_2$ is not empty, there is an element
      $\domrow{(\ell = \dundef^2_\ell)_{\ell \in L_2}}{}^{V_2} \in
      \IntR{\TypeInter{\rw_2}}$.
    \item Since $\rw_3$ is not empty, there is an element $\drow_3 \in
      \IntR{\TypeInter{\rw_3}}$.
      However, the restrictions on the set of elements in
      $\IntR{\TypeInter{\rw_3}}$ only concern their tags so that any
      element $\drow'$ with $\Tag(\drow') = \Tag(\drow_3)$ and $\rdef(\drow') =
      \rdef(\drow_3)$ is in $\IntR{\TypeInter{\rw_3}}$.
      Let $V_3 = \Tag(\drow_3)$.
  \end{enumerate}
  We build the element
  $\domrow{(\ell = \dundef^1_\ell)_{\ell \in L},
  (\ell = \dundef^2_\ell)_{\ell \in L_2{\setminus}L}}{}^{V_3}$.
  This element belongs to $\IntR{\TypeInter{\rw_\iota}}$, which is a
  contradiction.
\end{proof}

\subsection{Explanation of Normalization of Constraints}

Before giving a formal description of the whole normalization algorithm as
inference rules in \cref{sec:normalization},
we describe it as a recursive function $\norm(t,M)$ that takes a type
$t$ as input and returns a set of normalized constraint-sets necessary for $t
\leq \Empty$ to hold.
The additional argument $M$ is a set of visited types, that guarantees
termination of recursion on infinitary types.
We consider the implicit constant parameter $\Delta$ containing the monomorphic variables.

We first rewrite each constraint $(\tterm, c, \tterm')$ into a set of constraints
$\{(\tterm_i, \leq, \Empty)\}_{i \in I}$, where $\tterm_i$ is a conjunction of
atoms (basic types, arrows, records), type and field variables (in the
latter case,
$\Undef$ can also be present), or their negations. This first constraint-set is obtained by transforming the type into DNF and
putting each summand of the outer union into a separate constraint.
For each constraint $(\tterm_i, \leq, \Empty)$ we then isolate the smallest
(w.r.t., the order $O$) top-level type variable or
field variable $\vterm$ not in $\Delta$, to obtain a constraint
of the form $(\vterm, c, \tterm_i')$, that gives a lower (i.e., $c$ is $\geq$ when $\vterm$ is
negated in $\tterm_i$) or an upper (i.e., $c$ is $\leq$ otherwise)
bound on that variable, where
$\tterm_i'$ is obtained from $\tterm_i$ simply by removing  $\vterm$.

There may be no variable in $\tterm_i$, or they may all be in the parameter $\Delta$.
A variable in $\Delta$ is monomorphic, cannot be instantiated and is treated as
a constant.
In that case, we erase monomorphic variables because they cannot help to satisfy
the constraint.
On basic types, we can directly see if the constraint holds.
For arrow constructors, we apply subtyping to decompose the types into
constraints on their subtypes.
Until now, all  these steps (apart from dealing with field variables) are similar
to those of the existing tallying algorithm for type variables by~\citet{castagna15}.
For a conjunction of record atoms, the technique is more elaborated.

The normalization of a conjunction of records $t$ is defined as the
normalization of its underlying row through an auxiliary procedure on rows:
\[
  \norm(t,M) = \rownorm(\rectorow t, M \cup \{t\})
\]

The main technical part of the normalization step of tallying is defining this
auxiliary procedure.
It uses the following two operators.
\begin{definition}
  Let $\rw = \row{(\ell = \tau_\ell)_{\ell \in L_1}}{\tl}{L_2}$, $\Delta$ be a
  set of variables, $\ell \in \Labels{\setminus}L_2$ and $L \in \Pf(\Labels)$.
  \begin{itemize}
    \item $\rw[\ell] \eqdef \rho.\ell$ if $\tl = \rho \notin \Delta$ and $\ell
      \in \rdef(\rho)$, and $\rw[\ell] \eqdef \rw(\ell)$ otherwise.
    \item $\cutrowbis {\row{(\ell = \tau_\ell)_{\ell \in L_1}}{\tl}{L_2}} L \Delta
      \eqdef \row{(\ell = \tau_\ell)_{\ell \in L_1{\setminus}L}}{\tl'}{L_2 \cup L}$;
      where if $\tl = \rho$ and $\rdef(\rho) \cap L \neq \emptyset$:
      $\tl' = \orecsign$ if $\rho \in \Delta$ and $\tl' = \cutvar \rho L$
      otherwise; and $\tl' = \tl$ otherwise.
  \end{itemize}
\end{definition}

Let us consider a row in DNF
$\rw_0 = \bigwedge_{\rw \in P} \rw \wedge \bigwedge_{\rw \in N} \neg\rw$.
Let $\rho_0$ be the smallest top-level variable of $\rw_0$ not in $\Delta$ and
$L = \rdef(\rw_0){\setminus}\rdef(\rho_0)$,
or $L = \emptyset$ if no such variable exists.
Let $S_\Delta = \{\rw \in S \mid \tail(\rw) = \rho \in \Delta \textand
\rdef(\rho) \cap L \neq \emptyset\}$, for $S = P,N$.
Normalization $\rownorm(\rw_0,M)$ is defined as:
\begin{equation*}
  \bigsqcap_{\iota: N \to L \cup \{\_\}}
  \Bigg(
      \bigsqcup_{\ell \in L}
      \fieldnorm\Big(
        \bigwedge_{\rw \in P} \rw[\ell]
        \wedge\negspace \bigwedge_{\rw \in \iota^{-1}(\ell)} \negspace\neg{\rw[\ell]},
      M\Big)
      \sqcup
      \hspace{-0.2em}\bigsqcap_{N' \in \mathcal N}
      \Bigg(
      \tailnorm\Big(
        \bigwedge_{\rw \in P} (\cutrowbis \rw L \Delta)
        \wedge\negspace \bigwedge_{\rw \in \iota^{-1}(\_) \setminus N' \in S} \negspace\negspace\neg (\cutrowbis \rw L \Delta),
      M\Big)
    \Bigg)
\end{equation*}
where $\mathcal N = \{N' \subseteq \iota^{-1}(\_) \cap N_\Delta
  \mid \bigwedge_{\rw \in P_\Delta} \prow{\fin(\rw)}{\tail(\rw)}{} \nleq
\bigvee_{\rw \in N'} \prow{\fin(\rw)}{\tail(\rw)}{}\}$.

Here, the idea of the algorithm is the following.
If there is no polymorphic top-level variable (these can only be row variables),
we let $L = \emptyset$, which does not decompose the row, and only calls the
function $\tailnorm(\rw_0,M)$ recursively.
This function decomposes the record over the whole set of
top-level labels using the formula for subtyping.
We do not loose solutions doing so: due to the absence of top-level polymorphic
variables, the emptiness of $\rw_0$ can only be satisfied by the emptiness of
one of the components.

If there is a smallest polymorphic top-level variable $\rho_0$, we take $L =
\rdef(\rw_0) \setminus \rdef(\rho_0)$.
In other words, we take $L$ to be the set of labels on the left of the variable
$\rho_0$.
In this way, we decompose the row on one side over the elements in $L$, which
does not affect $\rho_0$.
These elements are handled by the auxiliary function $\fieldnorm$, that
normalizes constraints on fields in a way similar to what is done on types.
On the other side, we obtain constraints over $\rho_0$, where $\rho_0$ is in a
row of the shape $\row{}{\rho_0}{\Labels{\setminus}\rdef(\rho_0)}$.
Then, the recursive call to $\tailnorm$ singles out the variable $\rho_0$ in
order to obtain an upper or lower bound for $\rho_0$, as we do for type and
field variables (the definitions of $\fieldnorm$ and $\tailnorm$ are given below).

While $\rho_0$ is not affected by the decomposition thanks to our choice of $L$,
there can be polymorphic top-level variables $\rho'$ such that $\rdef(\rho') \cap
L \neq \emptyset$. This is why we have introduced the two new operators
$\rw[\ell]$ and $\cutrowbis \rw L \Delta$.

\paragraph{Fields}

A field-type is always equivalent to a DNF that is a disjunction of conjunctions
of either of the shape
$\tau \wedge \bigwedge_{\fvar \in P} \fvar \wedge \bigwedge_{\fvar \in N}
\fvar$, where $\tau$ is either $\Undef$ or a type $t$.
If there is a smallest variable $\fvar_0 \in P \cup N$ not in $\Delta$, we
single out this variable, in the same way as is done for type variables in our
algorithm and in \cite{castagna15}.
If all top-level field variables are monomorphic:
\begin{itemize}
  \item If $\tau = t$, $\tau$ can be instantiated to an empty type only if $t$
    can, so we apply $\norm(t,M)$.
  \item If $\tau = \Undef$, $\tau$ can never be instantiated to an empty type
    since $\Undef \nleq \Empty$, so normalization fails.
\end{itemize}
We use the notation $\vterm \lightning \tterm$ to indicate that $\vterm$ is the
smallest top-level variable in $\tterm$.
\[
  \fieldnorm(\tau,M) =
  \begin{cases}
    \{\{(\fvar_0, \leq, \neg\tau \vee \bigvee_{\fvar \in P{\setminus}\{\fvar_0\}}
    \neg\fvar \vee \bigvee_{\fvar \in N} \fvar)\}\},
    &\text{ if } \exists \fvar_0 \in P.
    \fvar_0 \lightning \tau \\
    \{\{(\tau \wedge \bigwedge_{\fvar \in P} \fvar \wedge \bigwedge_{\fvar \in
    N{\setminus}\{\fvar_0\}} \neg\fvar \leq \fvar_0)\}\},
    &\text{ if } \exists \fvar_0 \in N. \fvar_0 \lightning \tau \\
    \norm(t,M),
    &\hspace{-5em}\text{ if } \tau' = t \textand (P \cup N) \setminus \Delta = \emptyset\\
    \emptyset,
    &\hspace{-5em}\text{ if } \tau' = \Undef \textand (P \cup N) \setminus \Delta = \emptyset
  \end{cases}
\]

\paragraph{Tails}

By design, the input of this function is a row such that:
\begin{itemize}
  \item Either there is a polymorphic top-level variable that is an a row $\rw_0$
    with $\fin(\rw) = \emptyset$.
    Then, we single out this variable on the left of a new constraint.
  \item Or there is no polymorphic top-level row variable.
    In that case, we decompose the row over all the labels using the subtyping
    formula.
\end{itemize}
\[
  \tailnorm(\rw, M) =
  \begin{cases}
    \{\{(
        \rho_0, \leq,
        \bigvee_{\rw \in P{\setminus}\{\rw_0\}} \neg \rw \vee
        \bigvee_{\rw \in N} \rw
    )\}\}, &\text{ if } \rw_0 \in P \\
    \{\{(
        \rho_0, \geq,
        \bigwedge_{\rw \in P} \rw \wedge_{\rw \in N{\setminus}\{\rho\}} \neg \rw
)\}\}, &\text{ if } \rw_0 \in N \\
\bigsqcap_{\iota \in I}
      \bigsqcup_{\ell \in L}
      \fieldnorm\Big(
        \bigwedge_{\rw \in P} \rw(\ell)
        \wedge \bigwedge_{\rw \in \iota^{-1}(\ell)} \neg\rw(\ell),
        M \Big), &\text{if } \tlv(\rw) \subseteq \Delta
  \end{cases}
\]
where in the two first cases, there is $\rho_0 \in (P \cup N) \setminus \Delta$
such that there is $\rw_0 \in P \cup N$ with $\rw_0 =
\row{}{\rho_0}{\Labels{\setminus}\rdef(\rw)}$,
and where in the last case:
\begin{itemize}
  \item $L = \bigcup_{\rw \in P \cup N} \fin(\rw)$;
  \item $P_\Delta = \{\rw \in P \mid \tail(\rw) \in \Vars\}$
    and $N_\Delta = \{\rw \in N \mid \tail(\rw) \in \Vars\}$;
  \item $I = \{\iota: N \to L \cup \{\_\} \mid
      (\forall \rw_\circ \in \iota^{-1}(\_) \setminus N_\Delta.
    \bigwedge_{\rw \in P} \Def(\rw) \nleq \Def(\rw_\circ))
    \textand (\forall \rw_0 \in \iota^{-1}(\_) \cap N_\Delta.
    \forall \rw_p \in P_\Delta. \tail(\rw_\circ) \neq \tail(\rw_p))
  \}$,
  where $\Def(\rw)$ is defined as in \cref{l:subtyping}.
\end{itemize}

\subsection{Constraint Normalization}
\label{sec:normalization}

\begin{figure}[p!]
\begin{framed}
  \begin{mathpar}
    \inferrule*[right=\rulename{Nempty}]{ }{\normseq \Sigma \emptyset {\{\emptyset\}}}
    \and \inferrule*[right=\rulename{Njoin}]
    {(\normseq \Sigma {\{(\tterm_i,c_i,\tterm_i')\}} {\constrset_i})_{i \in I}}
    {\normseq \Sigma {\{(\tterm_i,c_i,\tterm_i') \mid i \in I\}} {\sqcap_{i \in I} \constrset_i}}
    \and \inferrule*[right=\rulename{Nsym}]
    {\normseq \Sigma {(\tterm,\geq,\tterm')} \constrset}
    {\normseq \Sigma {(\tterm',\leq,\tterm)} \constrset}
    \and \inferrule*[right=\rulename{Nzero}]
    {\normseq \Sigma {\{(\tterm \wedge \neg \tterm',\leq,\Empty)\}} \constrset
    \and \tterm' \neq \Empty}
    {\normseq \Sigma {\{(\tterm,\leq,\tterm')\}} \constrset}{}
    \and \inferrule*[right=\rulename{Ndnf}]
    {\normseq \Sigma {\{(\dnf(\tterm),\leq,\Empty)\}} \constrset
    \and \hspace{-1em} \tterm \neq \dnf(\tterm)}
    {\normseq \Sigma {\{(\tterm,\leq,\Empty)\}} \constrset}{}
    \and \inferrule*[right=\rulename{Nunion}]
    {\normseq \Sigma {\{(\tterm_i,\leq,\Empty) \mid i \in I\}} \constrset}
    {\normseq \Sigma {\{(\vee_{i \in I}\,\tterm_i,\leq,\Empty)\}} \constrset}{}
  \end{mathpar}
  {\small Where $\tterm_i$ in~\rulename{Nunion} are single normal forms.}
\end{framed}
  \caption{Normalization rules for all kinds}
  \label{fig:nrules-all}
\end{figure}

\begin{figure}[p!]
  \begin{framed}
  \begin{mathpar}
    \inferrule*[right=\rulename{Nhyp}]
    {t \in \Sigma \and \tlv(t) = \emptyset}
    {\normseq \Sigma {\{(t, \leq, \Empty)\}} {\{\emptyset\}}}
    \and \inferrule*[right=\rulename{Nassum}]
    {\normseqbis {\Sigma \cup \{t\}} {\{(t, \leq, \Empty)\}} \constrset \and t \notin \Sigma}
    {\normseq \Sigma {\{(t, \leq, \Empty)\}} \constrset}
    \and \inferrule*[right=\rulename{Ntlv}]
    {\tlv(\tterm) = \emptyset
      \and \vterm' \lightning_O P \cup N
      \and \constrset =
      {\begin{cases}
          \{\single(\vterm',\tterm_0)\} &\vterm' \notin \Delta \\
          \normseq \Sigma {\{(\tterm,\leq,\Empty)\}} {} &\vterm' \in \Delta
    \end{cases}}}
    {\normseq \Sigma
      {\{(\tterm_0 = \bigwedge_{\vterm \in P} \vterm
      \wedge \bigwedge_{\vterm \in N} \neg \vterm \wedge \tterm, \leq, \Empty)\}}
    \constrset}
    \and \inferrule*[right=\rulename{Nopt}]
    { }
    {\normseq \Sigma {\{(\Undef,\leq,\Empty)\}} \emptyset}
    \and \inferrule*[right=\rulename{Nbasic}]
    {\constrset = {\begin{cases}
          \{\emptyset\} &\text{if } t, \leq, \Empty \\
          \emptyset &\text{if } t \nleq \Empty
    \end{cases}}}
    {\normseqbis \Sigma
      {\{(t = \bigwedge_{i \in P} b_i \wedge \bigwedge_{j \in N} \neg b_j, \leq, \Empty)\}}
    {\constrset}}
    \and \inferrule*[right=\rulename{Narrow}]
    {\exists j \in N. \forall P' \subseteq P.
      {\begin{cases}
          \normseq \Sigma
          {\{t_j^1 \wedge \bigwedge_{i \in P'} \neg t_i^1, \leq, \Empty\}}
          {\constrset_{P'}^1}\\
          {\begin{cases}
              \normseq \Sigma
              {\{\bigwedge_{i \in P{\setminus}P'} t_i^2 \wedge \neg t_j^2, \leq, \Empty\}}
              {\constrset_{P'}^2}
        &P' \neq P\\
        \constrset_{N'}^2 = \emptyset
        &\text{otherwise}
          \end{cases}}
    \end{cases}}}
    {\normseqbis \Sigma
      {\{(\bigwedge_{i \in P} (t_i^1 \to t_i^2)
      \wedge \bigwedge_{j \in N} \neg(t_j^1 \to t_j^2), \leq, \Empty)\}}
      {\bigsqcup_{j \in N} \bigsqcap_{P' \subseteq P} (\constrset_{P'}^1 \sqcup
    \constrset_{P'}^2)}}
    \and \inferrule*[right=\rulename{Nrec}]
    {\normseq \Sigma
      {\{(\rectorow{\bigwedge_{\R \in P} \R \wedge \bigwedge_{\R \in
      N} \neg\R}, \leq, \Empty)\}}
    \constrset}
    {\normseqbis \Sigma
      {\{(\bigwedge_{\R \in P} \R \wedge \bigwedge_{\R \in N} \neg\R, \leq, \Empty)\}}
    \constrset}
  \end{mathpar}
\end{framed}
  \caption{Normalization rules for type and field single normal forms}
  \label{fig:nrules-typ}
\end{figure}

\begin{figure}[p!]
  \begin{framed}
  \begin{mathpar}
    \inferrule*[right=\rulename{Nrow}]
    {\forall \iota:N {\to} L {\cup} \{\_\}.
      {\begin{cases}
          \forall \ell \in L. \normseq \Sigma
          {\{(\bigwedge_{\rw \in P} \rw[\ell]
              \wedge \bigwedge_{\rw \in \iota^{-1}(\ell)} \neg\rw[\ell],
          \leq, \Empty)\}}
          \constrset_\ell^\iota\\
          \forall N' \in \mathcal N.
          \normseq \Sigma
          {\{(\bigwedge_{\rw \in P} \cutrowbis \rw L \Delta
              \wedge \bigwedge_{\rw \in \iota^{-1}(\_){\setminus}N'}
          \neg(\cutrowbis \rw L \Delta), \leq, \Empty)\}}
          \constrset_{N'}^\iota
      \end{cases}} \\
      \rho \lightning_O \tlv(\rw_0)
      \and \rho \notin \Delta
      \and L = \rdef(\rw_0){\setminus}\rdef(\rho)
      \and L \neq \emptyset\\
      P_\Delta = \{\rw \in P \cap \Delta \mid \tail(\rw) = \rho_p \textand \rdef(\rho_p) \cap
      L \neq \emptyset\}\\
      N_\Delta = \{\rw \in N \cap \Delta \mid \tail(\rw) = \rho_n \textand
      \rdef(\rho_n) \cap L \neq \emptyset\}\\
      \mathcal N = \{N' \subseteq \iota^{-1}(\_) \cap N_\Delta
        \mid \bigwedge_{\rw \in P_\Delta} \prow{\fin(\rw)}{\tail(\rw)}{} \wedge
        \bigwedge_{\rw \in N'} \neg\prow{\fin(\rw)}{\tail(\rw)}{}
      \nleq \Empty\}
    }
    {
      \normseq \Sigma
      {\{(\rw_0 = \bigwedge_{\rw \in P} \rw
      \wedge \bigwedge_{\rw \in N} \neg\rw, \leq, \Empty)\}}
      {\bigsqcap_{\iota:N \to L \cup \{\_\}}
        \left(\bigsqcup_{\ell \in L} \constrset_\ell^\iota
          \sqcup \bigsqcap_{N' \in \mathcal N}
          \constrset_{N'}^\iota
    \right)} }
    \and \inferrule*[right=\rulename{Ntail-mono}]
    {
      \forall \iota \in I_1 \cap I_2. \forall \ell \in L.
      \normseq \Sigma
      {\{(\bigwedge_{\rw \in P} \rw(\ell)
          \wedge \bigwedge_{\rw \in \iota^{-1}(\ell)} \neg\rw(\ell)
     , \leq, \Empty)\}}
      {\constrset^\iota_\ell} \\
      I_1 = \{\iota:N \to L \cup \{\_\} \mid
        \forall \rw_\circ \in \iota^{-1}(\_){\setminus}N_\Delta.
        \bigwedge_{\rw \in P{\setminus}P_\Delta} \Def(\rw) \nleq
      \Def(\rw_\circ)\} \\
      I_2 = \{\iota:N \to L \cup \{\_\} \mid
        \forall \rw_\circ \in \iota^{-1}(\_) \cap N_\Delta.
        \forall \rw_p \in P_\Delta.
      \tail(\rw_\circ) \neq \tail(\rw_p)\}\\
      \tlv(P \cup N) \subseteq \Delta
      \and L = \bigcup_{\rw \in P \cup N} \fin(\rw)\\
      P_\Delta = \{\rw \in P \mid \tail(\rw) \in \Vars\}
      \and N_\Delta = \{\rw \in N \mid \tail(\rw) \in \Vars\}
    }
    {\normseq \Sigma {\{(\bigwedge_{\rw \in P} \rw \wedge \bigwedge_{\rw \in N}
      \neg\rw, \leq, \Empty)\}}
    {\bigsqcap_{\iota \in I_1 \cap I_2} \bigsqcup_{\ell \in L} \constrset_\ell^\iota}}
    \and \inferrule*[right=\rulename{Ntail-tlv}]
    {
      \rho \lightning_O \tlv(P \cup N)
      \and \rho \notin \Delta
      \and \exists\rw' \in P \cup N.
      \rw' = \prow {} \rho L
    }
    {\normseq \Sigma
      {\{(\rw_0 = \bigwedge_{\rw \in P} \rw \wedge \bigwedge_{\rw \in N} \neg\rw, \leq, \Empty)\}}
      {\{\single(\row{} \rho L,\rw_0)\}}
    }
  \end{mathpar}
\end{framed}
  \caption{Normalization rules for row single normal forms}
  \label{fig:nrules-rec}
\end{figure}

We formalize normalization as a judgment $\normseq \Sigma C \constrset$, which
states that under the environment $\Sigma$ (which, informally, contains the
types that have already been processed at this point), $C$ is normalized to
$\constrset$.
The main judgment is derived according to the rules in
\cref{fig:nrules-all,fig:nrules-typ,fig:nrules-rec}.
Given a type, field or row variable $\vterm$ and a conjunction of types, field-types or rows respectively, we define
$\single(\vterm, \tterm \wedge \vterm) = \{(\vterm, \leq, \neg\tterm)\}$ and
$\single(\vterm, \tterm \wedge \neg\vterm) = \{(\vterm, \geq, \tterm)\}$.
We call single normal form a DNF that has no topmost disjunction.

If $\normseq \emptyset C \constrset$, then $\constrset$ is the result of the
normalization of $C$.
We now prove soundness and termination of the constraint normalization
algorithm.

\begin{definition}
  We define the family $(\leq_n)_{n \in \mathbb N}$ of subtyping relations as \[
    t \leq_n s \iffdef \forall\eta.\TypeInter{t}_n \eta \subseteq \TypeInter{s}_n \eta
  \] where $\TypeInter{\cdot}_n$ is the rank $n$ interpretation of a type,
  defined as \[
    \TypeInter{t}_n \eta =
    \{ d \in \TypeInter{t}_\eta \mid \domheight(d) \leq n \}
  \] and $\domheight(d)$ is the height of an element in $\Domain$, defined as
  \begin{align*}
    \dots\\
    \domheight(\domrec{\drow}^V)
    &= 1 + \domheight(\drow)\\
    \domheight(\domrow{(\ell = \delta_\ell)_{\ell \in L}}{L'}^V)
    &= \max(1, (\domheight(\delta_\ell))_{\ell \in L})
  \end{align*}
\end{definition}

\begin{lemma}
  \label{l:subtyping_rectorow_levels}%
  Let $t \leq \orec{}$. Then,
  $t \leq_{n+1} \Empty \iff \rectorow t \leq_n \Empty$.
\end{lemma}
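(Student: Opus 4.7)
The plan is to reduce the subtyping statement to its set-theoretic meaning and then exploit the tight correspondence, already baked into \cref{def:indint}, between record elements $\domrec{\drow}^V$ and their underlying rows $\drow$. Concretely, observe that \cref{def:indint} gives $(\domrec{\drow}^V : \R) = (\drow : \rectorow{\R})$ on positive atoms, and that $\rectorow{\cdot}$ is defined homomorphically on Boolean combinations (\cref{def:row}); combined with the clauses for unions and negations in \cref{def:indint}, this yields, by a straightforward structural induction on the DNF of $t$, the equivalence
\[
  \forall \domrec{\drow}^V \in \Domain.\quad \domrec{\drow}^V \in \TypeInter{t} \iff \drow \in \IntR{\TypeInter{\rectorow{t}}}.
\]

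With this correspondence in hand, the second step uses the hypothesis $t \leq \orec{}$: every element of $\TypeInter{t}$ must be of the form $\domrec{\drow}^V$, since no constant, function, or $\Undef$-indexed element can inhabit $\orec{}$. Therefore $\TypeInter{t}_{n+1} \neq \emptyset$ iff there exists some $\domrec{\drow}^V \in \TypeInter{t}$ with $\domheight(\domrec{\drow}^V) \leq n+1$. By the definition of $\domheight$, we have $\domheight(\domrec{\drow}^V) = 1 + \domheight(\drow)$, so the height-bound $\leq n+1$ on the record element is equivalent to the height-bound $\leq n$ on the underlying row. Combined with the correspondence above, this gives
\[
  \TypeInter{t}_{n+1} \neq \emptyset \iff \IntR{\TypeInter{\rectorow{t}}}_n \neq \emptyset,
\]
whence $t \leq_{n+1} \Empty \iff \rectorow{t} \leq_n \Empty$ after unfolding the definition of $\leq_k$.

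The only subtlety is the handling of variables and $\eta$: strictly, $\leq_n$ is defined via $\forall \eta$, so the correspondence at the level of $\TypeInter{\cdot}$ must be upgraded to the parametric interpretation $\IntQ{\TypeInter{\cdot}_\eta}$ of \cref{def:indinteta}. This is immediate, however, because the extra conditions on $V$ (tagging by type and row variables) are the same on both sides of $(\domrec{\drow}^V : \R)$ and $(\drow : \rectorow{\R})$, and the $\tail(\rw) = \rho \Rightarrow \rho \in V$ clause on rows does not interact with the record-level tag $V$ on $\domrec{\drow}^V$ (since $\domrec{\drow}^V$ inherits its tag from $\drow$ in the correspondence we use). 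I expect no real obstacle beyond bookkeeping; the main care is to verify that the homomorphic extension of $\rectorow{\cdot}$ to unions and negations commutes with interpretation, which is a direct consequence of the last three clauses of \cref{def:indint}.
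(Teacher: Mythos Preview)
Your proposal is correct and follows essentially the same approach as the paper's proof: establish the correspondence $\IntQ{\TypeInter{t}_\eta} = \{\domrec{\drow}^V \mid \drow \in \IntQR{\TypeInter{\rectorow t}_\eta}\}$ by induction on the Boolean structure of $t$ (using that $(\domrec{\drow}^V:\R) = (\drow:\rectorow{\R})$ and the homomorphic extension of $\rectorow{\cdot}$), then shift heights using $\domheight(\domrec{\drow}^V) = 1 + \domheight(\drow)$. One small clarification on your last paragraph: the outer tag $V$ on $\domrec{\drow}^V$ is not ``inherited'' from $\drow$; rather, it is simply irrelevant to membership in record atoms (and hence in any Boolean combination thereof), which is precisely why the correspondence holds uniformly in $V$.
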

\begin{proof}
  By definition and a trivial well-founded
  induction on type operators, we have
  $\IntQ{\TypeInter{t}_\eta} =
  \{\domrec\drow^V \mid \dbar \in \IntQ{\TypeInter{\rectorow t}_\eta}\}$.
  Thus, by definition of height, we have
  $\IntQ{\TypeInter{t}_{n+1}}\eta =
  \{\domrec\drow^V \mid \dbar \in
  \IntQ{\TypeInter{\rectorow t}_n}\eta\}$.
\end{proof}

\begin{definition}
  Given a constraint-set $C$ and a substitution $\sigma$, we define the rank $n$
  satisfaction predicate $\Vdash_n$ as \[
    \sigma \Vdash_n C \iffdef
    \forall(\tterm_1,\leq,\tterm_2) \in C. \tterm_1 \leq_n \tterm_2 \textand
    \forall(\tterm_1,\geq,\tterm_2) \in C. \tterm_1 \geq_n \tterm_2
  \]
\end{definition}

\begin{lemma}
  \begin{enumerate}
    \item $\sigma \Vdash_0 C$ for all $\sigma$ and $C$.
    \item $\sigma \Vdash C \iff \forall n. \sigma \Vdash_n C$.
  \end{enumerate}
\end{lemma}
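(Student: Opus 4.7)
The proof plan rests on two simple observations about the rank-$n$ interpretation: every element of $\Domain$ has a positive, finite height, and the rank-$n$ interpretation is obtained by intersecting the full interpretation with the set $H_n = \{d \mid \domheight(d) \leq n\}$. The first fact is guaranteed by \cref{def:interpretation}, which builds $\Domain$ inductively out of finite terms (so $\domheight$ is always a well-defined natural number), together with the clauses in the definition of $\domheight$, which show $\domheight(d) \geq 1$ for every constructor (records add one, rows start at one, and the analogous clauses for constants, arrow sets, and $\Undef^V$ — tacitly assumed to give height $1$ — behave the same way).

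For item~(1), I would argue that $H_0 = \emptyset$. Hence for every type, field-type or row $\tterm$ and every $\eta$, $\IntQ{\TypeInter{\tterm}_0}\eta = \IntQ{\TypeInter{\tterm}_\eta} \cap H_0 = \emptyset$. Consequently $\IntQ{\TypeInter{\tterm_1}_0}\eta \subseteq \IntQ{\TypeInter{\tterm_2}_0}\eta$ holds vacuously for every pair of terms of the same kind, and likewise for the reverse inclusion. Therefore $\tterm_1 \leq_0 \tterm_2$ and $\tterm_1 \geq_0 \tterm_2$ for all $\tterm_1,\tterm_2$, from which $\sigma \Vdash_0 C$ follows immediately for every $\sigma$ and every well-kinded $C$.

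For item~(2), the forward implication is by monotonicity of intersection with $H_n$: if $\tterm_1 \sigma \leq \tterm_2 \sigma$, i.e.\ $\IntQ{\TypeInter{\tterm_1 \sigma}_\eta} \subseteq \IntQ{\TypeInter{\tterm_2 \sigma}_\eta}$ for every $\eta$, then intersecting both sides with $H_n$ gives $\tterm_1 \sigma \leq_n \tterm_2 \sigma$; the $\geq$ case is symmetric. For the backward implication, I would argue by contraposition. Suppose $\sigma \not\Vdash C$, witnessed (without loss of generality) by a constraint $(\tterm_1,\leq,\tterm_2)\in C$ and an $\eta$ with some $d \in \IntQ{\TypeInter{\tterm_1\sigma}_\eta} \setminus \IntQ{\TypeInter{\tterm_2\sigma}_\eta}$. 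Let $n = \domheight(d)$, which is finite; then $d \in H_n$, so $d$ is still in the rank-$n$ interpretation of $\tterm_1\sigma$ but not in that of $\tterm_2\sigma$, which shows $\tterm_1\sigma \not\leq_n \tterm_2\sigma$, and hence $\sigma \not\Vdash_n C$. The $\geq$ case proceeds identically by swapping the two sides.

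The only subtlety I anticipate — and the main thing worth spelling out carefully — is justifying that the coinductive definition of types does not spoil the argument: although $\tterm_1\sigma$ and $\tterm_2\sigma$ may be infinite as syntactic trees (recursive types), the domain $\Domain$ is inductively defined with finite elements, so each witness $d$ has a well-defined finite $\domheight(d)$, and this is exactly what makes the backward direction of (2) work. Apart from that, the argument is elementary bookkeeping on the definitions of $\TypeInter{\cdot}_n$, $\leq_n$, and $\Vdash_n$.
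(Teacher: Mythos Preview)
Your argument is correct and is essentially the elementary content that the paper's proof defers to: the paper simply cites Lemma~C.7 of \cite{castagna15} (which establishes the analogous fact for the type system without record polymorphism) together with \cref{l:subtyping_rectorow_levels} to cover the new record/row constructors, whereas you unpack the argument directly from the definitions of $\domheight$, $\TypeInter{\cdot}_n$, and $\Vdash_n$. The substance is the same---finiteness of elements in the inductively defined domain gives each witness a finite height, so the ranked interpretations are cofinal in the full one---but your version is self-contained while the paper's is a two-line citation. Your remark that the omitted clauses of $\domheight$ (for constants, function graphs, and $\Undef^V$) must all yield height at least~$1$ is exactly the assumption needed for item~(1), and it is indeed what the paper's ``\dots'' elides.
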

\begin{proof}
  Consequence of \cite[Lemma C.7]{castagna15} and
  \cref{l:subtyping_rectorow_levels}.
\end{proof}

\begin{definition}[Marshalling]
  \label{def:marshall}%
  Given a conjunction of atomic rows
  $\rw_0 = \bigwedge_{\rw \in P} \rw \wedge \bigwedge_{\rw \in N} \rw$,
  a finite set of labels and a set of variables, we define marshalling as
  \[
    \marshall(\rw, L, \Delta) =
    \bigwedge_{\rw \in P{\setminus}P_0} \rw
    \wedge \bigwedge_{\rw \in N{\setminus}N_0} \neg\rw
    \wedge \bigwedge_{\rw \in P_0}
    \row{(\ell = \rw[\ell])_{\ell \in L}}{\cutrowbis \rw L \Delta}{}
    \wedge \bigwedge_{\rw \in N_0}
    \neg\row{(\ell = \rw[\ell])_{\ell \in L}}{\cutrowbis \rw L \Delta}{}
  \]
  where $P_0 = \{\rw \in P \mid \tail(\rw) = \rho \notin \Delta \textand
  \rdef(\rho) \cap L \neq \emptyset\}$, similarly for $N_0$.
\end{definition}

\begin{lemma}
  \label{l:marshall}%
      If $\sigma \Vdash \{(\marshall(\rw,L,\Delta),\leq,\Empty\}$, then
      $\sigma \Vdash \{(\rw,\leq,\Empty)\}$.
\end{lemma}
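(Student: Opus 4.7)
\medskip

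\noindent\textbf{Proof plan.} The goal is to show that $\marshall(\rw,L,\Delta)\sigma \leq \Empty$ implies $\rw\sigma \leq \Empty$. I would prove the stronger semantic equivalence
\[
  \marshall(\rw_0,L,\Delta)\sigma \;\simeq\; \rw_0\sigma,
\]
from which the stated implication is immediate. Note that this equivalence only makes sense when $\sigma$ is well-defined on the synthetic variables $\rho.\ell$ and $\cutvar{\rho}{L}$ occurring in $\marshall(\rw_0,L,\Delta)$; but this is guaranteed by the hypothesis $\sigma \Vdash \{(\marshall(\rw_0,L,\Delta),\leq,\Empty)\}$, since otherwise $\marshall(\rw_0,L,\Delta)\sigma$ would not even be defined.

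Since marshalling acts atom-by-atom and substitution commutes with $\wedge$ and $\neg$, it is enough to show, for each atomic row $\rw$ with $\tail(\rw)=\rho\notin\Delta$ and $\rdef(\rho)\cap L\neq\emptyset$, the atomic equivalence
\[
  \row{(\ell = \rw[\ell])_{\ell \in L}}{\cutrowbis \rw L \Delta}{}\sigma
  \;\simeq\; \rw\sigma.
\]
For the left-hand side, by definition of substitution on the synthetic variables, if $\sigma(\rho) \simeq \row{(\ell=\tau_\ell)_{\ell\in L\cap\rdef(\rho)}}{\rw'}{\Labels\setminus\rdef(\rho)}$, then $(\rho.\ell)\sigma=\tau_\ell$ for $\ell\in L\cap\rdef(\rho)$ and $(\cutvar{\rho}{L})\sigma=\rw'$. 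Unfolding the outer $\row{\cdot}{\cdot}{\emptyset}$ via the clause $\rec{L}{\row{(\ell=\tau_\ell)_{\ell\in L'}}{\tl}{L}}\eqdef\rec{L,(\ell=\tau_\ell)_{\ell\in L'}}{\tl}$ and its analogue for rows, the left-hand side yields a row that carries the fields in $L\cap\rdef(\rho)$ from $\sigma(\rho)$, the fields in $L\setminus\rdef(\rho)$ from $\rw_p$ (with $\sigma$ applied to their field-types), the fields in $\fin(\rw)\setminus L$ also from $\rw$ (with $\sigma$ applied), and the tail $\rw'$. For the right-hand side, applying $\sigma$ directly to $\rw$ and expanding via the $\rec{L}{\rho}$ macro gives exactly the same row.

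The main obstacle is purely bookkeeping: keeping track of how the label set of $\rw$ partitions against $L\cap\rdef(\rho)$, $L\setminus\rdef(\rho)$, and $\fin(\rw)\setminus L$, and ensuring that the expansion of $\cutrowbis{\rw}{L}{\Delta}$ (which rewrites the tail to $\cutvar{\rho}{L}$) combined with the field prefix $(\ell=\rw[\ell])_{\ell\in L}$ indeed reassembles $\rw$ up to the decomposition of Lemma~\ref{l:normalize_row}. One must also check the corner cases where $\rho\in\Delta$ (in which case $\marshall$ leaves the atom untouched, so the equivalence is trivial) and where $\rdef(\rho)\cap L=\emptyset$ (analogous). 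Once the atomic equivalence is established, a straightforward homomorphic extension to Boolean combinations, together with preservation of subtyping under substitution, yields the desired implication.
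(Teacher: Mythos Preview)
Your proposal is correct and follows essentially the same approach as the paper: both prove the stronger fact $\marshall(\rw_0,L,\Delta)\sigma \simeq \rw_0\sigma$ by reducing to the atomic case and using that $\sigma$ being a solution guarantees $\sigma(\rho)$ has the right shape for the synthetic variables $\rho.\ell$ and $\cutvar{\rho}{L}$ to be well-defined. Your write-up is in fact more explicit about the label-partition bookkeeping and the corner cases than the paper's terse proof.
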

\begin{proof}
  We show that $\marshall(\rw_0,L,\Delta)\sigma = \rw_0\sigma$.
  Let $\rw = \row{(\ell = \rw(\ell))_{\ell \in L_1}}{\rho}{} \in P_0 \cup N_0$
  and $\rw' = \row{(\ell = \rw[\ell]_{\ell \in L}} {\cutrowbis \rw L
  \Delta}{}$.
  Then, $\rw' = \row{(\ell = \rw(\ell)_{\ell \in L_1},
  (\ell = \rho.\ell)_{\ell \in L{\setminus}L_1}}
  {\cutvar \rho L}{}$.
  We have $\sigma(\rho) \simeq \row{(\ell = \tau_\ell)_{\ell \in
  L{\setminus}L_1}}{\rw''}{\Labels{\setminus}\rdef(\rho)}$
  because $\sigma$ is a solution for
  $\{\marshall(\rw_0,L,\Delta),\leq,\Empty\}$.
  Therefore, $\rw\sigma = \marshall(\rw,L,\Delta)\sigma$.
\end{proof}

Given a set $\Sigma$ of types and rows, we write $C(\Sigma)$ for the
constraint-set $\{(\tterm,\leq,\Empty) \mid \tterm \in \Sigma\}$.

\begin{lemma}[Soundness]
  \label{l:soundness-normalization}%
  Let $C$ be a constraint-set.
  If $\normseq \emptyset C \constrset$, then for all constraint-set
  $C' \in \constrset$ and all substitution $\sigma$, we have
  $\sigma \Vdash C' \implies \sigma \Vdash C$.
\end{lemma}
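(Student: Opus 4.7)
The proof will proceed by first strengthening the statement to account for the hypothesis environment $\Sigma$ and coinductive types, following the pattern of~\cite{castagna15}. Specifically, I would prove by induction on $n \in \mathbb{N}$ and then by a structural induction on the derivation of $\normseq \Sigma C \constrset$ that, for every $n$, for every $C' \in \constrset$ and every substitution $\sigma$,
\[
  \sigma \Vdash_{n-1} C(\Sigma) \textand \sigma \Vdash C' \quad\implies\quad \sigma \Vdash_n C,
\]
where $C(\Sigma) = \{(t,\leq,\Empty) \mid t \in \Sigma\}$. The theorem then follows by instantiating with $\Sigma=\emptyset$ (so the rank-$(n-1)$ hypothesis is vacuous), quantifying over all $n$, and applying the equivalence $\sigma \Vdash C \iff \forall n.\,\sigma \Vdash_n C$.

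The case analysis is on the last rule of the derivation. The purely structural rules (\rulename{Nempty}, \rulename{Njoin}, \rulename{Nsym}, \rulename{Nzero}, \rulename{Ndnf}, \rulename{Nunion}) follow directly from the definitions of $\sqcap$ and $\sqcup$ and from elementary properties of subtyping (DNF preserves equivalence, $t\leq s \iff t \wedge \neg s \leq \Empty$, unions are empty iff each disjunct is empty). The variable-isolation rules \rulename{Ntlv} and \rulename{Ntail-tlv} are immediate: if $\sigma$ satisfies $\single(\vterm',\tterm_0)$, then by reassembling the conjunction the original constraint is satisfied too. Rules \rulename{Nopt} and \rulename{Nbasic} are verified by inspection (for \rulename{Nopt}, $\constrset=\emptyset$ so no $C'$ exists and the implication is vacuous). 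For \rulename{Narrow}, the result follows from the standard arrow-decomposition formula of semantic subtyping (cf.~\cite{frisch08}). For \rulename{Nrec}, the equivalence $t \leq \Empty \iff \rectorow{t} \leq \Empty$ (whenever $t \leq \orec{}$), provided by Lemma~\ref{l:subtyping_rectorow_levels} at the appropriate rank, reduces the record case to the row case. The hypothesis rules \rulename{Nhyp} and \rulename{Nassum} are handled exactly by the rank bookkeeping: \rulename{Nhyp} discharges an already-visited $t$ using the rank-$(n-1)$ assumption about $\Sigma$, while \rulename{Nassum} applies the induction hypothesis at the same rank with an enlarged $\Sigma$.

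The main obstacle is the rule \rulename{Nrow}. Here one has to show that if $\sigma$ satisfies $C'$, drawn from $\bigsqcap_\iota(\bigsqcup_\ell \constrset_\ell^\iota \sqcup \bigsqcap_{N'} \constrset_{N'}^\iota)$, then the original row conjunction $\rw_0 = \bigwedge_{\rw \in P} \rw \wedge \bigwedge_{\rw \in N} \neg\rw$ is empty under $\sigma$. My plan is to reason as follows: first, apply the marshalling lemma (Lemma~\ref{l:marshall}) to reduce emptiness of $\rw_0\sigma$ to emptiness of $\marshall(\rw_0,L,\Delta)\sigma$, whose top-level atoms all have tails in $\Delta \cup \{\crecsign,\orecsign\} \cup \{\cutvar{\rho}{L}\}$. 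Second, apply the general row-decomposition formula (Lemma~\ref{l:decomposition}) to the marshalled row: this produces exactly the disjunction indexed by $\iota$, by $\ell\in L$, and by $N'\subseteq \iota^{-1}(\_)\cap N_\Delta$ that appears in the conclusion of \rulename{Nrow}. Third, observe that by construction $C'$ selects, for every $\iota$, either a field-level witness (an $\ell$ such that a field constraint is satisfied, which by induction hypothesis implies emptiness of $\bigwedge_{\rw\in P}\rw[\ell]\wedge\bigwedge_{\rw\in\iota^{-1}(\ell)}\neg\rw[\ell]$ under $\sigma$) or a tail-level witness (every $N'\in\mathcal N$ yields emptiness of the corresponding $\cutvar{\rho}{L}$-conjunction); the restriction to $N'\in\mathcal N$ is sound because when $\bigwedge_{\rw\in P_\Delta}\prow{\fin(\rw)}{\tail(\rw)}{}\leq \bigvee_{\rw\in N'}\prow{\fin(\rw)}{\tail(\rw)}{}$ already holds by subtyping on $\Delta$-variables alone, the corresponding disjunct is trivially satisfied and can be omitted. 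Putting the three steps together yields that every $\iota$-branch of the decomposition of $\marshall(\rw_0,L,\Delta)$ is empty under $\sigma$, hence $\marshall(\rw_0,L,\Delta)\sigma\leq\Empty$, hence $\rw_0\sigma\leq\Empty$.

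The case \rulename{Ntail-mono} is analogous but simpler: since $\tlv(P\cup N)\subseteq\Delta$, no marshalling is needed and one applies Lemma~\ref{l:subtyping} directly; the sets $I_1$ and $I_2$ in the side condition precisely exclude those maps $\iota$ for which conditions~\eqref{eq:subtyping_nvars} or~\eqref{eq:subtyping_vars} of Lemma~\ref{l:subtyping} are already satisfied, so the remaining $\iota\in I_1\cap I_2$ are exactly those for which one must witness emptiness via a field decomposition~\eqref{eq:subtyping_dom}, matching $\bigsqcup_\ell \constrset_\ell^\iota$. I expect the bookkeeping in the \rulename{Nrow} case --- in particular, verifying that the $\rw[\ell]$ (which can be a fresh field variable $\rho.\ell$) is handled correctly by the induction hypothesis applied to $\fieldnorm$, and that the new variables $\cutvar{\rho}{L}$ satisfy the well-formedness invariants needed by the outer induction --- to be the only genuinely delicate part of the argument.
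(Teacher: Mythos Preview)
Your overall strategy matches the paper's: strengthen to arbitrary $\Sigma$, work with the ranked predicate $\Vdash_n$, do structural induction on the derivation, and for the new row rules appeal to Lemma~\ref{l:marshall} and Lemma~\ref{l:decomposition} (for \rulename{Nrow}) and Lemma~\ref{l:subtyping} (for \rulename{Ntail-mono}). That part is right, including your account of why the restrictions to $\mathcal N$ and to $I_1\cap I_2$ are sound.

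The gap is in the rank bookkeeping of your strengthened statement. You propose the single invariant
\[
  \sigma \Vdash_{n-1} C(\Sigma) \ \textand\ \sigma \Vdash C' \quad\implies\quad \sigma \Vdash_n C,
\]
but this fails at \rulename{Nhyp}: there $C'=\emptyset$, $t\in\Sigma$, and your hypothesis only gives $t\sigma \leq_{n-1} \Empty$, whereas the goal is $t\sigma \leq_n \Empty$. You even write that \rulename{Nhyp} ``discharges $t$ using the rank-$(n-1)$ assumption'', but that is one rank short. The paper fixes this by proving \emph{two} mutually dependent statements, one for $\normseq{\Sigma}{C}{\constrset}$ and one for the auxiliary judgment $\normseqbis{\Sigma}{C}{\constrset}$:
\begin{itemize}
  \item for $\normseq{}{}{}$: \ $\sigma \Vdash_n C(\Sigma)$ and $\sigma \Vdash_n C'$ imply $\sigma \Vdash_n C$;
  \item for $\normseqbis{}{}{}$: \ $\sigma \Vdash_n C(\Sigma)$ and $\sigma \Vdash_n C'$ imply $\sigma \Vdash_{n+1} C$.
\end{itemize}
The rank gain lives only in the $\normseqbis{}{}{}$ statement, which is exactly the judgment concluded by the constructor rules \rulename{Nbasic}, \rulename{Narrow}, \rulename{Nrec}; \rulename{Nassum} then spends this gain, via an outer induction on $n$, to justify adding $t$ to $\Sigma$. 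With the $\normseq{}{}{}$ statement at the \emph{same} rank, \rulename{Nhyp} goes through directly. So you need to split your invariant along the two judgments (or equivalently track the constructor-depth explicitly); collapsing them into one ``shift by $-1$'' invariant does not type-check. A minor related point: the paper also keeps $\sigma \Vdash_n C'$ ranked rather than your full $\sigma \Vdash C'$; your choice is harmless for the final theorem but makes the inner induction hypothesis slightly weaker than necessary.
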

\begin{proof}
  We prove the following stronger statements.
  \begin{enumerate}
    \item Assume $\normseq \Sigma C \constrset$.
      For all $C' \in \constrset$, $\sigma$ and $n$,
      if $\sigma \Vdash_n C(\Sigma)$ and $\sigma \Vdash_n C'$,
      then $\sigma \Vdash_n C$.
    \item Assume $\normseqbis \Sigma C \constrset$.
      For all $C' \in \constrset$, $\sigma$ and $n$,
      if $\sigma \Vdash_n C(\Sigma)$ and $\sigma \Vdash_n C'$,
      then $\sigma \Vdash_{n+1} C$.
  \end{enumerate}
  The cases \rulename{Nempty}, \rulename{Njoin}, \rulename{Nsym},
  \rulename{Nzero}, \rulename{Ndnf}, \rulename{Nunion}, \rulename{Nhyp},
  \rulename{Nassum}, \rulename{Nbasic} and \rulename{Narrow} are given
  in~\cite[Lemma C.10]{castagna15}.
  \begin{description}
    \item[\rulename{Ntlv}]
      Let $\tterm_0 = \tterm \wedge \bigwedge_{\vterm \in P} \vterm \wedge
      \bigwedge_{\vterm \in N} \neg\vterm$
      and $\vterm'$ be the smallest type variable with respect to the
      order in $P \cup N$.
      If $\vterm' \in P \setminus \Delta$, then we have
      $\sigma \Vdash_n \{(\vterm', \leq, \neg \tterm_{\vterm'})\}$
      with $\tterm_{\vterm'} = \tterm \wedge \bigwedge_{\vterm \in P{\setminus}\vterm}
      \vterm \wedge \bigwedge_{\vterm \in N} \neg\vterm$,
      thus $(\vterm')\sigma \leq_n \neg \tterm_{\vterm'}\sigma$.
      This is equivalent to $\tterm\sigma \leq_n \Empty$,
      and we conclude $\sigma \Vdash_n \{(\tterm, \leq, \Empty)\}$.
      If $\vterm' \in N \setminus \Delta$, the result follows as well.
      If $\vterm' \in \Delta$, then $P \cup N \subseteq \Delta$ by
      \cref{def:ordering}.
      We have $\TypeInter{\tterm_0}
      = \{\dterm \in \TypeInter{\tterm} \mid P \subseteq \Tag(\dterm) \textand N \cap
      \Tag(\dterm) = \emptyset\}$.
      Since $\tterm_0$ is non empty, the variables in $P$ and $N$ are different, and
      since those variables cannot be instantiated,
      we can satisfy $\tterm_0 \leq \Empty$ if and only if $\tterm \leq \Empty$
      is satisfied.
    \item[\rulename{Nopt}] Direct by emptiness of $\constrset$.
    \item[\rulename{Nrec}]
      Let $t = \bigwedge_{\R \in P} \R \wedge \bigwedge_{\R \in N} \neg\R$.
      By induction, we have $\sigma \Vdash_n \{(\rectorow{t} \leq \Empty)\}$.
      This is by definition equivalent to $\rectorow{t}\sigma \leq_n \Empty$,
      thus $\rectorow{t\sigma} \leq_n \Empty$
      and by \cref{l:subtyping_rectorow_levels} $t\sigma \leq_{n+1} \Empty$,
      which concludes $\sigma \Vdash_{n+1} \{(t \leq \Empty)\}$.
    \item[\rulename{Nrow}]
      The result is direct if
      $\constrset = \emptyset$.
      Otherwise, we have $C' = \bigcup_{\iota \in N \to L \cup \{\_\}} C_\iota'$,
      where $C_\iota' \in \bigsqcup_{\ell \in L} \constrset_\ell^\iota
      \sqcup \bigsqcap_{N' \in \mathcal N}
      \constrset_{N'}^\iota$.
      For all $\iota:N \to L \cup \{\_\}$,
      there are two cases.
      \begin{enumerate}
        \item In the first case, there is $\ell \in L$ such that $C_\iota' \in \constrset_\iota^\ell$.
          Then, by induction, \[
            \sigma \Vdash_n \{(\bigwedge_{\rw \in P} \rw[\ell]
              \wedge \bigwedge_{\rw \in \iota^{-1}(\ell)} \neg\rw[\ell]
          \leq \Empty)\} \]
        \item In the second case, we have
          $C_\iota' = \bigcup_{N'} C_\iota^{N'}$,
          where $C_\iota^{N'} \in \constrset^\iota_{N'}$.
          For all $N'$, there are two subcases.
          \begin{enumerate}
            \item In the first subcase, $N' \in \mathcal N$.
              Then, by induction, \[
                \sigma \Vdash_n \{(\bigwedge_{\rw \in P}
                  \cutrowbis \rw L \Delta
                  \wedge \bigwedge_{\rw \in \iota^{-1}(\_){\setminus}N'}
                  \neg(\cutrowbis \rw L \Delta)
              \leq \Empty)\} \]
            \item In the second subcase, $N' \in \iota^{-1}(\_) \cap N_\Delta \setminus \mathcal N$.
              Then we have \[
                \sigma \Vdash_n {\{(\bigwedge_{\rw \in P_\Delta}
                    \row{\fin(\rw)}{\tail(\rw)}{}
                    \wedge \bigwedge_{\rw \in N'}
                    \neg\row{\fin(\rw)}{\tail(\rw)}{},
              \leq, \Empty)\}} \]
          \end{enumerate}
      \end{enumerate}
      In other words,
      $\forall \iota: N \to L \cup \{\_\}.
      \forall N' \subseteq \iota^{-1}(\_) \cap N_\Delta:$
      \begin{align*}
        &\displaystyle\left(\exists \ell \in L. \bigwedge_{\rw \in P}
        \rw[\ell]\sigma \leq_n\!\! \bigvee_{\rw \in \iota^{-1}(\ell)} \rw[\ell]\sigma\right)
        \enspace\textor\enspace \displaystyle\left(
          \bigwedge_{\rw \in P} (\cutrowbis \rw L \Delta)\sigma
          \leq_n\negthickspace \bigvee_{\rw \in
          \iota^{-1}(\_) \setminus N'}\negthickspace\negthickspace
          (\cutrowbis \rw L \Delta)\sigma
        \;\right) \\
        &\textor \left(\bigwedge_{\rw \in P_\Delta}
          \row{\fin(\rw)}{\tail(\rw)}{}
          \wedge \bigwedge_{\rw \in N'}
        \neg\row{\fin(\rw)}{\tail(\rw)}{} \right)
      \end{align*}
      Let $P_0 = \{\rw \in P \mid \tail(\rw) = \rho \notin \Delta
      \textand \rdef(\rho) \cap L \neq \emptyset\}$,
      same for $N_0$.
      By \cref{l:decomposition}, the definition of substitution
      and the fact that $\rw[\ell] = \rw(\ell)$ and $\cutrowbis \rw L
      \Delta = \cutrow \rw L$ for all $\rw \notin
      P_0 \cup N_0$,
      we have $\marshall(\rw_0,L,\Delta)\sigma \leq_n \Empty$,
      that is $\sigma \Vdash \{\marshall(\rw_0,L,\Delta),\leq,\Empty)\}$,
      By \cref{l:marshall}, we conclude $\sigma \Vdash_n \{(\rw_0 \leq \Empty)\}$.
    \item[\rulename{Ntail-mono}]
      The result is direct if $\bigsqcap_{\iota \in I} \bigsqcup_{\ell \in L}
      \constrset_\ell^\iota = \emptyset$.
      Otherwise, we have $C' = \bigcup_{\iota \in I} C_\iota'$,
      where $\constrset_\iota' \in \bigsqcup_{\ell \in L} \constrset_\iota^\ell$.
      By definition of $\sqcup$, for all $\iota \in I$, there is $\ell \in L$
      such that $C_\iota' \in \constrset_\iota^\ell$.
      By induction, $\sigma \Vdash_n \{(\bigwedge_{\rw \in P} \rw(\ell) \wedge
      \bigwedge_{\rw \in \iota^{-1}(\ell)} \neg\rw(\ell))\}$.
      In other words, \[
        \forall \iota \in I. \exists \ell \in L.
        \bigwedge_{\rw \in P} \rw(\ell)\sigma
        \wedge \bigwedge_{\rw \in \iota^{-1}(\ell)} \neg\rw(\ell)\sigma
        \leq_n \Empty
      \] Moreover, by hypothesis that $L \subseteq \bigcup_{\rw \in P \cup N} \fin(\rw)$.
      Also, for each $\iota \notin I$,
      one of the two conditions~\eqref{eq:subtyping_nvars}
      or~\eqref{eq:subtyping_vars} of \cref{l:subtyping} is satisfied.
      By this corollary, $\bigwedge_{\rw \in P} \rw\sigma \wedge \bigwedge_{\rw
      \in N} \neg\rw\sigma \leq_n \Empty$.
      We conclude $\sigma \Vdash_n \{(\bigwedge_{\rw \in P} \rw \wedge \bigwedge_{\rw \in N} \neg\rw \leq \Empty)\}$.
    \item[\rulename{Ntail-tlv}]
      Let $\rw_0 = \bigwedge_{\rw \in P} \rw \wedge \bigwedge_{\rw \in N} \rw$.
      By hypothesis, there is $\rw' = \prow{}{\cutvar \rho {L_1}}{L_2} \in P \cup N$
      such that $\rho \lightning_O \rw'$.
      There are two similar cases, we detail the one where $\rw' \in P$.
      Let $\rw_0' = \bigwedge_{\rw \in P{\setminus}\{\rw'\}}
      \neg\rw \wedge \bigwedge_{\rw \in N} \rw$.
      By hypothesis, we have $\sigma \Vdash_n \{(\rw'\sigma \leq \neg \rw_0')\}$,
      thus $\rw'\sigma \leq_n \rw_0'\sigma$.
      This is equivalent to $\rw_0\sigma \leq_n \Empty$,
      and we conclude $\sigma \Vdash_n \{(\rw_0 \leq \Empty)\}$.
      \qedhere
  \end{description}
\end{proof}

We introduce a notion of plinth generalizing the one of \citet{frisch04} to
types, field-types and rows.
This notion is used to prove termination of the algorithm.
\begin{definition}[Plinth]
  \label{def:plinth}%
  A plinth $\beth \subset \Types_\Undef \cup \Rows$ is a set of types, field-types and rows with the following properties:
  \begin{itemize}
    \item $\beth$ is finite;
    \item $\beth$ contains $\Empty$, $\Any$, $\Undef$, $\orow{}\emptyset$ and is closed under
      Boolean connectives ($\vee$, $\wedge$, $\neg$);
    \item for all type $t_1 \to t_2 \in \beth$, we have $t_1 \in \beth$ and
      $t_2 \in \beth$;
    \item for all type $\R \in \beth$, we have $\rectorow{\R} \in \beth$;
    \item for all row $\rw \in \beth$ of definition space $\Labels{\setminus}L_\rw$,
      let $L$ be the set of labels appearing explicitely in $\beth$
      and $V$ the set of row variables in $\beth$, we have:
      \begin{itemize}
        \item for all $\ell \in L$, $\rw(\ell) \in \beth$
        \item for all $L' \subseteq L$, $\forall V' \subseteq V$,
          $\cutrowbis \rw {L'} {V'} \in \beth$
          and $\marshall(\rw,L',V') \in \beth$.
      \end{itemize}
  \end{itemize}
\end{definition}
Every finite set of types, field-types and types is included in a plinth.
Indeed, for a regular type $t$, the set of its subtrees $S$ is finite, while
rows and field-types are inductively defined.
The definition of the plinth ensures that the closure of $S$ under Boolean
connectives is also finite.
Moreover, if a type, field-type or row belongs in a plinth, the set of its
subtrees also does.
Finally, if a record or a row belongs to a plinth, the rows obtained by
marshalling or by removing fields also belongs to the plinth, but only if these
operations are done with respect to the set of labels present in the plinth, in
order to guarantee finiteness of the plinth.

\begin{lemma}[Termination]
  \label{l:termination-normalization}%
  Let $C$ be a finite constraint set.
  The normalization of $C$ terminates.
\end{lemma}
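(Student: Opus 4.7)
The plan is to adapt the termination argument of \citet{frisch04,castagna15} to our richer setting by exhibiting, for a given finite input constraint set $C$, a single plinth $\beth$ (in the sense of \cref{def:plinth}) that contains every type, field-type, and row that can ever appear in a derivation of $\normseq \emptyset C \constrset$. Since $\beth$ is finite by construction, the set $\Sigma$ of ``already visited'' types (managed by \rulename{Nhyp}/\rulename{Nassum}) can only grow finitely, which is what blocks the potentially circular recursion created by \rulename{Nassum} and \rulename{Nrec} on regular (coinductive) record types.

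First, I would build $\beth$ explicitly. Let $S_0$ be the finite set of subterms of the types, field-types, and rows occurring in $C$ (finite by regularity), let $L_0$ be the finite set of labels appearing explicitly in $S_0$, and let $V_0$ be the (finite) set of variables of $C$. I close $S_0$ under: Boolean connectives; the operations $\tterm\mapsto\rectorow\tterm$ and $\R\mapsto\R(\ell)$; the operations $\rw\mapsto\rw[\ell]$, $\rw\mapsto\cutrowbis{\rw}{L'}{\Delta}$ and $\rw\mapsto\marshall(\rw,L',\Delta)$ for $L'\subseteq L_0$; and DNF rewriting. Because all these operations produce only finitely many new syntactic forms over the same underlying material in $S_0,L_0,V_0$, the closure $\beth$ is finite and satisfies \cref{def:plinth}. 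A key auxiliary lemma to establish here is that each rule of \cref{fig:nrules-all,fig:nrules-typ,fig:nrules-rec} transforms premises whose types lie in $\beth$ into conclusion types also in $\beth$: this amounts to an inspection of the rules, the only non-trivial cases being \rulename{Nrec} (closure under $\rectorow{\cdot}$), \rulename{Nrow} (closure under $\rw[\ell]$ and $\cutrowbis{\rw}{L}{\Delta}$, where $L\subseteq L_0$ since $L=\rdef(\rw_0)\setminus\rdef(\rho)\subseteq L_0$), and \rulename{Ntail-mono} (closure under $\rw(\ell)$ for $\ell\in L_0$).

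Next, I would define a lexicographic well-founded measure $\mu$ on a normalization state $(\Sigma,C)$ whose components are, in decreasing priority: $(a)$ the cardinality of $\beth\setminus\Sigma$; $(b)$ the multiset of ``structural weights'' of constraints in $C$, where the weight of $(\tterm_1,c,\tterm_2)$ counts, roughly, the number of outer type connectives plus the number of atoms in the DNF of $\tterm_1\wedge\neg\tterm_2$, with an extra lexicographic component counting the number of top-level variables not yet isolated by \rulename{Ntlv}/\rulename{Ntail-tlv}. I would then show that every rule strictly decreases $\mu$: \rulename{Nassum} decreases $(a)$; the purely Boolean rules (\rulename{Nzero}, \rulename{Ndnf}, \rulename{Nunion}, \rulename{Njoin}, \rulename{Nsym}) keep $(a)$ constant and strictly decrease $(b)$; the variable-isolation rules \rulename{Ntlv} and \rulename{Ntail-tlv} emit $\single(\cdot,\cdot)$ constraints, which contain no connectives to further decompose; the atom rules \rulename{Nbasic}, \rulename{Nopt}, \rulename{Narrow}, \rulename{Nrec}, \rulename{Nrow}, \rulename{Ntail-mono} recurse on strictly smaller types in $\beth$ (in particular, \rulename{Nrec} goes from a record type to its underlying row, which has strictly smaller structural weight once the constructor has been stripped).

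The main obstacle is handling \rulename{Nassum} together with \rulename{Nrec}, because these are the two rules that allow the recursion to revisit a record type transformed into a row: without the plinth bound this pair could loop through regular recursive record types. The argument is that \rulename{Nassum} only fires once per type $t\in\beth$ (by its side-condition $t\notin\Sigma$), so the number of times \rulename{Nassum} triggers is at most $|\beth|$; between two such triggers the measure $(b)$ strictly decreases. A secondary subtlety is the \rulename{Nrow} rule, whose conclusion branches over all maps $\iota\colon N\to L\cup\{\_\}$ and all $N'\in\mathcal N$: although the branching factor is large, it is finite, and each subgoal is strictly smaller in $(b)$ since the inner row or field-type has fewer top-level atomic rows. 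Once these two points are discharged, termination follows by well-foundedness of $\mu$.
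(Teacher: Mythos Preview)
Your proposal follows essentially the same strategy as the paper: build a finite plinth $\beth$ containing every term that can arise, use $|\beth\setminus\Sigma|$ as the primary lexicographic component (so that \rulename{Nassum} can fire only finitely often), and bound the remaining rules by a secondary structural measure. Your explicit plinth construction and the invariant that every rule stays inside $\beth$ are exactly what the paper relies on (it simply asserts ``let $\beth$ be a plinth such that $B\subseteq\beth$'').

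The one point where you diverge from the paper is how the row rules \rulename{Nrec}, \rulename{Nrow}, \rulename{Ntail-mono} are handled, and here your argument has a small gap. You claim each subgoal of \rulename{Nrow} is ``strictly smaller in $(b)$ since the inner row or field-type has fewer top-level atomic rows'', but the tail premise of \rulename{Nrow} is $\bigwedge_{\rw\in P}\cutrowbis{\rw}{L}{\Delta}\wedge\bigwedge_{\rw\in\iota^{-1}(\_)\setminus N'}\neg(\cutrowbis{\rw}{L}{\Delta})$, which can have exactly $|P|+|N|$ atoms when $\iota$ is constantly $\_$ and $N'=\emptyset$; so with the measure you sketch this step need not strictly decrease. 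The paper avoids this by \emph{not} claiming the triple $(|\beth\setminus\Sigma|,U,|C|)$ decreases on \rulename{Nrec}/\rulename{Nrow}/\rulename{Ntail-mono}; instead it observes that these rules can only be chained in a bounded way before reaching a rule that does decrease: \rulename{Nrow}'s tail premise is always handled next by \rulename{Ntail-tlv}, which terminates immediately, while its field premises feed into \rulename{Ndnf}/\rulename{Nhyp}/\rulename{Nassum}/\rulename{Ntlv}/\rulename{Nopt}. To make your version go through you would need either to adopt this bounded-lookahead argument, or to enrich $(b)$ with an explicit component (e.g., a rule-phase tag that decreases along the forced chain \rulename{Nrec}$\to$\rulename{Nrow}$\to$\rulename{Ntail-tlv}) so that the measure genuinely drops at each step.
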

\begin{proof}
  Let $B$ be the set of types occuring in $C$.
  As $C$ is finite, $B$ is finite as well.
  Let $\beth$ be a plinth such that $B \subseteq \beth$.
  Then, when we normalize a constraint $(t, \leq, \Empty)$ during the process of
  $\normseq \emptyset C {}$, $t$ would belong to $\beth$.
  We prove the lemma by induction on $(|\beth\setminus\Sigma|,U,|C|)$,
  $U$ is the number of unions $\vee$ occurring in the constraint-set $C$ (over
  any kind) plus the number of constraints $(\tterm_1, \leq, \tterm_2)$ where
  $\tterm_2 \neq \Empty$ or $\tterm_1$ is not in DNF, and $C$ is the
  constraint-set to be normalized.
  We detail the original cases, others are described in the proof \cite[Lemma
  C.14]{castagna15}.
  \begin{description}
    \item[\rulename{Nopt}] Terminates immediately.
    \item[\rulename{Nrec}]
      None of the indices decreases, but the next rule to apply must be one of
      \rulename{Nrow}, \rulename{Ntail-mono} or \rulename{Ntail-tlv}.
    \item[\rulename{Nrow}]
      Although $(|\beth\setminus\Sigma|,U,|C|)$ may not change, the next rule to apply must be
      one of \rulename{Ndnf}, \rulename{Nhyp}, \rulename{Nassum},
      \rulename{Ntlv}, \rulename{Nopt} for $\constrset_\ell^\iota$
      and \rulename{Ntail-tlv} for $\constrset_{N'}^\iota$.
    \item[\rulename{Ntail-mono}] Although $(|\beth\setminus\Sigma|,U,|C|)$ may
      not change, the next rule to apply must be one of
      \rulename{Ndf}, \rulename{Nhy}, \rulename{Nassum}, \rulename{Ntlv},
      \rulename{Nopt}.
    \item[\rulename{Ntail-tlv}] Terminates immediately.
      \qedhere
  \end{description}
\end{proof}

\begin{definition}[Normalized constraint]
  A constraint is said to be \emph{normalized} if it is of the shape
  $(\vterm, c, \tterm)$, where $\vterm$ and $\tterm$ are of the same kind.
  A constraint-set is \emph{normalized} if all constraints are.
\end{definition}

\begin{lemma}
  Let $C$ be a constraint-set and $\normseq \emptyset C \constrset$.
  Then, all constraint-sets $C' \in \constrset$ are normalized.
\end{lemma}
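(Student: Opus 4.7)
The plan is to proceed by induction on the derivation of $\normseq \Sigma C \constrset$, using the inference rules in \cref{fig:nrules-all,fig:nrules-typ,fig:nrules-rec}, and showing that in every rule the produced $\constrset$ contains only normalized constraint-sets. The key observation is that only a small number of rules actually \emph{emit} new constraints; most rules either return trivial outputs (such as $\{\emptyset\}$ or $\emptyset$) or delegate to subderivations whose outputs are normalized by the induction hypothesis, possibly combined via $\sqcap$ and $\sqcup$. Since $\sqcap$ and $\sqcup$ produce constraint-sets by taking unions of the already-normalized component constraint-sets, normalization is preserved by these operations.

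First I would dispatch the rules that clearly preserve or establish the property by inspection. The base cases \rulename{Nempty}, \rulename{Nopt}, \rulename{Nhyp}, and \rulename{Nbasic} all return $\{\emptyset\}$ or $\emptyset$, which are (vacuously) normalized. The structural rules \rulename{Njoin}, \rulename{Nsym}, \rulename{Nzero}, \rulename{Ndnf}, \rulename{Nunion}, \rulename{Nassum}, \rulename{Narrow}, \rulename{Nrec}, \rulename{Nrow}, and \rulename{Ntail-mono} all either invoke $\sqcap$/$\sqcup$ over normalized subsolutions or directly relay the result of a normalization on a simpler judgment; so by induction hypothesis and closure of normalization under $\sqcap$ and $\sqcup$ they too produce normalized constraint-sets.

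The only rules that genuinely introduce a new constraint in the output are \rulename{Ntlv} (when $\vterm' \notin \Delta$) and \rulename{Ntail-tlv}; both emit exactly $\single(\vterm', \tterm_0)$ for some top-level variable $\vterm'$ and some term $\tterm_0$. By the definition of $\single$, this yields a constraint of the form $(\vterm', \leq, \neg\tterm)$ or $(\vterm', \geq, \tterm)$, which is normalized by definition. In the case of \rulename{Ntail-tlv}, the variable extracted is the row variable $\rho$ (of the generalized form $\cutvar{\rho}{L}$), so the resulting constraint also has shape $(\vterm, c, \rw)$ with $\vterm$ a row variable and $\rw$ a row of the same definition space, which is normalized in the row sense.

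I do not expect any genuine obstacle here: the lemma is essentially a routine syntactic inspection of the inference system. The one point requiring minor care is that in the row rules (\rulename{Nrow}, \rulename{Ntail-mono}) the recursive calls are made on row or field constraints rather than type constraints; one has to check that the generalized variables $\rho.\ell$ and $\cutvar{\rho}{L}$ introduced via $\rw[\ell]$ and $\cutrowbis{\rw}{L}{\Delta}$ are treated uniformly as variables of the appropriate kind (as laid out right after \cref{d:rowvar_decomposition}), so that when these constraints are eventually singled out, either by \rulename{Ntlv} (for $\rho.\ell$, treated as a field variable) or by \rulename{Ntail-tlv} (for $\cutvar{\rho}{L}$, treated as a row variable), the resulting constraint indeed matches the shape $(\vterm, c, \tterm)$ with $\vterm$ and $\tterm$ of the same kind.
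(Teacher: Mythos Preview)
Your proof is correct and follows exactly the same approach as the paper, which simply states ``Straightforward by induction on the algorithm derivation.'' Your expanded case analysis is a faithful unpacking of that one-line proof.
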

\begin{proof}
  Straightforward by induction on the algorithm derivation.
\end{proof}

\begin{lemma}[Finiteness]
  \label{l:finiteness-normalization}%
  Let $C$ be a constraint-set and $\normseq \emptyset C \constrset$.
  Then, $\constrset$ is finite.
\end{lemma}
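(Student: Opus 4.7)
The plan is to proceed by induction on the derivation of $\normseq \emptyset C \constrset$, which by \cref{l:termination-normalization} is a finite tree. The inductive hypothesis states that each premise of the form $\normseq{\Sigma'}{C'}{\constrset'}$ yields a finite $\constrset'$, and I must show that the $\constrset$ at the conclusion is also finite. The two key auxiliary facts are that (i) if $\constrset_1$ and $\constrset_2$ are finite sets of constraint-sets, then so are $\constrset_1 \sqcup \constrset_2 = \constrset_1 \cup \constrset_2$ and $\constrset_1 \sqcap \constrset_2 = \{C_1 \cup C_2 \mid C_1 \in \constrset_1, C_2 \in \constrset_2\}$ (obvious from their definitions), and (ii) finite iterated $\sqcup$ or $\sqcap$ over a finite index family preserves finiteness.

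With these facts in hand, most cases are immediate. The base-case rules \rulename{Nempty}, \rulename{Nopt}, \rulename{Nhyp}, \rulename{Nbasic}, \rulename{Ntlv} (in the $\vterm' \notin \Delta$ subcase), and \rulename{Ntail-tlv} each produce either $\emptyset$, $\{\emptyset\}$, or a singleton of the form $\{\single(\cdot,\cdot)\}$, which are trivially finite. The pass-through rules \rulename{Nsym}, \rulename{Nzero}, \rulename{Ndnf}, \rulename{Nassum}, \rulename{Nrec} inherit finiteness directly from their single premise via the induction hypothesis. The rule \rulename{Nunion} (and similarly \rulename{Njoin}) involves a finite iterated $\sqcap$ indexed by $I$, which is finite because the disjunction/conjunction being normalized is syntactically finite.

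The more delicate rules are \rulename{Narrow}, \rulename{Nrow}, and \rulename{Ntail-mono}, where the output is constructed by iterated $\sqcup/\sqcap$ over potentially large but still finite index sets. In \rulename{Narrow} the indices are $j \in N$ and $P' \subseteq P$, both finite because $P$ and $N$ are finite sets of atoms from a DNF. In \rulename{Nrow}, the outer $\sqcap$ is indexed by maps $\iota : N \to L \cup \{\_\}$, and since $N$ and $L$ are finite this set of maps is finite; the inner $\sqcup$ is indexed by $\ell \in L$ (finite) and by $N' \in \mathcal N \subseteq \Pd(N_\Delta)$ (finite as a subset of a finite powerset). Rule \rulename{Ntail-mono} is analogous, with the index set $I_1 \cap I_2 \subseteq \{\iota : N \to L \cup \{\_\}\}$ and inner index $\ell \in L$ both finite. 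Combining the induction hypothesis (each $\constrset_\ell^\iota$ and $\constrset_{N'}^\iota$ is finite) with the preservation of finiteness under $\sqcap$ and $\sqcup$ over finite index families yields finiteness of the result.

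The only mild obstacle is bookkeeping the finiteness of the various index sets in \rulename{Nrow} and \rulename{Ntail-mono}; once one observes that all of them are built from $N$, $L$, and subsets thereof — all finite by the fact that a DNF contains only finitely many atoms and each row has finitely many explicit fields — the argument is straightforward. No new semantic insight is needed beyond what is already used in the termination proof.
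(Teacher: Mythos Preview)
Your proposal is correct and takes essentially the same approach as the paper, which simply notes that each normalizing rule generates a finite set of finite constraint-sets. Your version is a more detailed unfolding of that observation, explicitly checking that the index sets in the combinatorial rules (\rulename{Narrow}, \rulename{Nrow}, \rulename{Ntail-mono}) are finite and that $\sqcup$/$\sqcap$ preserve finiteness; the paper compresses all of this into a single sentence.
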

\begin{proof}
  It is easy to prove that each normalizing rule generates a finite set of
  finite sets of normalized constraints.
\end{proof}

\begin{definition}
  Let $C$ be a normalized constraint-set and $O$ an ordering on $\vars(C)$
  and on the labels occurring in $C$.
  We say $C$ is well-ordered if for all normalized constraint
  $(\vterm_1,c,\tterm) \in C$ and for all $\vterm_2 \in \tlv(\tterm)$,
  $O(\vterm_1) < O(\vterm_2)$ holds.
\end{definition}

\begin{lemma}
  \label{l:wellordered-normalization}%
  Let $C$ be a constraint-set and $\normseq \emptyset C \constrset$.
  Then for all normalized constraint-set $C' \in \constrset$,
  $C'$ is well-ordered.
\end{lemma}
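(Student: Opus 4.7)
The plan is to proceed by induction on the derivation of $\normseq \emptyset C \constrset$, and more generally on $\normseq \Sigma C \constrset$ for arbitrary $\Sigma$. The statement to maintain through the induction is that every $C' \in \constrset$ is well-ordered with respect to the fixed ordering $O$ on all variables (including the synthetic constructors $\rho.\ell$ and $\cutvar{\rho}{L}$, whose order is defined in \cref{def:ordering}). Well-orderedness of $C'$ is preserved under unions and the $\sqcap,\sqcup$ combinators on constraint-sets (since these combine well-ordered sets pointwise), so rule \rulename{Njoin}, and the combinations produced in \rulename{Nrow} and \rulename{Ntail-mono}, reduce to the well-orderedness of the components obtained by induction.

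The only rules that actually \emph{produce} a normalized constraint of the form $(\vterm, c, \tterm)$ are \rulename{Ntlv} and \rulename{Ntail-tlv}. For \rulename{Ntlv}, the generated constraint is $\single(\vterm', \tterm_0)$ with $\vterm' \lightning_O P \cup N$, i.e., $\vterm'$ is the $O$-smallest top-level variable of $\tterm_0$ that is not in $\Delta$. Either $\single$ returns $(\vterm', \leq, \neg(\tterm \wedge \bigwedge_{\vterm \in P\setminus\{\vterm'\}} \vterm \wedge \bigwedge_{\vterm \in N} \neg\vterm))$ or the analogous lower bound; in both cases the top-level variables of the right-hand side are exactly $(P \cup N) \setminus \{\vterm'\}$ (the variables inside $\tterm$ occur under a type constructor and therefore are not top-level), and by minimality of $\vterm'$ each of them has $O$-order strictly greater than $O(\vterm')$. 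The case \rulename{Ntail-tlv} is analogous: the premise guarantees $\rho \lightning_O \tlv(\rw_0)$ and the produced constraint is on the variable represented by $\row{}{\rho}{L}$, whose remaining top-level variables in $\rw_0$ are exactly the other row variables in $P \cup N$, all strictly larger than $\rho$.

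For the remaining rules, well-orderedness is preserved either trivially or by direct appeal to the induction hypothesis: \rulename{Nempty} produces $\{\emptyset\}$; \rulename{Nhyp} produces $\{\emptyset\}$; \rulename{Nsym}, \rulename{Nzero}, \rulename{Ndnf}, \rulename{Nunion}, \rulename{Nassum}, \rulename{Nrec}, \rulename{Nbasic}, \rulename{Narrow}, and \rulename{Nopt} either return something vacuous or delegate entirely to subderivations whose $\constrset$ is well-ordered by induction. For \rulename{Nrow}, one uses induction on each of the calls $\constrset_\ell^\iota$ and $\constrset_{N'}^\iota$, noting that the new variables introduced by $\rw[\ell]$ (namely $\rho.\ell$) and by $\cutrowbis \rw L \Delta$ (namely $\cutvar{\rho}{L}$) are still ordinary variables from the point of view of $O$, and normalization of the resulting constraint-sets treats them uniformly. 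The same observation applies to \rulename{Ntail-mono}.

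The main technical check is that the auxiliary constructors $\rho.\ell$ and $\cutvar{\rho}{L}$ introduced during normalization are well-behaved under $O$: they are injected into the ordering by \cref{def:ordering} and inherit a position that is strictly compatible with $\rho$'s position. This bookkeeping, combined with the minimality-of-variable side conditions in \rulename{Ntlv} and \rulename{Ntail-tlv}, is what makes the induction go through. I expect this to be the only delicate point; the rest is a routine structural traversal of the rule system in \cref{fig:nrules-all,fig:nrules-typ,fig:nrules-rec}.
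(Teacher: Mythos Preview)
Your proposal is correct and takes essentially the same approach as the paper: the paper's proof simply observes that only \rulename{Ntlv} and \rulename{Ntail-tlv} actually generate normalized constraints, and in each case the side condition $\vterm' \lightning_O P\cup N$ (resp.\ $\rho \lightning_O \tlv(P\cup N)$) ensures the singled-out variable is $O$-minimal among the top-level variables. Your version spells out the structural induction and the preservation under $\sqcap/\sqcup$ more explicitly, but the core argument is identical.
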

\begin{proof}
  There are two different ways to generate normalized constraints:
  \begin{description}
    \item[\rulename{Ntlv}] We single out the type variable $\vterm'$ whose
      order is minimum, because $\tlv(\tterm) = \emptyset$.
    \item[\rulename{Ntail-tlv}] We single out the row variable $\rho$ whose
      order is minimum.
      \qedhere
  \end{description}
\end{proof}

\subsection{Constraint Merging and Saturation}
\label{sec:merging}

\begin{figure}
  \begin{framed}
  \begin{mathpar}
    \inferrule*[right=\rulename{MLB}]
    {\forall i \in I. (\vterm, \geq, \tterm_i) \in C \and |I| \geq 2}
    {\merseq C {(C \setminus \{(\vterm, \geq, \tterm_i) \mid i \in I\}\cup
    \{(\vterm, \geq, \bigvee_{i \in I} \tterm_i)\})}}
    \and \inferrule*[right=\rulename{MUB}]
    {\forall i \in I. (\vterm, \leq, \tterm_i) \in C \and |I| \geq 2}
    {\merseq C {(C \setminus \{(\vterm, \leq, \tterm_i) \mid i \in I\}\cup
    \{(\vterm, \leq, \bigvee_{i \in I} \tterm_i)\})}}
  \end{mathpar}
\end{framed}
  \caption{Merging rules}
  \label{fig:mrules}
\end{figure}

\begin{figure}
  \begin{framed}
  \begin{mathpar}
    \inferrule*[right=\rulename{Shyp}]
    {\satseq{\Sigma_p}{C_\Sigma \cup \{(\vterm,\geq,\tterm_1), (\vterm,\leq,\tterm_2)\}}C \constrset\and (\tterm_1,\tterm_2) \in \Sigma_p}
    {\satseq{\Sigma_p}{C_\Sigma}{\{(\vterm,\geq,\tterm_1),(\vterm,\leq,\tterm_2)\} \cup C}\constrset}
    \and \inferrule*[right=\rulename{Sassum}]
    {(\tterm_1,\tterm_2) \notin \Sigma_p
      \and \normseq \emptyset {\{(\tterm_1,\leq,\tterm_2)\}} \constrset\\
      \constrset' =\{\{(\vterm,\geq,\tterm_1),(\vterm,\leq,\tterm_2)\} \cup C
      \cup C_\Sigma\} \sqcap \constrset\\
    \forall C' \in \constrset'.\msseq{\Sigma_p \cup \{(\tterm_1,\tterm_2)\}} \emptyset{C'}{\constrset_{C'}}}
    {\satseq{\Sigma_p}{C_\Sigma}{\{(\vterm,\geq,\tterm_1),(\vterm,\leq,\vterm_2)\} \cup C}{\bigsqcup_{C' \in \constrset'} \constrset_{C'}}}
    \and \inferrule*[right=\rulename{Sdone}]
    {\forall \vterm, \tterm_1, \tterm_2.\nexists\{(\vterm,\geq,\tterm_1),(\vterm,\leq,\tterm_2)\} \subseteq C}
    {\satseq{\Sigma_p}{C_\Sigma} C {\{C \cup C_\Sigma\}}}
  \end{mathpar}
  {\small Where $\msseq{\Sigma_p}{C_\Sigma} C \constrset$ means that there
    exists $C'$ such that $\merseq C {C'}$ and
  $\satseq{\Sigma_p}{C_\Sigma}{C'}{\constrset}$.}
\end{framed}
  \caption{Saturation rules}
  \label{fig:srules}
\end{figure}

After normalization, we have a set of constraint-sets where the same variable
might have several upper and lower bounds given by different constraints on that
variable.
After this step, we wish to obtain unique upper and lower bounds for the same
variable.
This is done in two phases. For each normalized constraint-set $C$ and following
the order on variables, we:
\begin{itemize}
  \item Merge two constraints $(\vterm, \leq, \tterm_1)$ and
    $(\vterm, \leq, \tterm_2)$ into $(\vterm, \leq, \tterm_1 \wedge \tterm_2)$.
  \item Merge two constraints $(\vterm, \geq, \tterm_1)$ and
    $(\vterm, \geq, \tterm_2)$ into $(\vterm, \geq, \tterm_1 \vee \tterm_2)$.
\end{itemize}
Once the set $C$ has been transformed into $C'$ where there are no such
constraints left, we saturate the set to verify that the lower bound is indeed a
subtype of the upper bound, once again following the order on variables.
From two constraints $(\tterm_1, \leq, \vterm)$ and $(\vterm, \leq, \tterm_2)$,
we normalize the constraint $(\tterm_1 \wedge \neg\tterm_2, \leq, \Empty)$.
We obtain a set of constraint-sets $\constrset$.
We add the new resulting constraint-sets accordingly to the existing ones with
$\constrset \sqcap \{C'\}$, then apply the step of merging and saturation again
on each constraint-sets in $\constrset \sqcap \{C'\}$.
Termination of this step is assured as in the step of constraint
normalization by an additional argument $M$ to the merge function, saving
visited types.

Formally, this part is almost exactly the same as in the original algorithm in
\cite[Section C.1.2]{castagna15}.
One simply needs to replace occurrences of $\alpha$ and $t$ by the more general
meta-variables $\vterm$ and $\tterm$.
The rules are given in \cref{fig:mrules,fig:srules}.
Termination is proved thanks to the generalized definition of plinths
(\cref{def:plinth}).

\subsection{Harmonization}
\label{sec:harmonization}

Thanks to the last step, we now have a set of constraint-sets such that in all
constraint-sets, each variable in the domain of the constraint-set has at most
one upper bound and one lower bound.
Yet, for a row variable $\rho$ of the original type constraints, we can have
several constraints for the derived constructors $\cutvar \rho L$, for instance
a constraint-set containing both
$(\cutvar \rho {L_1}, \leq, \tterm_1)$ and
$(\cutvar \rho {L_2}, \leq, \tterm_2)$,
with $L_1 \cap \rdef(\rho) \neq L_2 \cap \rdef(\rho)$.
There could also be occurrences of another construction $\cutvar \rho {L_3}$ in
$\tterm_1$ with a different $L_3$.
We want a unique decomposition of $\rho$ in each constraint-set, with
a unique set $L_0$ such that only $\cutvar \rho L_0$
may appear in the domain of the constraint-set.
From this, we will build a solution such that
$\sigma(\rho) \simeq {(\ell = \tau_\ell)_{\ell \in L_0}}{\rw}$.
There can be occurrences of $\cutvar \rho {L'}$ not in the domain, or of
$\rho.\ell'$, but substitution will be correctly defined since we take $L_0$ to
cover all such occurrences.

Formally, a harmonized constraint is defined as follow.
\begin{definition}[Harmonized constraint]
  Let $C \subseteq \Constraints$ be a saturated constraint-set.
  We say $C$ is \emph{harmonized} if for each type variable $\rho \in \dom(C)$,
  there is a finite set of labels $L$ such that:
  \begin{enumerate}
    \item $\forall (\cutvar \rho {L_i}, c, \rw_i) \in C.\ L_i{\setminus}\rdef(\rho)
      = L\setminus\rdef(\rho)$;
    \item $\forall \cutvar \rho {L'} \in \vars(C).\ L' \subseteq L$;
    \item $\forall \rho.\ell \in \vars(C).\ \ell \in L\cap\rdef(\rho)$.
  \end{enumerate}
\end{definition}
If a constraint-set is not harmonized, then it is of the shape
$C \cup \{\row{}{\cutvar \rho L}{L}, c, \rw\}$, where $L_0 \nsubseteq L$,
when we define
$L_0 = \bigcup_{\rho.\ell \in \vars(C) \cup \vars(\rw)} \{\ell\}
\cup \bigcup_{\cutvar \rho {L'} \in \vars(C) \cup \vars(\rw)} L'$.
To harmonize the decomposition of the variable row,
we normalize the constraint-set
$\{(\row{(\ell = \rho.\ell)_{\ell \in L_0{\setminus}L}}
{\cutvar \rho {L_0}}{L}, c, \rw)\}$.
We integrate the obtained set of constraint-sets $\constrset$ to the existing
constraints with $\{C\} \sqcap \constrset$ and apply merging and
harmonization recursively.
The rules are given in \cref{fig:hrules}.

\begin{figure}
  \begin{framed}
  \begin{mathpar}
    \inferrule*[right=\rulename{Hdone}]
    {\text{$C$ is harmonized}}
    {\harmseq \Sigma C {\{C\}}}
    \and \inferrule*[right=\rulename{Harm}]
    {
      L_0 = \bigcup_{\rho.\ell \in \vars(C) \cup \vars(\rw)} \{\ell\}
      \cup \bigcup_{\cutvar \rho {L'} \in \vars(C) \cup \vars(\rw)} L'
      \and L_0 \nsubseteq L\\
      \and \normseq \emptyset
      {\{(\row{(\ell = \rho.\ell)_{\ell \in L_0{\setminus}L}}{\cutvar \rho {L_0}}{L}, c, \rw)\}}
      \constrset\\
      \forall C_n \in \{C\} \sqcap \constrset.
      (\msseq \emptyset \emptyset {C_n}{\constrset_n})\\
      \and \forall C_m \in \constrset_n \textand C_m \notin \Sigma.
      (\harmseq {\Sigma \cup C_m} {C_m} {\constrset_n^m})
    }
    {\harmseq \Sigma {C \cup \{(\row{}{\cutvar \rho L}{L},c,\rw)\}}
      {\bigsqcup_{C_n \in \{C\} \sqcap \constrset}
    \bigsqcup_{C_m \in \constrset_n} \constrset_n^m}}
  \end{mathpar}
\end{framed}
  \caption{Harmonization rules}
  \label{fig:hrules}
\end{figure}

\begin{lemma}[Soundness]
  \label{l:soundness-harmonization}%
  Let $C$ be a finite saturated constraint-set.
  If $\harmseq \emptyset C \constrset$, then for all constraint-set $C' \in
  \constrset$ and all substitution $\sigma$, we have $\sigma \Vdash C' \implies
  \sigma \Vdash C$.
\end{lemma}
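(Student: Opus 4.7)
The plan is to proceed by induction on the derivation of $\harmseq \emptyset C \constrset$, with a case analysis on the last rule used. The base case \rulename{Hdone} is immediate since there $\constrset = \{C\}$, so every $C' \in \constrset$ is $C$ itself.

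For the inductive case \rulename{Harm}, let $C^{\star} = C \cup \{(\row{}{\cutvar \rho L}{L}, c, \rw)\}$ be the constraint-set at the conclusion, and suppose $\sigma \Vdash C_m'$ for some $C_m' \in \constrset_n^m$, where $C_m \in \constrset_n$ comes from merging and saturation of some $C_n \in \{C\} \sqcap \constrset$. I would then chain three soundness results: first, the induction hypothesis applied to the recursive harmonization call $\harmseq{\Sigma \cup C_m}{C_m}{\constrset_n^m}$ gives $\sigma \Vdash C_m$; second, soundness of merging and saturation---which transposes directly from the proof in \cite{castagna15} since the rules in \cref{fig:mrules,fig:srules} are the same, just with the generalized variable meta-notation $\vterm$ in place of $\alpha$---gives $\sigma \Vdash C_n$; third, the definition of $\sqcap$ yields $C_n = C \cup C''$ for some $C'' \in \constrset$, so $\sigma \Vdash C$ is immediate.

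What remains is the heart of the argument: proving $\sigma \Vdash \{(\row{}{\cutvar \rho L}{L}, c, \rw)\}$. Since $C'' \in \constrset$ and $\constrset$ is the normalization of the singleton $\{(\row{(\ell = \rho.\ell)_{\ell \in L_0{\setminus}L}}{\cutvar \rho {L_0}}{L}, c, \rw)\}$, soundness of normalization (\cref{l:soundness-normalization}) yields $\sigma \Vdash \{(\row{(\ell = \rho.\ell)_{\ell \in L_0{\setminus}L}}{\cutvar \rho {L_0}}{L}, c, \rw)\}$. I then need the key semantic equivalence
\[
  (\row{(\ell = \rho.\ell)_{\ell \in L_0{\setminus}L}}{\cutvar \rho {L_0}}{L})\,\sigma
  \;\simeq\;
  (\row{}{\cutvar \rho L}{L})\,\sigma,
\]
which I would establish by unfolding the action of $\sigma$ on both $\rho.\ell$ (for $\ell \in L_0 \setminus L$) and $\cutvar \rho {L_0}$ as specified in \cref{d:rowvar_decomposition}: both sides reconstruct exactly the restriction of $\sigma(\rho)$ to $\Labels \setminus L$, one by exposing the fields in $L_0 \setminus L$ explicitly and one by leaving them folded inside the tail $\cutvar \rho L$.

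The main obstacle is this semantic equivalence. Its correctness relies on two subtle points. First, $\sigma$ must be well-defined on the derived constructors $\rho.\ell$ and $\cutvar \rho {L_0}$; this follows from the well-definedness conditions of \cref{def:solution}, because all these constructors occur in $C''$ and $\sigma$ satisfies $C''$. Second, one must check that the partition $L \sqcup (L_0 \setminus L) \sqcup (\Labels \setminus L_0)$ of $\Labels$ is respected coherently by both sides---in particular, that the default $\Undef$ value assigned by $\cutvar \rho L \sigma$ on labels outside $\rdef(\rho)$ matches what is obtained on the other side after pulling the $L_0 \setminus L$ fields out of $\cutvar \rho {L_0} \sigma$. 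A secondary, milder concern is well-foundedness of the induction: it rests on the $\Sigma$ parameter, which prevents the recursive harmonization call from re-entering an already-visited constraint-set, by a termination argument analogous to the plinth-based one used for normalization (\cref{l:termination-normalization}).
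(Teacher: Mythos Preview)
Your proposal is correct and follows essentially the same approach as the paper's proof: induction on the derivation, chaining the induction hypothesis for the recursive harmonization call, soundness of merging/saturation, and soundness of normalization, then establishing the key equality between $(\row{(\ell = \rho.\ell)_{\ell \in L_0{\setminus}L}}{\cutvar \rho {L_0}}{L})\sigma$ and $(\row{}{\cutvar \rho L}{L})\sigma$ by unfolding $\sigma(\rho)$. One small point worth making explicit: since the recursive call in \rulename{Harm} is $\harmseq{\Sigma \cup C_m}{C_m}{\constrset_n^m}$ with nonempty $\Sigma$, the induction hypothesis must be stated for arbitrary $\Sigma$, not just $\emptyset$; the paper does this explicitly (``We prove the more general statement for all $\Sigma$''), and you should too.
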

\begin{proof}
  The proof is by induction on the derivation tree.
  We prove the more general statement for all $\Sigma$.
  The base case $\harmseq \Sigma C {\{C\}}$ is trivial.
  In the inductive case, let
  $C = C_0 \cup \{(\row{}{\cutvar \rho L}{L},c,\rw)\}$.
  By definition of $\sqcup$, there are $C_n \in \{C\} \sqcap \constrset$ and
  $C_m \in \constrset_n$ such that $C' \in \constrset_n^m$.
  Since $\sigma \Vdash C'$, we have by induction hypothesis on
  $\harmseq {\Sigma \cup C_m} {C_m} {\constrset_n^m}$ that $\sigma \Vdash C_m$.
  By definition of $\sqcup$ again, there is $C_n \in \{C_0\} \sqcap \constrset$
  such that $C_m \in \constrset_n$.
  Since $\sigma \Vdash C_m$, we have by soundness of constraint merging on
  $\msseq \emptyset \emptyset {C_n}{\constrset_n}$ that $\sigma \Vdash C_n$.
  We have $C_n = C_0 \cup C_n'$ with $C_n' \in \constrset$.
  Since $\sigma \Vdash C_n$, we have $\sigma \Vdash C_n'$ and by
  \cref{l:soundness-normalization} $\sigma \Vdash
  \{(\row{(\ell = \rho.\ell)_{\ell \in L_0{\setminus}L}}{\cutvar \rho
  {L_0}}{L}, c, \rw)\}$.
  Thus, $\sigma(\rho) \simeq
  \row{(\ell = \tau_\ell)_{\ell \in L \cap \rdef(\rho)},
    (\ell = \tau_\ell)_{\ell \in
  L_0{\setminus}L}}{\rw'}{\Labels{\setminus}\rdef(\rho)}$.
  Then, $(\row{(\ell = \rho.\ell)_{\ell \in
  L_0{\setminus}L}}{\cutvar \rho {L_0}}{L})\sigma
  = (\row{}{\cutvar \rho L} L)\sigma
  = \row{(\ell = \tau_\ell)_{\ell \in L_0{\setminus}L}}{\rw'}{L}$.
  Therefore, $\sigma \Vdash \{(\row{}{\cutvar \rho L}{L},c,\rw)\}$.
  We conclude $\sigma \Vdash C_0$ as well because $\sigma \Vdash C_n$
  and $C_0 \subseteq C_n$.
\end{proof}

\begin{lemma}[Termination]
  \label{l:termination-harmonization}%
  Let $C$ be a finite saturated constraint-set.
  The harmonization of $C$ terminates.
\end{lemma}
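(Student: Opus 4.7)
The plan is to mirror the plinth-based termination argument used for normalization (\cref{l:termination-normalization}) and saturation. The key idea is that harmonization, like those earlier phases, only manipulates types, field-types, and rows drawn from a fixed finite universe determined by $C$, and the recursion is guarded by the "visited" set $\Sigma$.

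First I would fix a plinth $\beth$ in the sense of \cref{def:plinth} that contains every type, field-type, and row appearing in $C$, and observe that the set $\Labels_C$ of labels mentioned either as field keys or inside subscripts $\cutvar\rho L$, $\rho.\ell$ anywhere in $C$ is finite. I would then show an invariant: at every recursive invocation $\harmseq{\Sigma'}{C'}{\_}$ performed along the derivation, the constraint-set $C'$ only involves types/rows drawn from $\beth$ and labels drawn from $\Labels_C$. This is because the auxiliary constraint built in \rulename{Harm} is of the form $(\row{(\ell=\rho.\ell)_{\ell\in L_0\setminus L}}{\cutvar\rho{L_0}}{L},c,\rw)$ where by construction $L_0\subseteq\Labels_C$ and $\rw$ already occurs in $C$; hence normalization (\cref{l:termination-normalization}, plus the plinth-closure properties that guarantee $\cutrowbis{\cdot}{L'}{\Delta}$, $\marshall$, and row projections stay inside $\beth$) produces only constraint-sets $C_n$ whose constraints are bounded in the same plinth, and then merging/saturation (\cref{sec:merging}) preserves this invariant as well.

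Next, I would argue that the collection of constraint-sets that can possibly be reached in the recursion is finite. Indeed each such constraint-set is a finite subset of $\Vars_\beth \times \{\leq,\geq\} \times \beth$, where $\Vars_\beth$ denotes the finite set of variables and derived constructors $\cutvar\rho L$, $\rho.\ell$ with $\rho\in\vars(\beth)$, $\ell\in\Labels_C$, $L\subseteq\Labels_C$ (which is finite because $\Labels_C$ is). Call $\mathcal U$ this finite universe of constraint-sets. The side condition $C_m\notin\Sigma$ in \rulename{Harm} ensures that along every branch of the derivation tree the cardinality of $\Sigma$ strictly increases with each recursive call, while always remaining a subset of $\mathcal U$. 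Thus along any branch recursion can fire at most $|\mathcal U|$ times.

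Finally I would assemble termination: each individual application of \rulename{Harm} invokes normalization (which terminates by \cref{l:termination-normalization} and produces a finite $\constrset$ by \cref{l:finiteness-normalization}), then merging/saturation (which terminate by the analogue of \cite[Lemma~C.14]{castagna15} reproduced in \cref{sec:merging}), and then branches over the finitely many resulting $C_n\in\{C\}\sqcap\constrset$ and $C_m\in\constrset_n$; since each recursive branch is bounded in depth by $|\mathcal U|-|\Sigma|$, well-founded induction on the lexicographic pair $(|\mathcal U\setminus\Sigma|,|C|)$ gives termination. The main obstacle I anticipate is verifying the plinth-closure invariant precisely for the new row constructor $\cutvar\rho{L_0}$ introduced by \rulename{Harm}: one must check that enlarging the label subscript from $L$ to $L_0$, and replacing the empty finite part of the row with $(\ell=\rho.\ell)_{\ell\in L_0\setminus L}$, does not escape $\beth$—this follows from the closure of $\beth$ under $\cutrowbis{\cdot}{L'}{\Delta}$ and $\marshall$ for $L'\subseteq\Labels_C$, which is exactly what the third clause of \cref{def:plinth} guarantees.
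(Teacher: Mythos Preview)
Your proposal is correct and follows essentially the same approach as the paper: both fix a plinth $\beth$ containing everything in $C$, argue that harmonization only generates constraints drawn from the finite product $\beth\times\{\leq,\geq\}\times\beth$, and conclude by well-founded induction on the lexicographic pair $(|\text{finite universe}\setminus\Sigma|,|C|)$, exploiting the guard $C_m\notin\Sigma$ in \rulename{Harm}. Your version is in fact a bit more careful about the invariant (tracking the label set $\Labels_C$ explicitly and invoking the plinth closure clauses for $\cutrowbis{\cdot}{L'}{\Delta}$ and $\marshall$), but the skeleton is identical.
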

\begin{proof}
  Let $B$ be the set of types, type fields and rows in $C$, finite since $C$ is
  finite.
  Let $\beth$ be a plinth such $B \subseteq \beth$.
  When adding a set of constraint $C_m$ to $\Sigma$ during harmonization, every
  constraint of $C_m$ belongs to $(\beth \times \{\leq,\geq\} \times \beth)$.
  The proof is by induction on $(|(\beth \times \{\leq,\geq\}, \times \beth)| -
  |\Sigma|, |C|)$ lexicographically ordered.
  \begin{description}
    \item[\rulename{Hdone}] Terminates immediatly.
    \item[\rulename{Harm}] Normalization, merging and saturation all terminate.
      In the recursive step of harmonization that are applied, $|(\beth \times
      \{\leq,\geq\} \times \beth)| - |\Sigma|$ decreases.
      \qedhere
  \end{description}
\end{proof}

\begin{lemma}[Finiteness]
  \label{l:finiteness-harmonization}%
  Let $C$ be a finite saturated constraint-set and $\harmseq \emptyset C \constrset$.
  Then $\constrset$ is finite.
\end{lemma}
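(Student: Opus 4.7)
The plan is to proceed by induction on the same well-founded measure that underlies Lemma~\ref{l:termination-harmonization}, namely $(|(\beth \times \{\leq,\geq\} \times \beth)| - |\Sigma|, |C|)$ ordered lexicographically, where $\beth$ is a plinth containing every type, field-type and row appearing in $C$. As in Lemma~\ref{l:termination-harmonization}, the closure properties of plinths (Definition~\ref{def:plinth}) guarantee that every constraint introduced by \rulename{Harm} into the recursive calls lives in $\beth \times \{\leq,\geq\} \times \beth$, so that the measure is both well-founded and actually decreases along recursive calls.

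The base case is rule \rulename{Hdone}, whose output is the singleton $\{C\}$ and therefore trivially finite. For the inductive step corresponding to rule \rulename{Harm}, I would check that each ingredient of the output preserves finiteness. First, normalization of the single added constraint yields a finite $\constrset$ by Lemma~\ref{l:finiteness-normalization}. Second, $\{C\}\sqcap\constrset$ is finite, because $\sqcap$ applied to two finite families of constraint-sets yields a family indexed by a finite Cartesian product. Third, for each of the finitely many $C_n \in \{C\}\sqcap\constrset$, the merging-and-saturation step $\msseq{\emptyset}{\emptyset}{C_n}{\constrset_n}$ produces a finite $\constrset_n$; this is an auxiliary finiteness property for saturation that mirrors Lemma~\ref{l:finiteness-normalization} and follows because saturation only composes finitely many normalizations with finitely many $\sqcap$'s. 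Finally, for each of the finitely many $C_m \in \constrset_n$, either $C_m \in \Sigma$ (and no recursive call is made) or the induction hypothesis yields finiteness of $\constrset_n^m$. The total output $\bigsqcup_{C_n}\bigsqcup_{C_m}\constrset_n^m$ is then a finite union of finite sets, hence finite.

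The only step that requires some care is the auxiliary finiteness claim for merging and saturation, which is not stated as a separate lemma in the excerpt; I would prove it along the exact same lines as Lemma~\ref{l:finiteness-normalization}, by showing that each rule in Figures~\ref{fig:mrules} and~\ref{fig:srules} transforms a finite family of finite constraint-sets into another finite family, using Lemma~\ref{l:finiteness-normalization} for the sub-call in \rulename{Sassum} and the finiteness of $\sqcap$ elsewhere. This is the main—though essentially routine—obstacle; once it is in place, the induction above closes straightforwardly.
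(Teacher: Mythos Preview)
Your proposal is correct and takes essentially the same approach as the paper: the paper's proof is just ``By induction on the derivation and by \cref{l:finiteness-normalization} and finiteness of constraint merging,'' which is exactly the skeleton you flesh out (your induction on the termination measure is equivalent to the paper's induction on the derivation, since termination guarantees the derivation tree is finite). Your explicit observation that finiteness of merging/saturation is needed but not stated as a separate lemma matches the paper's implicit appeal to ``finiteness of constraint merging.''
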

\begin{proof}
  By induction on the derivation and by \cref{l:finiteness-normalization} and
  finiteness of constraint merging.
\end{proof}

\begin{lemma}
  \label{l:harmonized}%
  Let $C$ be a finite saturated constraint-set and $\harmseq {} C \constrset$.
  Then for all constraint-set $C' \in \constrset$, $C'$ is harmonized.
\end{lemma}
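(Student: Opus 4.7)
The plan is to proceed by induction on the derivation of $\harmseq \Sigma C \constrset$, proving the more general statement: for every such derivation and every $C' \in \constrset$, the constraint-set $C'$ is harmonized. Termination of harmonization (\cref{l:termination-harmonization}) ensures the derivation is finite, so induction on its height is well-founded.

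The base case is immediate: if the derivation ends with \rulename{Hdone}, then $\constrset = \{C\}$ and $C$ is harmonized by the side condition of the rule. For the inductive case, suppose the derivation ends with \rulename{Harm} applied to $C_0 \cup \{(\row{}{\cutvar \rho L}{L}, c, \rw)\}$. By construction, $\constrset = \bigsqcup_{C_n \in \{C\} \sqcap \constrset_0} \bigsqcup_{C_m \in \constrset_n} \constrset_n^m$, where each $\constrset_n^m$ is produced by a strictly smaller harmonization derivation $\harmseq {\Sigma \cup C_m} {C_m} {\constrset_n^m}$. Any $C' \in \constrset$ therefore belongs to some $\constrset_n^m$, and by the induction hypothesis $C'$ is harmonized.

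This reduces the lemma to checking that \rulename{Harm} is \emph{productive}, i.e.\ that the inductive hypothesis really applies to \emph{all} cases: we must be sure that once \rulename{Harm} no longer fires on any constraint, the resulting set satisfies the three clauses of the definition of harmonized. This is a consequence of how $L_0$ is chosen: $L_0$ collects all the labels and all the suffixes $L'$ that appear (whether through $\rho.\ell$ or through $\cutvar \rho {L'}$) anywhere in $C \cup \{\rw\}$; after normalization every occurrence of $\cutvar \rho {L'}$ inherited from those constraints has been decomposed through the new pivot $\cutvar \rho {L_0}$, so the resulting $C'$ only mentions either $\cutvar \rho {L_0}$ or strict suffixes that were already accounted for in $L_0$. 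Iteration of \rulename{Harm}, together with the subsequent merging/saturation step $\msseq{}{}{C_n}{\constrset_n}$, collapses the multiple bounds for $\cutvar \rho {L_0}$ into a single pair, so when the algorithm eventually exits via \rulename{Hdone} there is a unique $L$ per row variable $\rho$ satisfying clauses (1)--(3).

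The main obstacle is therefore not the induction itself but this last bookkeeping step: showing that each single application of \rulename{Harm} strictly \emph{progresses} towards harmonization, without reintroducing occurrences of $\cutvar \rho {L'}$ with $L' \not\subseteq L_0$ or of field accessors $\rho.\ell'$ with $\ell' \notin L_0 \cap \rdef(\rho)$. This requires inspecting the normalization rules of \cref{fig:nrules-rec} (in particular \rulename{Nrow} and \rulename{Ntail-tlv}) and checking that the fresh row/field variables they introduce all stay within the definition space dictated by $L_0$. Once this invariant is established, the induction closes immediately.
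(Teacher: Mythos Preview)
Your first two paragraphs are correct and are exactly the paper's proof (which is literally one line: ``Direct by induction on the derivation, finite by \cref{l:termination-harmonization}''). The base case \rulename{Hdone} gives a harmonized $C$ by its side condition; in the inductive case \rulename{Harm}, every $C' \in \constrset$ lies in some $\constrset_n^m$ coming from a strictly smaller sub-derivation, so it is harmonized by the induction hypothesis. That is the whole proof.

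The last two paragraphs are superfluous and reflect a confusion about what the lemma asserts. You write that one must check that ``once \rulename{Harm} no longer fires on any constraint, the resulting set satisfies the three clauses'', and that this requires inspecting the normalization rules for invariants on fresh variables. None of that is needed here: \rulename{Hdone} has ``$C$ is harmonized'' as an explicit premise, not as a consequence of \rulename{Harm} being inapplicable. The lemma is a statement about \emph{derivations}: given a derivation, its leaves are \rulename{Hdone} nodes, and those are harmonized by fiat. The questions you raise---whether \rulename{Harm} makes progress, whether normalization can reintroduce occurrences of $\cutvar{\rho}{L'}$ with $L' \not\subseteq L_0$---are relevant to \emph{termination} (already handled by \cref{l:termination-harmonization}) or to showing that some derivation exists for every input, but not to characterizing the output of a derivation you are already handed. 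Drop those two paragraphs and you have the proof.
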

\begin{proof}
  Direct by induction on the derivation, finite by
  \cref{l:termination-harmonization}.
\end{proof}

\begin{lemma}
  \label{l:wellordered-harmonization}%
  Let $C$ be a well-ordered saturated constraint-set and $\harmseq \emptyset C
  \constrset$.
  Then for all harmonized constraint-set $C' \in \constrset$, $C'$ is
  well-ordered.
\end{lemma}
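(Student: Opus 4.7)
The plan is to proceed by induction on the derivation of $\harmseq \emptyset C \constrset$ (which is finite by \cref{l:termination-harmonization}), proving more generally the statement for any $\Sigma$: if $C$ is well-ordered (with respect to the fixed ordering $O$ on variables and labels from \cref{def:ordering}) and $\harmseq \Sigma C \constrset$, then every $C' \in \constrset$ is well-ordered. The base case (rule \rulename{Hdone}) is immediate since $\constrset = \{C\}$ and $C$ is well-ordered by assumption.

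For the inductive case (rule \rulename{Harm}), the input is $C_0 \cup \{(\row{}{\cutvar \rho L}{L}, c, \rw)\}$ and the output $\constrset$ is obtained in three stages: first, normalization of the rewritten constraint $(\row{(\ell = \rho.\ell)_{\ell \in L_0 \setminus L}}{\cutvar \rho {L_0}}{L}, c, \rw)$ yields some $\constrset$; second, these are intersected with $\{C_0 \cup \{(\row{}{\cutvar \rho L}{L}, c, \rw)\}\}$ via $\sqcap$ and passed through merging/saturation ($\msseq{}{}{\cdot}{\cdot}$) to produce $\constrset_n$; third, harmonization is recursively invoked on each $C_m \in \constrset_n$. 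The key is to show each stage preserves well-orderedness, after which the induction hypothesis concludes.

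For the first stage I appeal directly to \cref{l:wellordered-normalization}: every $C' \in \constrset$ is well-ordered. Note that the new ``derived'' variables introduced by normalization, namely $\cutvar \rho {L_0}$ and $\rho.\ell$ for $\ell \in L_0 \setminus L$, receive canonical positions under the ordering $O$ from \cref{def:ordering} (respectively $(O_\vterm(\rho), O'(\rho, L_0))$ and $(O_\vterm(\rho), O_\ell(\ell))$), so extending $O$ to these derived symbols is automatic and does not disturb the ordering on the original variables of $C_0$. For the second stage, I observe that $\sqcap$ simply unions the constraints of two well-ordered sets and therefore yields well-ordered sets provided the same $O$ is used throughout. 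Constraint merging (\rulename{MLB}/\rulename{MUB}) replaces several constraints $(\vterm, c, \tterm_i)$ by a single constraint $(\vterm, c, \bigvee_i \tterm_i)$ with $\tlv(\bigvee_i \tterm_i) = \bigcup_i \tlv(\tterm_i)$, which preserves the inequalities $O(\vterm) < O(\vterm')$; saturation (\rulename{Sassum}) re-invokes normalization and then adds its output by $\sqcap$, so we are back in the setting already handled. Thus every $C_m \in \constrset_n$ is well-ordered, and the induction hypothesis applied to the recursive call $\harmseq{\Sigma \cup C_m}{C_m}{\constrset_n^m}$ gives well-orderedness of every element of $\constrset_n^m$, hence of $\bigsqcup \constrset_n^m = \constrset$.

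The main obstacle is checking the second stage cleanly: one must verify that the fresh variables introduced by saturation (which re-invokes normalization inside \rulename{Sassum}) and the derived symbols $\cutvar \rho {L'}$, $\rho.\ell$ introduced by harmonization's rewriting all fit coherently within a single ordering $O$ on all of $\Vars$ plus labels, so that ``well-ordered'' is a well-defined property across all constraint-sets generated in the process. This is ensured by the canonical way $O'(\rho, L)$ is chosen in \cref{def:ordering}, which makes the derived variables' orders depend only on $\rho$ and $L$ and keeps them strictly above the order of $\rho$-bounds already present; once this bookkeeping is checked, the three stages compose as claimed.
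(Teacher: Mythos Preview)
Your proposal is correct and follows essentially the same approach as the paper, which simply states the result is a ``consequence of \cref{l:wellordered-normalization} and conservation of well-orderedness by merging and saturation.'' You have spelled out explicitly the induction on the harmonization derivation and the three-stage preservation argument that the paper leaves implicit, including the bookkeeping for the ordering $O$ on derived variables $\cutvar{\rho}{L_0}$ and $\rho.\ell$.
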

\begin{proof}
  Consequence of \cref{l:wellordered-normalization} and conservation of
  well-orderedness by merging and saturation.
\end{proof}

\subsection{From Constraints to Equations}
\label{sec:solving}

Once normalization, merging and harmonization are done,
we have a set of well-ordered constraint-sets at hand, where all variables have unique
lower and upper bounds, and for each row variable $\rho$, there is at most a
unique occurrence of $\cutvar \rho L$ (where $L$ can be empty if $\rho$ has not
been decomposed), such that also every occurrence of $\rho.\ell$ has $\ell
\in L$.
We are now able to rewrite each constraint-set $C$ into an equivalent equation
system.

\begin{definition}[Equation system]
  An equation system $E$ is a set of equations of the form $\vterm = \tterm$
  such that there exists at most one equation in $E$ for every variable
  $\vterm$, and $\vterm$ and $\tterm$ are of the same kind.
  We define the domain of $E$, written $\dom(E)$, as the set
  $\{\vterm \mid \exists \tterm. \vterm = \tterm \in E\}$.
\end{definition}

\begin{definition}[Equation system solution]
  Let $E$ be an equation system.
  A solution to $E$ is a substitution $\sigma$ such that
  $\forall (\vterm = \tterm) \in E. \sigma(\vterm) \simeq \tterm\sigma$ holds.
  If $\sigma$ is a solution to $E$, we write $\sigma \Vdash E$.
\end{definition}

Given a constraint-set $C$, we will use the notation $(\tterm_1 \leq \vterm \leq
\tterm_2) \in C$ to indicate $\{ (\tterm_1, \leq, \vterm), (\vterm, \leq,
\tterm_2) \} \subseteq C$.
We assume that every variable and every term $\rho.\ell$, $\cutvar \rho L$ in
$\dom(C)$ have an upper and a lower bound,
without loss of generality because a constraint with
bottom or top types can always be added if needed:
\begin{itemize}
  \item For a type variable $\alpha$, we can add constraints
    $(\Empty, \leq, \alpha)$ or $(\alpha, \leq, \Any)$;
  \item For a field variable $\fvar$ (or $\rho.\ell$), we can add constraints
    $(\Empty, \leq, \fvar)$ or $(\fvar, \leq, \Any \lor \Undef)$;
  \item For a row variable $\cutvar \rho L$ (or $\rho$), we can add constraints
    $(\Empty, \leq, \cutvar \rho L)$ or
    $(\cutvar \rho L, \leq, \orow{}{\Labels{\setminus}(\rdef(\rho)\cup L)})$.
\end{itemize}

We rewrite the set $C$ to a set of equations with a function $\solve(C)$,
where $\alpha'$, $\fvar'$, $\fvar_\ell$ and $\rho'$ are fresh variables.
We write $(\tterm_1 \leq \vterm \leq \tterm_2)$ for the constraints
$\{(\tterm_1, \leq, \vterm), (\vterm, \leq, \tterm_2)\} \subseteq C$.
\begin{align*}
  \solve(C)
  &= \{ \alpha = (t_1 \vee \alpha') \wedge t_2
  \mid (t_1 \leq \alpha \leq t_2) \in C \}\\
  &\cup \{ \fvar = (\tau_1 \vee \fvar') \wedge \tau_2
    \mid (\tau_1 \leq \fvar \leq \tau_2) \in C
  \textand \fvar \neq \rho.\ell \}\\
  &\cup \{ \rho = \row{
    (\ell = \tau_\ell)_{\ell \in L}}{\rw}{\Labels{\setminus}\rdef(\rho)}\\
    &\quad\mid (\text{if } \exists L'. (\rw_1 \leq \cutvar \rho {L'} \leq \rw_2) \in C,
    \text{ then } L = L' \cap \rdef(\rho)
    \textand \rw = (\rw_1 \vee \rho') \wedge \rw_2,\\
    &\quad\text{ else } L = \{\ell \mid \rho.\ell \in \vars(C)\}
    \textand \rw = \rho')\\
    &\quad\textand (\forall \ell \in L.
    \text{ if } (\tau_\ell^1 \leq \rho.\ell \leq \tau_\ell^2) \in C
    \text{ then } \tau_\ell = (\tau_\ell^1 \vee \fvar_\ell) \wedge \tau_\ell^2
    \text{ else } \rw = \fvar_\ell)
  \}  
\end{align*}
where $\alpha'$, $\fvar'$, and $\rho'$ are fresh variables.

For type and field variables (not generated by the decomposition of a row
variable), we obtain an equation by means of the type connectives, where the
union entails a lower bound and the intersection an upper bound.
For a row variable $\rho$, there is $L$ and constraints
$(\tau_\ell^1 \leq \rho.\ell \leq \tau_\ell^2) \in C$ for all $\ell \in L$ and
$(\rw_1 \leq \cutvar \rho L \leq \rw_2)$, with the potentially missing
constraints obtained with the default values.
Since we have decomposed $\rho$ into the labels in $L$ and a part of definition
space $\Labels \setminus (\rdef(\rho) \cup L)$, we build an equation for $\rho$ by
concatenating the independent types for $\rho.\ell$ and $\cutvar \rho \ell$
together.

To prove soundness of the transformation, we define the rank $n$ satisfaction
predicate $\Vdash_n$ for equation systems, which is similar to the one for
constraint-sets.
\begin{lemma}[Soundness]
  \label{l:soundness-solve}%
  Let $C \subseteq \Constraints$ be a well-ordered saturated constraint-set and
  $E$ its transformed equation system.
  Then for all substitutions $\sigma$, if $\sigma \Vdash E$,
  then $\sigma \Vdash C$.
\end{lemma}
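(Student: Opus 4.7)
The plan is to unfold the definition of $\solve(C)$ and verify, constraint by constraint, that every assignment $\sigma$ with $\sigma \Vdash E$ also satisfies $C$. Because $C$ is saturated and harmonized, each variable $\vterm \in \dom(C)$ has a unique pair of bounds in $C$: for a type or field variable not generated by row decomposition, there is a single $(\tterm_1 \leq \vterm \leq \tterm_2)$; for a row variable $\rho$, there is a single set $L$ together with single bounds $(\tau^1_\ell \leq \rho.\ell \leq \tau^2_\ell)$ for each $\ell \in L$ and $(\rw_1 \leq \cutvar{\rho}{L} \leq \rw_2)$. Every constraint of $C$ is covered by exactly one of these cases, and $\solve$ produces exactly one equation per variable, so it suffices to show that each equation forces the corresponding bounds.

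For a type variable $\alpha$, the equation $\alpha = (t_1 \vee \alpha') \wedge t_2$ yields $\sigma(\alpha) \simeq (t_1\sigma \vee \sigma(\alpha')) \wedge t_2\sigma$. The upper bound $\sigma(\alpha) \leq t_2\sigma$ is immediate, while the lower bound $t_1\sigma \leq \sigma(\alpha)$ reduces set-theoretically to showing $t_1\sigma \leq t_2\sigma$. The field-variable case is identical. For a row variable $\rho$, one must first check that the shape of the equation, together with the guarantees provided by harmonization on $C$, forces $\sigma(\rho)$ into the form required by \cref{d:rowvar_decomposition}, so that $\sigma(\rho.\ell) = \tau_\ell\sigma$ and $\sigma(\cutvar{\rho}{L}) = \rw\sigma$ are both well-defined and extract the expected subterms; then the bound constraints on $\rho.\ell$ and $\cutvar{\rho}{L}$ reduce to the previous cases applied to the sub-equations $\tau_\ell = (\tau^1_\ell \vee \fvar_\ell) \wedge \tau^2_\ell$ and $\rw = (\rw_1 \vee \rho') \wedge \rw_2$.

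The main obstacle will be establishing the consistency condition $\tterm_1\sigma \leq \tterm_2\sigma$ at each pair of bounds. This does not follow from the equation of $\vterm$ alone, but only from the fact that the saturation rule \rulename{Sassum} has already normalized $(\tterm_1 \wedge \neg \tterm_2, \leq, \Empty)$ into $C$, so that by \cref{l:soundness-normalization} any $\sigma$ satisfying the resulting normalized constraints also satisfies $\tterm_1\sigma \leq \tterm_2\sigma$. Since these normalized constraints are themselves of the form $(\vterm', c, \tterm')$ for other variables, verifying them from $\sigma \Vdash E$ is a priori circular. The circularity is broken by an induction along the well-order $O$: when analyzing a variable $\vterm$, the toplevel variables occurring in its bounds (and, transitively, in the saturation constraints it depends on) have a different order, so their own consistency conditions can be assumed from the induction hypothesis. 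Well-orderedness is preserved by each preceding phase of the algorithm (\cref{l:wellordered-normalization}, \cref{l:wellordered-harmonization}), and the contractivity produced in the final solving step (\cref{pr:wellformedness}) ensures that the $\mu$-recursion introduced by the fresh variables $\alpha'$, $\fvar'$, $\fvar_\ell$, $\rho'$ remains well-founded throughout the argument.
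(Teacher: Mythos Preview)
Your overall architecture is right: the upper bound is immediate, the lower bound reduces to $\tterm_1\sigma \leq \tterm_2\sigma$, and this comes from saturation. You also correctly spot that the argument is circular and that the well-order $O$ must play a role. However, the way you propose to break the circularity does not go through.

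The claim that ``the toplevel variables occurring in its bounds (and, transitively, in the saturation constraints it depends on) have a different order'' fails at the word \emph{transitively}. Well-orderedness only controls the \emph{top-level} variables of $\tterm_1,\tterm_2$; it says nothing about variables occurring \emph{under} a type constructor. When saturation normalizes $(\tterm_1,\leq,\tterm_2)$, normalization may descend under arrows or records (rules \rulename{Narrow}, \rulename{Nrec}), and the constraints produced there may single out variables of arbitrary order---possibly smaller than $O(\vterm)$. So a plain induction on $O$ cannot assume those constraints are already established.

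The paper closes this gap with a rank-stratified argument: it proves the stronger statement that for every $n$, if $\sigma \Vdash_n E$, $\sigma \Vdash_n C_\Sigma$ (the already-processed, higher-order part) and $\sigma \Vdash_{n-1} C\setminus C_\Sigma$, then $\sigma \Vdash_n C\setminus C_\Sigma$. The extra hypothesis $\sigma \Vdash_{n-1} C\setminus C_\Sigma$ is exactly what absorbs the under-constructor constraints, since crossing a constructor costs one unit of rank. The outer induction on $n$ then bootstraps from the trivial $\sigma\Vdash_0 C$. Your final sentence about contractivity and the fresh variables $\alpha',\fvar',\fvar_\ell,\rho'$ is unrelated to this issue: those variables are fresh, hence outside $\dom(E)$, and play no role in the induction; what you need instead is the rank machinery.
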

\begin{proof}
  We write $O(C_1) < O(C_2)$ if $O(\vterm_1) < O(\vterm_2)$ for all $\vterm_1
  \in \dom(C_1)$ and all $\vterm_2 \in \dom(C_2)$.
  We prove a stronger statement:
  \begin{description}
    \item[(*)] For all $\sigma$, $n$ and $C_\Sigma \subseteq C$,
      if $\sigma \Vdash_n E$, $\sigma \Vdash_n C_\Sigma$, $\sigma \Vdash_{n-1} C \setminus C_\Sigma$
      and $O(C \setminus C_\Sigma) < O(C_\Sigma)$,
      then $\sigma \Vdash_n C \setminus C_\Sigma$.
  \end{description}
  Here $C_\Sigma$ denotes the set of constraints that have been checked.
  The proof proceeds by induction on $|C \setminus C_\Sigma|$, and is similar to
  the proof of \cite[Lemma C.33]{castagna15} for type variables only.
  The base case $C \setminus C_\Sigma = \emptyset$ is straightforward.
  Let $C \setminus C_\Sigma \neq \emptyset$ and let us consider the case of row
  variables.
  Take $\rho$ with the maximal order in $\dom(C \setminus C_\Sigma)$.
  There are a set $L$ and corresponding equations
  $\cutvar \rho L = \rw = (\rw_1 \wedge \rho') \wedge \rw_2$ and
  $(\rho.\ell = \tau_\ell = (\tau_\ell^1 \vee \fvar_\ell) \wedge \tau_\ell^2)_{\ell \in L}$.
  As $\sigma \Vdash_n E$, we have
  $\sigma(\rho) \simeq_n (\row{\ell = \tau_\ell}{\rw}{L_\rho}\sigma)$,
  where $L_\rho = \Labels \setminus \rdef(\rho)$.
  Then, for all $\ell \in L$: \[
    (\rho.\ell)\sigma \wedge \neg\tau_\ell^2\sigma
    \simeq_n ((\tau_\ell^1 \vee \fvar_\ell) \wedge \tau_\ell^2)\sigma \wedge \neg \tau_\ell^2\sigma
    \simeq_n \Empty
  \] And similarly for $(\cutvar \rho L)\sigma \wedge \neg(\cutvar \rho L) \sigma$.
  On the other hand, for all $\ell \in L$, we have: \[
    \tau_\ell^1\sigma \wedge \neg(\rho.\ell)\sigma
    \simeq_n \tau_\ell^1 \wedge \neg((\tau_\ell^1 \vee \fvar_\ell) \wedge
    \tau_\ell^2)\sigma
    \simeq_n \tau_\ell^1\sigma \wedge \neg\tau_\ell^2\sigma
  \] And similarly for $\rw_1\sigma \wedge \neg(\cutvar \rho L)\sigma$.
  It remains to show that $\tau_\ell^1\sigma \leq_n \tau_\ell^2\sigma$ holds for
  all $\ell \in L$ and that $\rw_1\sigma \leq_n \rw_2\sigma$ hold, that is
  $\sigma \Vdash_n \{(\tau_\ell^1 \leq \tau_\ell^2)_{\ell \in L}, (\rw_1 \leq
  \rw_2\}$.
  The rest of the proof goes as in \cite[Lemma C.33]{castagna15}.
  We use the fact that the order of $\rho.\ell$ and $\cutvar \rho L$ is directly
  superior to the order of $\rho$, and thus maximal in $\dom(C \setminus
  C_\Sigma)$.
\end{proof}

\begin{lemma}[Completeness]
  \label{l:completeness-solve}%
  Let $C \subseteq \Constraints$ be a saturated normalized constraint-set and
  $E$ its transformed equation system.
  Then for all substitution $\sigma$, if $\sigma \Vdash C$,
  there exists $\sigma'$ such that $\dom(\sigma') \cup \dom(\sigma) = \emptyset$
  and $\sigma \cup \sigma' \Vdash E$.
\end{lemma}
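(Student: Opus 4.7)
The plan is to construct $\sigma'$ explicitly, equation by equation, using the values that $\sigma$ already assigns to the variables of $\dom(C)$, and then to check that each equation of $E$ is satisfied by $\sigma \cup \sigma'$ using the corresponding constraints of $C$. Since, by the transformation rules, every fresh variable introduced by $\solve(C)$ occurs in exactly one equation, defining $\sigma'$ independently on each such fresh variable automatically gives a well-defined substitution whose domain is disjoint from $\dom(\sigma)$.

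More concretely, I will treat the three cases in turn. For a type-variable equation $\alpha = (t_1 \vee \alpha') \wedge t_2$ coming from constraints $(t_1 \leq \alpha \leq t_2) \in C$, set $\sigma'(\alpha') = \sigma(\alpha)$. From $\sigma \Vdash C$ we have $t_1\sigma \leq \sigma(\alpha) \leq t_2\sigma$, hence $(t_1\sigma \vee \sigma(\alpha)) \wedge t_2\sigma \simeq \sigma(\alpha)$, so the equation holds. The field-variable case $\fvar = (\tau_1 \vee \fvar') \wedge \tau_2$ (with $\fvar \neq \rho.\ell$) is identical, setting $\sigma'(\fvar') = \sigma(\fvar)$.

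The main step is the row-variable case. For each $\rho \in \dom(C)$, by \cref{def:solution} and the fact that $C$ is harmonized after the previous stage, there is a single set $L$ such that $\sigma(\rho)$ admits the normal form $\row{(\ell = (\rho.\ell)\sigma)_{\ell \in L}}{(\cutvar\rho L)\sigma}{\Labels\setminus\rdef(\rho)}$. For each $\ell \in L$, set $\sigma'(\fvar_\ell) = (\rho.\ell)\sigma$; for the tail, set $\sigma'(\rho') = (\cutvar\rho L)\sigma$ (in the sub-case where no explicit constraint on $\cutvar\rho L$ exists, $L$ is exactly the set of labels occurring in the $\rho.\ell$ terms of $\vars(C)$, so the cut is again well-defined). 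Using the constraints $(\tau_\ell^1 \leq \rho.\ell \leq \tau_\ell^2)$ and $(\rw_1 \leq \cutvar\rho L \leq \rw_2)$ in $C$, exactly the same absorption computation as in the type-variable case yields $\tau_\ell(\sigma\cup\sigma') \simeq (\rho.\ell)\sigma$ and $\rw(\sigma\cup\sigma') \simeq (\cutvar\rho L)\sigma$, so assembling gives $\sigma(\rho) \simeq \row{(\ell = \tau_\ell)_{\ell\in L}}{\rw}{\Labels\setminus\rdef(\rho)}(\sigma\cup\sigma')$, which is the required equation.

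The main obstacle I expect is bookkeeping around the row case: one must verify that $(\cutvar\rho L)\sigma$ is a well-kinded row of definition space $\Labels\setminus(\rdef(\rho)\cup L)$ and that $\sigma(\rho)$ really factors as the concatenation of the $(\rho.\ell)\sigma$'s with $(\cutvar\rho L)\sigma$ over the \emph{same} $L$ used in the equation, including the degenerate sub-case where $L$ comes from the field occurrences rather than from a tail constraint. This is precisely where harmonization (\cref{l:harmonized}) does the heavy lifting, since without it the various $\cutvar\rho L_i$'s appearing in $C$ could force incompatible decompositions of $\sigma(\rho)$. Once harmonization is invoked, the remaining verification is routine algebra on $\vee$, $\wedge$ and the equalities $(t_1 \vee t)\wedge t_2 = t$ that hold whenever $t_1 \leq t \leq t_2$.
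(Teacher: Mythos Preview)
Your proposal is correct and follows essentially the same route as the paper: define $\sigma'$ by sending each fresh variable to the value $\sigma$ already assigns to the corresponding original variable (i.e., $\sigma'(\alpha')=\sigma(\alpha)$, $\sigma'(\fvar')=\sigma(\fvar)$, $\sigma'(\fvar_\ell)=(\rho.\ell)\sigma$, $\sigma'(\rho')=(\cutvar\rho L)\sigma$), then use the bounds from $C$ to collapse $(t_1\vee t)\wedge t_2$ to $t$ component-wise. The only minor difference is that the paper justifies the row factorization directly from \cref{def:solution} (which guarantees $(\rho.\ell)\sigma$ and $(\cutvar\rho L)\sigma$ are defined whenever they occur in $C$) without explicitly invoking harmonization; your appeal to harmonization is not wrong, but it is not needed at this stage since the well-definedness conditions in \cref{def:solution} already force $\sigma(\rho)$ to have the required shape over the $L$ chosen by $\solve$.
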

\begin{proof}
  Let $\sigma'
  = \{ \sigma(\alpha)/\alpha' \mid \alpha \in \dom(C) \}
  \cup \{ \sigma(\fvar)/\fvar' \mid \fvar \in \dom(C) \textand \fvar \neq \rho.\ell \}
  \cup \{ (\cutvar \rho L)\sigma/\rho' \}
  \cup \{ (\rho.\ell)\sigma/\fvar_\ell^\rho \mid \rho.\ell \in \dom(C) \}$,
  where $L$ is obtained as in the definition of $\solve$ either from $\cutvar \rho {L'}
  \in \dom(C)$ and $L = L' \cap \rdef(\rho)$, or by $L = \{ \ell \mid \rho.\ell
  \in \vars(C) \}$.
  The case for type and field variables is as in \cite[Lemma C.34]{castagna15}.
  Let us consider an equation
  $\rho = \row{(\ell = \tau_\ell)_{\ell \in L}}{\rw}
  {\Labels{\setminus}\rdef(\rho)} \in E$.
  Correspondingly, there exist $(\rw_1 \leq \cutvar \rho L \leq \rw_2) \in C$
  and for all $\ell \in L$, there exist $(\tau_\ell^1 \leq \rho.\ell \leq
  \tau_\ell^2) \in C$ (without loss of generality, we suppose $C$ to be
  saturated with default values).
  As $\sigma \Vdash C$, then $\rw_1\sigma \leq (\cutvar \rho L)\sigma \leq
  \rw_2\sigma$, and for all $\ell \in L$, $\tau_\ell^1\sigma \leq
  (\rho.\ell)\sigma \leq \rw_2\sigma$, and the operations $(\cutvar \rho
  L)\sigma$ and $(\rho.\ell)\sigma$ are defined.
  Thus,
  \begin{align*}
    \row{(\ell = \tau_\ell)_{\ell \in L}}{\rw}{}(\sigma \cup \sigma')
    &= \row{(\ell = (\tau_\ell^1(\sigma \cup \sigma') \vee (\sigma \cup \sigma')(\fvar_\ell^\rho))
      \wedge \tau_\ell^2(\sigma \cup \sigma'))_{\ell \in L}}
    {\\&\qquad(\rw_1(\sigma \cup \sigma') \vee (\sigma \cup \sigma')(\rho')) \wedge \rw_2(\sigma \cup \sigma')}{}\\
    &= \row{(\ell = (\tau_\ell^1\sigma \vee (\rho.\ell)\sigma) \wedge \tau_\ell^2\sigma)_{\ell \in L}}
      {(\rw_1\sigma \vee (\cutvar \rho L)\sigma \wedge \rw_2\sigma}{}\\
    &\simeq \row{(\ell = (\rho.\ell)\sigma \wedge \tau_\ell^2\sigma)_{\ell \in L}}
      {(\cutvar \rho L)\sigma \wedge \rw_2\sigma}{}\\
    &\simeq \row{(\ell = (\rho.\ell)\sigma)_{\ell \in L}}
    {(\cutvar \rho L)\sigma}{}\\
    &= \sigma(\rho)
  \end{align*}
  The last line is justified by $\sigma$ being a solution to $C$, so being of
  the shape $\sigma(\rho) = \row{(\ell = \tau_\ell')_{\ell \in L}}{\rw'}{}$.
\end{proof}

\begin{definition}
  \label{def:wellordered-equations}%
  Let $E$ be an equation system and $O$ an ordering on $\dom(E)$ and on the
  labels occurring in $E$.
  We say that $E$ is well-ordered if for all $\vterm = \tterm_\vterm \in E$
  and $\vterm' \in \tlv(\tterm_\vterm) \cap \dom(E)$,
  we have $O(\vterm_1) < O(\vterm')$.
\end{definition}

\begin{lemma}
  \label{l:wellordered-solving}%
  Let $C$ be a well-ordered saturated normalized constraint-set and $E$ its
  transformed equation system.
  Then $E$ is well-ordered.
\end{lemma}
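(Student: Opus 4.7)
The approach is a case analysis on the kind of variable $\vterm \in \dom(E)$, checking for each equation $\vterm = \tterm_\vterm$ produced by the solving step that every $\vterm' \in \tlv(\tterm_\vterm) \cap \dom(E)$ satisfies $O(\vterm) < O(\vterm')$. The auxiliary fresh variables $\alpha'$, $\fvar'$, $\fvar_\ell$ and $\rho'$ introduced during the construction never belong to $\dom(E)$ and can thus be discarded throughout the argument.

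For a type variable $\alpha$, the equation is $\alpha = (t_1 \vee \alpha') \wedge t_2$, built from the merged constraints $(\alpha, \geq, t_1)$ and $(\alpha, \leq, t_2) \in C$. Since $\tlv$ commutes with Boolean connectives (\cref{def:tlv}), we have $\tlv(\tterm_\alpha) \subseteq \tlv(t_1) \cup \tlv(t_2) \cup \{\alpha'\}$, and well-orderedness of $C$ applied to these two constraints directly yields $O(\alpha) < O(\vterm')$ for each $\vterm' \in \tlv(t_1) \cup \tlv(t_2)$. The case of an ordinary field variable $\fvar$ (not of the form $\rho.\ell$) is strictly analogous.

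The row case is the main one. For $\rho$, the equation has the form $\rho = \row{(\ell = \tau_\ell)_{\ell \in L}}{\rw}{\Labels{\setminus}\rdef(\rho)}$, where $\rw = (\rw_1 \vee \rho') \wedge \rw_2$ is built from the two constraints on $\cutvar \rho L$ and each $\tau_\ell$ is built from constraints on $\rho.\ell$. The $\tau_\ell$'s occupy field positions and therefore contribute no toplevel variables (by \cref{def:tlv}, only the tail variable of a row atom is top-level). The toplevel variables of the right-hand side are thus exactly those of the Boolean combination $\rw$, namely a subset of $\tlv(\rw_1) \cup \tlv(\rw_2) \cup \{\rho'\}$. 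Well-orderedness of $C$ applied to the two $\cutvar \rho L$ constraints then gives $O(\cutvar \rho L) < O(\vterm')$ for every $\vterm' \in \tlv(\rw_1) \cup \tlv(\rw_2)$, and the required $O(\rho) < O(\vterm')$ follows by transitivity from the key inequality $O(\rho) < O(\cutvar \rho L)$, which is baked into \cref{def:ordering}.

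The main subtle point is precisely this transitivity step: the decomposition of $\rho$ into $\cutvar \rho L$ and the $\rho.\ell$'s is the only way the right-hand side can pick up ``indirect'' toplevel variables whose ordering against $\rho$ is not immediate, and well-orderedness of $E$ would fail if \cref{def:ordering} did not deliberately place $O(\rho)$ strictly below $O(\cutvar \rho L)$ and $O(\rho.\ell)$. Once this ordering property is in hand, the remainder of the argument is routine bookkeeping on $\tlv$ and on the shape of the generated equations.
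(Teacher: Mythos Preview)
Your proof is correct and follows essentially the same approach as the paper's. Both proceed by case analysis on the kind of $\vterm$, dismiss the fresh variables as outside $\dom(E)$, observe that $\tlv(\rw_0)=\tlv(\rw)$ in the row case, and derive $O(\rho)<O(\vterm')$ from well-orderedness of $C$ on the $\cutvar\rho L$ constraints. The paper is terser (it jumps directly to $O(\rho)<O(\rho_2)$ and defers the type/field cases to \cite{castagna15}), whereas you spell out the transitivity through $O(\cutvar\rho L)$ explicitly; one minor imprecision is that the inequality $O(\rho)<O(\cutvar\rho L)$ is only strict when $L\cap\rdef(\rho)\neq\emptyset$ (otherwise $\cutvar\rho L=\rho$ and the conclusion is immediate), but this does not affect the argument.
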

\begin{proof}
  We have $\dom(E) = \dom(C) \cap (\typeVars \cup \fldVars) \cup
  \{ \rho \mid \exists \ell. \rho.\ell \in \dom(C)
  \textor \exists L. \cutvar \rho L \in \dom(C) \}$.
  The case for type and field variables is as in \cite[Lemma C.36]{castagna15},
  but uses the fact that $(\tlv(\tterm_1) \cup \tlv(\tterm_2)) \cap \dom(E) =
  (\tlv(\tterm_1) \cup \tlv(\tterm_2)) \cap \dom(C)$.
  Now, consider $\rho = \rw_0$ with $\rw_0 = \row{(\ell = \tau_\ell)_{\ell \in L}}{\rw}{}$ obtained
  from $(\rw_1 \leq \cutvar \rho L \leq \rw_2) \in C$ with
  $\rw = (\rw_1 \wedge \rho') \vee \rw_2$,
  and for all $\ell \in L$ from $(\tau_\ell^1 \leq \rho.\ell^2) \in C$.
  We have $\tlv(\rw_0) = \tlv(\rw)$.
  Since $C$ is well-ordered, for all $(\rho_2 \in \tlv(\rw_1) \cup \tlv(\rw_2))
  \cap \dom(C)$, $O(\rho) < O(\rho_2)$.
  Moreover, $\rho'$ is a fresh row variable in $C$, that is $\rho' \notin \dom(C)$.
  And then $\rho' \notin \dom(E)$.
  Therefore, $\tlv(\rw) \cap \dom(E) = (\tlv(\rw_1) \cup \tlv(\rw_2)) \cap
  \dom(C)$ and the result follows.
\end{proof}

\subsection{Solution of Equation Systems}
\label{sec:solution}

We have now obtained a set of equation-sets $E$, that we must each transform
into a substitution $\sigma$.
We do this in the same way as in \cite[\S 3.2.2]{castagna15}, but with all kinds
of variables rather than just type variables.
In the set of equations, there is no construction $\rho.\ell$ or $\cutvar \rho L$ anymore.
We define a function $\Unify(E)$ as $\Unify(\emptyset) = \{\}$, and otherwise:
\begin{enumerate}
  \item Select in $E$ the equation $\vterm = \tterm$ for the smallest $\vterm$
    w.r.t. the order;
  \item Let $E'$ be the set of equations obtained by replacing in
    $E{\setminus}\{\vterm = \tterm\}$ every occurrence of $\vterm$ by
    $\mu\vterm'.(\tterm\esubs{\vterm}{\vterm'})$ ($\vterm'$ fresh);
  \item Let $\sigma = \Unify(E')$ and return
    $\{\vterm = (\mu\vterm'.\tterm\esubs{\vterm}{\vterm'})\sigma\} \cup \sigma$.
\end{enumerate}
The ordering on the variables guarantees the regularity of the obtained types.
For the elements $\sigma(\vterm) = \mu\vterm'.\tterm$ where $\vterm' \notin
\vars(\tterm)$, we can remove the introduced $\mu$-abstraction.

It is straightforward to extend the proofs and definitions.
\begin{definition}[General solution]
  Let $E$ be an equation system.
  A general solution to $E$ is a substitution $\sigma$ from $\dom(E)$ to
  $\Types_\Undef \cup \Rows$ such that
  $\forall \vterm \in \dom(\sigma). \vars(\sigma(\vterm)) \cap \dom(\sigma) = \emptyset$ and
  $\forall \vterm = \tterm \in E. \sigma(\vterm) \simeq \tterm\sigma$ holds.
\end{definition}

\begin{definition}[Equivalent substitutions]
  Let $\sigma, \sigma'$ be two substitutions.
  We say $\sigma \simeq \sigma'$ if and only if $\forall \vterm. \sigma(\vterm)
  \simeq \sigma'(\vterm)$.
\end{definition}

\begin{proposition}
  \label{l:unify}%
  Let $E$ be a well-ordered equation system.
  Let $\Unify(E)$ be the procedure described by \citet{castagna15} to build a
  substitution.
  \begin{description}
    \item[Soundness] If $\sigma = \Unify(E)$, then $\sigma \Vdash E$.
    \item[Completeness] For all substitution $\sigma$, if $\sigma \Vdash E$,
      then there exist $\sigma_0$ and $\sigma'$ such that $\sigma_0 = \Unify(E)$
      and $\sigma \simeq \sigma' \circ \sigma_0$.
    \item[Termination] The algorithm $\Unify(E)$ terminates.
  \end{description}
\end{proposition}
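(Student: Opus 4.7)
The plan is to prove all three items simultaneously by strong induction on $|E|$, with the base case $E = \emptyset$ immediate ($\Unify$ returns the empty substitution, which vacuously solves and terminates, and completeness is trivial). For the inductive step, let $\vterm$ be the variable with smallest order in $\dom(E)$, with corresponding equation $\vterm = \tterm$, and let $E' = E \setminus \{\vterm = \tterm\}$ with every occurrence of $\vterm$ replaced by $\mu\vterm'.\,\tterm\esubs{\vterm}{\vterm'}$. Since $|E'| = |E| - 1$, the induction hypothesis applies to $E'$ as soon as I verify that $E'$ is again a well-ordered equation system and that the replacement $\mu$-term is contractive and well-formed.

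For these two preliminary facts, the crucial input is well-orderedness of $E$ (\cref{def:wellordered-equations}), which guarantees that $\vterm \notin \tlv(\tterm)$: since $\vterm$ has the smallest order, any top-level variable of $\tterm$ that lies in $\dom(E)$ must be strictly greater than $\vterm$. This is exactly what makes $\mu\vterm'.\,\tterm\esubs{\vterm}{\vterm'}$ contractive, since $\vterm'$ replaces only non-top-level occurrences of $\vterm$. For well-orderedness of $E'$, observe that the replacement only substitutes for occurrences of $\vterm$, which in each remaining equation's right-hand side had order strictly greater than the left-hand side; the new term $\mu\vterm'.\,\tterm\esubs{\vterm}{\vterm'}$ contributes no top-level variables in $\dom(E')$, so the ordering constraint is preserved. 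Termination then follows at once since $|E'| < |E|$.

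For soundness, by induction $\sigma' = \Unify(E')$ satisfies $\sigma' \Vdash E'$. Define $\sigma = \{\vterm \mapsto (\mu\vterm'.\,\tterm\esubs{\vterm}{\vterm'})\sigma'\} \cup \sigma'$. The equation $\vterm = \tterm$ is satisfied because unfolding the $\mu$-type once gives $(\mu\vterm'.\,\tterm\esubs{\vterm}{\vterm'})\sigma' \simeq \tterm\esubs{\vterm}{(\mu\vterm'.\,\tterm\esubs{\vterm}{\vterm'})\sigma'}\sigma' \simeq \tterm\sigma$, where the last step uses that $\sigma'$ does not affect $\vterm$. For any other equation $\vterm_i = \tterm_i \in E$, the corresponding equation in $E'$ is $\vterm_i = \tterm_i\esubs{\vterm}{\mu\vterm'.\,\tterm\esubs{\vterm}{\vterm'}}$, which holds under $\sigma'$ by hypothesis, and that is exactly $\sigma(\vterm_i) \simeq \tterm_i\sigma$.

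For completeness, given $\sigma \Vdash E$, the equation $\vterm = \tterm$ forces $\sigma(\vterm)$ to be a fixed point of the map $X \mapsto (\tterm\esubs{\vterm}{\vterm'})[\vterm' := X]\sigma$; by the uniqueness of contractive fixed points (ensured by well-orderedness as above), $\sigma(\vterm) \simeq (\mu\vterm'.\,\tterm\esubs{\vterm}{\vterm'})\sigma$. Thus $\sigma$ also satisfies the replaced equations of $E'$, so by the induction hypothesis there exist $\sigma'_0 = \Unify(E')$ and $\sigma''$ with $\sigma \simeq \sigma'' \circ \sigma'_0$ on $\dom(E')$. Extending by $\vterm$ yields the desired factorisation through $\sigma_0 = \Unify(E)$. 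The main obstacle is the commutation between substitution and $\mu$-unfolding on the heterogeneous grammar of types, field-types, and rows: one must verify that the $\mu$-unfolding/substitution lemma holds uniformly across the three kinds, and that when $\vterm$ is a row variable the resulting $\mu$-row remains well-kinded (i.e., the definition space is preserved). Once this kind-polymorphic fixed-point lemma is in place, the rest of the argument carries over essentially verbatim from the type-only proof of \cite{castagna15}.
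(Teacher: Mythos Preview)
The paper does not give its own proof of this proposition; it states the result and implicitly relies on the argument of \cite{castagna15} carrying over verbatim to the multi-kinded setting (types, field-types, rows). Your induction on $|E|$ is precisely the standard argument one finds in that reference, and the overall structure is correct.

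Two small imprecisions are worth flagging. First, your justification that $E'$ remains well-ordered is not quite that the $\mu$-term ``contributes no top-level variables in $\dom(E')$'' --- its top-level variables are those of $\tterm$, which may perfectly well lie in $\dom(E')$. The correct reason is that minimality of $\vterm$ forces $\vterm \notin \tlv(\tterm_i)$ for every \emph{other} equation $\vterm_i = \tterm_i$ (otherwise $O(\vterm_i) < O(\vterm)$, contradicting minimality), so the substitution happens only below a constructor and $\tlv$ of each right-hand side is unchanged. Second, in the completeness step, the uniqueness-of-fixed-points argument requires that the map $X \mapsto (\tterm\sigma')[\vterm \mapsto X]$ is still contractive after applying $\sigma' = \sigma\setminus\{\vterm\}$; this holds because $\vterm \notin \tlv(\tterm)$ and $\sigma'$ does not reintroduce $\vterm$, but it deserves a one-line remark rather than being left implicit. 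With these clarifications your sketch is a faithful reconstruction of what the paper delegates to the literature.
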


The last property we verified is well-formedness.
As in \cite{castagna15}, a type is well-formed if and only if the recursion
traverses a constructor, and this property is guaranteed thanks to the order on
variables.
\begin{proposition}[Well-formedness]\label{prop:wellformedness}
  If $\sigma = \Unify(E)$, then for all $\vterm \in \dom(\sigma)$,
  $\sigma(\vterm)$ is well-formed.
\end{proposition}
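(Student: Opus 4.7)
The plan is to prove well-formedness by induction on the size $|E|$, following the structure of the corresponding proof by \citet{castagna15} for the type-only case and adapting it to handle row and field variables uniformly thanks to our unified treatment of the ordering $O$.

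First, I would rely on \cref{l:wellordered-solving} to assume $E$ is well-ordered in the sense of \cref{def:wellordered-equations}. The base case $E = \emptyset$ is trivial: $\Unify(E) = \emptyset$, and there is nothing to check. For the inductive step, let $\vterm = \tterm$ be the equation selected by $\Unify$, where $\vterm$ is the smallest variable of $\dom(E)$ with respect to $O$. The first key observation is that well-orderedness of $E$ gives $\vterm \notin \tlv(\tterm) \cap \dom(E)$, and in fact $\vterm \notin \tlv(\tterm)$ altogether (since $O(\vterm)$ cannot be strictly less than itself). Hence any occurrence of $\vterm$ in $\tterm$ is necessarily guarded by a type, field, or row constructor. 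Therefore the term $\mu\vterm'.(\tterm\esubs{\vterm}{\vterm'})$ is contractive by construction, since every occurrence of the bound variable $\vterm'$ appears below a constructor.

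The second step is to show that the equation system $E'$ obtained by replacing $\vterm$ by $\mu\vterm'.(\tterm\esubs{\vterm}{\vterm'})$ in all other equations is itself well-ordered, so that the induction hypothesis applies. For every other equation $\vterm_2 = \tterm_2$ in $E$, well-orderedness and minimality of $\vterm$ yield $O(\vterm) < O(\vterm_2)$, so $\vterm \notin \tlv(\tterm_2)$: the occurrences of $\vterm$ in $\tterm_2$ are all below constructors. Hence the substitution only affects subterms of $\tterm_2$ strictly under constructors and does not touch its top-level variables. It follows that $\tlv(\tterm_2')$ is contained in $\tlv(\tterm_2) \cup (\tlv(\tterm) \setminus \{\vterm'\}) \setminus \{\vterm\}$, and since all these variables have order strictly greater than $\vterm_2$ (the new ones from $\tterm$ have order $> O(\vterm)$, but combined with the fact that the $\mu$-abstraction binds $\vterm'$ inside a constructor, they end up below a constructor in $\tterm_2'$), well-orderedness of $E'$ is preserved. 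The induction hypothesis then yields that $\sigma' = \Unify(E')$ produces well-formed types for all variables in $\dom(E') = \dom(E) \setminus \{\vterm\}$.

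Finally, I need to conclude that $\sigma(\vterm) = (\mu\vterm'.\tterm\esubs{\vterm}{\vterm'})\sigma'$ is well-formed. The $\mu$-type is contractive by the first step, and applying the substitution $\sigma'$, which by induction hypothesis replaces each variable in its domain by a well-formed (contractive) term, preserves contractivity: $\vterm'$ is not in $\dom(\sigma')$ (it is fresh), so the binding is untouched, and substituting well-formed subtypes into a contractive term yields a contractive term. This argument goes through uniformly for type variables (where $\tterm = (t_1 \vee \alpha') \wedge t_2$), field variables (including those of the form $\rho.\ell$), and row variables (where $\tterm = \row{(\ell = \tau_\ell)_{\ell \in L}}{\rw}{\Labels{\setminus}\rdef(\rho)}$), since the relevant grammatical constructors (arrows, records, rows) all act as guards for contractivity, and the ordering $O$ on type/field/row variables is total.

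The main obstacle I anticipate is the row-variable case, since the equation for a row variable $\rho$ involves not only the freshly-introduced row variable $\rho'$ but also the fresh field variables $\fvar_\ell$ and the embedded field equations for $\rho.\ell$; I would need to carefully check that none of these auxiliary fresh variables can appear at the top level of the generated $\mu$-body without being guarded by the surrounding row constructor $\row{\cdots}{\cdots}{\Labels{\setminus}\rdef(\rho)}$. Fortunately, the syntactic shape of the row expression makes the surrounding row constructor an actual guard in the sense used by the contractivity definition of \cref{def:types}, so the proof goes through without extra hypotheses.
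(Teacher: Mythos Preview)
Your inductive approach is genuinely different from the paper's and is viable, but there is one real gap in Step~2 that you must close, and it is precisely the point on which the paper's (much shorter) contradiction argument hinges.

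The paper does not induct on $|E|$. It assumes some $\sigma(\vterm)=\mu x.t$ is ill-formed, extracts a cycle of equations $\vterm_0=\tterm_{\vterm_0},\dots,\vterm_n=\tterm_{\vterm_n}$ with $\vterm_i\in\tlv'(\tterm_{\vterm_{i-1}})$ and $\vterm_0\in\tlv'(\tterm_{\vterm_n})$, then observes that all the $\vterm_i$ must be of the \emph{same kind}: only type variables can be top-level in types, only row variables in rows, and although a type variable can sit top-level in a field-type, a field variable cannot sit top-level in a type, so no cycle can cross kinds. Well-orderedness then gives $O(\vterm_0)<\dots<O(\vterm_n)<O(\vterm_0)$, a contradiction.

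Your Step~2 asserts ``$\vterm \notin \tlv(\tterm_2)$, hence the occurrences of $\vterm$ in $\tterm_2$ are all below constructors''. This inference fails in the cross-kind case because $\tlv$ is kind-specific (\cref{def:tlv}): if $\vterm=\alpha$ is a type variable and $\tterm_2 = \alpha\vee\Undef$ is a field-type, then $\alpha\notin\tlv(\tterm_2)=\tlv'(\tterm_2)\cap\fldVars=\emptyset$ holds vacuously, yet $\alpha$ is \emph{not} under a constructor. So your subsequent claim that the substitution leaves the top level of $\tterm_2$ untouched is false as stated. The patch is exactly the paper's kind observation, transposed to your setting: the term substituted for $\alpha$ is a type, and types contain no top-level field variables, so $\tlv(\tterm_2')=\tlv(\tterm_2)$ still holds and well-orderedness of $E'$ survives. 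Once you insert this case analysis (type-in-field is the only problematic cross-kind situation; in all others the smaller-kind variable is automatically under a constructor), your induction goes through. The trade-off is that the paper's cycle argument dispatches the kind issue in two lines and needs no preservation-of-well-orderedness invariant, whereas your induction is more structural but must carry that invariant and handle the cross-kind case explicitly.
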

\begin{proof}
  Assume that there exists an ill-formed $\sigma(\vterm)$.
  That is, $\sigma(\vterm) = \mu x.t$ where $x$ occurs at the top-level of $t$.
  According to the algorithm $\Unify()$, there exists a sequence of equations
  $(\vterm =) \vterm_0 = \tterm_{\vterm_0}, \dots, \vterm_n = \tterm_{\vterm_n}$
  such that $\vterm_i$ is at top-level in $\tterm_{\vterm_{i-1}}$ and $\vterm_0$
  is at top-level in $\tterm_{\vterm_n}$ and where $i \in \{1, \dots, n\}$ and
  $n \geq 0$.
  We must necessarily have all the $\vterm_i$ of the same kind.
  Indeed, for type (resp. row) variables only types (resp. row) variables can
  appear at top-level.
  Now, if $\vterm_0$ is a field variable, there can be a type variable
  $\vterm_i$ at top-level in the field-type $\tterm_{i-1}$.
  But then, $\vterm_0$ cannot be at top-level in $\vterm_n$ since $\tterm_i$ is
  a type, and field variables cannot appear at top-level in a type.
  Since all $\vterm$ and $\tterm$ must be of the same kind and according to
  \cref{def:wellordered-equations}, we have $O(\vterm_{i-1}) < O(\vterm_i)$
  and $O(\vterm_n) < O(\vterm_0)$.
  Therefore, we have $O(\vterm_0) < O(\vterm_1) < \dots < O(\vterm_n) <
  O(\vterm_0)$, which is impossible.
  Thus the result follows.
\end{proof}

\subsection{The Complete Algorithm}

The procedure $\Sol C {}$ to solve type tallying of a constraint-set $C$
proceeds as follows.
\begin{enumerate}
  \item $C$ is normalized into a finite set $\constrset$ of well-ordered
    normalized constraint-sets (\cref{sec:normalization}).
  \item Each constraint-set $C_i \in \constrset$ is merged and saturated
    into a finite set $\constrset_{C_i}$ of well-ordered
    constraint-sets.
    Then, all these sets are collected into another set $\constrset'$
    (\textit{i.e.}, $\constrset' = \bigsqcup_{C_i \in \constrset}
    \constrset_{C_i}$) (\cref{sec:merging}).
  \item Each constraint-set $C_i' \in \constrset'$ is
    harmonized into a finite set $\constrset_{C_i'}$ of well-ordered
    harmonized
    constraint-sets.
    Then, all these sets are collected into another set $\constrset''$
    (\textit{i.e.}, $\constrset'' = \bigsqcup_{C_i' \in \constrset'}
    \constrset_{C_i'}$) (\cref{sec:harmonization}).
    This step is specific to row variables.
  \item For each constraint-set $C_i'' \in \constrset''$, we
    transform $C_i''$ into an equation system $E_i$ and then construct a general
    solution $\sigma_i$ from $E_i$ (\cref{sec:solving}).
  \item Finally, we collect all the solutions $\sigma_i$, yielding a set
    $\Theta$ of solutions to $C$ (\cref{sec:solution}).
\end{enumerate}
In the original algorithm for type variables, failing at the step of normalization means
that there is no solution overall, even when increasing the cardinality of the
substitution sets sought by dove-tail order in the general algorithm
described in the beginning of \cref{sec:tallying} (see \cite[\S 3.2.3]{castagna15}).
Whether here a failure in the step of normalization or harmonization means the
absence of a solution overall is still an open question.

We write $\Sol C \Theta$ if $\Sol C {}$ terminates with $\Theta$, and we
call $\Theta$ the solution of the type tallying problem for $C$.

\soundnesstallying*
\begin{proof}
  Consequence of \cref{l:soundness-normalization},
  soundness of merging, \cref{l:soundness-harmonization},
  \cref{l:soundness-solve} and soundness of $\Unify$ (\cref{l:unify}).
\end{proof}

\terminationtallying*
\begin{proof}
  Consequence of \cref{l:termination-normalization}, termination of
  merging, \cref{l:termination-harmonization}, finiteness of the
  constraint-sets and termination of $\Unify$ (\cref{l:unify}).
\end{proof}

\finitetallying*
\begin{proof}
  The first item is a consequence of \cref{l:finiteness-normalization},
  \cref{l:finiteness-harmonization} and finiteness of the constraints after
  merging.
  The second one is a consequence of \cref{l:wellordered-normalization},
  well-orderedness of merging, \cref{l:wellordered-harmonization},
  \cref{l:wellordered-solving} and well-orderedness of $\Unify$
  (\cref{l:unify}).
\end{proof}

%

\end{document}